\documentclass[a4paper,reqno]{amsart}
\usepackage[utf8]{inputenc}


\usepackage[english]{babel} 
\usepackage{csquotes}

\usepackage{longtable}


\usepackage[colorlinks=true,linkcolor=black,anchorcolor=black,citecolor=black,filecolor=black,menucolor=black,runcolor=black,urlcolor=black]{hyperref} 
\usepackage{graphicx,color,comment}  
\usepackage[dvipsnames]{xcolor} 
\usepackage{enumitem} 
\usepackage{rotating} 
\usepackage[absolute]{textpos}

\usepackage{mdframed}
\usepackage{xfrac}\usepackage{nicefrac}
\usepackage{comment}

\usepackage{cancel}


\usepackage{amsmath,amssymb,amsfonts,amsthm} 
\usepackage{mathtools,stmaryrd}  
\definecolor{light-gray}{gray}{0.95}
\usepackage[color=light-gray]{todonotes}
\usepackage{tikz-cd}  
\usepackage[all,cmtip]{xy} 
\usepackage[bbgreekl]{mathbbol} 
\usepackage{mathrsfs,slashed}


\usepackage{mathrsfs}
\usepackage[mathcal,mathscr]{euscript} 


\usepackage[backend=biber, giveninits=true, style=alphabetic, sorting=nyt,isbn=false, maxalphanames=5,url=false,doi=false,eprint=false, maxbibnames=99]{biblatex}
\addbibresource{Bibliography.bib}

\usepackage{thmtools}

\newcommand{\thmspacing}{.65em}

\declaretheoremstyle[
  spaceabove=\thmspacing, spacebelow=\thmspacing,
  headfont=\bfseries,
  notefont=\bfseries, notebraces={(}{)},
  bodyfont=\normalfont,
  postheadspace=1em,
  qed=$\Diamond$
]{pluto}
    \declaretheorem[style=pluto,name=Definition, numberwithin=section]{definition}
    

    \declaretheorem[style=pluto,name=Assumption,
    ]{assumption}

\declaretheoremstyle[
  spaceabove=\thmspacing, spacebelow=\thmspacing,
  headfont=\itshape,
  notefont=\itshape, notebraces={\emph{(}}{\emph{)}},
  bodyfont=\normalfont,
  postheadspace=1em,
  qed=$\Diamond$
]{pluto2}
    \declaretheorem[style=pluto2,name=Remark,    sibling=definition]{remark}

\declaretheoremstyle[
  spaceabove=\thmspacing, spacebelow=\thmspacing,
  headfont=\bfseries,
  notefont=\bfseries, notebraces={(}{)},
  bodyfont=\itshape,
  postheadspace=1em,
]{pluto3}
    \declaretheorem[style=pluto3,name=Theorem,    sibling=definition]{theorem}
    \declaretheorem[style=pluto3,name=Lemma,    sibling=definition]{lemma}
    \declaretheorem[style=pluto3,name=Corollary,    sibling=definition]{corollary}
    \declaretheorem[style=pluto3,name=Proposition,    sibling=definition]{proposition}
    \declaretheorem[style=pluto3,name=Calculation,    sibling=definition]{calculation}
    \declaretheorem[style=pluto3,name=Notation,    sibling=definition]{notation}




\newcommand{\pp}{\partial}

\renewcommand{\tilde}{\widetilde}
\newcommand{\rel}{\mathrm{rel}}


\newcommand{\cV}{\check{V}}

\newcommand{\Pexp}{\mathrm{Pexp}}

\newcommand{\E}{E}

\newcommand{\dfU}{{V}}
\newcommand{\cU}{{\check{V}}}

\newcommand{\dfuo}{{\upsilon_\circ}}
\newcommand{\dfUo}{{U_\circ}}

\DeclareMathAlphabet{\mathbbmsl}{U}{bbm}{m}{sl}

\newcommand{\G}{\mathscr{G}}
\newcommand{\Go}{\G_\circ}

\newcommand{\tGo}{\G_{\rel}}

\newcommand{\AS}{{\X_{\mathsf{AS}}}}
\newcommand{\XAS}{{\X_{\mathsf{eAS}}}}
\newcommand{\Xas}{{{\X}^{\mathsf{lin}}_{\mathsf{eAS}}}}

\newcommand{\omAS}{{\varpi_{\mathsf{AS}}}}
\newcommand{\omeAS}{{\varpi_{\mathsf{eAS}}}}
\newcommand{\omeas}{{{\varpi}^{\mathsf{lin}}_{\mathsf{eAS}}}}

\newcommand{\Gred}{\underline{\G}}

\renewcommand{\L}{\mathbb{L}}
\newcommand{\bi}{\mathbb{i}}
\newcommand{\Lie}{\mathrm{Lie}}
\newcommand{\bom}{\boldsymbol{\omega}}
\newcommand{\bHo}{\boldsymbol{H}_\circ}
\newcommand{\bH}{\boldsymbol{H}}
\newcommand{\bh}{\boldsymbol{h}}
\newcommand{\balpha}{{\boldsymbol{\alpha}}}

\newcommand{\bE}{{\boldsymbol{E}}}

\newcommand{\tr}{\mathrm{tr}}
\newcommand{\btr}{\mathbf{tr}}
\newcommand{\scri}{\mathscr{I}}

\newcommand{\into}{\hookrightarrow}
\newcommand{\onto}{\twoheadrightarrow}

\renewcommand{\pp}{\partial}
\newcommand{\wt}[1]{\widetilde{#1}}

\renewcommand{\Im}{\mathrm{Im}}
\renewcommand{\ker}{\mathrm{Ker}}

\newcommand{\rad}{{\mathrm{rad}}}
\newcommand{\nYM}{\mathsf{nYM}}

\newcommand{\ASa}{{\mathsf{a}}}
\newcommand{\ASe}{{\mathsf{e}}}

\newcommand{\vol}{{\boldsymbol{vol}}}
\renewcommand{\i}{{\mathrm{in}}}
\newcommand{\f}{{\mathrm{fin}}}

\newcommand{\uuPi}{\underline{\underline{\Pi}}{}}

\newcommand{\hcalA}{\widehat{\mathcal{A}}}

\newcommand{\F}{\mathfrak{F}}
\newcommand{\fG}{\mathfrak{G}}
\newcommand{\fGp}{{\fg^{\pp\Sigma}}}

\newcommand{\fg}{\mathfrak{g}}
\newcommand{\fGo}{\mathfrak{G}_\circ}

\newcommand{\fGred}{\underline{\mathfrak{G}}}

\newcommand{\Ho}{{H}_\circ}

\newcommand{\h}{{h}}

\newcommand{\X}{\mathcal{P}}
\newcommand{\C}{\mathcal{C}}

\newcommand{\Ann}{\mathrm{Ann}}
\newcommand{\uCo}{\underline{\C}}

\newcommand{\uh}{\underline{\h}{}}
\newcommand{\uhAS}{\underline{\h}_{\mathsf{eAS}}}

\newcommand{\uomegao}{\underline{\omega}}

\newcommand{\uS}{\underline{\mathcal{S}}{}}

\newcommand{\uuS}{\underline{\underline{\mathcal{S}}}{}}
\newcommand{\uuC}{\underline{\underline{\C}}{}}

\newcommand{\cD}{\mathcal{D}}
\newcommand{\cL}{\mathcal{L}}

\newcommand{\av}{\mathrm{avg}}
\newcommand{\dif}{\mathrm{diff}}

\newcommand{\so}{\mathrm{soft}}

\newcommand{\uuomegao}{\underline{\underline{\omega}}{}}
\newcommand{\oloc}{\Omega_{\loc}}

\newcommand{\loc}{\mathrm{loc}}

\newcommand{\tbox}[2]{\mbox{\parbox{#1}{\center\sffamily\footnotesize #2}}}

\newcommand{\bd}{\mathbb{d}}

\newcommand{\Ad}{\mathrm{Ad}}
\newcommand{\ad}{\mathrm{ad}}



\title[Null Hamiltonian Yang--Mills theory]{Null Hamiltonian Yang--Mills theory: \\ Soft symmetries and memory as superselection}
\author{A. Riello}
\address{Perimeter Institute for Theoretical Physics, 31 Caroline St. N., Waterloo, ON N2L 2Y5, Canada}
\address{Department of Applied Mathematics, University of Waterloo, Waterloo, Ontario, Canada}
\email{ariello@perimeterinstitute.ca}
\author{M. Schiavina}
\address{Department of Mathematics, University of Pavia, Via Ferrata 5, 27100 Pavia, Italy}
\address{INFN Sezione di Pavia, via Bassi 6, 27100 Pavia, Italy}
\email{michele.schiavina@unipv.it}

\date{}

\begin{document}

\maketitle

\begin{abstract}
    Soft symmetries for Yang--Mills theory are shown to correspond to the residual Hamiltonian action of the gauge group on the Ashtekar--Streubel phase space, which is the result of a partial symplectic reduction. The associated momentum map is the electromagnetic memory in the Abelian theory, or a nonlinear, gauge-equivariant, generalisation thereof in the non-Abelian case. 
    This result follows from an application of Hamiltonian reduction by stages, enabled by the existence of a natural normal subgroup of the gauge group on a null codimension-1 submanifold with boundaries.  The first stage is coisotropic reduction of the Gauss constraint, and it yields a symplectic extension of the Ashtekar--Streubel phase space (up to a covering). Hamiltonian reduction of the residual gauge action leads to the fully-reduced phase space of the theory. This is a Poisson manifold, whose symplectic leaves, called superselection sectors, are labelled by the (gauge classes of the generalised) electric flux across the boundary. 
    In this framework, the Ashtekar--Streubel phase space arises as an intermediate reduction stage that enforces the superselection of the electric flux at only one of the two boundary components. These results provide a natural, purely Hamiltonian, explanation of the existence of soft symmetries as a byproduct of partial symplectic reduction, as well as a motivation for the expected decomposition of the quantum Hilbert space of states into irreducible representations labelled by the Casimirs of the Poisson structure on the reduced phase space.
\end{abstract}

\tableofcontents

\section{Introduction}

\subsection{Overview} Asymptotic quantisation of Maxwell and gravitational theories in asymptotically flat spacetimes is an idea that goes back to \cite{AshtekarAsympt,ashtekar1987asymptotic}, motivated by questions on the non-perturbative quantisation of theories with long-range interactions, and it requires studying the observables of the theory at \emph{null} infinity (see \cite{AshtekarCmapigliaLaddha2018} for a recent account). The long-range nature of the interaction is tied to Gauss's law, whose implementation in a quantum setting has been shown to require a decomposition of quantum observables into ``superselection sectors'' (e.g. \cite{Buchholz86}; see item \ref{item:quantisation} in Section \ref{sec:physicalinterpretation} for more references, and different approaches).

More recently, a third observation enriched the discussion on this topic: that there should be an underlying symmetry descriptor linking asymptotic quantisation to soft scattering theorems. Specifically, this is the observation that certain soft theorems, most notably \cite{Weinberg65}, are to be interpreted as Ward identities for certain new symmetries that the theory enjoys at null infinity \cite{HeMitra2014,KapecPateStrominger}. A host of literature was subsequently produced to understand the nature of these soft (or asymptotic or, sometimes, ``large'') symmetries in the context of Maxwell, non-Abelian Yang--Mills, and gravity theories (for an overview, see \cite{StromingerLectureNotes} and references therein). However, a full description of this phenomenon from a Hamiltonian point of view, which could link soft behaviour to superselection, has been lacking. This is where our work finds its main application.

We analyse the \emph{Hamiltonian} assignment in Yang--Mills theory (YM) to a null codimension-1 submanifold with boundary, and construct the reduced phase space of the theory by means of a procedure known as Hamiltonian reduction by stages \cite{marsden2007stages}. 

We show that the (extended\footnote{The extension we find is formally analogous to e.g.\ that in \cite{CampigliaPeraza21}, but a one-to-one mapping is far from obvious.}) Ashtekar--Streubel (eAS) phase space is the result of \emph{partial} Hamiltonian reduction by appropriate subgroups of the gauge group that naturally exist due to the presence of boundaries. Since the reductions leading to these spaces are only partial, we characterise the residual (gauge) group action on the AS phase space (and its extensions), and show it recovers the ``soft symmetries'' of \cite{HeMitra2014, KapecPateStrominger, StromingerLectureNotes} (however, cf. \cite{HerdegenRevisited}), to which we thus give a purely Hamiltonian explanation.

Additionally, our explicit description of the reduced phase space goes through the construction of classical analogues of what in the literature are known as (quantum) ``superselection sectors''. This interplay between reduction and superselection is at the heart of our explanation of the relationship between asymptotic quantisation, superselection sectors and soft/asymptotic symmetries.   

More specifically, our (classical) superselection sectors arise from the fact that the charge (momentum map) which generates gauge transformations \emph{fails} to vanish on-shell in the presence of boundaries---which, in this  case, one can picture as the past and future ``celestial spheres''. 

Instead, said charge is given by the flux\footnote{Note: in this work, the wording ``electric flux'' is a shorthand for ``normal component of the electric field at the boundary".} of the (generalised) electric field through the boundary, whose gauge classes then label the superselection sectors. In the \emph{partially-reduced} extended Ashtekar--Streubel phase space, the charge generates residual gauge transformations corresponding to the above-mentioned soft symmetries.

In the case of Abelian Yang--Mills,  one can alternatively choose one of the superselection labels to be the \emph{electromagnetic memory} of\footnote{In \cite{herdegen2023velocity}, Herdegen advances a criticism of the interpretation of memory proposed by \cite{BieriGarfinkle} . This is irrelevant for our purposes, but worth noting.} \cite{Staruszkiewicz:1981,Herdegen95,GarfinkleHollandsetal,Pasterskimemory}, which is then tied both to symmetry \emph{and} superselection.
Our formalism naturally extends to the subtler non-Abelian case, where we define a non-Abelian generalisation of memory as a superselection label. However, our proposal differs from the notion of ``color memory'' introduced in \cite{PateRaclariuStrominger} (see Definition \ref{def:non-Abelianmemory} and Remark \ref{rmk:non-Abelianmemory}), and other terms and conditions apply (see Theorem \ref{thm:memorySSS} and Section \ref{sec:semisimplememory} for precise statements). 

\medskip
Our results are obtained through a careful and rigorous application of symplectic reduction by stages \cite{MarsdenWeinstein, Meyer, AbrahamMarsden,marsden2007stages}---adapted to the Fr\'echet setting as in \cite{DiezHuebschmann-YMred,DiezPhD}---and the construction is completely and manifestly gauge-invariant. (We follow \cite{RielloSchiavina} for the implementation.)

Moreover, each stage of the reduction procedure has its own physical meaning:
\begin{enumerate}
\item The first stage implements  the (Gauss) constraint, and produces  the extended Ashtekar--Streubel phase space (Theorem \ref{thm:nonAb-constr-red}).
\item The second stage reduces the residual ``boundary'' gauge symmetries (when present) and leads to the superselection sectors (Theorem \ref{thm:modelHamreduction}). 
\item The Ashtekar--Streubel phase space is recovered as the result of an \emph{intermediate} reduction, interpreted as superselection of the electric flux through only one boundary component (Proposition \ref{prop:MaxPartialSSS} and Theorem \ref{thm:YMPartialSSS}).
\end{enumerate}

Many of these statements can be expressed effectively by means of a new symplectic basis for the Ashtekar--Streubel phase space, which we introduce in Proposition \ref{prop:modedecomp}. This basis generalises the naive Fourier basis to correctly take into account zero-modes, which play a crucial role in the soft theorems and the memory effect.

Although our results hold in any spacetime dimension at finite distance, in dimension 4, where YM theory is classically conformal, they provide relevant information on the asymptotic data as well, see Section \ref{sec:alternatelabelmemory}. Then, our approach recovers results of \cite{HeMitra2014,KapecPateStrominger,Pasterskimemory} and places them within a precise Hamiltonian framework (see Section \ref{sec:alternatelabelmemory}).

\subsection{A note on boundary gauge transformations}
A conceptual question that emerges from our framework is whether ``boundary symmetries'' should be quotiented out or not: after all only ``bulk symmetries'' are in the kernel of the (on-shell) symplectic structure and their reduction yields a symplectic space, which could be a viable candidate for the phase space (this corresponds to the first-stage reduced phase space described above). Residual boundary symmetries could therefore be  interpreted differently than ``gauge''. In the literature, these are commonly thought of as a new type of global symmetries (see e.g.\ \cite{Carlip1995, Giulini1995-covering, Balachandran1994, DonnellyFreidel16,HenneauxTrossaert} among many others).

Our interpretation of boundary symmetries as gauge hinges on the meaning we assign to ``boundaries'' of region in field theory.
A manifold with boundary $(\Sigma,\pp\Sigma)$ is here thought of as the closure of an open subspace of a boundary-less universe. From this perspective, we find it natural to demand that the the``observables'' that are supported on $(\Sigma,\pp\Sigma)$ be represented by appropriate observables supported on the entire universe. 

Since observables are gauge-invariant by definition, this suggests that the observables associated to $(\Sigma,\pp\Sigma)$ ought to be invariant with respect to the action of the entire gauge group---i.e.\ including those gauge transformations that are nontrivial at the boundary. 
Insisting on this point, and thus proceeding to the \emph{second} stage of reduction, we obtain a fully reduced phase space that is only Poisson and hence exhibits a superselection structure defined by its symplectic foliation. 

However, regardless of the philosophical perspective on this question, we believe this article provides necessary clarifications on the degrees of freedom and symmetries present in both the first-stage (i.e.\ ``bulk'') and second-stage (i.e.\ ``boundary'') reduced phase spaces of Yang-Mills theory on a null codimension-$1$ submanifold of a Lorentzian manifold.

\subsection{Specifics}
The Hamiltonian description of field theory on a (spacetime) Lo\-rentzian manifold $M$ assigns, to a codimension-$1$ submanifold $\Sigma\hookrightarrow M$, a symplectic manifold of fields $(\X,\omega)$, together with a locus of constrained configurations $\C\subset\X$ representing a necessary condition\footnote{Observe that sometimes this condition is also sufficient, but it tipically requires restricting to spacelike hypersurfaces. We will not elaborate on this aspect any further, as it would lead us astray.} that fields should satisfy to be extendable to a solution of the Euler--Lagrange equations in $M$. In regular cases, $\C$ is a coisotropic submanifold and its reduction by the characteristic foliation $\C^\omega$ is a symplectic manifold $\uCo\doteq\C/\C^\omega$.

\emph{When $\Sigma$ is a closed manifold}, in local Hamiltonian gauge theory, the constraint surface $\C$ can be seen as the zero level-set of an equivariant momentum map for a Hamiltonian gauge group action $\G\circlearrowright\X$; the reduced phase space of gauge-inequivalent physical configurations ${\uuC}\doteq \C/\G$ coincides with $\uCo \doteq \C/\C^\omega$, and is therefore a symplectic manifold.

However, \emph{when $(\Sigma,\pp\Sigma)$ is a submanifold with boundary}, we have previously shown\footnote{See \cite{DonaldsonYM,MeinrenkenWoodward} for earlier results in the particular case of Chern--Simons theory.} that there exists a normal subgroup $\Go\subset\G$, called the \emph{constraint gauge group}, such that coisotropic reduction of the constraint set coincides with symplectic reduction w.r.t. the action of $\Go$, i.e.\ $\uCo \doteq \C/\C^\omega\simeq \C/\Go$ \cite{RielloSchiavina}.  We called $\uCo$ the \emph{constraint-reduced phase space}, and showed that it carries a residual Hamiltonian action of $\Gred\doteq\G/\Go$. Note that $\Go$ is an appropriate closure of the set of gauge transformations supported in the interior of $\Sigma$; in this sense $\Go$ and $\Gred$ give a precise meaning to the informal notions of ``bulk'' and ``boundary'' gauge transformations.

As a consequence of this residual action, the fully-reduced phase space ${\uuC} = \C/\G \simeq \uCo/\Gred$---defined as the space of constrained configurations modulo gauge---fails in general to be a symplectic manifold, and it is instead\footnote{This is true only up to important details related to the infinite-dimensionality of the problem. However, in some cases this can be checked explicitly, as we do in this paper.} only Poisson. In particular, ${\uuC}$ is foliated by symplectic leaves $\uuS_{[f]}$, which we called \emph{flux superselection sectors} and proved to be labelled by co-adjoint orbits $\mathcal{O}_f$ of \emph{fluxes}, i.e.\ elements $f\in \mathrm{Lie}(\Gred)^*\subset \mathrm{Lie}(\G)^*$ that are in the image of the (on-shell) momentum map $\iota^*_\C H\colon \C\to \mathrm{\Lie}(\G)^*$ (see Definitions \ref{def:LHGT} and \ref{def:constraintsurface}):
\[
{\uuC} = \bigsqcup_{\mathcal{O}_f} \uuS_{[f]}, \qquad \uuS_{[f]} = (\iota_\C^*H)^{-1}(\mathcal{O}_f)/\G.
\]

In this paper we present the Hamiltonian gauge-theory assignment, in the form of the data above, for Yang--Mills theory (YM) in the nontrivial case where $(\Sigma,\pp\Sigma)$ is a \emph{null}\footnote{The case of $\Sigma$ non-null is discussed in \cite{RielloSchiavina}.} codimension-$1$ submanifold-with-boundary of a Lorentzian manifold $M$, of the form $\Sigma \simeq I\times S$ with $S$ spacelike and $I=[-1,1]$ a null interval.
Starting from the standard Yang--Mills theory assignment of the geometric phase space $(\X,\omega_{\nYM})$ to a null submanifold $\Sigma$ (Definition \ref{def:geom-ph-sp}), we provide a description of the constraint-reduced symplectic manifold $\uCo$ and of the fully-reduced Poisson manifold ${\uuC}$, including an explicit characterisation of the flux superselection sectors $\uuS_{[f]}$.

More precisely, we find (Theorem \ref{thm:nonAb-constr-red}) that $\uCo$ is a smooth symplectic Fr\'echet manifold,\footnote{Provided the $\Go$-action is proper and the symmetry action generates a symplectically closed distribution (see \cite[Section 4]{DiezPhD} and references therein).} and that it can be described as a symplectic covering space of the \emph{extended Ashtekar--Streubel phase space} $\XAS$:\footnote{The symbol $\simeq_\loc$ denotes a local symplectomorphism. If $G$ is Abelian, we show that $\uCo \simeq \hat{\mathcal{A}}\times T^*\fg^S$ is a \emph{global} symplectomorphism, and the symplectic covering is given by the branches of the logarithm for the toric factors of $G$.}
\[
    \uCo\simeq_\loc \XAS \doteq \hat{\mathcal{A}}\times T^*G^S_0,
\]
where the Ashtekar--Streubel phase space $\hat{\mathcal{A}}\simeq C^\infty([-1,1],\Omega^1(S,\fg))$ is a symplectic space of ``spatial'' connections (see Definition \ref{def:hatA-sympl} after \cite{AshtekarStreubel}), and $G^S_0$ the identity component of the mapping group'' $G^S\doteq C^\infty(S,G)$, for $G$ the (connected) structure group of Yang--Mills theory.

The fibre $\mathcal{K}$ of the covering $\uCo \to \XAS$ is the group of components of the relative (connected) mapping group $G^\Sigma_{0,\rel} \doteq \{ g\in G^\Sigma_0 \ : \ g\vert_{\pp\Sigma}=0\}$, which we explicitly characterise in a number of particular cases, such as when $G$ is simply connected and $\mathrm{dim}(\Sigma)=1,2,3$, or when $G$ is Abelian (Theorem \ref{thm:fibrecharacterisation}). (Earlier results in this direction can be found, e.g.\ in \cite{Giulini1995-covering}.)

To explicitly describe the (infinite dimensional) symplectic reduction $\uCo\simeq_\loc\XAS$, we develop a rigorous version of the dressing field method \cite{Francois2012}, related to a particular choice of gauge fixing of the $\Go$-action.
This provides a concrete model for the reduction $\uCo$, by means of  a \emph{nonlocal} map $\C\to \XAS$, which can be thought of as a dressing by a family of Wilson lines along the null direction of $\Sigma$.

The groups of residual gauge transformations acting on either the constraint-reduced phase space $\uCo$ or the model $\XAS$---i.e.\ $\Gred$ and $G^{\pp\Sigma}_0=G_0^S\times G_0^S$ respectively---are shown to differ by a discrete central extension by $\mathcal{K}$ (Proposition \ref{prop:Gred}), $\Gred/\mathcal{K} \simeq G_0^{\pp\Sigma}$. In both cases, the corresponding group actions are Hamiltonian (Propositions \ref{lem:residualAction} and \ref{prop:red-flux-map}), and the local symplectomorphism $\uCo\simeq_\loc\XAS$ is equivariant with respect to them.

These residual gauge actions must similarly be reduced, and it is only after this ``second stage'' reduction that one is led to the fully-reduced phase space 
\[
{\uuC}\doteq \C/\G \simeq \uCo/\Gred \simeq \XAS/G_0^{\pp\Sigma},
\]
which has the structure of a Poisson manifold and whose symplectic leaves we call (flux) superselection sectors. We describe these superselection sectors in Theorem \ref{thm:modelHamreduction}, and more explicitly in Theorem \ref{thm:memorySSS} (for the Abelian case).

The algebra of (classical) observables of the theory can then be identified with the space of  functions\footnote{\label{fnt:hamiltonianfunction}Truly, since ${\uuC}$ is an infinite-dimensional (partial) Poisson manifold, we have to restrict to Hamiltonian functions. See \cite[Definition 2.5]{RielloSchiavina}.} over the fully-reduced Poisson manifold ${\uuC}$. Since ${\uuC}$ is not symplectic, this algebra has a center $\mathcal{Z}\subset C^\infty({\uuC})$: the Casimirs of the Poisson structure which label the symplectic leaves. 

We stress that the Ashtekar--Streubel phase space $\hcalA$ does not correspond to either $\uCo$ or ${\uuC}$. It instead arises as one of the symplectic leaves of an \emph{intermediate} reduction of $\uCo\simeq_\loc \XAS$ by one of the two copies of $G^S_0 \hookrightarrow G^{\pp\Sigma}_0$; as such it also carries an action by the remaining copy of $G^S_0$. This fact can be summarised with the slogan: \emph{the Ashtekar--Streubel phase space is a partially reduced and partially superselected space}.

More specifically, in going from $\uCo$ to $\hcalA$, the initial (resp.\ final) value of the electric field is superselected (to zero, see Proposition \ref{prop:MaxPartialSSS} and Theorem \ref{thm:YMPartialSSS}, and the subsequent remarks), and the Ashtekar--Streubel space carries a residual (nonlocal) gauge action of $G^S_\f$ (resp.\ $G^S_\i$). An important byproduct of these results is the realisation that, in 4 spacetime dimensions, the residual symmetry acting on the Ashtekar--Streubel phase space can be identified with the ``soft/large gauge symmetry'' of \cite{HeMitra2014,KapecPateStrominger}.

(We note that, contrary to folklore, there is no need of fixing the \emph{magnetic} fields through the two components of $\pp\Sigma$ in order to have well defined phase spaces.)

Additionally, one of the labels for the superselection sectors of null, Abelian, Yang--Mills theory is the ``electromagnetic memory'' \cite{BieriGarfinkle,Pasterskimemory}---a gauge invariant quantity parametrised by the difference between the value of the electric fields at the two boundary components of $(\Sigma,\pp\Sigma)$ (Theorem \ref{thm:memorySSS}).
Finally, ``color memory'', as defined in \cite{PateRaclariuStrominger}, is \emph{not} the correct label for the superselection sectors---in particular it is not gauge invariant, as opposed to its non-linearised version (see Theorem \ref{thm:YMPartialSSS} and Remark \ref{rmk:YMpartialSSS}). As a consequence, ``color memory'' fails to generalise to the non-Abelian case the property of electromagnetic memory of being a viable superselection label.

\subsection{A note on quantisation}\label{sec:Introquantisation}
The considerations contained in this paper are purely classical. However, we believe that they provide insights about what is expected from quantisation. 

\emph{If} a quantisation of the (infinite dimensional) constraint reduction $\uCo\simeq_\loc \hcalA \times T^*G^S$ were available, in the sense of a Lie algebra morphism between its Poisson algebra of functions and operators on some Hilbert space that is equivariant under the action of $\Gred$, then one could extract a subrepresentation given by functions on $\uuC$ pulled back along $\uCo\to\uuC$ to $\Gred$-invariant functions on $\uuC$. Since $C^\infty(\uuC)$ has a center (the Casimirs of the Poisson structure), the subrepresentation will also have a center, and thus decompose the Hilbert space into ``blocks''. 

If a statement such as ``quantisation commutes with reduction'' \cite{GuilleminSternberg,HallKirwin} were to hold in this scenario (of $\uCo$ w.r.t.\ $\Gred$), one would have that the Hilbert ``blocks'' would also correspond to the quantisation of the symplectic leaves $\uuS_{[f]}$ of the Poisson manifold\footnote{The correct statement is that $\uuC$ is a partial Poisson manifold, meaning that there is a subalgebra of all smooth functions that admits a Poisson structure \cite{PelletierCabau,RielloSchiavina}.} $\uuC$. (Observe that Verlinde formula \cite{VerlindeFormula} was proven as an application of reduction by stages to Chern--Simons theory in \cite{MeinrenkenWoodward_Verlinde}. See also \cite[Section 8.1]{RielloSchiavina}.)  The quantisation of the Casimir functions of the fully reduced phase space should, then, generate the \emph{quantum} flux superselection sectors (see e.g.\ \cite{FrohlichEtAl1979AnnPhys,FrohlichEtAl1979PhysLett, Buchholz86, MundEtAl2022} as well as \cite{DHR,DHR2} and \cite{GiuliniSuperselectionRules} for a review).

To obtain a quantisation of $\uCo$ one can resort to several techniques. (1) Since $\uCo$ is symplectic, directly applying geometric quantisation is---at least in principle---an option, although highly nontrivial due to the infinite dimensionality of the problem. In \emph{null}, Abelian, YM theory, $\uCo$ has a relatively simple structure: it is (non-canonically) isomorphic to the direct product of two spaces of local fields (i.e. sections of a bundle), one affine ($\hcalA$) and the other a linear cotangent bundle ($T^*\fg^S$). In the non-Abelian case a similarly simple description holds locally in field space, with the linear cotangent bundle replaced by the non-linear $T^*G^S$.\footnote{Cf. \cite{RielloSchiavina} for results on $\uCo$ in non-null YM theory.}

(2) Another option is to apply Batalin--Fradkin--Vilkovisky (BFV) quantisation. The (classical)  BFV formalism starts with the data of $\C\subset\X$ as input and resolves it by constructing a (classical) complex $\mathfrak{C}^\bullet_{BFV}$ whose cohomology in degree zero is $C^\infty(\uCo)$ \cite{StasheffConstraints88,SchaetzBFV}. Then, quantisation of this structure outputs a (quantum) complex whose cohomology in degree zero can then be taken as a quantisation of $\uCo$ (ideally, a Hilbert space of ``states'' for the theory). Note that this procedure trades the addition of nonphysical (ghost) fields, required by the cohomological resolution, for linearity and locality: One now needs to quantise a local symplectic dg vector space instead of the nonlinear and nonlocal symplectic manifold $\uCo$.
This procedure fits within the program of quantisation of field theory on manifolds with boundary of \cite{CMR1,CMR2}, which has the advantage of communicating with the bulk Batalin--Vilkovisky (perturbative) quantisation of Yang--Mills theory. The BV-BFV bulk-to-boundary correspondence at the quantum (or at least semi-classical) level is key to link soft symmetries and their Ward identities to soft scattering phenomena, which was first analysed at physics level of rigor in \cite{Strominger} (for more references, see Section \ref{sec:physicalinterpretation}).
The BV quantisation of YM theory has been also studied within the perturbative algebraic quantum field theory setting \cite{HaagKastler,BrunettiFredenhagen} by \cite{rejzner2016perturbative} (although without boundaries). A classical link between BV-enriched PAQFT quantisation and the BV-BFV analysis is given by \cite{RejznerSchiavina}, where soft symmetries (and their conserved charges) are interpreted in this language. 

(3) Finally, since $C^\infty(\uuC)$ is Poisson, one could try to directly perform deformation quantisation. One obvious difficulty is that $\uuC$ has a more involved structure than $\uCo$ even in the Abelian (matterless, linear) theory---it is in particular spatially nonlocal. For interacting QFT's deformation quantisation of Poisson algebras over infinite dimensional manifolds is subject of current study \cite{HawkinsRejzner}. 

Consequently, we shall defer any further consideration on a rigorous quantisation of the relation between memory as a momentum map for soft symmetries and scattering phenomena to further work.
In Section \ref{sec:physicalinterpretation}, we provide a few more considerations on the interpretation and expectation we place on the quantisation of our classical picture, especially in relation to the role and emergence of superselection sectors---with references to the literature.

\subsection{Structure of the paper} 
In Section \ref{sec:theoreticalframework} we outline the preliminaries of Hamiltonian gauge theories on manifolds with corners, thought of as boundaries of codimension $1$ hypersurfaces over which the Hamiltonian theory is specified. This is mostly a review of \cite{RielloSchiavina}.

In Section \ref{sec:null-setup} we lay the geometric foundations of null Hamiltonian Yang--Mills theory, the topic we will focus on throughout the rest of the paper. We also introduce a number of key ingredients that will play a role in the remainder of the work, such as the Ashtekar--Streubel phase space and a novel ``Darboux basis'', expressed in terms of Fourier modes (Proposition \ref{prop:modedecomp}).

Section \ref{sec:YMSuperselectionShort} describes the superselection structure for null, Yang--Mills theory as a consequence of the general theorem \cite[Theorem 1]{RielloSchiavina}. This is a short-hand version of the paper which gives direct access to Section \ref{sec:alternatelabelmemory}.

Sections \ref{sec:reduction-first} and \ref{sec:reduction-second} are dedicated to an explicit description of the first and second stage of the symplectic reduction, leading to the reduced phase space of Yang--Mills theory described in Section \ref{sec:YMSuperselectionShort}. Appendix \ref{app:Abelianreduction} details the specifics of the Abelian case, where global results can be obtained.

Finally, in Section \ref{sec:alternatelabelmemory} we apply our results to the problem of asymptotic/soft symmetries, and we draw a detailed comparison between our work and the literature, most notably \cite{StromingerLectureNotes}.

While the logical development of the paper is linear, the reader interested in the applications to soft symmetries can skip Sections \ref{sec:reduction-first} and \ref{sec:reduction-second} at first, and go directly to Section \ref{sec:alternatelabelmemory} after Section \ref{sec:YMSuperselectionShort}, which is necessary to its understanding.  

\noindent\fbox{%
    \parbox{\textwidth}{\small\center The reader is invited to use Appendix \ref{app:notations} as quick reference, as it summarises the notations used throughout the paper.}}

\subsection*{Acknowledgements}
We thank Kasia Rejzner for useful discussions. AR thanks Hank Chen for a fruitful discussion on homotopy theory.
Research at Perimeter Institute is supported in part by the Government of Canada through the Department of Innovation, Science and Economic Development and by the Province of Ontario through the Ministry of Colleges and Universities. The University of Waterloo and the Perimeter Institute for Theoretical Physics are located on the traditional territory of the Neutral, Anishnaabe and Haudenosaunee peoples.

\section{Theoretical framework}\label{sec:theoreticalframework}    

In this section we review the theoretical framework for the symplectic reduction of a locally Hamiltonian gauge theory. This framework was developed in a previous publication \cite{RielloSchiavina} to which we refer for details. A summary of the application of this theoretical framework to Maxwell theory on a spacelike submanifold of a Lorentzian manifold is given in Appendix \ref{app:Maxwell-short}. This condenses the results of \cite{RielloSchiavina}, as well as earlier results from \cite{RielloSciPost} (see also \cite{RielloEdge} for a pedagogical overview, as well as \cite{RielloGomesHopfmuller, RielloGomes}).

\medskip

In this introductory section, $(\Sigma,\pp\Sigma)$ denotes a smooth, orientable, manifold with boundary, with
\[
    n \doteq \dim\Sigma\geq1.
\]
This manifold should be understood as a boundary component of a spacetime manifold $M$ (with corners), however we will omit discussing the Lagrangian origin of the field-theoretic data we employ. The induction of a locally Hamiltonian gauge theory (on $\Sigma$) from a Lagrangian theory (on a cobordism of $\Sigma$) is explained in \cite[Appendix D]{RielloSchiavina} (see also \cite{KT1979}).

\begin{remark}[Null $\Sigma$]
If $\Sigma$ is a compact null submanifold of a globally hyperbolic manifold $M$, then $\pp\Sigma \neq \emptyset$. The goal of this article is to explore the role of $\pp\Sigma$.
\end{remark}

\subsection{Locally Hamiltonian gauge theory}

In short, the notion of a \emph{locally} Hamiltonian gauge theory is a special case of an infinite-dimensional Hamiltonian $\G$-space for which there exist stronger, \emph{local}, versions of all the defining quantities and relations---and in particular of the Hamiltonian flow equation---which must hold \emph{pointwise} over $\Sigma$. The reader not familiar with the notion of locality is referred to Appendix \ref{rmk:locality} and references therein.

\begin{definition}[Equivariant locally Hamiltonian gauge theory \cite{RielloSchiavina}]\label{def:LHGT}
A \emph{locally Hamiltonian gauge theory}  $(\X,\bom,\G,\bH)$ over $\Sigma$ is given by 
\begin{enumerate}[label=(\emph{\roman*})]
\item a \emph{space} of local fields $\phi\in \X \doteq \Gamma(\Sigma,E)$, for $E\to \Sigma$ a fibre bundle, called ``geometric phase space'',
\item\label{def:LHGT-sympldensity} a \emph{local symplectic density} $\bom \in \oloc^{\mathrm{top},2}(\Sigma\times \X)$, i.e.\ a $\bd$-closed local $(\mathrm{top},2)$-form such that $\omega = \int_\Sigma \bom \in \Omega^2(\X)$ is  symplectic\footnote{\label{fnt:symplecticconditions}A 2-form $\omega$ on $\X$ is weakly (resp. strongly) symplectic iff it is closed and $\omega^\flat: T\X \to T^*\X$ is injective (resp. bijective). In this paper, all symplectic forms are ``weak'', and we thus drop the qualifier.};
\item a \emph{local Lie algebra action} $\rho : \X \times \fG \to T\X$ of a real  Lie algebra $\fG$, that exponentiates to an action of the (connected) Lie group $\G \doteq \langle \exp \fG \rangle$,
\item an $\mathbb{R}$-linear local map $\bH: \fG \to \oloc^{0,\mathrm{top}}(\Sigma\times \X), \ \xi \mapsto \langle \bH, \xi\rangle$, called  \emph{(co-)momentum form} which is \emph{equivariant},\footnote{\label{fnt:centralextension}A more general treatment of the equivariance properties of $\bH$ allowing for nontrivial \emph{boundary} Chevalley--Eilenberg cocycles---and thus central extensions---is given in \cite[Sections 3.5 and 3.6]{RielloSchiavina} (see also \cite{MeinrenkenWoodward}).} $\mathbb{L}_{\rho(\xi)} \langle \bH,\eta\rangle (\varphi) = \langle \bH(\varphi), [\xi,\eta]\rangle$ for all $\xi,\eta\in\fG$ and $\varphi\in\X$,
\end{enumerate}
such that the following \emph{local Hamiltonian flow equation} holds:
\[
\bi_{\rho(\xi)} \bom = \bd \langle \bH, \xi\rangle \qquad \forall \xi \in \fG.
\qedhere
\]
\end{definition}

\begin{remark}[Duals]\label{rmk:dualspaces} 
Let $\mathcal{W}=\Gamma(M, W)$ be the space of sections of a vector bundle over a compact manifold, $W\to M$. This space can be given the structure of a nuclear Fr\'echet vector space, and one can consider the topological dual with strong topology $\mathcal{W}^*_{\mathrm{str}}$, which is itself a nuclear vector space. Being $\mathcal{W}$ a space of sections, we can further introduce the useful notion of \emph{local dual} $\mathcal{W}^*_\loc\subset \mathcal{W}^*_{\mathrm{str}}$ 
given by integrals of \emph{local}, $\mathbb{R}$-linear, maps from $\mathcal{W}$ into $\Gamma(M,\mathrm{Dens}(M)) \simeq \Omega^\text{top}(M)$ (see Appendix \ref{rmk:locality}).
If one instead considers sections $\Gamma(M,W^*\otimes_M \mathrm{Dens}(M))$, a nondegenerate pairing with $\mathcal{W}$ is given by integration on $M$. The subset of the local dual given by integrating against an element of $\Gamma(M,W^*\otimes_{M} \mathrm{Dens}(M))$ is called \emph{densitised dual}, and we simply denote it by $\mathcal{W}^*$:%
\[
 \int_M \colon \Gamma(M,W^*\otimes_M \mathrm{Dens}(M))\to \mathcal{W}^*_{\mathrm{loc}}, \qquad \mathcal{W}^*\doteq \Im\left(\int_M\right).
\]

Another characterisation of $\mathcal{W}^*$ is in terms of \emph{ultra}local elements $\alpha\in\mathcal{W}^*_\loc$, i.e. elements such that $\langle\alpha,w\rangle$ does not involve any derivative of $w\in\mathcal{W}$.

A thorough discussion of the subtleties arising from dualisation in this setting is given in \cite[Appendix A]{RielloSchiavina}.
(See also \cite[Definition 2.9]{RielloSchiavina} where the densitised dual is instead denoted by $\bullet^\vee$ rather than $\bullet^\ast$, which there stands for the strong dual).
\end{remark}

\subsection{Constraint and flux forms}

Whenever a Lagrangian gauge theory over a spacetime $M$ admits a locally-Hamiltonian formulation at a codimension-1 hypersurface $\Sigma\hookrightarrow M$, a relation between the symplectic generator of gauge transformation and a subset of the equations of motion known as constraints can be established \cite{LeeWald}. In our framework this relation is captured by the following proposition/definition \cite[Appendix D]{RielloSchiavina}:

\begin{definition}[Constraint and flux forms] \label{def:constraintsurface}
A local momentum form $\bH \in \oloc^{\mathrm{top},0}(\Sigma\times\X,\fG^*_\loc)$ can be uniquely written as the sum
\[
\bH  = \bHo  + d \bh ,
\]
where $\bHo$ is \emph{order-0}, and $d\bh$ is $d$-exact  \cite[Prop.4.1]{RielloSchiavina}; $\bHo$ and $d\bh$ are separately equivariant. 

We call $\bHo$ the \emph{constraint form}, $d\bh$ the \emph{flux form} associated to the momentum form $\bH$, and $h = \int_\Sigma d \bh$ the \emph{flux map}.

The \emph{constraint surface} $\C\subset \X$ is the vanishing locus of the constraint form, i.e.
    \[
    \C = \{ \phi\in\X \ : \ \langle\bHo(\phi),\xi\rangle = 0 \ \forall \xi\in\fG\}.
    \]
We denote by $\iota_\C$ the embedding $\C\hookrightarrow\X$ and refer to $\C$ as ``the shell''.
\end{definition}

The split of $\bH$ into $\bHo+d\bh$ can be thought of as integration-by-parts. Note that we will often view the flux map $h$ in a ``dual'' manner, i.e.\ as a map $ \X \to \fG^*_\loc$; we will do so without changing the notation for $\h$. The flux map and its properties control our entire construction.

\begin{remark}[Noether current, charges, and the constraints]\label{rmk:Noethercharge}
For a general analysis of the properties a Lagrangian gauge theory must satisfy to yield a locally Hamiltonian gauge theory (Definition \ref{def:LHGT}), we refer to \cite[Appendix D]{RielloSchiavina} (see also \cite{AshtekarBombelliKoul, CrnkovicWitten1987, Zuckerman} and in particular \cite{LeeWald}).
In particular, if these conditions are met the (co-)momentum form $\bH :\fG\to\Omega^{\mathrm{top},0}(\Sigma\times\X)$ is the pullback to $\Sigma$ of the (off-shell) Noether current associated to a gauge symmetry.
The Noether current then encodes the constraint of the theory---which are given by $\bHo$---up to a total derivative, the flux form $d\bh $; the flux map $\h$ is the (boundary) on-shell Noether charge.
The constraint $\C$ is the space of ``physical'' configurations in $\X$.
\end{remark}


\subsection{Symplectic reduction: overview}\label{sec:symplred-overview}

Our ultimate goal is to apply the theory of Marsden--Weinstein--Meyer symplectic reduction \cite{MarsdenWeinstein,Meyer,RatiuOrtega03}, or the appropriate infinite-dimensional generalisation thereof (e.g. \cite{DiezPhD} and references therein), to understand the symplectic properties of the reduced phase space ${\uuC}$---defined as the space of on-shell configurations modulo \emph{all} gauge transformations:

\begin{definition}[Reduced phase space]\label{eq:fully-red-space} 
Let $(\X,\bom,\G,\bH)$ be a locally Hamiltonian gauge theory.  The \emph{(fully-)reduced phase space} of the theory is $ {\uuC} \doteq \C / \G$.
\end{definition}

Once a locally Hamiltonian gauge theory is specified as per Definition \ref{def:LHGT}, the symplectic reduction of $(\X,\omega)$ by $\G$ proceeds by stages \cite{RielloSchiavina} (for a general reference to reduction by stages, see \cite{marsden2007stages,AbrahamMarsden}).

The two stages can be \emph{loosely} described as follows:
\begin{enumerate}
\item The first stage, called \emph{constraint reduction}, is the reduction of $(\X,\omega)$ by ``bulk'' gauge transformations $\Go$ generated by $\bHo$ alone; it implements coisotropic reduction of the constraint set. 
\item The second stage, called \emph{flux superselection}, takes care of the residual group of ``boundary'' gauge transformations $\Gred \doteq \G/\Go$ which can be thought of as being generated by the flux map $h$. 
\end{enumerate}
(If $\pp\Sigma =\emptyset$, only the first stage is relevant: $\Go=\G$ and one recovers the standard gauge reduction procedure.)

The main difference between the first and second stages is that whereas in the first stage the reduction procedure is performed at a canonical value of the momentum map, i.e.\ at $\bHo=0$ (thus coinciding with the coisotropic reduction of the constraint configurations, $\C/\C^\omega$), in the second stage one is \emph{free} to choose at which value $f\in \fG^*_\loc$ of the flux map $h$ to reduce (as long as the chosen value is compatible with the constraints). This freedom of choosing $f$---or, equivalently, of choosing a coadjoint orbit $\mathcal{O}_f\subset \fG^*_\loc$---is the origin of the fact that $\C/\G$ is not symplectic but instead a (continuous) disjoint union of symplectic leaves, i.e.\ a Poisson manifold.

In sum: whenever $\pp\Sigma = \emptyset$, the space  ${\uuC}$ is symplectic, but when $\pp\Sigma\neq\emptyset$ then ${\uuC}$ is in general at best a Poisson manifold. This Poisson is foliated by disjoint symplectic leaves, each labelled by certain coadjoint orbits in $\mathcal{O}_f \subset \fG^*_\loc$. We will call these symplectic leaves \emph{superselection sectors} and denote them $\uuS_{[f]}$.

This two-stage reduction procedure is summarised in the following commutative diagram. The goal for the remainder of this section is to explain it.

\begin{equation}\label{e:bigdiagram}
\xymatrix@C=.75cm@R=1cm{
(\X,\omega)
    \ar@{~>}[rr]^-{\tbox{2.2cm}{constraint reduction \\(w.r.t. $\Go$ at $0$)}}
&&(\uCo,\uomegao)
    \ar@{~>}[rr]^-{\tbox{2.2cm}{flux superselection (w.r.t. $\Gred$ at $\mathcal{O}_f$)}}
&&(\uuS_{[f]},\uuomegao_{[f]})\\
&{
    \;\qquad\C\qquad\;
    \ar@{_(->}[ul]^-{\iota_\C}
    \ar@{->>}[ur]_-{\pi_\circ}
    }
&&{
    \;\;\uh^{-1}(\mathcal{O}_{f})
    \ar@{_(->}[ul]^-{\underline{\iota}_{[f]}}
    \ar@{->>}[ur]_-{\underline{\pi}_{[f]}}
    }\\
&&{
    \;\;\h^{-1}(\mathcal{O}_f)\cap\C
    \ar@{_(->}[ul]^-{{\iota}^\C_{[f]}}
    \ar@{->>}[ur]_-{\pi_{\circ,[f]}}
    \ar@{_(-->}@/^2.7pc/[uull]^{\iota_{[f]}}
    \ar@{-->>}@/_2.7pc/[uurr]_{\pi_{[f]}}
    }\\
}
\end{equation}
~

\subsection{Constraint reduction}
We start by introducing some definitions and results that allow us to identify the ingredients entering the reduction procedure:

\begin{definition}[Annihilators]\label{def:annihilators}
    Let $\mathcal{W}$ be a nuclear Fr\'echet vector space, or the strong dual of a nuclear Fr\'echet space.\footnote{Recall, if $\mathcal{W}$ is nuclear Fr\'echet, its strong dual is nuclear but in general not Fr\'echet. However, if $\mathcal{W}$ is either a nuclear Fr\'echet vector space or the strong dual of one, $\mathcal{W}$ is reflexive, i.e.\ it is canonically isomorphic to its strong bi-dual: $\mathcal{W}^{**}_{\mathrm{str}} = \mathcal{W}$.} Moreover, let $\mathcal{X}\subset \mathcal{W}$ and $\mathcal{Y}\subset \mathcal{W}^*_{\mathrm{str}}$ be subsets. The \emph{annihilator of $\mathcal{X}$ in $\mathcal{Y}$} is the set 
    \begin{align*}
    \Ann(\mathcal{X},\mathcal{Y}) &= \{y\in \mathcal{Y}\, \vert\, \langle y,x\rangle = 0\  \forall x\in \mathcal{X} \} \subset \mathcal{W}^*_{\mathrm{str}}.
    \qedhere
    \end{align*}
\end{definition}

\begin{lemma}\label{lem:doubleannihilator}
Let $\mathcal{W}$ be as above, and $\mathcal{X}\subset \mathcal{W}$ be a closed vector subspace, then 
\[
\Ann(\Ann(\mathcal{X},\mathcal{W}^*_{\mathrm{str}}),\mathcal{W}) =\mathcal{X}.
\]
\end{lemma}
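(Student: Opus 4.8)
The plan is to recognise this statement as the bipolar theorem for closed linear subspaces, applied to the separating dual pairing $\langle\,\cdot\,,\cdot\,\rangle\colon \mathcal{W}^*_{\mathrm{str}}\times\mathcal{W}\to\mathbb{R}$. Write $A\doteq\Ann(\mathcal{X},\mathcal{W}^*_{\mathrm{str}})$ and $B\doteq\Ann(A,\mathcal{W})$, so that the claim to establish is $B=\mathcal{X}$. I would treat the two inclusions separately.

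First I would dispatch the inclusion $\mathcal{X}\subseteq B$, which is purely formal and uses no topology: if $x\in\mathcal{X}$ then, by the very definition of $A$, every $y\in A$ satisfies $\langle y,x\rangle=0$; hence $x$ annihilates all of $A$, i.e.\ $x\in B$. The substance lies in the reverse inclusion $B\subseteq\mathcal{X}$, which I would prove by contraposition via the Hahn--Banach separation theorem. The two standing hypotheses on $\mathcal{W}$---being either a nuclear Fr\'echet space or the strong dual of one---guarantee that $\mathcal{W}$ is a Hausdorff locally convex topological vector space (Fr\'echet spaces are locally convex by definition, and strong duals of nuclear Fr\'echet spaces are (DF)-spaces, hence also locally convex). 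Given $x\in\mathcal{W}\setminus\mathcal{X}$, closedness of the subspace $\mathcal{X}$ together with local convexity allows me to separate the point $x$ from $\mathcal{X}$: there exists a continuous linear functional $y$ on $\mathcal{W}$ with $y|_{\mathcal{X}}=0$ and $\langle y,x\rangle\neq0$. By construction $y\in A$, while $\langle y,x\rangle\neq0$ shows $x\notin B$; contraposing yields $B\subseteq\mathcal{X}$, and combining the two inclusions gives $B=\mathcal{X}$.

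The only point requiring care---and which I expect to be the main, if mild, obstacle---is the bookkeeping about which dual the separating functional lives in. Hahn--Banach produces an element of the \emph{continuous} dual of $\mathcal{W}$; I must observe that this underlying set coincides with that of $\mathcal{W}^*_{\mathrm{str}}$, the strong topology merely equipping the same set of functionals with a finer topology without altering its membership, so that indeed $y\in\mathcal{W}^*_{\mathrm{str}}$ and the annihilators are being computed in the correct space. In the second admissible case, where $\mathcal{W}$ is itself a strong dual, the reflexivity recorded in the footnote to Definition \ref{def:annihilators} ($\mathcal{W}^{**}_{\mathrm{str}}=\mathcal{W}$) ensures that the pairing is separating on both sides, so the identical separation argument applies verbatim. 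Apart from this identification, the argument is entirely standard, the essential inputs being nothing more than local convexity and the closedness assumption on $\mathcal{X}$.
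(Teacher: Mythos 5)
Your proof is correct, but it takes a genuinely different route from the paper's (Appendix \ref{app:proof-annihilators}). You prove the two inclusions directly, settling the nontrivial one, $\Ann(\Ann(\mathcal{X},\mathcal{W}^*_{\mathrm{str}}),\mathcal{W})\subseteq\mathcal{X}$, by Hahn--Banach separation of a point from a closed subspace of a Hausdorff locally convex space. The paper argues structurally instead: it identifies $\Ann(\mathcal{X},\mathcal{W}^*_{\mathrm{str}})\simeq(\mathcal{W}/\mathcal{X})^*_{\mathrm{str}}$ and $\Ann(\mathcal{Y},\mathcal{W})\simeq(\mathcal{W}^*_{\mathrm{str}}/\mathcal{Y})^*_{\mathrm{str}}$, dualises the short exact sequence $\mathcal{X}\to\mathcal{W}\to\mathcal{W}/\mathcal{X}$ to obtain $\mathcal{X}^*_{\mathrm{str}}\simeq\mathcal{W}^*_{\mathrm{str}}/(\mathcal{W}/\mathcal{X})^*_{\mathrm{str}}$, and then concludes from reflexivity, $\mathcal{X}=\mathcal{X}^{**}_{\mathrm{str}}$, valid for closed subspaces of nuclear Fr\'echet spaces and of their strong duals. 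Your route is the more elementary and the more general of the two: it uses only local convexity, Hausdorffness, and closedness of $\mathcal{X}$, so it makes clear that nuclearity and reflexivity are not needed for the double-annihilator identity per se---they enter only to make the two slots of $\Ann(\cdot,\cdot)$ in Definition \ref{def:annihilators} interchangeable, which is exactly the bookkeeping point you flag. What the paper's heavier route buys is the chain of intermediate identifications (annihilator $=$ dual of quotient, plus the dualised exact sequence), which are reused elsewhere in the paper, e.g.\ in $\fGred^*_{\mathrm{str}}\simeq\Ann(\fGo,\fG^*_{\mathrm{str}})$ of Proposition \ref{prop:resHamaction} and in Lemma \ref{lem:annihilators}. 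Note also that the two arguments are cousins at bottom: exactness of the dualised sequence (surjectivity of the restriction map $\mathcal{W}^*_{\mathrm{str}}\to\mathcal{X}^*_{\mathrm{str}}$) is itself a Hahn--Banach statement, so the paper has in effect repackaged your separation argument inside standard duality facts.
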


\begin{proof}
See Appendix \ref{app:proof-annihilators}
\end{proof}

\begin{definition}\label{def:fluxes+constr-red}~
\begin{enumerate}[label=(\roman*),leftmargin=*]
\item The space of \emph{(on-shell) fluxes} is
\[
\F \doteq \Im(\iota_\C^* h) \subset \fG^*_\loc;
\]
its elements are denoted by $f\in\F$, and their coadjoint orbits by $\mathcal{O}_f \subset \F$.
\item The \emph{constraint gauge algebra} $\fGo \subset \fG$ is the annihilator of the space of fluxes, i.e.\footnote{Although ultimately equivalent, this definition is not the same as the one given in \cite{RielloSchiavina}, see Theorem 4.33 \emph{ibidem}. Observe that this definition is well posed because $\F\subset \fG^*_\loc \subset \fG^*_{\mathrm{str}}$ and $\fG=\fG^{**}_{\mathrm{str}}$.\label{fnt:fGo}}
\[
\fGo \doteq  \Ann(\F,\fG)  \equiv \{ \xi \in \fG \ : \ \langle h(\phi), \xi\rangle = 0\ \forall \phi\in\C\}.
\]
\item The \emph{constraint gauge group} is the subgroup of $\G$ generated by $\fGo$:
\[
\Go \doteq \langle \exp \fGo \rangle \subset \G
\]
\item The first-stage reduced, or \emph{constraint-reduced}, phase space is 
\[
\uCo \doteq \C/\Go;
\]
we denote $\pi_\circ : \C \onto \uCo$ the corresponding projection. \qedhere
\end{enumerate}
\end{definition}

\begin{remark}[Equivariance]
Observe that the latter definition makes sense because $\Go\subset \G$ and the action of $\G$ preserve $\C$ as a consequence of the equivariance of $\bHo$ (Definition \ref{def:constraintsurface}).
\end{remark}

From \cite[Theorem 4.33]{RielloSchiavina} it follows in particular that:
\begin{proposition}\label{prop:Go-normal}
The constraint algebra $\fGo$ is a Lie ideal of $\fG$ and hence $\Go$ is a normal subgroup of $\G$.
\end{proposition}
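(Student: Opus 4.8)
The plan is to establish the infinitesimal statement first, namely that $\fGo$ is an ideal, $[\fG,\fGo]\subseteq\fGo$, directly from equivariance of the flux map together with the fact that the gauge action is tangent to the shell $\C$; the group-level normality of $\Go$ will then follow by integrating this to $\Ad$-invariance of $\fGo$.

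For the ideal property, I would fix $\xi\in\fGo$ and $\eta\in\fG$ and aim to show $\langle h(\phi),[\eta,\xi]\rangle=0$ for all $\phi\in\C$. Since the flux form $d\bh$ is separately equivariant (Definition \ref{def:constraintsurface}), integrating over $\Sigma$ shows the flux map obeys $\mathbb{L}_{\rho(\eta)}\langle h,\xi\rangle=\langle h,[\eta,\xi]\rangle$ as functions on $\X$. Evaluating at $\phi\in\C$ reduces the claim to the vanishing of the left-hand side there. Now $\langle h,\xi\rangle$ vanishes identically on $\C$ by the very definition of $\fGo=\Ann(\F,\fG)$. Moreover $\rho(\eta)$ is tangent to $\C$ for every $\eta\in\fG$: this is exactly the content of the Equivariance remark and follows from equivariance of the constraint form $\bHo$, since $\mathbb{L}_{\rho(\eta)}\langle\bHo,\zeta\rangle|_\C=\langle\bHo,[\eta,\zeta]\rangle|_\C=0$ exhibits $\rho(\eta)$ as annihilating the defining functions of $\C$ along $\C$. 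As the Lie derivative of a function along a vector field tangent to its zero locus vanishes on that locus, we get $\mathbb{L}_{\rho(\eta)}\langle h,\xi\rangle|_\C=0$, hence $[\eta,\xi]\in\fGo$.

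For the group statement, I would integrate this to $\Ad$-invariance of $\fGo$. The cleanest route exploits that $\F=\Im(\iota_\C^*h)$ is stable under the coadjoint action: coadjoint equivariance of $h$ together with $\G$-invariance of $\C$ give that $\Ad^*_{g^{-1}}\F\subseteq\F$ for every $g\in\G$, so for $\xi\in\fGo$ one has $\langle f,\Ad_g\xi\rangle=\langle\Ad^*_{g^{-1}}f,\xi\rangle=0$ for all $f\in\F$, whence $\Ad_g\xi\in\fGo$; the closedness of $\fGo$ as an annihilator (Lemma \ref{lem:doubleannihilator}) ensures this step is well posed. Thus conjugation by any $g\in\G$ maps $\exp\fGo$ into itself, and since $\G=\langle\exp\fG\rangle$ is generated by exponentials, conjugation by an arbitrary element of $\G$ preserves $\Go=\langle\exp\fGo\rangle$; therefore $\Go$ is normal.

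The main obstacle is not the algebra but the functional-analytic bookkeeping in the Fr\'echet setting. One must know that $\C$ is a genuine (split) submanifold, so that ``tangent to $\C$'' and ``vanishing of $\mathbb{L}_{\rho(\eta)}\langle h,\xi\rangle$ on $\C$'' are meaningful; that the pairings with $\fG^*_\loc$ and the coadjoint action $\Ad^*$ on the densitised/local dual behave as in the reflexive finite-dimensional picture (this is where Lemma \ref{lem:doubleannihilator} and Remark \ref{rmk:dualspaces} enter); and that the passage from an infinitesimal ideal to a normal subgroup generated by exponentials is legitimate when $\exp$ is neither locally surjective nor a local diffeomorphism. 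These are precisely the points controlled by \cite[Theorem 4.33]{RielloSchiavina}, which is why the proposition is quoted as a consequence of it.
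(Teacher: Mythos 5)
Your proposal is correct in substance, but it takes a genuinely different route from the paper, because the paper does not prove this proposition at all: it is stated as an immediate consequence of \cite[Theorem 4.33]{RielloSchiavina}, the companion result that establishes (among other things) that the annihilator definition of $\fGo$ used here agrees with the definition given there and that it is the maximal constraint Lie ideal. Your argument, by contrast, is self-contained and runs off the structural axioms alone: the ideal property follows from the separate equivariance of the flux form (Definition \ref{def:constraintsurface}), the $\G$-invariance of $\C$ (the Equivariance remark), and the definition $\fGo=\Ann(\F,\fG)$ --- since the flow of $\rho(\eta)$ through a point of $\C$ stays in $\C$ and $\langle h,\xi\rangle$ vanishes identically there, its derivative $\langle h,[\eta,\xi]\rangle$ vanishes on $\C$; normality then follows from $\Ad^*$-stability of $\F$, the conjugation identity $g\exp(\xi)g^{-1}=\exp(\Ad_g\xi)$, and the fact that $\Go$ is generated by exponentials. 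What the citation buys the paper is precisely the functional-analytic content you flag at the end: the passage from infinitesimal to group-level equivariance of $h$ (needed both for invariance of $\C$ and for stability of $\F$ under the coadjoint action of all of $\G$, not just exponentials --- though this is repaired by noting $\G=\langle\exp\fG\rangle$ and integrating along one-parameter subgroups), and the duality bookkeeping in the Fr\'echet setting. What your argument buys is transparency: it exhibits the mechanism (equivariance plus on-shell vanishing) without appealing to an external theorem, and it makes clear that no property of Yang--Mills theory specifically is used, only the axioms of Definition \ref{def:LHGT}. Two small imprecisions, neither fatal: closedness of $\fGo$ is automatic for any annihilator of continuous functionals and does not require Lemma \ref{lem:doubleannihilator} (which concerns double annihilators); and your tangency step is better phrased via flow-invariance of $\C$ under the group action (which the paper asserts) than via annihilation of defining functions, since the latter implies tangency only when $\C$ is regularly cut out.
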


\begin{remark}[Smoothness of $\uCo$]
Although in the following theorem the smoothness of $\C$ is assumed, in the specific cases treated in this article its smoothness can be proven. Similarly, in these cases, one can prove the (weak) non-degeneracy of $\uomegao$ defined as in the following theorem; the relevant assumption in the theorem is the symplectic closure of $\rho(\fGo)$ \cite[Chapter 4]{DiezPhD}.
\end{remark}

\begin{theorem}[First stage: Constraint reduction]
If $\rho(\fGo)$ is symplectically closed and the space $\uCo\doteq \C/\Go$ is smooth then $\uCo\simeq \C/\C^\omega$ when equipped with the unique (symplectic) 2-form $\uomegao\in\Omega^2(\uCo)$ such that
\[
\pi_\circ^*\uomegao = \iota_\C^*\omega, \qquad \pi_\circ\colon \C\onto\uCo.
\]
 We call $(\uCo,\uomegao)$ the constraint-reduction of $(\X,\omega)$.
\end{theorem}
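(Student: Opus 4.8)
The plan is to establish the claimed symplectomorphism $\uCo \simeq \C/\C^\omega$ by identifying the characteristic distribution $\C^\omega$ of the (pre)symplectic form $\iota_\C^*\omega$ with the tangent distribution to the $\Go$-orbits, $\rho(\fGo)|_\C$. Once these two distributions agree, the two quotients $\C/\C^\omega$ and $\C/\Go$ coincide as sets (and, under the smoothness hypothesis, as manifolds), and the descended $2$-form is automatically symplectic. First I would unwind the definitions: by the local Hamiltonian flow equation, the infinitesimal generator $\rho(\xi)$ is the Hamiltonian vector field of $\langle\bH,\xi\rangle = \langle\bHo,\xi\rangle + \langle d\bh,\xi\rangle$. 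Restricting to the shell $\C$, where $\bHo$ vanishes by Definition \ref{def:constraintsurface}, the generators $\rho(\xi)$ for $\xi\in\fGo$ are exactly those whose flux pairing $\langle h,\xi\rangle$ also vanishes on $\C$ (this is precisely the defining property $\fGo = \Ann(\F,\fG)$ in Definition \ref{def:fluxes+constr-red}). These are the generators that are symplectically orthogonal to all of $T_\phi\C$ along the constraint surface.

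The key step is the double inclusion $\rho(\fGo)|_\C = (T\C)^\omega = \C^\omega$. For the inclusion $\rho(\fGo)|_\C \subseteq \C^\omega$: given $\xi\in\fGo$ and any tangent vector $v\in T_\phi\C$, I would compute $\omega(\rho(\xi),v) = \bigl(\iota_{\rho(\xi)}\omega\bigr)(v) = \bigl(d\langle\bH,\xi\rangle\bigr)(v)$ using the integrated Hamiltonian flow equation. Since $\langle\bHo,\xi\rangle\equiv 0$ on $\C$ and $\langle h,\xi\rangle\equiv 0$ on $\C$ for $\xi\in\fGo$, the full Hamiltonian $\langle H,\xi\rangle = \int_\Sigma(\langle\bHo,\xi\rangle + \langle d\bh,\xi\rangle)$ is constant (indeed zero) along $\C$, so its differential annihilates $v\in T_\phi\C$; hence $\rho(\xi)\in\C^\omega$. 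For the reverse inclusion $\C^\omega\subseteq\rho(\fGo)|_\C$, I would argue that the constraint surface $\C$ is cut out as the zero locus of $\bHo$, so that its conormal/annihilator is generated by the differentials $d\langle\bHo,\xi\rangle$; the annihilator characterisation of $\fGo$ via Lemma \ref{lem:doubleannihilator} (the double-annihilator identity for closed subspaces of nuclear Fr\'echet spaces) is what guarantees that no flux-generating directions leak in and that the symplectic orthogonal is exhausted precisely by the order-zero (constraint) generators indexed by $\fGo$.

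The remaining steps are then formal. Under the smoothness assumption on $\uCo\doteq\C/\Go$ and the symplectic-closure hypothesis on $\rho(\fGo)$ (which, per \cite[Chapter 4]{DiezPhD}, is what forces $\C^\omega$ to be a genuine integrable subbundle with closed leaves in the Fr\'echet category), the leaf space $\C/\C^\omega$ is a smooth manifold and coincides with $\C/\Go$ via the equality of distributions established above. To descend the form, I would invoke the standard argument: the pullback $\iota_\C^*\omega$ is $\Go$-basic, i.e.\ both $\Go$-invariant (by equivariance of $\bom$, hence of $\omega$) and horizontal (i.e.\ $\iota_{\rho(\xi)}\iota_\C^*\omega = 0$ for $\xi\in\fGo$, which is exactly the inclusion $\rho(\fGo)|_\C\subseteq\C^\omega$ just proven). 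A basic form descends uniquely along the surjective submersion $\pi_\circ$ to a form $\uomegao$ with $\pi_\circ^*\uomegao = \iota_\C^*\omega$; closedness of $\uomegao$ follows from $d\iota_\C^*\omega = \iota_\C^* d\omega = 0$, and its (weak) nondegeneracy follows because $\C^\omega = \rho(\fGo)|_\C = \ker(\iota_\C^*\omega)$ is precisely the kernel that has been quotiented out.

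The main obstacle is functional-analytic rather than algebraic: everything above is routine in finite dimensions, but in the Fr\'echet setting one must be careful that $\C^\omega$ is a closed subspace (so that Lemma \ref{lem:doubleannihilator} applies and the reverse inclusion goes through), that the quotient $\C/\Go$ inherits a smooth manifold structure, and that the descended form is weakly nondegenerate rather than merely having the expected kernel. This is exactly why the theorem is stated conditionally, under the two hypotheses ``$\rho(\fGo)$ symplectically closed'' and ``$\uCo$ smooth''; I would lean on \cite[Chapter 4]{DiezPhD} to supply the symplectic-reduction machinery in the infinite-dimensional category and treat these hypotheses as black boxes, focusing the proof's content on the identification $\C^\omega = \rho(\fGo)|_\C$, which is where the structure specific to locally Hamiltonian gauge theory (the $\bH = \bHo + d\bh$ split and the annihilator definition of $\fGo$) actually enters.
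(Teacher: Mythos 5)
The first thing to note is that the paper contains no proof of this statement: it sits in the review Section \ref{sec:theoreticalframework} and is imported wholesale from \cite{RielloSchiavina} (its first-stage reduction theorem), so your proposal can only be measured against the argument given there. Your skeleton is the same as that source's --- identify the characteristic distribution $\C^\omega$ with the orbit distribution $\rho(\fGo)|_\C$, then descend the basic form along $\pi_\circ$ --- and your easy inclusion $\rho(\fGo)|_\C\subseteq\C^\omega$ is correct as written (you should also record that $\rho(\xi)$ is \emph{tangent} to $\C$, by equivariance of $\bHo$, since $\C^\omega=T\C\cap (T\C)^{\perp_\omega}$). The genuine gap is in the hard inclusion $\C^\omega\subseteq\rho(\fGo)|_\C$, which is the entire content of the theorem (it is equivalent to weak nondegeneracy of $\uomegao$). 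Two things go wrong there. First, the conormal of $\C$ is spanned by the differentials $\bd\langle \Ho,\eta\rangle$ with $\eta$ ranging over \emph{all} of $\fG$, and these are \emph{not} the Hamiltonian differentials of the generators: $\bd\langle \Ho,\eta\rangle=\bi_{\rho(\eta)}\omega-\bd\langle \h,\eta\rangle$, and the flux terms $\bd\langle\h,\eta\rangle$ need not admit Hamiltonian vector fields for a weak symplectic form; so even granting the conormal characterisation, membership of $\bi_v\omega$ in that span does not produce $v=\rho(\xi)$, let alone with $\xi\in\fGo$. Second, Lemma \ref{lem:doubleannihilator} is the wrong tool: it is a bipolar identity for closed subspaces of $\fG$ under the duality pairing with $\fG^*_{\mathrm{str}}$ (the paper uses it to characterise $\fGo$ and $\F$), and it does not transfer to double $\omega$-orthogonals inside $T_\phi\X$, precisely because $\omega^\flat$ is injective but not surjective. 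The statement that plays that role is exactly the hypothesis you shelved as a black box --- ``$\rho(\fGo)$ is symplectically closed'', i.e.\ $(\rho(\fGo)^{\perp_\omega})^{\perp_\omega}=\rho(\fGo)$ --- and, contrary to your gloss, it has nothing to do with integrability or closedness of the leaves of $\C^\omega$: it is the engine of the hard inclusion.

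The argument that actually closes is the following. First prove the on-shell inclusion $\rho(\fGo)^{\perp_\omega}\subseteq T\C$: for $\xi$ compactly supported in the interior of $\Sigma$ one has $\langle \h,\xi\rangle\equiv 0$ identically on $\X$ (the flux is a boundary integral, cf.\ Equation \eqref{eq:fluxformYM}), and such $\xi$ lie in $\fGo$; for them $\bi_{\rho(\xi)}\omega=\bd\langle \Ho,\xi\rangle$ exactly. Hence if $w$ is $\omega$-orthogonal to $\rho(\fGo)$, then $\bd\langle \Ho,\xi\rangle(w)=0$ for all interior-supported $\xi$, and ultralocality of $\bHo$ (order $0$ in $\xi$) together with the fundamental lemma of the calculus of variations upgrades this to $\bd\langle \Ho,\eta\rangle(w)=0$ for every $\eta\in\fG$, i.e.\ $w\in T\C$ (using that $\C$ is cut out smoothly by $\bHo$, cf.\ Proposition \ref{prop:constr-surf} in the null YM case). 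With this lemma the hard inclusion is one line:
\[
\ker(\iota_\C^*\omega)\ \subseteq\ (T\C)^{\perp_\omega}\ \subseteq\ \big(\rho(\fGo)^{\perp_\omega}\big)^{\perp_\omega}\ =\ \rho(\fGo),
\]
where the middle step is order reversal of $\perp_\omega$ applied to $\rho(\fGo)^{\perp_\omega}\subseteq T\C$, and the last equality is the symplectic-closure hypothesis. Combined with your easy inclusion this gives $\C^\omega=\rho(\fGo)|_\C$; the remainder of your proposal (identification of the quotients using connectedness of $\Go$, descent of the basic form, uniqueness from injectivity of $\pi_\circ^*$, closedness, and nondegeneracy) then goes through as you wrote it.
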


Diagrammatically, this shows the leftmost part of the diagram in Equation \eqref{e:bigdiagram}:
\[
\xymatrix@C=.75cm{
(\X,\omega)
    \ar@{~>}[rr]^-{\tbox{2.2cm}{constraint reduction \\(w.r.t. $\Go$ at $0$)}}
&&(\uCo,\uomegao)\\
&{
    \;\C\;
    \ar@{_(->}[ul]^-{\iota_\C}
    \ar@{->>}[ur]_-{\pi_\circ}
    }
}
\]

\noindent In the diagram above, ``w.r.t. $\Go$ at 0" stands for: ``with respect to the action of $\Go$ at $J_\circ^{-1}(0)$, the \emph{zero}-level set of the corresponding momentum map $J_\circ$''.


\subsection{Flux superselection}\label{sec:theoframe-fluxsupersel}

If $\pp\Sigma\neq \emptyset$, the constraint-reduced phase space $(\uCo,\uomegao)$ fails to be fully gauge invariant: on it one still has the residual action of the \emph{flux group} $\Gred$ whose Lie algebra is the flux gauge algebra $\fGred \doteq \fG / \fGo$. Moreover, this action is itself Hamiltonian. Indeed, a consequence of the equivariance properties of $\bH$, and the fact that $\fGo\subset \fG$ is a Lie ideal, is that the locally Hamiltonian action of $\G$ on $(\X,\bom)$ descends to a Hamiltonian action of the flux gauge group $\Gred$ on the constraint-reduced phase space $(\uCo,\uomegao)$:

\begin{proposition}[Hamiltonian action on $(\uCo,\uomegao)$]~\label{prop:resHamaction}
\begin{enumerate}[label=(\roman*)]
    \item The \emph{flux gauge algebra} is $\fGred\doteq \fG/\fGo$ and 
    \[
    \Gred\doteq\G/\Go
    \]
    is the \emph{flux gauge group}; one has $\mathrm{Lie}(\Gred)=\fGred$.
    \item One can identify
    $\fGred^*_{\mathrm{str}} \simeq \Ann(\fGo, \fG^*_{\mathrm{str}})$, and there is a unique map
    \[
    \uh: \uCo \to \fGred^*_{\mathrm{str}}\quad\mathrm{such\ that}\quad \pi_\circ^* \uh = \iota_\C^*h.
    \]
    We call it the \emph{reduced flux map}. Furthermore, $\F \doteq \Im(\iota^*_\C h) \simeq \Im(\uh)$.
    \item The action $\rho:\X\times \fG \to T\X$ descends to an action $\underline\rho:\uCo \times \fGred \to T\uCo$.
    \item The action $\underline{\rho}$ of $\fGred$ on $(\uCo,\uomegao)$ is Hamiltonian with momentum map $\uh$, i.e.\ for every $\underline \xi \in \fGred$
    \[
    \bi_{\underline{\rho}(\underline \xi)}\uomegao = \bd \langle \uh, \underline \xi\rangle.
    \]
\end{enumerate}
\end{proposition}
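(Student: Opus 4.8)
The plan is to prove the four items in order, treating (i)--(iii) as the algebraic and functional-analytic set-up and reserving the substance for the Hamiltonian identity (iv), which I would obtain by pulling the (integrated) local flow equation back to the shell $\C$ and then pushing it forward along $\pi_\circ$.

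For (i), since $\fGo$ is a Lie ideal of $\fG$ (Proposition \ref{prop:Go-normal}) and $\Go\trianglelefteq\G$, the quotient $\fGred=\fG/\fGo$ is a Lie algebra and $\Gred=\G/\Go$ a Lie group with $\mathrm{Lie}(\Gred)=\fGred$; in the Fr\'echet category this uses that $\fGo$ is closed, which is built into its description as an annihilator (Definition \ref{def:fluxes+constr-red}). For (ii), the isomorphism $\fGred^*_{\mathrm{str}}\simeq\Ann(\fGo,\fG^*_{\mathrm{str}})$ is the usual duality between a quotient and an annihilator, legitimate here because $\fG$ is reflexive and $\fGo$ is closed, so that Lemma \ref{lem:doubleannihilator} applies. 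To produce $\uh$ I would first note that $\iota_\C^*h$ takes values in $\F=\Im(\iota_\C^*h)$, and that by the very definition $\fGo=\Ann(\F,\fG)$ every $f\in\F$ annihilates $\fGo$; hence $\iota_\C^*h$ already lands in $\Ann(\fGo,\fG^*_{\mathrm{str}})\simeq\fGred^*_{\mathrm{str}}$. It remains to check $\Go$-invariance: using that $\rho(\xi)$ is tangent to $\C$ and the separate equivariance of the flux form, together with $[\fGo,\fG]\subseteq\fGo$, I would show $\L_{\rho(\xi)}\langle\iota_\C^*h,\eta\rangle=\langle\iota_\C^*h,[\xi,\eta]\rangle=0$ for every $\xi\in\fGo$ and $\eta\in\fG$ (the last equality because $[\xi,\eta]\in\fGo$). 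Thus $\iota_\C^*h$ is infinitesimally, hence (by connectedness of $\Go$) globally, $\Go$-invariant and descends to a unique $\uh$ with $\pi_\circ^*\uh=\iota_\C^*h$; surjectivity of $\pi_\circ$ gives $\Im(\uh)=\F$.

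For (iii), normality of $\Go$ makes the $\G$-action on $\C$ descend to a $\Gred$-action on $\uCo$; infinitesimally, since $\Ad_g\xi-\xi\in\fGo$ for $g\in\Go$, the generator $\rho(\xi)$ is $\pi_\circ$-projectable and its projection depends only on $\underline\xi=\xi\bmod\fGo$, yielding $\underline\rho$ with $(\pi_\circ)_*\rho(\xi)=\underline\rho(\underline\xi)$. The heart of the proof is (iv). I would integrate the local flow equation of Definition \ref{def:LHGT} over $\Sigma$ to obtain $\bi_{\rho(\xi)}\omega=\bd\langle H,\xi\rangle$ with $\langle H,\xi\rangle=\langle\Ho,\xi\rangle+\langle h,\xi\rangle$ (the contraction and $\bd$ commute with $\int_\Sigma$, which only affects the spacetime form-degree). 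Pulling back along $\iota_\C$, the constraint piece drops out: $\langle\Ho,\xi\rangle$ vanishes on $\C$, so $\iota_\C^*\bd\langle\Ho,\xi\rangle=\bd(\iota_\C^*\langle\Ho,\xi\rangle)=0$, leaving $\iota_\C^*\bi_{\rho(\xi)}\omega=\bd\langle\iota_\C^*h,\xi\rangle$. Since $\rho(\xi)$ is tangent to $\C$ and $\iota_\C^*\omega=\pi_\circ^*\uomegao$, the left-hand side equals $\bi_{\rho(\xi)}\pi_\circ^*\uomegao=\pi_\circ^*\bi_{\underline\rho(\underline\xi)}\uomegao$; since $\langle\iota_\C^*h,\xi\rangle=\pi_\circ^*\langle\uh,\underline\xi\rangle$, the right-hand side equals $\pi_\circ^*\bd\langle\uh,\underline\xi\rangle$. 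As $\pi_\circ$ is a surjective submersion onto the smooth quotient $\uCo$, $\pi_\circ^*$ is injective on forms, and we conclude $\bi_{\underline\rho(\underline\xi)}\uomegao=\bd\langle\uh,\underline\xi\rangle$.

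The main obstacle I anticipate is not this final pullback computation, which is essentially mechanical, but the functional-analytic bookkeeping behind (ii): making the identification $\fGred^*_{\mathrm{str}}\simeq\Ann(\fGo,\fG^*_{\mathrm{str}})$ rigorous in the nuclear-Fr\'echet setting (closedness of $\fGo$, reflexivity, Lemma \ref{lem:doubleannihilator}) and confirming that $\iota_\C^*h$ takes values in the \emph{local} dual, not merely the strong dual, so that $\uh$ lands in the space named in the statement. A secondary delicate point is the hypothesis that $\pi_\circ$ be a genuine surjective submersion, which underlies the injectivity of $\pi_\circ^*$ and rests on the smoothness of $\uCo$ and the symplectic-closure assumption of the first-stage theorem.
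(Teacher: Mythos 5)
Your proof is correct and follows exactly the route the paper indicates: this proposition sits in the review section with no written proof (it is deferred to \cite{RielloSchiavina}), but the text preceding it names the two key ingredients---equivariance of $\bH$ (hence of $\bHo$ and $d\bh$ separately) and the Lie-ideal property of $\fGo$---and these are precisely what you use, both for the descent of $\iota_\C^*h$ via $\langle\iota_\C^*h,[\xi,\eta]\rangle=0$ for $\xi\in\fGo$ and for the final pullback argument $\pi_\circ^*\bi_{\underline\rho(\underline\xi)}\uomegao=\iota_\C^*\bi_{\rho(\xi)}\omega=\bd\langle\iota_\C^*h,\xi\rangle=\pi_\circ^*\bd\langle\uh,\underline\xi\rangle$ combined with injectivity of $\pi_\circ^*$. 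Your flagged caveats (closedness of $\fGo$ as an annihilator, reflexivity for $\fGred^*_{\mathrm{str}}\simeq\Ann(\fGo,\fG^*_{\mathrm{str}})$, and smoothness of $\uCo$ making $\pi_\circ$ a surjective submersion) are exactly the hypotheses the paper's framework supplies, so nothing is missing.
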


\begin{remark}\label{rmk:Gredlocality}
Proposition \ref{prop:resHamaction} is phrased in terms of the strong dual $\fG^*_{\mathrm{str}}$, and it characterises $\Im(\uh)\subset\fGred^*_{\mathrm{str}}$ as a subset of $\fG^*_{\mathrm{str}}$. Note that according to our definitions $\Im(\iota^*_\C h) \subset \fG^*_\loc$ and $\Im(\uh) \subset \fGred^*_{\mathrm{str}}$ live a priori in different dual spaces. However, the statement in the proposition makes sense because in the strong dual there exists the dual of the projection map, $\fGred^*_{\mathrm{str}} \to \fG^*_{\mathrm{str}}$, as well as the embedding $\fG^*_\loc \into \fGred^*_\mathrm{str}$. In the present article we look at null YM theory, where $\fGred$ is either $\fg^{\pp\Sigma}\simeq C^\infty(\pp\Sigma,\fg)$ or $\fg^{\pp\Sigma}/\fg$ (see Definition \ref{def:mappinggroups} to fix the notation). 
We can relate to the previous discussion noting that, in this case, we also have the embedding of the densitised dual in the strong dual $\fGred^* \hookrightarrow \fGred^*_{\mathrm{str}}$.
In fact, in this case, Proposition \ref{prop:resHamaction} can be phrased in terms of \emph{densitised} and \emph{local} duals as: 
\[
\fGred^*\simeq \Ann(\fGo, \fG_\loc^*),
\]
with $\fGred^*\simeq(\fg^{\pp\Sigma})^*$ or $\fGred^*\simeq\Ann(\fg,(\fg^{\pp\Sigma})^*)\subset (\fg^{\pp\Sigma})^*$, respectively. Further details can be found in Lemma \ref{lem:annihilators} and Proposition \ref{prop:fluxes}.
\end{remark}

We can thus proceed to a second stage of the reduction procedure. We start by introducing the \emph{flux superselection sectors} via an orbit-reduction procedure:

\begin{lemma}[Flux superselection sectors]\label{lem:fluxSS}
Consider a flux $f\in \F \simeq \Im(\uh) \subset \fGred^*$ and its coadjoint orbit $\mathcal{O}_f\subset \fGred^*$. Then
\[
\uuS_{[f]} \doteq \uh^{-1}(\mathcal{O}_f)/\Gred
\]
carries a symplectic 2-form $\uuomegao_{[f]}\in\Omega^2(\uuS_{[f]})$ uniquely defined by the equation
\[
\underline{\pi}^*_{[f]} \uuomegao_{[f]} = \underline\iota_{[f]}^*\big( \uomegao - \uh^*\Omega_{[f]}\big),
\]
where (i) $\underline\pi_{[f]} : \uh^{-1}(\mathcal{O}_f) \onto \uuS_{[f]}$ is the projection associated to the quotient by $\Gred$, (ii) $\underline\iota_{[f]}^* : \uh^{-1}(\mathcal{O}_f) \hookrightarrow \uCo $ is the natural embedding, and (iii) $\Omega_{[f]}$ is the canonical (Kirillov--Konstant--Souriau, or KKS) symplectic structure on $\mathcal{O}_f$. We call $\uuS_{[f]}$ a \emph{flux superselection sector}.
\end{lemma}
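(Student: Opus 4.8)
The plan is to recognise this statement as the orbit-reduction form of the Marsden--Weinstein--Meyer theorem \cite{MarsdenWeinstein,Meyer,marsden2007stages}, transported to the Fréchet setting of \cite{DiezPhD}, and applied to the Hamiltonian $\Gred$-action on $(\uCo,\uomegao)$ whose equivariant momentum map is the reduced flux map $\uh$ supplied by Proposition \ref{prop:resHamaction}. The argument splits cleanly into an algebraic part, showing that the prescribed 2-form on $\uh^{-1}(\mathcal{O}_f)$ is \emph{basic} for $\underline{\pi}_{[f]}$ (hence descends to a unique closed $\uuomegao_{[f]}$), and an analytic part, establishing weak nondegeneracy of the descended form.

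For the algebraic part I would work on the level set $\uh^{-1}(\mathcal{O}_f)$, on which $\uh$ restricts to a map into $\mathcal{O}_f$ so that $\uh^*\Omega_{[f]}$ is defined, and set $\alpha \doteq \underline{\iota}_{[f]}^*(\uomegao - \uh^*\Omega_{[f]})$. Invariance of $\alpha$ under $\Gred$ is immediate: $\uomegao$ is preserved by the Hamiltonian action, $\uh$ is $\Gred$-equivariant, and the KKS form $\Omega_{[f]}$ is coadjoint-invariant, so $\uomegao - \uh^*\Omega_{[f]}$ is invariant and so is its restriction. The decisive point is horizontality, i.e.\ $\bi_{\underline{\rho}(\underline{\xi})}\alpha = 0$ for all $\underline{\xi}\in\fGred$, and it rests on the compatibility of the two momentum maps. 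The defining property of the KKS form is that the inclusion $\mathcal{O}_f \hookrightarrow \fGred^*$ is its own momentum map, $\bi_{\underline{\xi}_{\mathcal{O}}}\Omega_{[f]} = \bd\langle\,\cdot\,,\underline{\xi}\rangle$, where $\underline{\xi}_{\mathcal{O}}$ is the coadjoint fundamental vector field. Infinitesimal equivariance of $\uh$ gives $T\uh\circ\underline{\rho}(\underline{\xi}) = \underline{\xi}_{\mathcal{O}}\circ\uh$ along the level set, whence $\bi_{\underline{\rho}(\underline{\xi})}\uh^*\Omega_{[f]} = \uh^*\bd\langle\,\cdot\,,\underline{\xi}\rangle = \bd\langle\uh,\underline{\xi}\rangle$; this cancels against $\bi_{\underline{\rho}(\underline{\xi})}\uomegao = \bd\langle\uh,\underline{\xi}\rangle$ from the momentum-map equation of Proposition \ref{prop:resHamaction}. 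Closedness of $\uomegao$ and $\Omega_{[f]}$ then yields a unique closed $\uuomegao_{[f]}$ with $\underline{\pi}_{[f]}^*\uuomegao_{[f]} = \alpha$.

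For nondegeneracy I would identify the kernel of $\alpha$ at each $p\in\uh^{-1}(\mathcal{O}_f)$ with the vertical distribution $\underline{\rho}(\fGred)_p$ collapsed by $\underline{\pi}_{[f]}$. The inclusion $\underline{\rho}(\fGred)_p \subseteq \ker\alpha_p$ is exactly the horizontality just proved. For the reverse inclusion I would use the reduction lemma $\ker T_p\uh = \underline{\rho}(\fGred)_p^{\uomegao}$ (a direct consequence of $\bi_{\underline{\rho}(\underline{\xi})}\uomegao = \bd\langle\uh,\underline{\xi}\rangle$) together with $T_p\uh^{-1}(\mathcal{O}_f) = (T_p\uh)^{-1}(T\mathcal{O}_f)$ and the homogeneity of $\mathcal{O}_f$: subtracting a suitable orbit tangent reduces any kernel vector to one lying in $\ker T_p\uh$, on which $\alpha$ agrees with $\uomegao$, forcing it into $\underline{\rho}(\fGred)_p$. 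Equivalently, one may invoke the standard identification of orbit with point reduction, $\uuS_{[f]} \simeq \uh^{-1}(f)/(\Gred)_f$, reducing the claim to the first-stage-type Marsden--Weinstein statement already used for $\uCo$.

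As everywhere in this paper, the genuine difficulty is analytic rather than algebraic: one must guarantee that $\uh^{-1}(\mathcal{O}_f)$ is a bona fide Fréchet submanifold, that the $\Gred$-quotient is smooth, and that the descended form is \emph{weakly} nondegenerate, none of which is automatic in infinite dimensions. I expect to control these precisely as in the first stage, assuming properness of the $\Gred$-action and symplectic closure of $\underline{\rho}(\fGred)$ (cf.\ \cite[Chapter 4]{DiezPhD}), and verifying them directly in the null Yang--Mills application, where $\fGred$ is $\fg^{\pp\Sigma}$ or $\fg^{\pp\Sigma}/\fg$ and the orbits $\mathcal{O}_f$ and stabilisers $(\Gred)_f$ admit explicit descriptions.
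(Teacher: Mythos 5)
Your proposal is correct and is essentially the paper's own route: the lemma is quoted from the framework of \cite{RielloSchiavina}, whose proof is precisely the orbit-reduction argument you reconstruct --- basicness of $\underline\iota_{[f]}^*(\uomegao - \uh^*\Omega_{[f]})$ via invariance plus the cancellation of $\bi_{\underline{\rho}(\underline{\xi})}\uomegao = \bd\langle\uh,\underline{\xi}\rangle$ against the KKS identity, followed by the kernel identification using $\ker T\uh = \underline{\rho}(\fGred)^{\uomegao}$ and the shift-by-an-orbit-tangent trick. You also flag exactly the infinite-dimensional caveats (properness, smoothness of the level set and quotient, symplectic closure of $\underline{\rho}(\fGred)$) that the paper itself defers to the Fr\'echet treatment and verifies explicitly only in the null Yang--Mills application.
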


\begin{theorem}[Second stage: flux superselection]\label{thm:generalsuperselection}
The fully reduced phase-space ${\uuC} \doteq \C/\G \simeq \uCo/\Gred$ is a Poisson space whose symplectic leaves are the flux superselection sectors:
\[
{\uuC} \simeq \bigsqcup_{\mathcal{O}_f\subset \F} \uuS_{[f]}.
\]
\end{theorem}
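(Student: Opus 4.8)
The plan is to assemble the two reduction stages already set up in the diagram \eqref{e:bigdiagram} and the preceding lemmas, and to verify that the resulting decomposition of $\uuC$ into the leaves $\uuS_{[f]}$ is (i) a disjoint union covering all of $\uuC$, (ii) a Poisson manifold with the $\uuS_{[f]}$ as its symplectic leaves. The two nontrivial inputs are Proposition \ref{prop:resHamaction}, which makes the residual $\Gred$-action on $(\uCo,\uomegao)$ Hamiltonian with momentum map $\uh$, and Lemma \ref{lem:fluxSS}, which endows each $\uuS_{[f]}=\uh^{-1}(\mathcal{O}_f)/\Gred$ with its symplectic form $\uuomegao_{[f]}$ via orbit reduction. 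The identification $\uuC=\C/\G\simeq\uCo/\Gred$ is the reduction-by-stages statement already granted: since $\fGo$ is a Lie ideal (Proposition \ref{prop:Go-normal}) and $\Go\subset\G$ is normal, quotienting $\C$ by $\G$ factors through the intermediate quotient $\uCo=\C/\Go$, followed by the residual $\Gred=\G/\Go$.

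First I would establish the set-theoretic partition. Because $\uh\colon\uCo\to\fGred^*$ is $\Gred$-equivariant (coadjoint-equivariance of a momentum map), its fibres over whole coadjoint orbits $\mathcal{O}_f$ are $\Gred$-invariant, and the orbits $\mathcal{O}_f$ partition $\Im(\uh)=\F$. Hence the preimages $\uh^{-1}(\mathcal{O}_f)$ partition $\uCo$, and passing to the $\Gred$-quotient gives a disjoint decomposition
\[
\uuC\;\simeq\;\uCo/\Gred\;=\;\bigsqcup_{\mathcal{O}_f\subset\F}\uh^{-1}(\mathcal{O}_f)/\Gred\;=\;\bigsqcup_{\mathcal{O}_f\subset\F}\uuS_{[f]},
\]
which is exactly the claimed set-level statement. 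Each piece is symplectic by Lemma \ref{lem:fluxSS}, so the content that remains is to show these symplectic pieces fit together into a single Poisson structure whose symplectic foliation is precisely $\{\uuS_{[f]}\}$.

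Next I would produce the Poisson bracket on (Hamiltonian) functions on $\uuC$. The natural route is to note that $\Gred$-invariant functions on $\uCo$ descend to $\uuC=\uCo/\Gred$, and that the Poisson bracket on $C^\infty(\uCo)^{\Gred}$ induced by $\uomegao$ restricts to such invariants; since the ideal cutting out each $\uh^{-1}(\mathcal{O}_f)$ is generated by the (equivariant) components of $\uh-f$, the standard Poisson-reduction argument shows the bracket is well-defined on invariants and that the symplectic leaves of the reduced bracket are exactly the orbit-reduced spaces $\uuS_{[f]}$. Concretely, one checks that the flow of any reduced Hamiltonian preserves the level sets of the Casimirs—here the $\Gred$-invariant functions of $\uh$, i.e.\ the coadjoint invariants evaluated on the flux—so the flows stay within a single $\uuS_{[f]}$, identifying these with the symplectic leaves. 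This is the formal analogue of the finite-dimensional statement that the reduced space at a generic momentum level is a Poisson manifold foliated by Marsden--Weinstein quotients at fixed coadjoint orbits (cf.\ \cite{marsden2007stages}).

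The main obstacle is entirely the infinite-dimensional functional analysis, not the algebra. In the Fr\'echet setting $\uuC$ is at best a \emph{partial} Poisson manifold: the bracket is defined only on the subalgebra of Hamiltonian functions (those whose differential is represented by a vector field via the weak symplectic form), and one must verify that $\uh$ and the coadjoint invariants lie in this subalgebra so that the leaves are well-posed. I would therefore phrase the conclusion with the caveat, already flagged in footnote \ref{fnt:hamiltonianfunction} and Remark \ref{rmk:Gredlocality}, that smoothness of the quotients, properness of the $\Gred$-action, and symplectic closure of $\underline\rho(\fGred)$ are standing assumptions (in the sense of \cite[Ch.\ 4]{DiezPhD}); under these, the orbit-reduction Lemma \ref{lem:fluxSS} applies leafwise and the assembly above yields the stated Poisson structure. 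The delicate point, which I would isolate rather than resolve in full generality, is matching the densitised/local dual $\fGred^*$ with the strong dual $\fGred^*_{\mathrm{str}}$ so that the orbits $\mathcal{O}_f$ and the fibres $\uh^{-1}(\mathcal{O}_f)$ are simultaneously well-defined; this is precisely the content deferred to Lemma \ref{lem:annihilators} and Proposition \ref{prop:fluxes}, and it is what makes the otherwise formal stages-reduction argument rigorous here.
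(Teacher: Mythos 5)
Your proposal is correct and follows essentially the intended route: the paper does not actually prove this theorem in-text (it is imported from \cite[Theorem 1]{RielloSchiavina} as part of the review in Section \ref{sec:theoreticalframework}), but the ingredients it sets up---the Hamiltonian $\Gred$-action with equivariant reduced flux map $\uh$ (Proposition \ref{prop:resHamaction}), the leafwise orbit reduction of Lemma \ref{lem:fluxSS}, and the normality of $\Go$ (Proposition \ref{prop:Go-normal}) underlying $\uuC \simeq \uCo/\Gred$---are exactly what you assemble, in the standard stages/orbit-reduction pattern, with the infinite-dimensional caveats (partial Poisson structure, Hamiltonian functions, dual-space matching via Lemma \ref{lem:annihilators} and Proposition \ref{prop:fluxes}) appropriately flagged rather than resolved, matching the paper's own level of rigour. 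The one point to phrase carefully is that the coadjoint equivariance of $\uh$ you invoke is not a general property of momentum maps (cocycles can obstruct it) but a hypothesis of the framework---equivariance of $\bH$ in Definition \ref{def:LHGT}, inherited by $h$ and then by $\uh$ via Definition \ref{def:constraintsurface} and Proposition \ref{prop:resHamaction}---and with that reading your partition argument and the identification of the symplectic leaves go through.
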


Diagrammatically:
\[
\xymatrix{
\uh^{-1}(\mathcal{O}_f) \ar@{^(->}[rr]^-{\underline{\iota}_{[f]}} \ar@{->>}[d]_-{\underline{\pi}_{[f]}} && \uCo \ar@{->>}[d]^-{\underline{\pi}}\\
\uuS_{[f]} \ar@{^(->}[rr]_-{\tbox{2cm}{symplectic leaf}} && {\uuC}
}
\]
which in particular encodes the rightmost part of the diagram in Equation \eqref{e:bigdiagram}.

\begin{remark}[quantisation of superselections and Casimirs]
As a consequence of Theorem \ref{thm:generalsuperselection}, the pullback along $\uh$ of any Casimir function on\footnote{The Poisson structure on $\fGred^*$ naturally restricts to $\F$ in virtue of the equivariance of $\uh$. Cf. Footnote \ref{fnt:centralextension} for generalisations allowing for central extensions.} $\F\subset \fGred^*$ is a Casimir function on the Poisson manifold $\uCo$ that descend to a central function in ${\uuC}$. Assuming that a quantisation of $\uCo$---and of its algebra of Hamiltonian\footnote{Cf. Footnote \ref{fnt:hamiltonianfunction}.} functions---exists, one deduces that the quantisation of said Casimirs induces a decomposition into irreducible blocks of the quantum (Hilbert) space associated to $\uCo$. This is what motivates our use of the term ``superselection sectors''. More on this in Section \ref{sec:physicalinterpretation} (bullet point \ref{item:quantisation}).
\end{remark}

\begin{remark}[Noether charge algebra]
If $\pp\Sigma\neq\emptyset$, the would-be Dirac--Bargmann algebra of first class constraints, is replaced on-shell by an algebra of non-vanishing (boundary) Noether charges. Given the relationship between $\bH$ and the Noether current, and the fact that on-shell $H=\int_\Sigma\bH$ reduces to the flux map $h = \int_\Sigma d\bh$, one expects the boundary Noether charge algebra to coincide with the Poisson manifold $\F \subset \fGred^*$ and thus to provide a representation of the algebra $\fGred$.

A rigorous, and general, construction of the ``Noether charge algebra'' is given in \cite[Theorem 3]{RielloSchiavina} where a Poisson manifold $(\C_\pp, \Pi_\pp)$ of on-shell, boundary field configurations is constructed out of the flux map $\uh$. The space $(\C_\pp, \Pi_\pp)$ fibrates over the same space of superselection sectors as $({\uuC},\uuPi)$ and can hence be used as a simpler proxy for the latter when it comes to studying the space of superselections of a given theory. 
\end{remark}

A succinct exemplification of the results discussed in this section through Maxwell theory on a spacelike manifold with boundary is available in Appendix \ref{app:Maxwell-short}.


\section{Yang--Mills theory on null boundaries: geometric setup}\label{sec:null-setup}

In this section we outline the geometric setup underpinning the phase space of Yang Mills theory on a manifold with a \emph{null} boundary $\Sigma$.\footnote{More precisely: on a pseudo-Riemannian manifold $M$ with corners, whose boundary $\partial M$ has a null connected component $\Sigma$.}  The relationship between the definitions provided in this section and the theory's Lagrangian formulation is standard, and reviewed in \cite[Appendix D]{RielloSchiavina}.

\subsection{The null manifold}\label{sec:nullmfd}

Let $(\Sigma,\gamma)$ be a null, $n$-dimensional, manifold of signature $(1,\cdots,1,0)$, and cylindrical topology $\Sigma \simeq S \times I$, $I = [-1,1]$ with $S$ a connected, closed manifold. Since the boundary of $\Sigma$ has two diffeomorphic connected components, it will be convenient to denote:
\[
S_{\i/\f}\doteq S\times\{\mp 1\} \hookrightarrow \Sigma.
\]

Denote by $\ell\in\mathfrak{X}(\Sigma)$ the global null vector $\gamma(\ell,\ell)=0$. We consider coordinates $(x^i,u)$ on $S\times I$, such that the metric $\gamma$ and the null vector field $\ell$ take the form:
\[
\gamma = \gamma_{ij}(x)dx^i dx^j, \qquad \ell = \pp_u,
\]
with $\gamma_{ij}$ non degenerate. With reference to the spacetime picture (Remark \ref{rmk:spacetime}), we refer to $u$ as ``retarded time''. Notice that we assumed $\gamma$ to be $u$-\emph{in}dependent. This is done only for simplicity of exposition.

\begin{definition}[Spatial forms and vectors]\label{def:spatial-f+v}
A vector field $v\in\mathfrak{X}(\Sigma)$ and a differential form $\alpha\in \Omega^k(\Sigma)$ are called \emph{spatial} iff, respectively, $i_vdu = 0$ and $i_{\ell}\alpha = 0$. We denote by a hat the operation of projecting along the ``spatial'' foliation: for any $v\in\mathfrak{X}(\Sigma)$ and $\alpha \in \Omega^k(\Sigma)$, define
\[
\hat v \doteq v - (i_v du) \ell
\quad\mathrm{and}\quad
\hat \alpha \doteq \alpha - du \wedge (i_\ell \alpha),
\]
so that $\hat{v}$ and $\hat{\alpha}$ are spatial: $i_{\hat v} d u = 0$ and $i_\ell\hat\alpha=0$.
In coordinates, spatial vector and forms read $\hat v = \hat v^i\pp_i$ and $\hat \alpha = \hat \alpha_{i_1 \dots i_k} d x^{i_1} \wedge \dots\wedge dx^{i_k}$.

Denote by $\Omega^{\bullet}_{\mathrm{spatial}}(\Sigma)$ the space of horizontal forms, equipped with the nilpotent spatial differential
\[
D\hat \alpha \doteq d\hat \alpha - du \wedge L_\ell\hat\alpha, \qquad D^2 = 0,
\]
which can be identified with the de Rham differential over $S$, i.e.\ $D \equiv d_S$.
\end{definition} 

Note that $\gamma(v,w) = \gamma(\hat v, \hat w)$ and---despite the fact that the inverse of $\gamma$ is not defined---the quantities $\gamma^{-1}(\hat \alpha, \hat \beta)$ for $\alpha, \beta$ 1-forms (or covectors) are well defined. In our coordinate system, $\gamma(\hat v, \hat w) = \gamma_{ij}\hat v^i\hat v^j$ and $\gamma^{-1}(\hat \alpha, \hat \beta) = \gamma^{ij}\hat \alpha_i \hat \beta_j$, where $\gamma^{ij} \doteq (\gamma_{ij})^{-1}$ is the inverse of the positive definite metric on $S$. On spatial tensors, we use $\gamma^{ij}$ and $\gamma_{ij}$ to raise and lower indices.

\begin{notation}\label{notation:vol}
We denoted the measures over $\Sigma$ and $S$ respectively by
\[
\vol_\Sigma \doteq du \wedge \vol_S 
\qquad\mathrm{and}\qquad
\vol_S = \sqrt{\det(\gamma_{ij})}\ d^\mathrm{top}x.
\]
\end{notation} 

\subsection{The geometric phase space}\label{sec:geomphsp}

Let $G$ be a real Lie group which we will assume to be either (\emph{i}) Abelian, or (\emph{ii}) semisimple. 
Let $\fg\doteq\mathrm{Lie}(G)$ be its (real) Lie algebra, and denote by $\tr( \cdot \cdot )$ a non-degenerate, $\Ad$-invariant, bilinear form on $\fg$; if $\fg$ is semisimple, $\tr$ can be chosen to be its Killing form. 

Let $P\rightarrow\Sigma$ be a $G$-principal bundle over $\Sigma$, and
\[
\mathcal{A} \doteq \mathrm{Conn}(P) \ni A
\]
be the space of principal connections, or \emph{gauge potentials} over $P\to\Sigma$; this is the space of sections of the bundle $J^1P/G\to \Sigma$, which is locally modelled on $\Omega^1(\Sigma,\fg)$. 

For simplicity of exposition we will assume that:
\begin{assumption}[Connectedness \& Trivial bundles]\label{ass:trivialP}
$G$ is connected and the principal $G$-bundle $P\to \Sigma$ is trivial, i.e.\ $P\simeq \Sigma \times G$.
\end{assumption}

As a consequence, one has a (global, non-canonical) isomorphism
\[
\mathcal{A} \simeq \Omega^1(\Sigma,\fg).
\]

In our coordinate system over $\Sigma \simeq S \times I$, all gauge potentials $A\in\mathcal{A}$ can be decomposed as
\[
\mathcal{A}\simeq \mathcal{A}_\ell\times \hat{\mathcal{A}},\qquad
A {\mapsto} (A_\ell\, ,\, \hat A) \doteq ( i_\ell A \, , \,  A - (i_\ell A) d u ),
\]
where $\mathcal{A}_\ell \simeq \Omega^0(\Sigma,\fg)$ and $\hcalA \simeq \Omega^1_\mathrm{spatial}(\Sigma,\fg)$.

If $F = d A + \tfrac12[A,A]$ is the principal curvature of $A$, we introduce the spatial 1-form (in components, $F_{\ell i} \equiv F_{ui}$)
\[
F_\ell \doteq i_\ell F = F_{\ell i} dx^i \in \Omega^1_\mathrm{spatial}(\Sigma,\fg).
\]
Its variation is given by
\begin{equation}
    \label{eq:deltaF_ell}
\bd F_\ell = \cL_\ell \bd \hat A - \cD \bd A_\ell,
\end{equation}
where
\[
\cL_\ell \doteq L_\ell + [A_\ell, \cdot ] 
\quad \mathrm{and} \quad
\cD\doteq D + [\hat A, \cdot].
\]
When acting on $\fg$-valued scalars (0-forms), one can replace $\cL_\ell$ with $\pp_u + [A_\ell, \cdot]$.

Now, let $\Ad^* P \to \Sigma$ denote the associate coadjoint bundle\footnote{\label{fnt:ADP}The adjoint bundle is $\Ad P \doteq P\times_{\Ad} \fg = (P\times \fg)/\sim$, defined by the equivalence relation $(p\cdot g, \xi) \sim (p,\Ad_g\xi)$. The coadjoint bundle is defined analogously.} to $P$, i.e.\ $\Ad^* P \doteq P \times_{\Ad^*} \fg^*$. Thanks to Assumption \ref{ass:trivialP} this bundle is also trivial: $\Ad^* P\simeq \Sigma \times \fg^*$. Hence, we introduce the space of \emph{electric fields} 
\[
\mathcal{E} \doteq \Omega^{n-1}_{\mathrm{spatial}}(\Sigma,\Ad^* P) \stackrel{(\mathrm{ass.\ref{ass:trivialP}})}{\simeq}\Omega^{n-1}_\mathrm{spatial}(\Sigma, \fg^*), 
\]
and we denote by $\boldsymbol{E}\in \Omega^{n-1}_{\mathrm{spatial}}(\Sigma, \Ad P)$ the \emph{spatial} two form representing the electric field. Note that any $\boldsymbol{E}\in\mathcal{E}$ can be uniquely encoded in a Lie algebra-valued function $E\in C^\infty(\Sigma,\fg)$ subordinate to a choice of a fixed volume form\footnote{Recall, we assumed that the metric $\gamma$ is constant in $u$.} on $S$, viz.
\[
\mathcal{E} \simeq C^\infty(\Sigma, \fg),
\qquad
\boldsymbol{E} = \tr(E \,\cdot\,) \vol_S.
\]
To keep the notation more consistent with the literature, we note here that most of our formulas will be written in terms of $E$---as opposed to $\boldsymbol{E}$---and we will indeed leave the above isomorphism \emph{implicit}. Nevertheless, it pays off to keep in mind the definition of electric field given above.

\begin{remark}[The retarded-time evolution picture]\label{rmk:u-functions}
We can identify 
\[
\hat{\mathcal{A}}\simeq C^\infty(I,\Omega^1(S,\fg)),\qquad \mathcal{A}_\ell\simeq C^\infty(I,\Omega^0(S,\fg)),\qquad  \mathcal{E}\simeq C^\infty(I,\Omega^0(S,\fg)), 
\]
i.e.\ we can view the decomposition of fields along the spatial foliation as maps that assign to each value of the retarded time $u\in I$ a spatial 1- or 0-form on $S$. In the following, we will seamlessly switch between these different points of view.
\end{remark}

\begin{notation}\label{notation:tr}
For brevity:  $\int_N \tr(\ \cdot\ )\vol_N \equiv \int_N \btr(\ \cdot\ )$.
\end{notation}

\begin{notation}\label{not:subsupscripts}
Let $W$ be a vector space and $Q$ a spatial $W$-valued $p$-form, i.e.\ $Q\in \Omega^p_\mathrm{spatial}(\Sigma,W) \simeq C^\infty(I,\Omega^p(S,W))$. Then, we denote 
\begin{subequations}%
    \label{eq:def-i/f-dif/av}
\begin{align}
Q^{\i} \doteq Q(u=-1) \quad\mathrm{and}\quad
Q^{\f} \doteq Q(u=1),
\end{align}
\end{subequations}
and view them as elements of $\Omega^p(S,W)$.
Similarly, it is convenient to introduce:
\begin{align*}
Q^{\int} \doteq \int_{-1}^1 du'\, Q(u'),
\quad
Q^\av  \doteq  \tfrac12 (Q^\i + Q^\f), 
\quad\mathrm{and}\quad
Q^\dif \doteq  Q^\f - Q^\i.
\end{align*}
\end{notation}

\medskip

With all of this at hand, we can thus define:

\begin{definition}[Null YM: geometric phase space]\label{def:geom-ph-sp}
The \emph{geometric phase space of null Yang--Mills theory} (nYM) $(\X,\bom_{\nYM})$ is the space
\[
\X \doteq \mathcal{A} \times \mathcal{E} \simeq \hat{\mathcal{A}}\times\mathcal{A}_\ell \times\mathcal{E},
\]
equipped with the symplectic density
\begin{align*}
\bom_{\nYM} & \doteq  \Big( \tr(\bd \E \wedge \bd A_\ell ) + \tr( \bd F_\ell^i \wedge \bd \hat A_i) \Big) \vol_\Sigma \in \Omega_\loc^{\mathrm{top},2}(\Sigma\times \X).
\qedhere
\end{align*}
\end{definition} 

\begin{remark}\label{rmk:nondeg}
In Appendix \ref{app:proof-nondeg} we prove that $\bom_{\nYM}$ is indeed a symplectic density in the sense of Definition \ref{def:LHGT}\ref{def:LHGT-sympldensity}: that is, we prove $\bd\bom_{\nYM} = 0$ and $\ker(\omega_{\nYM}^\flat)=0$ where $\omega_{\nYM} = \int_\Sigma \bom_{\nYM}$ (Cf. Footnote \ref{fnt:symplecticconditions}.)
\end{remark}

An important space for nYM theory is the Ashtekar--Streubel phase space over $\Sigma$. We now discuss some of its basic properties.

\begin{definition}[Ashtekar--Streubel phase space \cite{AshtekarStreubel}]\label{def:hatA-sympl}
The \emph{Ashtekar--Streubel} (AS) phase space $(\hcalA,\omAS)$ over $\Sigma\simeq I\times S$ is the space of purely spatial connections $\ASa\in\hcalA \simeq C^\infty(I,\Omega^1(S,\fg))$ equipped with the symplectic 2-form
\[
\omAS \doteq  \int_\Sigma \btr( (L_\ell \bd \ASa_i) \wedge \bd \ASa^i)\in \Omega^2(\hcalA).\qedhere
\]
\end{definition}

The proof that $(\hcalA,\omAS)$ is a symplectic manifold is analogous to that for $(\X,\omega_\nYM)$ provided in Appendix \ref{app:proof-nondeg}. It is easy to see that the AS phase space can be embedded as a symplectic submanifold of the geometric phase space of nYM theory e.g. as the submanifold $\{(A,E)=(0,0)\}$. Similarly, $T^*\mathcal{A}_\ell$ can also be embedded in $\X$ as the symplectic submanifold $\{\hat{A}_i = 0\} \simeq \mathcal{A}_\ell \times \mathcal{E} \simeq T^*\mathcal{A}_\ell$. However, although $(\X,\omega_\nYM)$ is diffeomorphic to the product $\hcalA\times T^*\mathcal{A}_\ell$, the two are not symplectomorphic.

The space $(\hcalA,\omAS )$ is often taken as the phase space of ``physical degrees of freedom'' of YM theory on a null surface. One of the goals of this paper is to assess this statement by clarifying the relationship between $\omAS $ and $\omega_{\nYM}$. This is achieved in Theorems \ref{thm:nonAb-constr-red} and \ref{thm:memorySSS}, with further clarifications provided in Section \ref{sec:alternatelabelmemory} (see e.g. Proposition \ref{prop:MaxPartialSSS} and Theorem \ref{thm:YMPartialSSS} and the subsequent remarks).

The Ashtekar--Streubel field $\ASa_i$, seen as a function of $u\in[-1,1]$ (Remark \ref{rmk:u-functions}), can in principle be expanded on the Fourier modes $e^{\pm i\pi k u}$, with $k$ a positive integer. The AS symplectic structure will then be block-diagonal in $k$, with the real and imaginary parts of $e^{i \pi k u}$ conjugate to \emph{each other}.
There are however two major (related) problems with this expansion, both due to the fact that all the terms in the Fourier series are periodic. 

The first problem is that the quantity $\ASa^\dif \doteq \ASa(u=1)-\ASa(u=-1)$, which encodes the lack of periodicity, does not appear in the Fourier expansion but nevertheless features in $\omAS$ and plays a crucial role in the reduction procedure. In particular, it is central for our understanding of the memory effect (see Section \ref{sec:alternatelabelmemory}).
The second problem is that the zero-mode in the expansion,\footnote{Meaning the coefficient of $e^{i\pi 0 u}=1$.} being purely real, lacks a symplectic partner w.r.t. $\omAS $ \emph{among the Fourier modes $e^{i\pi ku}$}; this problem could have also been detected by noting that it is rather the real and imaginary parts of $\sqrt{\pi k}e^{i \pi k u}$ that are \emph{canonically} conjugate to each other, and these are not well-defined for $k=0$.

Both these issues can be solved by including in the Fourier analysis one extra ``zero'' mode linear in $u$---and then performing the Gram-Schmidt algorithm to find the (complex) Darboux basis $\{\psi_k(u)\}$ described in the following lemma.

\begin{lemma}
\label{lemma:ASmodes}
Equip $C^\infty(I,\mathbb{C})$ with the Hermitian structure 
\[
\mathbb{G}(\phi_1,\phi_2) \doteq -\frac{i}{2} \int_{-1}^1 (\dot \phi_1 \phi_2^* - \phi_1 \dot \phi_2^*) du .
\]
Then,\footnote{Although one can interpret $\mathfrak{Im}(\psi_{k=0})$ as the limit $k\to0$ of $\mathfrak{Im}(\psi_k)$, this limiting procedure would fail by a factor of 2 in the case of the $\mathfrak{Re}(\psi_k)$---say after setting $(-1)^k = \cos(\pi k)$. Also, $\psi_k$ and $\psi_{-k}$ are equal rather than complex conjugate to each other. For this reason in the following we only consider $\psi_k$ with $k\geq 0$.} 
\[
\psi_k(u) \doteq 
\begin{dcases}
1 +  i \frac{u}{2} & \mathrm{if } k=0\\
(-1)^k + \cos(\pi k u)  +i \frac{\sin(\pi k u) }{2\pi k} &\mathrm{if }k\geq 1
\end{dcases}
\]
is a (complex) orthonormal basis of $C^\infty(I,\mathbb{R})$, i.e.\ 
\[
\mathbb{G}(\psi_k,\psi_l) = \delta_{kl}
\qquad\text{and}\qquad \mathbb{G}(\psi_k,\psi_l^*) = 0,
\]
and for all $f\in C^\infty(I)$ the sequence $\{f_N\}_{N\in\mathbb{N}}$ converges uniformly,\footnote{Recall, uniform convergence of $f_N(u)\to f(u)$ means that for every $\epsilon>0$ there exists an $N_\epsilon$ such that $|f_{N}(u) - f(u)| < \epsilon$ for all $u\in I$ and $N>N_\epsilon$.}
\[
f_N \doteq\sum_{k=0}^N\left( \tilde f(k)^* \psi_k + \mathrm{c.c.}\right) \xrightarrow[N\to\infty]{\mathrm{unif.}} f, \qquad \tilde f(k) \doteq \mathbb{G}(\psi_k ,f).
\]
\end{lemma}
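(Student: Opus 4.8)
The plan is to treat the two assertions---$\mathbb{G}$-orthonormality and uniform completeness---separately, the first being a finite computation and the second carrying the real content. Throughout I write $\psi_k = p_k + i q_k$ with $p_k = \mathrm{Re}\,\psi_k$ and $q_k = \mathrm{Im}\,\psi_k$, and I take $f$ real (so that $f_N$, being a sum of terms $z^*\psi_k + z\,\psi_k^* = 2\,\mathrm{Re}(z^*\psi_k)$, is automatically real).

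For orthonormality I would expand $\mathbb{G}(\psi_k,\psi_l)$ and $\mathbb{G}(\psi_k,\psi_l^*)$ into real and imaginary parts and reduce everything to the elementary integrals $\int_{-1}^1\cos(\pi k u)\cos(\pi l u)\,du=\delta_{kl}$, $\int_{-1}^1\sin(\pi k u)\sin(\pi l u)\,du=\delta_{kl}$ (for $k,l\ge1$), $\int_{-1}^1\cos(\pi k u)\sin(\pi l u)\,du=0$, and $\int_{-1}^1 u\sin(\pi k u)\,du=2(-1)^{k+1}/(\pi k)$, together with the boundary terms produced by integration by parts, which vanish because $\sin(\pi k)=0$ and $\cos(\pi k)=(-1)^k$. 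Organising this into the three cases $k=l=0$, $0=k<l$, and $1\le k,l$, one checks that the constant shift $(-1)^k$ and the factor $1/(2\pi k)$ in $\psi_k$ are precisely what normalise the diagonal entries to $1$ and annihilate both the off-diagonal pairings and all of the $\psi_l^*$ pairings. This part is purely mechanical.

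For completeness I would first compute the coefficients explicitly. Writing $c_k\doteq\int_{-1}^1 f\cos(\pi k u)\,du$ and $s_k\doteq\int_{-1}^1 f\sin(\pi k u)\,du$ for the ordinary Fourier coefficients on $[-1,1]$, one integration by parts gives, for $k\ge1$, $2\,\mathrm{Re}\,\tilde f(k)=c_k$ and $2\,\mathrm{Im}\,\tilde f(k)=2\pi k\,s_k+2(-1)^k f^\dif$, with boundary-sensitive analogues at $k=0$ involving $f^\av$ and $f^\dif$. Using $\tilde f(k)^*\psi_k+\mathrm{c.c.}=2\big(\mathrm{Re}\,\tilde f(k)\,p_k+\mathrm{Im}\,\tilde f(k)\,q_k\big)$ and resumming, the partial sum rearranges into
\[
f_N = S_N f + \tfrac{f^\dif}{2}\big(u-\sigma_N\big) + \big(S_N f(1)-f^\av\big),
\]
where $S_N f$ is the ordinary $N$-th Fourier partial sum of $f$ and $\sigma_N$ is the $N$-th Fourier partial sum of the sawtooth $u$ (which carries only sine modes). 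The decisive observation is a cancellation. Set $g\doteq f-\tfrac{f^\dif}{2}u$; then $g(1)=g(-1)=f^\av$, so the $2$-periodic extension of $g$ is continuous and piecewise $C^1$. By linearity $S_N f = S_N g + \tfrac{f^\dif}{2}\sigma_N$, and substituting this makes the two $\sigma_N$ terms---exactly the Gibbs-oscillating sawtooth partial sums---cancel, leaving
\[
f_N = S_N g + \tfrac{f^\dif}{2}\,u + \big(S_N g(1)-g(1)\big),
\]
where I used $\sigma_N(1)=0$ to identify $S_N f(1)=S_N g(1)$ and $f^\av=g(1)$. Since $f=g+\tfrac{f^\dif}{2}u$, this yields $\|f_N-f\|_\infty\le 2\,\|S_N g-g\|_\infty$. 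Finally, two integrations by parts (the condition $g(1)=g(-1)$ kills the first boundary term of the sine coefficients) give $\hat g(k)=O(k^{-2})$, hence $\sum_k|\hat g(k)|<\infty$, so the classical theorem on absolutely convergent Fourier series gives $S_N g\to g$ uniformly, completing the argument.

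I expect the main obstacle to be exactly this \emph{uniform} (as opposed to $L^2$ or interior-pointwise) convergence: the naive Fourier series of $f$ exhibits the Gibbs phenomenon at $u=\pm1$ whenever $f^\dif\neq0$, and so does $\sigma_N$. The entire purpose of enlarging the trigonometric system by the single linear mode $q_0=u/2$---and of the specific normalisations built into $\psi_k$---is to arrange the exact cancellation of these two Gibbs contributions. The delicate point is therefore the bookkeeping of the endpoint and constant terms (the $(-1)^k$ shifts and the separate $k=0$ contribution), which must be executed without error for the cancellation and the reduction to $g$ to come out cleanly.
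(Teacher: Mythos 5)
Your proposal is correct and takes essentially the same route as the paper's own proof: both absorb the endpoint jump $f^\dif$ into the linear mode $u/2$ (your $g = f - \tfrac{f^\dif}{2}u$ is, up to an additive constant, the paper's periodized function $\bar f = f - (f^{\int}-f^\av) - \tfrac{u}{2}f^\dif$), reduce the claim to classical uniform convergence of the Fourier series of that periodized function, and account for the $(-1)^k$ shifts in $\psi_k$ via an endpoint-evaluation term (your $S_N g(1)-g(1)$ plays exactly the role of the paper's identity $F_N[\bar f]^{\av} \to \bar a(0)$). The remaining differences are cosmetic: you pass through $S_N f$ and the sawtooth partial sums and justify uniform convergence by $O(k^{-2})$ coefficient decay, whereas the paper subtracts the affine part at the outset and quotes the classical theorem directly.
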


\begin{proof}
See Appendix \ref{proof:ASmodes}.
\end{proof}

\begin{proposition}[Ashtekar--Streubel mode decomposition]\label{prop:modedecomp}
Let $\fg = (\mathbb{R},+)$, and $\ASa\in\hcalA \simeq C^\infty(I,\Omega^1(S))$.
For each $k\in\mathbb{N}$, define the \emph{Ashtekar--Streubel $k$-mode} $\tilde{\ASa}_i(k)\in\Omega^1(S)$ as
\[
\tilde{\ASa}_i(k,x) \doteq \mathbb{G}( \psi_k, \ASa_i (x));
\]
in particular, the Ashtekar--Streubel zero-mode is
\[
2\, \tilde{\ASa}_i(k=0,x) = (\ASa_i^{\int}(x) - \ASa_i^\av(x)) + i \ASa_i^\dif(x),
\]
Then, the expansion of $\ASa_i(u,x)$ on the basis $\psi_k(u)$ converges uniformly, i.e.
\[
\sum_{k=0}^N \left(\,\tilde{\ASa}_i^*(k,x) \psi_k(u) + c.c.\right) \xrightarrow[N\to\infty]{\mathrm{unif.}} \ASa_i(u,x),
\]
and
\begin{align*}
    \omAS
         = & {2i} \int_S \sum_{k=0}^\infty  \bd\, \tilde{\ASa}^*_i(k,x)\wedge \bd\, \tilde{\ASa}^i(k,x)\ \vol_S,\\
         =& \int_S \bd\ASa^\dif \wedge \bd(\ASa^{\int} - \ASa^\av)\ \vol_S + {2i} \int_S \sum_{k=1}^\infty  \bd\, \tilde{\ASa}^*_i(k,x)\wedge \bd\, \tilde{\ASa}^i(k,x)\ \vol_S.
\end{align*}

\end{proposition}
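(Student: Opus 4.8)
The plan is to specialise the Ashtekar--Streubel symplectic form to $\fg=(\mathbb{R},+)$ and then diagonalise it on the orthonormal basis $\{\psi_k\}$ of Lemma~\ref{lemma:ASmodes}. For $\fg=\mathbb{R}$ the form $\tr(\cdot\,\cdot)$ is ordinary multiplication and $L_\ell=\pp_u$ on the purely spatial field $\ASa$, so that
\[
\omAS=\int_S\vol_S\int_{-1}^1 (\pp_u\bd\ASa_i)\wedge\bd\ASa^i\,du,
\]
with the spatial index raised by the symmetric $\gamma^{ij}$ and $\wedge$ the field-space wedge. First I would establish the zero-mode formula by direct evaluation: inserting $\psi_0=1+i\tfrac{u}{2}$ into $\mathbb{G}(\psi_0,\ASa_i)$ and integrating by parts produces $2\tilde{\ASa}_i(0)=(\ASa_i^{\int}-\ASa_i^\av)+i\ASa_i^\dif$ out of $\int\ASa_i\,du=\ASa_i^{\int}$, $\int u\dot\ASa_i\,du=[u\ASa_i]_{-1}^1-\ASa_i^{\int}=2\ASa_i^\av-\ASa_i^{\int}$ and $\int\dot\ASa_i\,du=\ASa_i^\dif$. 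The uniform convergence of the expansion is not new: it is Lemma~\ref{lemma:ASmodes} applied to $f=\ASa_i(\cdot,x)\in C^\infty(I)$ for each fixed $x\in S$ and each component $i$.

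For the symplectic form I would substitute $\ASa_i(u,x)=\sum_k\big(\tilde{\ASa}_i^*(k,x)\psi_k(u)+\mathrm{c.c.}\big)$ into the displayed integral and expand, turning $\omAS$ into a double sum over $k,l$ whose coefficients are the $u$-integrals $\int_{-1}^1\dot\psi_k\psi_l\,du$, $\int_{-1}^1\dot\psi_k\psi_l^*\,du$ and their conjugates. These I would evaluate by combining two inputs: the orthonormality relations $\mathbb{G}(\psi_k,\psi_l)=\delta_{kl}$ and $\mathbb{G}(\psi_k,\psi_l^*)=0$, which fix the \emph{antisymmetric} combinations $\int(\dot\psi_k\psi_l^*-\psi_k\dot\psi_l^*)$ and $\int(\dot\psi_k\psi_l-\psi_k\dot\psi_l)$; and integration by parts, which expresses the \emph{symmetric} combinations through the boundary pairings $[\psi_k\psi_l^{(*)}]_{-1}^1$. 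The latter I would compute from the explicit endpoint values $\psi_0(\pm1)=1\pm\tfrac{i}{2}$ and $\psi_k(\pm1)=2(-1)^k$ for $k\ge1$; the decisive observation is that $\psi_k$ takes \emph{equal} values at the two endpoints for $k\ge1$, so every boundary pairing vanishes except those coupling the zero-mode to a nonzero mode.

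The collection step uses that the spatial index is contracted with the symmetric $\gamma^{ij}$: this makes each wedge $\bd\tilde{\ASa}_i^*(k)\wedge\bd\tilde{\ASa}^{i*}(l)$ (and its conjugate) antisymmetric in $(k,l)$, whereas its coefficient $\int\dot\psi_k\psi_l\,du$ is symmetric (a direct consequence of $\mathbb{G}(\psi_k,\psi_l^*)=0$); hence all ``holomorphic--holomorphic'' and ``antiholomorphic--antiholomorphic'' terms drop out. For the surviving mixed terms I would use the $\gamma$-contraction once more to rewrite the conjugate block $\bd\tilde{\ASa}_i(k)\wedge\bd\tilde{\ASa}^{i*}(l)=-\bd\tilde{\ASa}_i^*(l)\wedge\bd\tilde{\ASa}^i(k)$, so that the total mixed coefficient becomes $\int\dot\psi_k\psi_l^*\,du-\overline{\int\dot\psi_l\psi_k^*\,du}$. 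Writing $\int\dot\psi_k\psi_l^*\,du=\tfrac12[\psi_k\psi_l^*]_{-1}^1+i\delta_{kl}$ and inserting the endpoint values, the zero-mode/nonzero-mode boundary contributions cancel exactly in this antisymmetrised combination, which collapses to $2i\delta_{kl}$. This yields $\int_{-1}^1(\pp_u\bd\ASa_i)\wedge\bd\ASa^i\,du=2i\sum_k\bd\tilde{\ASa}_i^*(k)\wedge\bd\tilde{\ASa}^i(k)$, and integrating over $S$ gives the first identity. The second identity follows by peeling off $k=0$ and substituting the zero-mode formula: writing $2\tilde{\ASa}_i(0)=P_i+iQ_i$ with $P_i=\ASa_i^{\int}-\ASa_i^\av$ and $Q_i=\ASa_i^\dif$ real, the diagonal pieces $\bd P_i\wedge\bd P^i$ and $\bd Q_i\wedge\bd Q^i$ vanish by $\gamma$-antisymmetry, leaving $2i\,\bd\tilde{\ASa}_i^*(0)\wedge\bd\tilde{\ASa}^i(0)=\bd\ASa_i^\dif\wedge\bd(\ASa^{\int}-\ASa^\av)^i$.

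I expect the main obstacle to be precisely the bookkeeping of these zero-mode boundary cross-terms: because $\psi_0$ is linear in $u$ rather than periodic, the matrix $\int\dot\psi_k\psi_l^*\,du$ is \emph{not} diagonal, acquiring off-diagonal entries $\pm i(-1)^l$ coupling $k=0$ to every $l\ge1$, and it is only after antisymmetrisation that these conspire to cancel. A secondary, analytic, point is justifying the term-by-term interchange of the infinite sum with the $u$-integration and the field-space wedge; this I would control by noting that for $\ASa$ smooth on the compact $S\times I$ (and likewise for any pair of tangent vectors against which $\omAS$ is evaluated) the mode coefficients $\tilde{\ASa}_i(k)$ decay faster than any power of $k$, which dominates the at most linear growth of $\dot\psi_k$ and secures absolute, uniform convergence of every series involved.
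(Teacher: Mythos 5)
Your algebraic core is correct and rests on the same key input as the paper's proof, namely Lemma \ref{lemma:ASmodes}, but the execution differs. The paper's proof (Appendix \ref{proof:ASmodes-sympl}) never computes a boundary pairing: it recognises $\omAS$ as (a multiple of) $i\int_S \mathbb{G}(\bd\ASa_i,\bd\ASa^i)\,\vol_S$ and lets the orthonormality relations $\mathbb{G}(\psi_k,\psi_l)=\delta_{kl}$, $\mathbb{G}(\psi_k,\psi_l^*)=0$ diagonalise everything in one stroke. You instead expand both factors and determine the full coefficient matrix $\int\dot\psi_k\psi_l^{(*)}du$, which then requires your two extra cancellation mechanisms (symmetric coefficients against $\gamma$-antisymmetric wedges, and the antisymmetrised collapse of the boundary pairings to $2i\delta_{kl}$). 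I verified your endpoint values, the collapse to $2i\delta_{kl}$, and the zero-mode identity: they are all right. One small slip: it is not true that every boundary pairing outside the zero-mode/nonzero-mode sector vanishes, since $[\psi_0\psi_0]_{-1}^{1}=2i\neq0$; this is harmless only because that coefficient multiplies $\bd\tilde{\ASa}_i^*(0)\wedge\bd\tilde{\ASa}^{i*}(0)=0$, a case your symmetry argument does cover.

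The genuine gap is in your analytic justification. The claim that for smooth $\ASa$ the modes $\tilde{\ASa}_i(k)$ decay faster than any power of $k$ is false. The $\psi_k$ with $k\geq1$ are periodic on $I$, whereas a smooth function on $I$ in general has no smooth periodic extension: from the paper's proof of Lemma \ref{lemma:ASmodes}, $2\tilde f(k)=\bar a(k)+2\pi i k\,\bar b(k)$, where $\bar a,\bar b$ are the Fourier coefficients of $\bar f(u)=f(u)-(f^{\int}-f^\av)-\tfrac{u}{2}f^\dif$, a function that is continuous and periodic but whose periodic extension has a derivative jump $\dot f(1)-\dot f(-1)$ at the endpoints. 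Hence generically $\tilde f(k)=\Theta(k^{-2})$ and no better; for instance $f(u)=u^2$ gives $\tilde f(k)=\tfrac{2(-1)^k}{(\pi k)^2}$ for $k\geq1$. Consequently the term-by-term differentiated series $\sum_k\big(\tilde{\ASa}_i^*(k)\dot\psi_k+\mathrm{c.c.}\big)$ has terms of size $O(1/k)$: it is neither absolutely nor uniformly convergent, so substituting the mode expansion into the factor $L_\ell\bd\ASa_i$ is not licensed by the argument you give.

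The fix is cheap and leaves your computation intact: expand only the underived factor $\bd\ASa^i$ and integrate term by term against the fixed smooth factor $L_\ell\bd\ASa_i$ (uniform convergence from Lemma \ref{lemma:ASmodes} suffices for this); then integrate by parts in $u$ in each term so that the derivative lands on $\psi_l^{(*)}$; only then expand the remaining underived factor $\bd\ASa_i$, again term by term. This reproduces exactly your coefficient matrix $\tfrac12[\psi_k\psi_l^{(*)}]_{-1}^{1}+i\delta_{kl}$, after which your cancellation argument goes through verbatim, and all residual series converge absolutely precisely because of the true $O(k^{-2})$ decay (the boundary pairings vanish off the zero-mode row and column). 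Note that your worry was legitimate: the paper's own step of pulling the limit out of $\mathbb{G}$ by ``uniform convergence'' faces the same objection, and is vindicated by the same observation, namely that after one integration by parts $\mathbb{G}(\,\cdot\,,g)$, for $g$ smooth and fixed, depends on its first argument only through its undifferentiated values and is therefore continuous under uniform convergence.
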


\begin{proof}
See Appendix \ref{proof:ASmodes-sympl}.
\end{proof}

\begin{remark}
For $G$ a more general structure group, $\ASa\in \hcalA \simeq C^\infty(I,\Omega^1(S,\fg))$ takes values in $\fg$, and therefore $\mathbb{G}$ needs to be tensored with the bilinear form $\tr(\cdot\cdot)$. The above construction carries over with minimal changes, see Section \ref{sec:semisimplememory}. 
\end{remark}

\begin{remark}[Comparison with \cite{StromingerLectureNotes}]\label{rmk:Stromingercf}
The mode decomposition of Proposition \ref{prop:modedecomp} allows us to compare with the standard reference \cite{StromingerLectureNotes}, and address some of the issues raised there. 
The first important distinction is that our mode expansion is aimed at functions on a bounded interval $[-1,1]$, where it is rigorous, as opposed to the whole real line. This allows us to include the linear basis element $\mathfrak{Im}(\psi_{k=0}) \doteq u/2$ as well as the quantity $\ASa^\dif$.

Next, in \cite[Eq. 2.6.6--7]{StromingerLectureNotes}, the author extracts a $u$-constant term from $\ASa$, and assumes it to be exact. More precisely, the $u$-constant term is identified with $\ASa^\av$ and thus there is a decomposition: $\ASa = \ASa' + \ASa^\av$ with $\ASa^\av = D\phi$ for $\phi\in \fg^S$, constant in $u$, and $\ASa'$ the ``$u$-nonconstant'' remainder. Furthermore,  in \cite[Eq. 2.5.16]{StromingerLectureNotes} the quantity $N\in \fg^S$ is introduced, so that $D N = \ASa^\dif$ (analogously to $\ASa^\av$, the difference $\ASa^\dif$ is also assumed to be exact). We can use this dictionary to rewrite the second term in the ``zero mode'' contribution to $\omAS$ as defined in \cite[Equation 2.6.8]{StromingerLectureNotes} as:
    \[
    \text{\cite{StromingerLectureNotes}} \quad \int_S \gamma^{ij}\bd \pp_iN \wedge \bd \pp_j \phi \ \vol_S \
    \leftrightsquigarrow \ \int_S \gamma^{ij} \bd\ASa^\dif_i \wedge \bd\ASa^\av_j \ \vol_S \quad \mathrm{[here]}.
    \]
But the latter expression is \emph{not} the zero-mode part of $\omAS$, since this reads
    \[
    \omAS = \int_S \gamma^{ij} \bd\ASa_i^\dif \wedge \bd( \ASa_j^{\int} - \ASa_j^\av)\ \vol_S + (\text{AS-modes with $k>0$}).
    \]
Indeed, apart from the restrictive request of \cite{StromingerLectureNotes} that $\ASa^\av$ and $\ASa^\dif$ be exact (see below), the main difference between the two approaches is that the zero mode of the AS field, in our basis, reads instead
    \[
    2\tilde{\ASa}(k=0) = (\ASa^{\int} - \ASa^\av) + i \ASa^\dif,
    \]
with its real and imaginary parts canonically conjugate to each other.

This means that, in our symplectic basis $\wt{\ASa}(k,x)$, the symplectic companion of $\ASa^\dif=\mathfrak{Im}(\ASa(0))$---which in \cite{StromingerLectureNotes} is denoted $DN$, assumed to be exact---is the zero mode $\mathfrak{Re}(2\tilde{\ASa}(0))=\ASa^{\int} - \ASa^\av$. Note that  this in general differs from $\ASa^\av$ (although they coincide if $\ASa$ happens to be $u$-constant).

Finally, observe that we do \emph{not} assume that $\ASa^\av$ and $\ASa^\dif$---or equivalently $\ASa^\i$ and $\ASa^\f$---are $D$-exact, i.e.\ we do not assume that ``the magnetic field vanishes at the boundary
[of the null surface $\Sigma$]'' (cit.\ p.\ 23 \emph{ibid.}). That is, our analysis covers the case where the magnetic fields through $\pp\Sigma$ are included in the picture (cf.\ Footnote 5 \emph{ibidem}).
We expand on this comparison in Section \ref{sec:alternatelabelmemory}.
\end{remark}

\subsection{Gauge transformations}\label{sec:Gaugetransformations}

We start the discussion of the gauge group for null YM theory setting out the notation and providing some preliminary considerations on mapping algebras and group (see \cite{kriegl1997convenient,WockelPhD,Neeb-locallyconvexgroups} for extensive discussions).

\begin{definition}[Mapping algebras and groups]\label{def:mappinggroups}
Given a compact manifold with boundary $(N,\pp N)$, and $G$ a Lie group, the \emph{mapping Lie algebra} and \emph{mapping Lie group} are 
\[
\fg^N \doteq C^\infty(N,\fg)
\qquad\mathrm{and}\qquad
G^N \doteq C^\infty(N, G)
\]
equipped with the natural pointwise Lie algebra structure (i.e.\ $[\xi,\eta](x) = [\xi(x),\eta(x)]$) and group multiplication, respectively. 

Furthermore, the \emph{relative} mapping Lie algebra and group are the Lie ideal of functions that vanish at the boundary, and the normal subgroup of functions whose value at the boundary is the identity, respectively:
\begin{align*}
\fg^N_\rel  \doteq\{ \xi\in \fg^N\ : \ \xi\vert_{\pp N} = 0\}  
\quad\mathrm{and}\quad
G^N_\rel  \doteq \{ g \in G^N\ : \ g\vert_{\pp M} = 1\}
\end{align*}

Finally, if $G$ is Abelian,\footnote{See Remark \ref{rmk:constantgaugetransf}, explaining why we introduce $\fg\into\fg^N$ only in the Abelian case.} we denote
\[
\fg\into \fg^N \quad\text{and}\quad G\into G^N
\]
the space of constant-valued functions in $\fg^N$ and $G^N$.
\end{definition}

\begin{definition}[Identity and relative components]\label{def:id+rel-comp}~
\begin{enumerate}[label=(\roman*)] 
\item For $\mathcal{H}$ a group, $\mathcal{H}_0$ is the \emph{identity component} of $\mathcal{H}$; if $\mathcal{H}$ is a subgroup, $\mathcal{H}_0$ still denotes the set of elements of $\mathcal{H}$ which are connected to the identity through paths that lie \emph{within} $\mathcal{H}$ itself.
\item For $\mathcal{H} \subset G^M$ a subgroup, the \emph{relative component} of $\mathcal{H}$ is $\mathcal{H}_\rel \doteq \mathcal{H} \cap G^N_\rel$, i.e.\ the set of elements of $\mathcal{H}$ that are equal to the identity at the boundary.
\end{enumerate}
In the following, we use commas instead of parentheses---e.g. $\mathcal{H}_{0,\rel,0} \equiv ((\mathcal{H}_0)_\rel)_0$---and, since $G= G_0$ is assumed connected, we forgo the parentheses around $G^N$ as well, e.g.\ $G^N_0 \equiv (G^N)_0$ etc.
\end{definition}

\begin{remark}\label{rmk:normalsubgrps}
If $\mathcal{H}$ is a subgroup of $\tilde{\mathcal{H}}$, $\mathcal{H}_0$ need not coincide with\footnote{This is also a subgroup of $\mathcal{H}$, but one for which we will not not need a notation.}  $\mathcal{H}\cap \tilde{\mathcal{H}}_0$, i.e.\ there might be elements in $\mathcal{H}$ which are connected to the identity within $\tilde{\mathcal{H}}$, but not within $\mathcal{H}\subset \tilde{\mathcal{H}}$. In particular, if $\mathcal{H}$ is a subset of $G^N$, $\mathcal{H}_0$ needs not coincide with $\mathcal{H}\cap G^N_0$, and in particular $G^N_{\rel,0}$ needs not coincide with $G^N_{0,\rel}$---a fact that will be relevant later.
\end{remark}

\begin{remark}[Locally exponential (sub)groups]\label{rmk:exponentialityofmappinggroup}
The mapping Lie algebra is the Lie algebra of the mapping group, $\fg^N = \mathrm{Lie}(G^N)$, and the mapping group is locally exponential, i.e.\ it admits an exponential map $\exp: \fg^N \to G^N$ which is a local diffeomorphism at the identity. Therefore, the identity component $G^N_0$ is the subgroup of $G^N$ generated by the mapping algebra, 
\[
G^N_0 = \langle \exp \fg^N\rangle.
\]
Moreover, although $\mathrm{Lie}(G^{N}_\rel) = \fg^N_\rel$ and $\fg^N/\fg^N_\rel \simeq \fg^{\pp N}$, exponentiating one only recovers the identity component:
\[
G^N_{\rel,0} = \langle \exp \fg^N_\rel\rangle 
\qquad\mathrm{and}\qquad
G^N/G^N_\rel \subset G^N/G^N_{\rel,0},
\]
and the equalities hold in the absence of topological obstructions. In Lemma \ref{lemma:pezzo2}, we will characterise the topological obstructions that might arise when $N = \Sigma \simeq I \times S$ and $S\simeq S^{n-1}$.
\end{remark}

After these preliminaries, we can now consider the gauge structure on $\X$.
Note that the following definitions \emph{exclude} gauge transformations that are not connected to the identity.

\begin{definition}[Gauge group and relative gauge group]\label{def:gaugetransformations}
Given a principal $G$-bundle $P\to\Sigma$ (trivial by Assumption \ref{ass:trivialP}), the local Lie algebra of gauge transformations, or \emph{gauge algebra}, is 
\[
\fG \doteq \Gamma(\Sigma,\Ad P) \stackrel{(\mathrm{Ass.\ref{ass:trivialP}})}{\simeq} \fg^\Sigma.
\]
Moreover, we call \emph{gauge group} the group generated by $\fG$,
\[
\G\doteq \langle \exp \fG \rangle \stackrel{(\mathrm{Ass.\ref{ass:trivialP}})}{\simeq}G^\Sigma_0,
\]
while the \emph{relative gauge group}  $\tGo$ is the normal subgroup of $\G$ corresponding to the relative component of $\G\simeq G^\Sigma_0$,
\[
\tGo \stackrel{(\mathrm{ass.\ref{ass:trivialP}})}{\simeq} G^\Sigma_{0,\rel} \equiv G^\Sigma_0 \cap G^\Sigma_\rel.\qedhere
\]
\end{definition}

\begin{remark}[Normality of $\G_\rel$]\label{rmk:Grel-normal}
Since $G^\Sigma_{0,\rel}$ is the intersection of two normal subgroups of the mapping group $G^\Sigma$, it is a normal subgroup itself. In particular, the relative gauge group $\tGo$ is a normal subgroup of the gauge group $\G$.
\end{remark}

\begin{definition}[Gauge action]\label{def:gauge-action}
The right action of the group of gauge transformations on the geometric phase space of nYM theory is
\[
\G\times \X \to \X,
\quad
(g,A,\E) \mapsto (g^{-1}Ag + g^{-1}dg, g^{-1}\E g).
\]
The corresponding \emph{fundamental vector fields} are
\[
\rho: \fG \to \mathfrak{X}(\X), \quad
\rho(\xi) = \int_\Sigma 
d_A\xi \frac{\delta}{\delta A} + [\E, \xi] \frac{\delta}{\delta E}.
\qedhere
\]
\end{definition}

\begin{remark}[Constant gauge transformations]\label{rmk:constantgaugetransf}
Mapping algebras arise as a particular case of the gauge algebra $\fG \doteq \Gamma(\Ad P)$: if $G$ is Abelian then $\Gamma(\Ad P) \simeq \fg^\Sigma$ without further assumptions, whereas in the non-Abelian case $\Gamma(\Ad P)\simeq \fg^\Sigma$ only if $P$ is trivial (Assumption \ref{ass:trivialP}). Therefore, the notion of ``constant gauge transformations" is generally meaningful (and, in fact, useful for us) only in the Abelian case. Ultimately, this is because, for $G$ Abelian, $\fg\into\fg^\Sigma$ is  a \emph{global} stabiliser---i.e.\ $\fg = \ker(\rho)$ for $\rho$ seen as a map $\fg^\Sigma \to \mathfrak{X}^1(\X)$ (see \cite[Remark 4.7]{RielloSchiavina}). We restricted the definition of $\fg\hookrightarrow\fg^N$ to the Abelian case to ensure that all our statements generalise to non trivial bundles.
\end{remark}

We can now show that the triplet $(\X,\bom,\G)$ indeed defines a locally Hamiltonian $\G$-space by identifying the momentum form $\bH:\fG\to\Omega^{\mathrm{top},0}(\Sigma\times \X)$. Subsequently, we will split $\bH$ into its constraint and flux form components, $\bHo$ and $d\bh$.

\begin{proposition}\label{prop:nYMisLocHam}
$(\X,\bom)$ is a locally Hamiltonian $\G$-space,
\[
\bi_{\rho(\xi)}\bom = \bd \langle \bH , \xi\rangle,
\]
with momentum form
\[
\langle \bH(A,E),\xi\rangle \doteq - \tr( \E \cL_\ell \xi + F_\ell^i \cD_i\xi)\vol_\Sigma.
\]
Moreover, with reference to Definition \ref{def:constraintsurface}, the momentum form splits into an equivariant (Gauss) constraint form $\bHo$ and an equivariant flux form $d\bh = \bH - \bHo$, respectively given by
\begin{subequations}
\label{eq:constr-and-flux-forms}
\begin{equation}
    \label{eq:gaussdef}
\langle\bHo(A,E),\xi\rangle = \tr(\mathsf{G}\, \xi)\vol_\Sigma,
\qquad
\mathsf{G}(A,E) \doteq \cL_\ell \E + \cD^i F_{\ell i},
\end{equation}
and
\begin{equation}
\label{eq:fluxformYM}    
\langle d\bh(A,E),\xi\rangle  = - \big( \pp_u \tr( \E\, \xi)  + D^i\tr( F_{\ell i}\, \xi)\big)\vol_\Sigma.
\end{equation}
\end{subequations}
Whence, the constraint surface $\C$ of ``on-shell'' configurations is the space of field configurations that satisfy the Gauss constraint:
\[
\C \doteq \bHo^{-1}(0) = \{(A,E)\in\X\ : \ \mathsf{G}(A,E)=0\}.
\]
\end{proposition}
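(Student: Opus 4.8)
The plan is to verify the defining flow equation directly and then obtain the constraint/flux split by integration by parts. To compute the left-hand side $\bi_{\rho(\xi)}\bom_{\nYM}$, I first record how the fundamental vector field $\rho(\xi)$ contracts the variational one-forms entering $\bom_{\nYM}$. From Definition \ref{def:gauge-action} one has $\bi_{\rho(\xi)}\bd A = d_A\xi$ and $\bi_{\rho(\xi)}\bd\E = [\E,\xi]$; since $d_A\xi = (\cL_\ell\xi)\,du + \cD\xi$, splitting along the null foliation gives $\bi_{\rho(\xi)}\bd A_\ell = \cL_\ell\xi$ and $\bi_{\rho(\xi)}\bd\hat A_i = \cD_i\xi$. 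For the curvature leg I use the variational identity \eqref{eq:deltaF_ell}, $\bd F_\ell = \cL_\ell\bd\hat A - \cD\bd A_\ell$, together with the Ricci-type commutator $[\cL_\ell,\cD_i]\xi = [F_{\ell i},\xi]$ (the statement that $\cL_\ell$ and $\cD_i$ are covariant derivatives with curvature $F_{\ell i}$), to obtain $\bi_{\rho(\xi)}\bd F_{\ell i} = \cL_\ell\cD_i\xi - \cD_i\cL_\ell\xi = [F_{\ell i},\xi]$.

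With these at hand I expand $\bi_{\rho(\xi)}\bom_{\nYM}$ using $\bi_X(\bd\alpha\wedge\bd\beta) = (\bi_X\bd\alpha)\,\bd\beta - \bd\alpha\,(\bi_X\bd\beta)$, giving
\[
\bi_{\rho(\xi)}\bom_{\nYM} = \Big( \tr\big([\E,\xi]\,\bd A_\ell - \bd\E\,\cL_\ell\xi\big) + \tr\big([F_\ell^i,\xi]\,\bd\hat A_i - \bd F_\ell^i\,\cD_i\xi\big)\Big)\vol_\Sigma .
\]
Independently I compute the right-hand side by applying the field-space differential to $\langle\bH,\xi\rangle = -\tr(\E\,\cL_\ell\xi + F_\ell^i\,\cD_i\xi)\vol_\Sigma$; since $\xi$ is field-independent, $\bd(\cL_\ell\xi) = [\bd A_\ell,\xi]$ and $\bd(\cD_i\xi) = [\bd\hat A_i,\xi]$, and the $\Ad$-invariance of $\tr$ (in the form $\tr(a[b,c]) = \tr([a,b]c)$) turns $\tr(\E[\bd A_\ell,\xi])$ into $-\tr([\E,\xi]\bd A_\ell)$ and $\tr(F_\ell^i[\bd\hat A_i,\xi])$ into $-\tr([F_\ell^i,\xi]\bd\hat A_i)$. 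The resulting expression matches $\bi_{\rho(\xi)}\bom_{\nYM}$ term by term, establishing the flow equation.

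For the split I integrate by parts in $\langle\bH,\xi\rangle$. In the null direction, $\tr(\E\,\cL_\ell\xi) = \partial_u\tr(\E\xi) - \tr((\cL_\ell\E)\,\xi)$, where the connection term from $\cL_\ell$ is moved onto $\E$ using $\Ad$-invariance. In the spatial directions, $\tr(F_\ell^i\,\cD_i\xi) = D^i\tr(F_{\ell i}\,\xi) - \tr((\cD^i F_{\ell i})\,\xi)$: the Levi-Civita divergence of the scalar $\tr(F_{\ell i}\xi)$ produces the total-derivative term $D^i\tr(F_{\ell i}\xi)$, while the $[\hat A_i,\cdot]$ piece of $\cD_i$ recombines, again via $\Ad$-invariance, with the metric divergence of $F_\ell^i$ to build the full gauge-covariant divergence $\cD^i F_{\ell i}$ on the bulk side. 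Collecting the terms without derivatives on $\xi$ yields the order-$0$ constraint form $\langle\bHo,\xi\rangle = \tr(\mathsf{G}\,\xi)\vol_\Sigma$ with $\mathsf{G} = \cL_\ell\E + \cD^i F_{\ell i}$, while the total-derivative terms yield $\langle d\bh,\xi\rangle = -\big(\partial_u\tr(\E\xi) + D^i\tr(F_{\ell i}\xi)\big)\vol_\Sigma$, which is manifestly $d$-exact. By the uniqueness of the order-$0$ plus $d$-exact decomposition \cite[Prop.~4.1]{RielloSchiavina}, these are the constraint and flux forms. Equivariance of $\bHo$ is immediate, since $\mathsf{G}$ is gauge-covariant ($\mathbb{L}_{\rho(\xi)}\mathsf{G} = [\mathsf{G},\xi]$) and $\Ad$-invariance gives $\tr([\mathsf{G},\xi]\eta) = \tr(\mathsf{G}[\xi,\eta])$; equivariance of $\bH$, hence of $d\bh = \bH - \bHo$, follows from the Cartan identity $\mathbb{L}_{\rho(\xi)}\langle\bH,\eta\rangle = \bi_{\rho(\xi)}\bd\langle\bH,\eta\rangle = \bom_{\nYM}(\rho(\eta),\rho(\xi))$ (using that $\langle\bH,\eta\rangle$ is a field-space $0$-form) together with a direct check that the right-hand side equals $\langle\bH,[\xi,\eta]\rangle$. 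The identification $\C = \bHo^{-1}(0) = \{\mathsf{G}=0\}$ is then immediate from Definition \ref{def:constraintsurface}.

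I expect the main obstacle to be the curvature-leg contraction $\bi_{\rho(\xi)}\bd F_{\ell i} = [F_{\ell i},\xi]$, which relies on reading the commutator identity $[\cL_\ell,\cD_i]\xi = [F_{\ell i},\xi]$ correctly out of \eqref{eq:deltaF_ell} with consistent sign conventions. A close second is the spatial integration by parts, where one must keep the metric (Levi-Civita) divergence $D^i$ distinct from the gauge-covariant divergence $\cD^i$ and verify that the $[\hat A_i,\cdot]$ terms land on the correct side of the split. All remaining manipulations are routine bookkeeping with the $\Ad$-invariance of $\tr$.
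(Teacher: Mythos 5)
Your proposal is correct and takes essentially the same route as the paper's proof: contract $\bom_{\nYM}$ with $\rho(\xi)$ to obtain the four-term expression, identify it with $\bd\langle\bH,\xi\rangle$, then integrate by parts (in $u$ and along $S$) to isolate the order-$0$ Gauss part from the $d$-exact flux part, with the split's uniqueness guaranteed by \cite[Proposition 4.1]{RielloSchiavina}. The additional details you supply---the curvature-leg contraction $\bi_{\rho(\xi)}\bd F_{\ell i}=[\cL_\ell,\cD_i]\xi=[F_{\ell i},\xi]$ obtained from Equation \eqref{eq:deltaF_ell}, and the Cartan-calculus verification of the equivariance of $\bH$---are exactly the steps the paper leaves implicit (it declares equivariance ``manifest''), and you carry them out correctly.
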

\begin{proof}
Contracting $\bom$ with an infinitesimal gauge transformation $\rho(\xi)$, one obtains:
\begin{align*}
\bi_{\rho(\xi)} \bom 
& =   \big( \tr([\E,\xi]  \bd A_\ell ) -  \tr(\bd \E  \cL_\ell\xi )+ \tr( [F_\ell^i,\xi]   \bd \ASa_i) -\tr(  \bd F_\ell^i   \cD_i\xi) \big) \vol_\Sigma
\end{align*}
Rearranging, one finds the sought expression for the momentum form $\bH$.
We can then split this expression into its constraint and flux terms, $\bH = \bHo + d\bh$.
Indeed, from
\begin{align*}
\bH 
& = \big( 
\tr( (\cL_\ell \E  + \cD^iF_{\ell i} )\xi)
-L_\ell \tr( \E \xi)  - D^i\tr( F_{\ell i} \xi)  \big)\vol_\Sigma,
\end{align*}
we can readily isolate the part $\bHo$ of $\bH$ which is of order-0 in $\xi\in\fG$,\footnote{Recall: ``order-0'' means that $\langle\bHo,\xi\rangle$ is linear and ultralocal in $\xi$, i.e.\ it does not involve any derivative of $\xi$. This property uniquely determines the constraint form $\bHo$ once the momentum form $\bH$ is given, see \cite[Proposition 4.1]{RielloSchiavina}.} as well as the remainder flux form, $d\bh = \bH - \bHo$, as per Equations \eqref{eq:gaussdef} and \eqref{eq:fluxformYM}.
The equivariance of $\bHo$ and $d\bh$ is manifest. 
\end{proof}

In sum, YM theory on a null boundary is a locally Hamiltonian gauge theory with an equivariant flux map and therefore complies with the symplectic reduction by stages framework summarised in Section \ref{sec:theoreticalframework}.

\section{Superselection in null Yang--Mills theory}\label{sec:YMSuperselectionShort}

We now investigate the superselection structure of YM theory on a null boundary.
Recall, superselection sectors are labelled by the coadjoint orbits $\mathcal{O}_f$ of the on-shell fluxes $f \in \F \doteq \Im(\iota_\C^*h)$.
To understand what these are in null YM theory we need first to have a better grasp on the shell condition---i.e.\ of the constraint surface $\C$.

The Gauss constraint \eqref{eq:gaussdef},
\begin{equation}
    \label{eq:Gauss-explicit}
\mathsf{G}(A,E)\doteq \cL_\ell \E + \cD^i F_{\ell i} = 0,
\end{equation}
can be viewed as a parallel transport equation for $E(u,x)$ along the null direction $\ell$, i.e.\ as a linear first-order evolution equation (ODE) for $\E(u,x)$ in the retarded time $u$ (Remark \ref{rmk:u-functions}). Therefore, the Gauss constraint admits a solution $\E(A,E_\i)$ fully and uniquely determined by the value of $A$ over $\Sigma=S\times I$ as well as the ``initial'' value of $\E$ at $u=-1$ (Lemma \ref{lemma:paralleltr}).

We summarise this discussion in the following proposition. First, however, we introduce some notation:
\begin{remark}\label{rmk:dual}
The densitised dual of $\fg^S = C^\infty(S,\fg)$ is the space of local, $C^\infty(S)$-linear maps from $C^\infty(S,\fg)$ to $\mathbb{R}$. Integration over $S$ yields the isomorphism: $(\fg^S)^* \simeq \Omega^\mathrm{top}(S,\fg^*)$.
Note that $\fg^S$ is isomorphic to the pullback of the elements of $\fG$  to either one of the two boundary components $S_{\i/\f}\subset\pp\Sigma$. 
When it comes to the initial values of the electric field $\iota^*_{S_{\i}}\boldsymbol{E}\in\Omega^{\mathrm{top}}(S,\fg^*)$ and their identification with functions $E_\i\in C^\infty(S,\fg)$, $\iota_{S_\i}^*\boldsymbol{E} \equiv \tr(E_\i\cdot)\vol_S$, here and below we will most often leave the following isomorphism \emph{implicit}:
\[
\fg^S \ \xrightarrow{\simeq}\ \Omega^{\mathrm{top}}(S_\i,\fg^*) \simeq (\fg^S)^*, \qquad E_\i\mapsto \int_S \btr(E_\i\cdot) = \int_{S_{\i}} \iota^*_{S_{\i}}\boldsymbol{E}(\cdot)
\]
and thus often abuse notation and write $E_\i\in(\fg^S)^*$. This will be useful in late sections.
\end{remark}

\begin{proposition}[Constraint surface]\label{prop:constr-surf}
The constraint surface $\C\subset \X$ is a smooth connected submanifold.
Moreover, the map $s_\i$, defined as follows, is a diffeomorphism:
\[
\mathcal{A}\times (\fg^S)^* \xrightarrow{\,s_\i\,} \C\xhookrightarrow{} \X =\mathcal{A}\times\mathcal{E}, \qquad (A,{E_\i})\stackrel{s_\i}{\longmapsto} (A, E(A,E_\i))
\]
where $E\equiv E(A,E_\i) \in \mathcal{E}$ is the unique solution to the Gauss constraint with initial condition $E(u=-1)={E_\i}\in C^\infty(S,\fg^*)\simeq(\fg^S)^*$,
\[\label{eq:Gauss-in-cond}
\begin{cases}
\cL_\ell E + \cD^i F_{\ell i} = 0,\\
E(u=-1) = {E_\i}.
\end{cases}
\]
In the Abelian case, the solution $E(A,E_\i)$ can be written explicitly as
\begin{equation}
    \label{eq:Abeliangauss}
    E(A,E_\i)(u,x) \stackrel{\mathrm{(Ab.)}}{=}  {E_\i}(x) - \int_{-1}^u du'\, D^iF_{\ell i}(u',x).
\end{equation}
\noindent Henceforth we will keep the map $s_\i$ implicit and simply write $(A,E_\i)\in\C$.
\end{proposition}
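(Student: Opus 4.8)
The plan is to exploit the observation, already stressed in the text, that the Gauss constraint $\mathsf{G}(A,E) = \cL_\ell E + \cD^i F_{\ell i} = 0$ is, for each fixed $A\in\mathcal{A}$, a linear inhomogeneous first-order ODE in the retarded time $u$ for the field $E(u,\cdot)$. Writing $\cL_\ell E = \pp_u E + [A_\ell, E]$ (as permitted for $\fg$-valued $0$-forms) and noting that the source $\cD^i F_{\ell i}$ depends on $A$ alone, the equation reads $\pp_u E + [A_\ell, E] = -\cD^i F_{\ell i}$. First I would invoke Lemma \ref{lemma:paralleltr} to obtain, for each pair $(A,E_\i)$, the unique global solution $E(A,E_\i)$ on $I=[-1,1]$ with $E(u=-1)=E_\i$; this is the standard existence/uniqueness statement for linear ODEs, the solution being expressible as the parallel transport (path-ordered exponential $\Pexp$) generated by $A_\ell$ applied to $E_\i$, plus a Duhamel-type convolution against the source. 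In the Abelian case $[A_\ell,\cdot]=0$, so the homogeneous part is trivial and direct integration yields the closed form \eqref{eq:Abeliangauss}.

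Next I would establish that $s_\i$ is a bijection onto $\C$. Well-definedness is immediate, since $E(A,E_\i)$ solves the constraint by construction, whence $s_\i(A,E_\i)\in\C$. For injectivity, if $s_\i(A,E_\i)=s_\i(A',E_\i')$ then $A=A'$ and the two electric fields coincide; restricting to $u=-1$ forces $E_\i=E_\i'$. For surjectivity, given any $(A,E)\in\C$, the field $E$ solves the Gauss ODE with its own initial datum $E(u=-1)$, so by the uniqueness clause of Lemma \ref{lemma:paralleltr} one has $E=E(A,E(u=-1))$ and therefore $(A,E)=s_\i(A,E(u=-1))$. The explicit inverse is thus the restriction map $r_\i\colon (A,E)\mapsto (A,E(u=-1))$.

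To upgrade this bijection to a diffeomorphism---and thereby transport the smooth (Fr\'echet) manifold structure of $\mathcal{A}\times(\fg^S)^*$ onto $\C$---I would verify smoothness of both $s_\i$ and $r_\i$ in the convenient-calculus sense used throughout (cf. \cite{kriegl1997convenient}). The retraction $r_\i$ is the composition of evaluation at $u=-1$ with a projection, both manifestly smooth. The forward map $s_\i$ requires smooth dependence of the ODE solution on the data $(A,E_\i)$; here I would appeal to smoothness of the parallel-transport/path-ordered-exponential solution operator as a map of Fr\'echet manifolds. This is the technical heart of the argument and the step I expect to be the main obstacle: in the Abelian case it is transparent from \eqref{eq:Abeliangauss}, which is affine in $E_\i$ and smooth in $A$, whereas in the non-Abelian case one must control the $\Pexp$ through its convergent iterated-integral (Dyson) series together with the attendant derivative estimates. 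Since $s_\i$ then admits the smooth left inverse $r_\i$, it is a smooth embedding, so $\C = \Im(s_\i)$ is a smooth submanifold of $\X$. Connectedness is finally immediate, because $\mathcal{A}$ is an affine space and $(\fg^S)^*$ a topological vector space, so their product is connected and so is its homeomorphic image $\C$.
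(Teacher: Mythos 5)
Your proposal is correct and follows essentially the same route as the paper, whose entire proof is the single line ``Follows from Lemma \ref{lemma:paralleltr}'': existence and uniqueness of the parallel-transport solution is the key ingredient in both, and your bijectivity, smoothness, and connectedness arguments are exactly the details the paper leaves implicit. Your identification of smooth dependence of the solution on $(A,E_\i)$ in the Fr\'echet setting as the genuine technical point (handled via the explicit affine formula in the Abelian case and the iterated-integral series for $\Pexp$ in general) is a faithful filling-in of what the paper's citation of the lemma tacitly assumes.
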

\begin{proof}
Follows from Lemma \ref{lemma:paralleltr}.
\end{proof}

Now that we have characterised the constraint surface, we can address the on-shell fluxes after introducing some notation.

\begin{notation}\label{not:subsupscripts-2}
We use $\bullet^{\i/\f}$ (and $\bullet^{\av/\dif}$) to denote maps from (spatial) objects defined over $\Sigma$ to objects defined over $S$, as per Notation \ref{not:subsupscripts}; instead, we use $\bullet_{\i/\f}$ as mere labels for objects intrinsically defined on $S_{\i/\f}\subset\pp\Sigma$. This subscript/superscript notation allows us to formally keep track of the nature of the various quantities, but in practice one can simply ignore the distinction.
\end{notation}

\begin{remark}
We observe that
\begin{equation}\label{e:cornergaugealgebra}
\fGp\simeq \fg^S \times \fg^S,
\end{equation}
where each copy of $\fg^S$ corresponds to the mapping algebra on $S_{\i/\f} \subset \pp\Sigma$, respectively. Accordingly, we will often write (cf.\ Notation \ref{not:subsupscripts-2})
\[
(\xi_\i,\xi_\f) \in  \fGp.
\]
The Gauss constraint boundary condition of Proposition \ref{prop:constr-surf} can be written as $E^\i={E_\i}$, while the restriction map $\iota_{\pp\Sigma}^* :\fg^\Sigma \to \fg^{\pp\Sigma}$ reads $(\xi_\i,\xi_\f) = (\xi^\i,\xi^\f)$. 

Moreover, using Equation \eqref{eq:fluxformYM}, the flux map $h:\X\to\fG^*_\loc$ can be written as
\begin{equation}
    \label{eq:fluxmap}
\langle h(A,E),\xi\rangle = - \int_S \btr(  E^\f \xi^\f - E^\i \xi^\i)  =- \int_S \btr(E^\av \xi^\dif + E^\dif\xi^\av);
\end{equation}
similarly, in the Abelian case, this notation, allows us to rewrite  Equation \eqref{eq:Abeliangauss} for the on-shell difference of initial and final electric fluxes in terms of the zero-mode of $F_{\ell i}$:
\begin{equation}
    \label{eq:EdiffAbelian}
E^\dif = - D^i F_{\ell i}^{\int} \qquad \mathrm{(Abelian)}.
\end{equation}
\end{remark}

In the following lemma---and throughout the rest of the paper---we will use the notion, and notation, for the annihilators $\Ann(\mathcal{X},\mathcal{Y}) \subset \mathcal{W}^*_\mathrm{str}$ introduced in Definition \ref{def:annihilators}, with $\mathcal{W} = \fg^N$, $\mathcal{X}=\fg\hookrightarrow\fg^{N}$, and $\mathcal{Y}=(\fg^{N})^*$ (the densitised dual) or $\mathcal{Y}=\mathcal{W}^*_\text{str} = (\fg^{N})^*_{\mathrm{str}}$ (the strong dual). For example, we have:
\begin{align}
\Ann(\fg,(\fg^{\pp\Sigma})^*) 
&\doteq \{ f\in(\fGp)^*\,:\, \langle f,\chi\rangle = 0\ \forall \chi \in \fg\}\notag\\
&\simeq \{ (f_\i,f_\f) \in (\fg^S)^*\times(\fg^S)^* \, : \, \langle f_\i,\chi\rangle = \langle f_\f, \chi\rangle \ \forall \chi \in \fg\}.\label{e:Annpp}
\end{align}

We also note that we have the following natural embedding of the \emph{densitised} dual $(\fg^{\pp\Sigma})^*$ into the \emph{strong} dual $(\fg^\Sigma)^*_\mathrm{str}$ (notice the change in the domain, from $\pp\Sigma$ to $\Sigma$):
\[
C^\infty(\pp\Sigma,\fg) \xrightarrow{\simeq} (\fg^{\pp\Sigma})^*\hookrightarrow (\fg^\Sigma)^*_{\mathrm{str}},
\quad
\eta \mapsto \int_{\pp\Sigma} \btr(\eta \cdot) \mapsto \int_\Sigma d\tr(\bar\eta \cdot) \wedge \vol_S
\]
where in the rightmost term $\bar\eta$ is \emph{any} element of $C^\infty(I,S)\simeq \fg^\Sigma$ such that $\bar\eta\vert_{\pp\Sigma} =\eta$ (e.g. one that vanishes outside of a tubular neighborhood of $\pp\Sigma$).

\begin{lemma} \label{lem:annihilators}
Let $(\fg^{\pp\Sigma})^*\hookrightarrow (\fg^\Sigma)^*_{\mathrm{str}}$. Then,
    \[
    \Ann((\fg^{\pp\Sigma})^*, \fg^\Sigma) \simeq  \fg^\Sigma_\rel, \qquad\mathrm{and}\qquad \Ann(\fg^\Sigma_\rel, (\fg^\Sigma)^*_{\mathrm{str}}) \simeq (\fg^{\pp\Sigma})^*.
    \]
\end{lemma}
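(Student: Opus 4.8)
The plan is to make the embedding $(\fg^{\pp\Sigma})^*\hookrightarrow(\fg^\Sigma)^*_{\mathrm{str}}$ fully explicit, prove the first isomorphism by a direct computation, and then deduce the second from it via the double-annihilator Lemma \ref{lem:doubleannihilator}. First I would unpack the embedding. Writing $\vol_\Sigma = du\wedge\vol_S$ and using that $d\tr(\bar\eta\,\xi)\wedge\vol_S = \pp_u\tr(\bar\eta\,\xi)\,\vol_\Sigma$ (the tangential part of $d$ is killed by $\vol_S$), the fundamental theorem of calculus in $u$ gives, for $\eta\in\fg^{\pp\Sigma}$ with components $(\eta_\i,\eta_\f)$ and any extension $\bar\eta$,
\[
\xi\ \longmapsto\ \int_\Sigma d\,\tr(\bar\eta\,\xi)\wedge\vol_S \;=\; \int_S\btr(\eta_\f\,\xi^\f)-\int_S\btr(\eta_\i\,\xi^\i).
\]
The right-hand side depends only on the boundary restrictions $\xi^{\i/\f}$ and is manifestly independent of $\bar\eta$, so the embedding is well defined and its image is exactly the space of ``boundary'' functionals.

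For the first isomorphism, note that $\xi\in\Ann((\fg^{\pp\Sigma})^*,\fg^\Sigma)$ iff the displayed pairing vanishes for all $(\eta_\i,\eta_\f)$. Since $\eta_\i$ and $\eta_\f$ range independently over $\fg^S=C^\infty(S,\fg)$ and $\tr$ is non-degenerate, the fundamental lemma of the calculus of variations on $S$ forces $\xi^\i=\xi^\f=0$, i.e.\ $\xi\vert_{\pp\Sigma}=0$; the converse is immediate. Hence $\Ann((\fg^{\pp\Sigma})^*,\fg^\Sigma)=\fg^\Sigma_\rel$, which is the first claim.

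For the second isomorphism I would invoke reflexivity: $\fg^\Sigma$ is nuclear Fr\'echet, so $(\fg^\Sigma)^*_{\mathrm{str}}$ is reflexive with $\big((\fg^\Sigma)^*_{\mathrm{str}}\big)^*_{\mathrm{str}}=\fg^\Sigma$. Applying Lemma \ref{lem:doubleannihilator} with $\mathcal{W}=(\fg^\Sigma)^*_{\mathrm{str}}$ and $\mathcal{X}=(\fg^{\pp\Sigma})^*$ yields $\Ann\big(\Ann((\fg^{\pp\Sigma})^*,\fg^\Sigma),(\fg^\Sigma)^*_{\mathrm{str}}\big)=(\fg^{\pp\Sigma})^*$, and substituting the first isomorphism gives $\Ann(\fg^\Sigma_\rel,(\fg^\Sigma)^*_{\mathrm{str}})=(\fg^{\pp\Sigma})^*$. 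Equivalently, one argues directly that any $f$ killing $\fg^\Sigma_\rel=\ker r$—with $r\colon\fg^\Sigma\onto\fg^{\pp\Sigma}$ the (open, surjective) restriction map—factors through $r$ and is therefore a boundary functional.

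The hard part will be the functional-analytic status of $(\fg^{\pp\Sigma})^*$ inside the strong dual. Lemma \ref{lem:doubleannihilator} demands that $\mathcal{X}$ be a \emph{closed} subspace, whereas the densitised dual $(\fg^{\pp\Sigma})^*$ (smooth densities) sits densely but not closedly in the full distributional dual $(\fg^{\pp\Sigma})^*_{\mathrm{str}}$; indeed the factorisation argument shows the annihilator computed in $(\fg^\Sigma)^*_{\mathrm{str}}$ is a priori all of $(\fg^{\pp\Sigma})^*_{\mathrm{str}}$. Pinning the annihilator down to the \emph{densitised} dual rather than the larger distributional one is precisely the step that must be justified within the local/densitised-dual framework of the paper (Remark \ref{rmk:dualspaces} and \cite[Appendix~A]{RielloSchiavina}), and I expect this identification to be the main obstacle.
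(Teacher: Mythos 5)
Your reconstruction follows the paper's proof essentially step for step: the first isomorphism is obtained there by the same unpacking of the embedding into the boundary pairing $\int_{\pp\Sigma}\btr(\eta\,\xi\vert_{\pp\Sigma})$, and the second by the same application of Lemma \ref{lem:doubleannihilator} to $\mathcal{X}=(\fg^{\pp\Sigma})^*\subset\mathcal{W}=(\fg^\Sigma)^*_{\mathrm{str}}$, combined with the first isomorphism.

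The obstacle you flag in your final paragraph is genuine, and the paper's own proof does not remove it: there, the application of Lemma \ref{lem:doubleannihilator} is justified only by the remark that ``$(\fg^{\pp\Sigma})^*$ is a vector subspace of the nuclear space $(\fg^{\pp\Sigma})^*_{\mathrm{str}}$'', which is not the closedness hypothesis that lemma requires. As you note, the image of the densitised dual is \emph{not} closed in $(\fg^\Sigma)^*_{\mathrm{str}}$: since the restriction map $r\colon\fg^\Sigma\onto\fg^{\pp\Sigma}$ is a continuous open surjection with kernel $\fg^\Sigma_\rel$, every functional vanishing on $\fg^\Sigma_\rel$ factors through $r$, so $\Ann(\fg^\Sigma_\rel,(\fg^\Sigma)^*_{\mathrm{str}})=r^*\big((\fg^{\pp\Sigma})^*_{\mathrm{str}}\big)$ is the full distributional dual of the boundary algebra, in which the smooth densities form a dense \emph{proper} subspace. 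Hence the second isomorphism, read literally in the notation of Remark \ref{rmk:dualspaces}, needs one of two repairs: either put $(\fg^{\pp\Sigma})^*_{\mathrm{str}}$ on the right-hand side, or compute the annihilator inside the \emph{local} dual, $\Ann(\fg^\Sigma_\rel,(\fg^\Sigma)^*_\loc)\simeq(\fg^{\pp\Sigma})^*$ --- for a local functional, vanishing on compactly supported $\xi$ kills the bulk (Euler--Lagrange) part, vanishing on all of $\fg^\Sigma_\rel$ then excludes transverse derivatives at $\pp\Sigma$, and integration by parts on the closed manifold $\pp\Sigma$ reduces what remains to an ultralocal boundary density. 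The latter repair is exactly the densitised/local-dual rephrasing the paper itself records in Remark \ref{rmk:Gredlocality}, and it suffices for every later use of the lemma (Propositions \ref{prop:fluxes} and \ref{prop:discretequotientgroup}), since the on-shell fluxes are smooth boundary densities. In short: your proof is as complete as the paper's, and your diagnosis of where the real work lies is more careful than the paper's own treatment of that step.
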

\begin{proof}
Looking for elements in $\fg^\Sigma$ that annihilates the image of the embedding $(\fg^{\pp\Sigma})^*\hookrightarrow (\fg^\Sigma)^*_{\mathrm{str}}$ means looking at 
\begin{multline*}
\Ann((\fg^{\pp\Sigma})^*, \fg^\Sigma) \\ \doteq \big\{\xi \in \fg^\Sigma \, : \, \langle \eta,\xi\rangle = \textstyle{\int}_\Sigma d\tr(\bar\eta \xi)\wedge\vol_S  = \textstyle{\int}_{\pp \Sigma} \btr(\eta \xi\vert_{\pp\Sigma})= 0\ \forall \eta\in C^\infty(\pp\Sigma,\fg^*) \big\},
\end{multline*}
which is precisely given by $\fg^\Sigma_\rel=\{ \xi\in\fg^\Sigma\ : \ \xi\vert_{\pp\Sigma}=0\}$. 
The second isomorphism follows from Lemma \ref{lem:doubleannihilator} which implies  $\Ann(\Ann((\fg^{\pp\Sigma})^*, \fg^\Sigma),(\fg^\Sigma)^*_{\mathrm{str}})=(\fg^{\pp\Sigma})^*$ since $(\fg^{\pp\Sigma})^*$ is a vector subspace of the nuclear space $(\fg^{\pp\Sigma})^*_{\mathrm{str}}$.
\end{proof}

We can now characterise the flux space:

\begin{proposition}[On-shell fluxes]\label{prop:fluxes}
Let $h:\X\to\fG^*_\loc$ be the flux map $h=\int_\Sigma d\bh$, and $\F \doteq \Im(\iota_\C^*h)\subset \fG^*_\loc$ the space of on-shell fluxes.
Then, with reference to Equations (\ref{e:cornergaugealgebra}--\ref{e:Annpp}),
\[
\F \simeq \begin{dcases}
    \Ann(\fg^\Sigma_\rel,(\fg^\Sigma)^*_\mathrm{str})\simeq (\fGp)^* & \text{$G$ semisimple},\\
    \Ann(\fg+\fg^\Sigma_\rel,(\fg^{\Sigma})^*_\mathrm{str})\simeq\Ann(\fg,(\fg^{\pp\Sigma})^*) & \text{$G$ Abelian}.
\end{dcases}
\]
Finally, the diffeomorphism
\[
(\fg^{\pp\Sigma})^*\to (\fg^S)^*\times(\fg^S)^*, \quad \ f \mapsto (f_\i,f_\dif) = (f_\i, f_\f-f_\i)
\] 
identifies $\Ann(\fg,(\fg^{\pp\Sigma})^*) \simeq (\fg^S)^*\times \Ann(\fg, (\fg^S)^*)$. 
\end{proposition}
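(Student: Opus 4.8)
The plan is to compute the image of the on-shell flux map $\iota_\C^* h$ directly from its explicit form \eqref{eq:fluxmap}, and then identify this image with the appropriate annihilator using Lemma \ref{lem:annihilators} and Lemma \ref{lem:doubleannihilator}. First I would recall from \eqref{eq:fluxmap} that, for $(A,E)\in\C$, the flux pairs a test function $\xi=(\xi_\i,\xi_\f)\in\fG$ only against its boundary restriction $(\xi^\i,\xi^\f)=(\xi_\i,\xi_\f)$, via $\langle h,\xi\rangle = -\int_S \btr(E^\f\xi_\f - E^\i\xi_\i)$. Thus $\iota_\C^*h$ factors through the restriction $\iota^*_{\pp\Sigma}:\fg^\Sigma\to\fg^{\pp\Sigma}$, so every on-shell flux annihilates $\fg^\Sigma_\rel=\ker(\iota^*_{\pp\Sigma})$; that is, $\F\subseteq\Ann(\fg^\Sigma_\rel,(\fg^\Sigma)^*_{\mathrm{str}})$. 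The reverse inclusion is where one must verify \emph{surjectivity}: given any prescribed pair $(f_\i,f_\f)\in(\fg^S)^*\times(\fg^S)^*\simeq(\fg^{\pp\Sigma})^*$, I would exhibit a point of $\C$ realising it. Using the parametrisation $\C\simeq\mathcal{A}\times(\fg^S)^*$ of Proposition \ref{prop:constr-surf}, the initial datum $E^\i=E_\i$ is freely specifiable, so set $E^\i\doteq -f_\i$ (up to the implicit identification of Remark \ref{rmk:dual}); then $E^\f=E(A,E_\i)(u=1)$ is determined by $A$ through the Gauss evolution, and one checks that the map $A\mapsto E^\f$ is surjective onto $(\fg^S)^*$ for fixed $E_\i$—e.g.\ in the Abelian case by \eqref{eq:EdiffAbelian}, $E^\dif=-D^iF^{\int}_{\ell i}$, which ranges over all of $\mathrm{Im}(D^i\cdot)$, and more care is needed to cover the full target. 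This establishes the semisimple case, $\F\simeq\Ann(\fg^\Sigma_\rel,(\fg^\Sigma)^*_{\mathrm{str}})\simeq(\fg^{\pp\Sigma})^*$ by Lemma \ref{lem:annihilators}.

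For the Abelian case, the additional constraint comes from the global stabiliser: by Remark \ref{rmk:constantgaugetransf}, the constant transformations $\fg\hookrightarrow\fg^\Sigma$ lie in $\ker(\rho)$, hence act trivially, and moreover one checks from \eqref{eq:fluxmap} that $\langle h,\chi\rangle$ vanishes on-shell for $\chi\in\fg$ constant—indeed $\int_S\btr(E^\f-E^\i)\chi = \int_S\btr(E^\dif)\chi$ and, by \eqref{eq:EdiffAbelian}, $E^\dif=-D^iF^{\int}_{\ell i}$ is $D$-exact, so it integrates to zero against the constant $\chi$. Therefore every Abelian on-shell flux additionally annihilates $\fg$, giving $\F\subseteq\Ann(\fg+\fg^\Sigma_\rel,(\fg^\Sigma)^*_{\mathrm{str}})$; combined with Lemma \ref{lem:annihilators} this is $\simeq\Ann(\fg,(\fg^{\pp\Sigma})^*)$. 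Surjectivity onto this subspace then reduces, after quotienting by the single linear constraint $f_\f=f_\i$ on the constant modes, to the same evolution-surjectivity argument as above, now restricted to the codimension-$\dim\fg$ subspace.

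Finally, the identification of $\Ann(\fg,(\fg^{\pp\Sigma})^*)$ with $(\fg^S)^*\times\Ann(\fg,(\fg^S)^*)$ is a linear-algebra change of coordinates. Under the diffeomorphism $f=(f_\i,f_\f)\mapsto(f_\i,f_\dif)\doteq(f_\i,f_\f-f_\i)$, the defining condition $\langle f_\i,\chi\rangle=\langle f_\f,\chi\rangle$ for all $\chi\in\fg$ (from \eqref{e:Annpp}) becomes $\langle f_\dif,\chi\rangle=0$ for all $\chi\in\fg$, i.e.\ $f_\dif\in\Ann(\fg,(\fg^S)^*)$, while $f_\i$ is unconstrained; this yields the claimed product splitting and completes the proof.

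\medskip
\noindent\textbf{Main obstacle.} I expect the genuine difficulty to be the \emph{surjectivity} direction—showing that the on-shell image fills the entire annihilator rather than a proper subspace. The subtlety is that $E^\f$ is not free but constrained to be the parallel transport of $E^\i$ along $\ell$, so one must argue that varying the connection $A$ (through the curvature term $\cD^iF_{\ell i}$, resp.\ $D^iF^{\int}_{\ell i}$) sweeps out all of $(\fg^S)^*$; in the non-Abelian case this requires controlling the nonlinear dependence of the parallel transport on $A$, and in both cases one must be attentive to the density/duality identifications of Remark \ref{rmk:dual} and to the distinction between the densitised and strong duals flagged in Remark \ref{rmk:Gredlocality}.
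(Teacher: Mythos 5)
Your overall strategy matches the paper's: compute $\iota_\C^*h$ in the parametrisation $\C\simeq\mathcal{A}\times(\fg^S)^*$, observe the easy inclusion $\F\subseteq\Ann(\fg^\Sigma_\rel,(\fg^\Sigma)^*_{\mathrm{str}})$ (plus the annihilation of constants in the Abelian case, which you argue correctly via $D$-exactness of $E^\dif$), and then prove the reverse inclusion by exhibiting on-shell configurations realising any admissible flux; the final $(f_\i,f_\f)\mapsto(f_\i,f_\dif)$ identification is also handled exactly as in the paper. However, there is a genuine gap: the surjectivity step, which you yourself flag as the main obstacle, is never actually carried out. In the semisimple case you write that ``one checks that the map $A\mapsto E^\f$ is surjective onto $(\fg^S)^*$'' and then declare ``this establishes the semisimple case''---but no argument is given, and the evidence you cite in the same sentence (the Abelian formula $E^\dif=-D^iF_{\ell i}^{\int}$, whose image is only $\Im(D^i\cdot)$) actually shows that surjectivity onto the \emph{full} dual fails in the Abelian case, so it cannot be what establishes the semisimple one. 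In the Abelian case you then defer to ``the same evolution-surjectivity argument as above,'' which was never supplied.

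What is missing is the concrete construction the paper uses. One restricts to the ansatz $A_\ell=0$, $\hat A=a+ub$ with $a,b\in\Omega^1(S,\fg)$ constant in $u$, for which the Gauss evolution gives $E^\f=E_\i-2\,\cD_a^\dagger b$ with $\cD_a^\dagger b=D^ib_i+[a^i,b_i]$. In the semisimple case one picks $a$ \emph{irreducible} (such $a$ exist precisely because $\fg$ is semisimple), so that the covariant Laplacian $\Delta_a=\cD_a^\dagger\circ\cD_a$ is elliptic with trivial kernel; the Fredholm alternative then makes $\Delta_a$ invertible, and $b\doteq\cD_a\,\Delta_a^{-1}e$ realises any prescribed $e\in C^\infty(S,\fg)$. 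In the Abelian case the commutator term is absent, $\cD_a^\dagger b=D^ib_i$ is a total divergence, and Hodge theory says $\Delta\lambda=\eta_\dif$ is solvable \emph{iff} $\eta_\dif$ integrates to zero against constants---which is exactly why the Abelian flux space shrinks to $\Ann(\fg,(\fg^{\pp\Sigma})^*)$ rather than filling $(\fg^{\pp\Sigma})^*$. This dichotomy (the commutator $[a^i,b_i]$ obstructing the integration-by-parts argument in the non-Abelian case) is the conceptual heart of the proposition, and your proposal senses it but does not resolve it; without the elliptic-operator input (Fredholm alternative, resp.\ Hodge theory), neither the claimed isomorphism for $G$ semisimple nor the one for $G$ Abelian is established.
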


\begin{proof}
Following Remark \ref{rmk:dual}, the maps
\[
e \to \tr(e\cdot)\vol_S \to \int_S \tr(e\cdot)\vol_S \equiv \int_S \btr(e \cdot)
\]
represent the chain of diffeomorphisms $C^\infty(S,\fg)\simeq\Omega^{\mathrm{top}}(S,\fg^*)\simeq(\fg^S)^*$. We will seamlessly switch between these three spaces. Moreover, the diffeomorphism $s_\i: \mathcal{A}\times (\fg^S)^* \to \C$, left implicit, allows us to  write $(A,E_\i)\in\C$ (Proposition \ref{prop:constr-surf}). 
  
(\emph{i}) \emph{$G$ semisimple} -- 
Restricting to $\C$ the expression for $h$ of Equation \eqref{eq:fluxmap} (and implicitly precomposing with $s_\i$), it is sufficient to show that the map 
\[
\C\simeq\mathcal{A}\times (\fg^S)^* \to (\fg^S)^*\times (\fg^S)^*,
\qquad (A,{E_\i}) \mapsto ({E_\i}, E(A,E_\i)^\f)
\]
is surjective, since:
\[
\langle \iota^*_\C h (A,E_\i)),\xi\rangle = - \int_S \btr\big(  E(A,E_\i)^\f \xi^\f - E_\i \xi^\i\big)
\]
is equivalent to $\iota^*_\C h(A,E_\i) = ({E_\i}, -E(A,E_\i)^\f)\in (\fg^{\pp\Sigma})^*$. This shows, in particular, that $\F\hookrightarrow (\fg^{\pp\Sigma})^*$.

Surjectivity of $(A,E_\i) \to (E_\i, - E(A,E_\i)^\f)$ is the statement that for every pair $({E_\i},-E_\f)\in(\fg^S)^*\times (\fg^S)^*$ there exists an $A\in\mathcal{A}$ such that $E(A,E_\i)^\f = E_\f$. We are now going to prove this statement constructively.

Consider the subset of $\C$ given by configurations $(A=A_\ell + \hat{A},E(A,E_\i))$ with $A_\ell = 0$ and $\hat A = a + u b$, where $a$ and $b$ are constant in $u$, i.e.\ $a,b\in\Omega^1(S,\fg)$. Denoting\footnote{The definition of $\cD_a^\dagger$ is equivalent to $\cD_a^\dagger = \star_S \cD_a \star_S$, where $\star_S$ is the Hodge operator on $S$, with respect to the induced metric $\gamma$.}
\[
\begin{array}{rl}
\cD_a:&C^\infty(S,\fg) \to \Omega^1(S,\fg)\\
&\quad\quad\quad\;\, \xi\mapsto  D\xi+[a,\xi] 
\end{array}
\qquad\mathrm{and}\qquad
\begin{array}{rl}
\cD_a^\dagger: &\Omega^1(S,\fg)\to C^\infty(S,\fg)\\
&\quad\quad\quad b \mapsto D^i b_i + [a^i,b_i]
\end{array}
\]
for these configurations, $\cL_\ell E = \pp_u E$, $F_\ell =b$, and $\cD^iF_{\ell i} = \cD_a^\dagger b$, whence the Gauss constraint \eqref{eq:Gauss-in-cond} becomes
\[
\pp_u E + \cD_a^\dagger b = 0, \qquad E^\i = {E_\i}.
\]
and, being $a$ and $b$ $u$-independent,
\[
E(a + u b,E_\i)^\f = {E_\i} - 2  \cD_a^\dagger b.
\]
Therefore, the statement follows if we can prove that the following map is surjective:
\[
\Omega^1(S,\fg)\times\Omega^1(S,\fg) \to C^\infty(S,\fg), \qquad
(a,b)\mapsto  \cD_a^\dagger b.
\]
i.e.\ that for any $e\in C^\infty(S,\fg)$ there exists a pair $(a,b)$ such that $e=\cD_a^\dagger b$. We will now show that for any $a$ irreducible, there exists a (unique) $b$ for which this is true.
An element $a\in\Omega^1(S,\fg)$ is said irreducible iff $\cD_a$ has trivial kernel; if $\fg$ is semisimple, irreducible elements exist (in fact, they are dense in $\Omega^1(S,\fg)$). The key point is that, if $a$ is irreducible, then the covariant Laplacian $\Delta_a \doteq \cD_a^\dagger \circ \cD_a$ on $S$ is an elliptic operator with trivial kernel. Then, by the Fredholm alternative theorem, $\Delta_a$ is invertible and we can thus define $\eta=\eta(e)$ as the unique solution to the equation
\[
\Delta_a \eta = e.
\]
Therefore, for any $e\in C^\infty(S,\fg)$, we can construct a preimage $(a,b) = (a, \cD_a \eta(e))\in \Omega^1(S,\fg)\times\Omega^1(S,\fg) $ thus proving the statement.

(\emph{ii}) \emph{$G$ Abelian} -- This case can be addressed along similar lines.
Using Equations (\ref{eq:fluxmap}--\ref{eq:EdiffAbelian}), we express $\iota_\C^*h$ as
\begin{equation}
    \label{eq:fluxproof}
\langle \iota^*_\C h(A,{E_\i}),\xi\rangle = \int_S \btr\big( {E_\i} (\xi^\i-\xi^\f)  + (D^iF_{\ell i}^{\int}) \xi^\f \big),
\end{equation}
where we recall the notation $Q^{\int} \doteq \int_{-1}^1 du'\, Q(u')$.
From this formula it is readily clear that given $\chi\in \fg\hookrightarrow \fg^\Sigma$ we have $\langle \iota^*_\C h,\chi\rangle = 0$, that is to say $\F\subset \Ann(\fg,(\fg^\Sigma)^*_{\mathrm{str}})$. 
Moreover, we also observe that for any $\xi_\circ \in \fg^\Sigma_\rel$ we have $\langle \iota^*_\C h,\xi_\circ\rangle = 0$, that is $\Ann(\fg^\Sigma_\rel,(\fg^\Sigma)^*_{\mathrm{str}}) $. Then, using Lemma \ref{lem:annihilators}, we obtain:
\[
\F\subset \Ann(\fg,(\fg^\Sigma)^*_{\mathrm{str}}) \cap \Ann(\fg^\Sigma_\rel,(\fg^\Sigma)^*_{\mathrm{str}}) \simeq \Ann(\fg,(\fg^\Sigma)^*_{\mathrm{str}}) \cap (\fg^{\pp\Sigma})^*.
\]

To prove the opposite inclusion, we first consider the isomorphism of vector spaces
\[
(\fGp)^*\simeq (\fg^S)^*\times(\fg^S)^* \xrightarrow{\simeq} (\fg^S)^*\times(\fg^S)^*, \quad (f_\i,f_\f)\mapsto (f_\i , f_\dif) \doteq (f_\i, f_\f - f_\i).
\]
A moment of reflection shows that, in the light of the identity
\[
\langle f_\f, \xi_\f\rangle - \langle f_\i,\xi_\i \rangle = \langle f_\i,(\xi_\f-\xi_\i)\rangle + \langle f_\dif, \xi_\f \rangle,
\]
we have
\[
\Ann(\fg,(\fg^\Sigma)^*_{\mathrm{str}})\cap (\fg^{\pp\Sigma})^* \simeq (\fg^S)^* \times \Ann(\fg,(\fg^S)^*) \ni (f_\i,f_\dif).
\]
(Observe that this equation proves the last statement of the proposition.)
Therefore, we can equivalently prove that
\[
(\fg^S)^* \times \Ann(\fg,(\fg^S)^*) \subset \F.
\]
Using the (vector space) identification $(\fg^S)^*\simeq \fg^S$, we can now identify $f \in (\fg^S)^* \times \Ann(\fg,(\fg^S)^*)$ with the pair $(\eta_{\i},\eta_{\dif})\in \fg^{S}\times\fg^S$ such that $\int_S \btr(\eta_\dif \chi) \vol_S = 0$ for all $\chi\in\fg\hookrightarrow\fg^S$. 

In light of this identification as well as Equation \eqref{eq:fluxproof}, we find that $(\fg^S)^* \times \Ann(\fg,(\fg^S)^*) \subset \F$ iff for $(\eta_\i,\eta_\dif)$ as above there exists a $(A,E_\i)$ such that
\[
E_\i = \eta_\i
\quad\text{and}\quad
D^i F_{\ell i}^{\int} = \eta_\dif.
\]
The first condition is immediate, the second requires us to find at least one $A\in\mathcal{A}$ that satisfies it.

We thus look for such an $A$ among the connections of the form $A_\ell=0$ and $\hat A= u b$, with $b\in\Omega^1(S,\fg)$. Then, $F_\ell = b$ and $D^i F_{\ell i}^{\int} = 2 D^ib_i$.
Denoting $\{\tau_\alpha\}$ a basis of $\fg$ and $\eta_\dif^\alpha = \tr(\eta^\alpha_\dif\tau_\alpha)$, Hodge theory \cite{SchwarzHodgeBook} then tell us that the equation
\[
\Delta \lambda^\alpha = \eta_\dif^\alpha
\]
has a solution $\lambda^\alpha=\lambda^\alpha(\eta_\dif)$ (unique up to the addition of harmonic functions, i.e.\ constants if $S$ is a sphere) iff $\eta_\dif^\alpha$ integrates to zero---i.e.\ iff $\eta_\dif$ corresponds to an element of $\Ann(\fg,(\fg^S)^*)\subset(\fg^S)^*\simeq \fg^S$. Therefore, setting $b=\tfrac12 D \lambda(\eta_\dif)$ we conclude the proof.
\end{proof}
    
Having characterised the shell $\C$ and the space of on-shell fluxes $\F$ (Propositions \ref{prop:constr-surf} and \ref{prop:fluxes}), we apply Theorem \ref{thm:generalsuperselection} of the review Section \ref{sec:theoreticalframework} (cf. \cite[Theorem 1]{RielloSchiavina}) to obtain the superselection structure of null YM theory:

\begin{theorem}[Superselection of null YM theory]\label{thm:superselections}
The fully-reduced phase space of Yang--Mills theory on the null manifold $\Sigma\simeq I \times S$, defined as the space of on-shell configurations modulo \emph{all} gauge transformations
\[
{\uuC}\doteq\C/\G = \bigsqcup_{\mathcal{O}_f\subset \F} \uuS_{[f]},
\]
is a Poisson manifold. Explicitly, the superselection sectors $\uuS_{[f]}$ correspond to:  
\begin{enumerate}[label=(\roman*)]
    \item if $G$ is semisimple, $\G$-equivalence classes of those on-shell configurations whose (electric) flux belongs to the same coadjoint orbit in $(\fGp)^*$, i.e.\ pairs $(A,E)\in\C \hookrightarrow\X$ such that
    \[
    \Big(\textstyle{\int}_S \btr(E^\i \cdot), -\textstyle{\int}_S \btr(E^\f \cdot) \Big)\in(\mathcal{O}_{f_\i},\mathcal{O}_{f_\f})\subset (\fg^S)^*\times(\fg^S)^*
    \]
    \item if $G$ is Abelian, $\G$-equivalence classes of those on-shell configurations with the same value of the (electric) flux in $\F\simeq \Ann(\fg,(\fg^{\pp\Sigma})^*)\subset(\fGp)^*$, i.e.\ pairs $(A,E)\in\C \hookrightarrow\X$ such that \
\[
\Big(\textstyle{\int}_S\btr(E^\i\cdot), -\textstyle{\int}_S\btr(E^\dif\cdot) \Big) = (f_\i,f_\dif) \in (\fg^S)^* \times \Ann(\fg,(\fg^S)^*)
\]
where ${E}^\dif = {E}^\f - {E}^\i= - D^i F_{\ell i}^{\int}$ (Equation \eqref{eq:EdiffAbelian}).
\end{enumerate}
\end{theorem}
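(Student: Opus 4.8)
The plan is to obtain the statement as a direct specialisation of the general second-stage reduction Theorem \ref{thm:generalsuperselection} to null Yang--Mills theory, using the explicit descriptions of the shell and of the on-shell flux space already established. By Proposition \ref{prop:nYMisLocHam}, $(\X,\bom_\nYM,\G,\bH)$ is a locally Hamiltonian gauge theory with an equivariant flux map, so all the hypotheses of the reduction-by-stages framework of Section \ref{sec:theoreticalframework} are in force; Theorem \ref{thm:generalsuperselection} then immediately yields that $\uuC=\C/\G$ is a Poisson manifold foliated by the symplectic leaves $\uuS_{[f]}=\uh^{-1}(\mathcal{O}_f)/\Gred$, indexed by the coadjoint orbits $\mathcal{O}_f\subset\F$. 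The only remaining task is therefore to translate the abstract label ``$\mathcal{O}_f\subset\F$'' into the concrete electric-flux data appearing in items (i) and (ii).

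First I would record the explicit form of the on-shell flux. Parametrising $\C$ by $(A,E_\i)$ through the diffeomorphism $s_\i$ of Proposition \ref{prop:constr-surf} and restricting Equation \eqref{eq:fluxmap} to the shell (exactly as in the proof of Proposition \ref{prop:fluxes}), one has $\iota_\C^* h(A,E_\i)=(\int_S\btr(E^\i\,\cdot), -\int_S\btr(E^\f\,\cdot))\in(\fGp)^*$, under the identification $\fGp\simeq\fg^S\times\fg^S$ of the two boundary components. The superselection label is, by definition, the $\Gred$-coadjoint orbit of this element, so everything reduces to understanding the coadjoint action on $\F$.

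For the semisimple case, Proposition \ref{prop:fluxes} identifies $\F\simeq(\fGp)^*\simeq(\fg^S)^*\times(\fg^S)^*$. The key structural observation is that the coadjoint action of $\Gred$ on $(\fGp)^*$ factors through the componentwise coadjoint action of $G^S_0\times G^S_0$ (the discrete kernel $\mathcal{K}$ acts trivially on the Lie-algebra dual); this follows from the equivariance of $\uh$ together with the direct-product decomposition of the boundary mapping algebra. Consequently the orbit of $(f_\i,f_\f)$ factorises as $\mathcal{O}_{f_\i}\times\mathcal{O}_{f_\f}$, and $\uuS_{[f]}$ collects precisely those on-shell configurations whose incoming and outgoing electric fluxes lie in fixed coadjoint orbits, giving item (i). For the Abelian case the coadjoint action is trivial, so every orbit is a single point and the label is the full value $f\in\F\simeq\Ann(\fg,(\fg^{\pp\Sigma})^*)$. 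Applying the diffeomorphism $f\mapsto(f_\i,f_\dif)$ of Proposition \ref{prop:fluxes}, which identifies $\F\simeq(\fg^S)^*\times\Ann(\fg,(\fg^S)^*)$, and substituting $E^\dif=-D^iF_{\ell i}^{\int}$ from Equation \eqref{eq:EdiffAbelian}, produces item (ii).

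The genuinely nontrivial inputs have already been discharged elsewhere: the smoothness and connectedness of $\C$ (Proposition \ref{prop:constr-surf}) and the surjectivity arguments identifying $\F$ (Proposition \ref{prop:fluxes}), both of which rest on ellipticity and Hodge theory on $S$. The one point still requiring care in this specialisation is verifying that the Fr\'echet-regularity assumptions underlying Theorem \ref{thm:generalsuperselection}---smoothness of the quotients and symplectic closure of the relevant distributions---hold here; I expect this to be the main obstacle, and it is precisely why the Poisson (rather than symplectic) structure of $\uuC$ must be treated with the care appropriate to the infinite-dimensional setting flagged in Section \ref{sec:symplred-overview}. The remaining bookkeeping, namely the orientation-dependent signs relating the two boundary components $S_\i$ and $S_\f$, is routine and I would not belabour it.
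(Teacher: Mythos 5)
Your proposal is correct and follows essentially the same route as the paper: the paper likewise obtains Theorem \ref{thm:superselections} by applying the general second-stage reduction result (Theorem \ref{thm:generalsuperselection}, i.e.\ Theorem 1 of \cite{RielloSchiavina}) to null YM once the shell and the on-shell flux space have been characterised in Propositions \ref{prop:constr-surf} and \ref{prop:fluxes}, with the orbit labels translated exactly as you do---orbit pairs $\mathcal{O}_{f_\i}\times\mathcal{O}_{f_\f}$ in the semisimple case and point-like orbits $(f_\i,f_\dif)$ via Equation \eqref{eq:EdiffAbelian} in the Abelian case. Your additional remark that the $\Gred$-coadjoint action factors through $G^{\pp\Sigma}_0$ because $\mathcal{K}$ acts trivially is also what the paper establishes, though only later (Proposition \ref{prop:Gred} and Theorem \ref{thm:modelHamreduction}).
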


In the next two sections we are going to explicitly compute the symplectic structure on the superselection sectors $(\uuS_{[f]},\uuomegao_{[f]})$, and compare them to the symplectic structure on the constraint-reduced phase space $(\uCo,\uomegao)$ and on the AS phase space $(\hcalA,\omAS )$---spoiler: they all differ. 

In order to attain an explicit characterisation of the physical d.o.f. of YM theory on a null surface we will have to choose a gauge fixing and thus introduce a level of arbitrariness in the description of said degrees of freedom.

\section{Symplectic reduction of null YM theory: first stage}\label{sec:reduction-first}

\noindent\fbox{%
    \parbox{\textwidth}{\small\center
        The reader interested in applications to soft symmetries can skip this section at first.
    }%
}

\medskip

First stage, or constraint reduction is about enforcing the Gauss constraint $\mathsf{G} = 0$ and quotienting out the action of the constraint gauge group $\Go$  that it generates. As discussed in Section \ref{sec:theoreticalframework}, the constraint gauge group $\Go = \exp\fGo$ is infinitesimally generated by the constraint gauge ideal $\fGo = \Ann(\F,\fg^\Sigma)\equiv \Ann(\Im(\iota_\C^*h),\fg^\Sigma)$. Diagrammatically:

\[
\xymatrix@C=.75cm{
(\X,\omega)
    \ar@{~>}[rr]^-{\tbox{2.2cm}{constraint reduction \\(w.r.t. $\Go$ at $0$)}}
&&(\uCo,\uomegao)\\
&{
    \;\C\;
    \ar@{_(->}[ul]^-{\iota_\C}
    \ar@{->}[ur]_-{\pi_\circ}
    }
}
\]

The goal of this section is to provide an \emph{explicit} description of the constraint-reduced phase space $(\uCo,\uomegao)$,
\[
\uCo \doteq \C/\Go, \qquad \pi_\circ^*\uomegao = \iota_\C^*\omega,
\]
in terms of the following extensions of the AS phase space $(\hcalA, \omAS )$ defined in Definition \ref{def:hatA-sympl}: 
\begin{definition}[Extended Ashtekar--Streubel phase space]\label{def:extAS}
Denote by $\Omega_S$ the canonical symplectic structure on $T^*G^S_0$, and by $\omega_S$ that on $T^*\fg^S$. The \emph{extended Ashtekar--Streubel phase space} is the symplectic manifold
\[
(\XAS,\omeAS)\doteq(\hcalA \times T^*G^S_0 , \omAS + \Omega_S),
\]
while the \emph{linearly}-extended Ashtekar--Streubel phase space is the symplectic manifold
\[
(\Xas,\omeas)\doteq(\hcalA \times T^*\fg^S , \omAS + \omega_S).
\qedhere
\]
\end{definition}
~

\begin{theorem}[Constraint reduction]\label{thm:nonAb-constr-red}
If $\G\circlearrowright\X$ is proper, the constraint-reduced phase space $(\uCo,\uomegao)$ is a smooth, connected symplectic covering space of the \emph{extended Ashtekar--Streubel phase space} $(\XAS,\omeAS)$.
In particular, these two spaces are \emph{locally} symplectomorphic:
\begin{equation}\label{e:1stagelocalsymplectomorphism}
(\uCo,\uomegao) \simeq_\loc (\XAS,\omeAS).
\end{equation}    
Moreover, if $G$ is Abelian, the constraint-reduced phase space is \emph{globally} symplectomorphic to the linearly-extended Ashtekar--Streubel phase space:
\begin{equation}\label{e:AbelianASreduction}
(\uCo,\uomegao) \simeq (\Xas,\omeas)\simeq_\loc (\XAS,\omeAS) \qquad \mathrm{\emph{(}Abelian\emph{)}}.\qedhere
\end{equation}
\end{theorem}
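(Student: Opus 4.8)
The plan is to build an explicit $\Go$-invariant parametrisation of the shell $\C$ by ``dressing'' the fields with a family of Wilson lines running along the null direction $\ell$, and then to match the resulting symplectic data against that of $\XAS$. Since the general first-stage theorem (together with the stated properness and symplectic-closure hypotheses) already guarantees that $(\uCo,\uomegao)$ is a smooth symplectic manifold with $\pi_\circ^*\uomegao=\iota_\C^*\omega_{\nYM}$, the real content is the identification with $\XAS$, not the abstract reduction. The first step is to pin down $\Go$ from Proposition \ref{prop:fluxes} and Lemma \ref{lem:doubleannihilator}: in the semisimple case $\F\simeq\Ann(\fg^\Sigma_\rel,(\fg^\Sigma)^*_{\mathrm{str}})$, so the double-annihilator identity gives $\fGo=\Ann(\F,\fg^\Sigma)=\fg^\Sigma_\rel$ and hence $\Go=G^\Sigma_{\rel,0}$; in the Abelian case $\F\simeq\Ann(\fg+\fg^\Sigma_\rel,\cdot)$ gives $\fGo=\fg+\fg^\Sigma_\rel$, where the extra constants $\fg\subset\ker\rho$ act trivially (Remark \ref{rmk:constantgaugetransf}).

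Next I would introduce the dressing. For $(A,E_\i)\in\C$ set $W(u,x)\doteq\Pexp\!\big(-\int_{-1}^u A_\ell\,du'\big)$, so that $\pp_u W=-A_\ell W$, $W(-1)=1$, and under a gauge transformation one computes $W^{g}(u)=g(u)^{-1}W(u)\,g(-1)$. Define the dressed data
\[
\ASa\doteq W^{-1}\hat A\,W+W^{-1}DW\in\hcalA,\qquad U\doteq W(u{=}1)\in G^S_0,\qquad E_\i\in(\fg^S)^*.
\]
Because every $g_\circ\in\Go=G^\Sigma_{\rel,0}$ satisfies $g_\circ|_{\pp\Sigma}=1$, the composition law for the (right) gauge action yields $(A^{g_\circ})^{W^{g_\circ}}=A^{W}$, whence $\ASa$, $U$ and $E_\i$ are all $\Go$-invariant. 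This produces a map $\C\to\XAS$, $(A,E_\i)\mapsto(\ASa,U,E_\i)$, reading $(U,E_\i)$ as a point of $T^*G^S_0\simeq G^S_0\times(\fg^S)^*$ in the left trivialisation, which descends to $\uCo$. Surjectivity is constructive: given $(\ASa,U,E_\i)$, pick any smooth path $W\colon I\to G^S_0$ from $1$ to $U$ (possible since $G^S_0$ is connected), set $A_\ell\doteq-(\pp_u W)W^{-1}$, $\hat A\doteq W\ASa W^{-1}-(DW)W^{-1}$, and let $E=E(A,E_\i)$ be the Gauss solution of Proposition \ref{prop:constr-surf}. Two preimages of the same point differ by $g_\circ=W'W^{-1}\in G^\Sigma_{0,\rel}$; as $\ASa,U,E_\i$ are invariant under all of $G^\Sigma_{0,\rel}$ while we quotient only by its identity component $\Go$, the fibres are the discrete group $\mathcal{K}\simeq\pi_0(G^\Sigma_{0,\rel})$ of Theorem \ref{thm:fibrecharacterisation}, so the descended map is a covering; connectedness of $\uCo$ follows from that of $\C$.

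The main obstacle is the symplectic matching, i.e.\ proving that $\iota_\C^*\omega_{\nYM}$ equals the pullback of $\omeAS=\omAS+\Omega_S$. The strategy is to substitute the inverse dressing $A_\ell=-(\pp_uW)W^{-1}$, $\hat A=W\ASa W^{-1}-(DW)W^{-1}$, $E=WE^WW^{-1}$ into $\bom_{\nYM}$, apply the field-space differential $\bd$ (which now also differentiates the field-dependent $W$, producing the Maurer--Cartan form $W^{-1}\bd W$), and integrate by parts in $u$ using the Gauss constraint. The $u$-bulk terms should reorganise into $\omAS=\int_\Sigma\btr\big((L_\ell\bd\ASa_i)\wedge\bd\ASa^i\big)$, while the endpoint contributions at $u=\pm1$, with $W(-1)=1$ and $W(1)=U$, must assemble into the canonical form $\Omega_S=\bd\langle E_\i,\,U^{-1}\bd U\rangle$ on $T^*G^S_0$. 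The delicate points are the bookkeeping of the Maurer--Cartan terms and signs, and the systematic use of gauge-invariance of $\omega_{\nYM}$ to discard the bulk Wilson-line contributions; this establishes \eqref{e:1stagelocalsymplectomorphism}.

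Finally, for the Abelian refinement the path-ordering disappears, so $W(u)=\exp\!\big(-\int_{-1}^u A_\ell\big)$ and the natural base variable becomes the Lie-algebra-valued integral $\beta\doteq-\int_{-1}^1 A_\ell\in\fg^S$ rather than $U=\exp\beta$. Since $\fg$ acts trivially and $\fg^\Sigma_\rel$ brings $\hat A$ to $\ASa=\hat A-D\!\int_{-1}^u A_\ell$, the dressing map becomes a genuine \emph{global} diffeomorphism $\uCo\xrightarrow{\simeq}\hcalA\times T^*\fg^S=\Xas$ with fibre coordinates $(\beta,E_\i)$, and the same integration-by-parts computation—now entirely linear—identifies $\iota_\C^*\omega_{\nYM}$ with $\omeas=\omAS+\omega_S$, giving the global symplectomorphism \eqref{e:AbelianASreduction}. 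The residual covering $\Xas\simeq_\loc\XAS$ is then exactly the exponential $\exp\colon\fg^S\to G^S_0$, i.e.\ the branches of the logarithm on the toric factors of $G$.
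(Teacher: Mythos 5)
Your architecture is the same as the paper's (dressing by Wilson lines along $\ell$, surjectivity with $\G_\rel$-fibre, descent to a $\mathcal{K}$-covering $\uCo\to\XAS$, linearisation in the Abelian case), and your group-theoretic steps are sound; but the central step---the symplectic matching---fails for your choice of coordinates, and this is exactly the point your ``delicate bookkeeping'' defers. Anchoring the Wilson line at the \emph{initial} time ($W^\i=1$, $U\doteq W^\f$) and pairing $U$ with the \emph{initial} electric field $E_\i$ cannot reproduce $\omeAS$. The variance identity for $\omega$ under field-dependent gauge transformations (Calculation \ref{prop:step2}, proved in Appendix \ref{app:compute-bom-dress}) shows that the surviving endpoint term sits at the end \emph{opposite} to the anchoring and involves the electric field \emph{at that end}: on shell, with your $W$,
\[
\iota_\C^*\omega \;=\; \omAS(\ASa)\;-\;\bd\!\int_S\btr\big(E^\f\,\bd U\,U^{-1}\big)\;=\;\omAS(\ASa)+\Omega_S\big(U^{-1},E^\f\big),
\]
whereas you claim $\iota_\C^*\omega=\omAS(\ASa)+\bd\int_S\btr\big(E_\i\,U^{-1}\bd U\big)$. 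Using the on-shell relation $\Ad(U^{-1})\cdot E^\f=E_\i-(\cD^iL_\ell\ASa_i)^{\int}$, your expression and the true one differ by
\[
2\,\bd\!\int_S\btr\big(E_\i\,U^{-1}\bd U\big)\;-\;\bd\!\int_S\btr\big((\cD^iL_\ell\ASa_i)^{\int}\,U^{-1}\bd U\big),
\]
which does not vanish---already for $G$ Abelian the first term contains $\bd E_\i$-contributions that the second cannot cancel. In short: the momentum canonically conjugate to the endpoint holonomy is \emph{not} $E_\i$.

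The fix is precisely the paper's design choice: anchor the dressing at the \emph{final} time, $V^\f=1$, so that the only surviving boundary term sits at $S_\i$ with the correct sign, and take as $T^*G^S_0$-coordinates $(\Lambda,\ASe)=\big(V^\i,\Ad(V^\i)^{-1}\cdot E_\i\big)$, i.e.\ the \emph{parallel-transported} initial electric field (Definitions \ref{def:dressingfield}--\ref{def:dressingmap}); equivalently, keep your anchoring but use $(U^{-1},E^\f)$ as fibre coordinates. The same defect propagates to your Abelian paragraph: with base $\beta=-A_\ell^{\int}$ and fibre $E_\i$ the identification with $\omeas$ fails by the terms above; the paper's Abelian proof instead uses the \emph{relative} ($\Go$-valued) dressing $\dfuo$ of Lemma \ref{lemma:constr-trivialis-Ab}, which vanishes at \emph{both} endpoints, together with the separate base coordinate $\lambda=(A_\ell)^{\int}$, and that is what yields the global symplectomorphism onto $\Xas$. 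Finally, a smaller but real gap: you open by invoking the abstract first-stage theorem ``given properness and symplectic closure'' to declare $(\uCo,\uomegao)$ smooth and symplectic, but symplectic closure of $\rho(\fGo)$ is not among the hypotheses of the theorem you are proving; in the paper the nondegeneracy of $\uomegao$ is \emph{derived} from exhibiting $p_V\colon\uCo\to\XAS$ as a symplectic covering (cf.\ Remark \ref{rmk:uCosmoothness}), so your argument should not presuppose it.
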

~

\begin{remark}\label{rmk:uCosmoothness}
Theorem \ref{thm:nonAb-constr-red} concludes that $\uCo=\C/\Go$ is connected and smooth, and that the reduction $\uCo$ is symplectic.  
\emph{Connectedness} follows from the following simple observation: since $\C\simeq \mathcal{A}\times (\fg^S)^*$ is affine it is also connected, which then implies $\uCo \doteq \C/\Go$ is connected as well.
As for \emph{smoothness}: in YM theory the set $\C$ is a smooth submanifold in virtue of Proposition \ref{prop:constr-surf}, but the smoothness of the quotient  depends on whether the action of $\Go$ on $\C$ is free and proper. Properness is generally granted when $G$ is compact, since we are assuming that $\Sigma$ is also compact \cite{NarasimhanRamadas79, Rudolph_2002}. For $G$ semisimple, the action of $\Go$ can be proven to be free, since the equation $d_A\xi = 0$ has no (non-zero) solutions for $\xi\in \fg^\Sigma_\rel$ and $\Go\simeq G^\Sigma_{\rel,0}=\langle \exp\fg^\Sigma_\rel\rangle$, a fact proved in Proposition \ref{prop:discretequotientgroup}. For $G$ Abelian, a similar result holds up to a \emph{global} stabiliser (the kernel of the action map), which does not affect the smoothness of the reduced space $\uCo$. 
Finally, the \emph{symplectic} nature of $\uCo$ follows from \cite[Theorem 1]{RielloSchiavina}, provided one checks that the image of $\fGo$ under the action map $\rho$ is symplectically closed. To prove this abstractly, one could adapt the argument in \cite{DiezHuebschmann-YMred} (see also \cite[Section 5.2]{RielloSchiavina}). Here, instead, we explicitly computed the constraint-reduced symplectic form and showed it is nondegenerate.
\end{remark}

\begin{remark}[$\Xas$ vs. $\XAS$]\label{rmk:localsymp}
If $G$ is Abelian, the cotangent bundle $T^*\fg^S$ is \emph{locally} symplectomorphic to $T^*G^S_0$, i.e.\ $T^*\fg^S \simeq_\loc T^*G^S_0$, through the map $\exp : \fg^S\to G_0^S$. The obstruction to a global extension is directly related to the fact that the exponential map $\fg \to G$ is itself only a \emph{local} diffeomorphism in general. In fact, if $G$ contains a $\mathrm{U}(1)$ factor, this map is a local diffeomorphism, but does not possess a global inverse (it is many-to-one).
Therefore, although in the Abelian case $\uCo$ is \emph{globally} symplectomorphic to $\Xas \doteq \hcalA \times T^*\fg^S$, it is only \emph{locally} symplectomorphic to $\XAS \doteq \hcalA\times T^*G^S_0$. 
\end{remark}

\begin{theorem}[characterisation of the fibre]\label{thm:fibrecharacterisation}
The covering fibre of 
\[
p_V\colon \uCo \to \XAS
\]
is the (discrete) group of components of $\G_\rel$, i.e.
\[
\mathcal{K} \doteq \G_\rel/(\G_\rel)_0.
\]
If, moreover, $S$ is diffeomorphic to the sphere $S^{n-1}$ (with $S^0 \doteq \{\mathrm{pt}\}$), then\footnote{Recall, $G$ is assumed finite dimensional and connected throughout the article.} 
\begin{enumerate}[label=(\roman*)]
\item if $G$ is simply connected and $n=1,2$ then $\mathcal{K}$ is trivial and the symplectomorphism $(\uCo, \uomegao) \simeq (\XAS,\omeAS)$ is global;
\item if $G$ is simply connected and $n=3$ then $\mathcal{K} \simeq \pi_3(G) \simeq \mathbb{Z}^s$ for some $s\in\mathbb{N}$;\footnote{$s>0$ if $G$ is semisimple, and $s=1$ if $G$ is simple.}
\item if $G\simeq U(1)^t\times \mathbb{R}^k$ (Abelian) and $n\neq2$ then $\mathcal{K}\simeq \mathbb{Z}^t$;
\item in general: $\mathcal{K}$ is a subset of 
\[
\pi_0(G^\Sigma_\rel) \simeq 
\begin{dcases}
\pi_1(G) & {\mathrm{if}\ n=1,2}\\
\pi_1(G)\oplus\pi_n(G) & {\mathrm{if}\ n>2}
\end{dcases}
\]
and coincides with it if $\pi_0(G^\Sigma)\simeq \pi_{n-1}(G)$ is trivial.
\end{enumerate}
\end{theorem}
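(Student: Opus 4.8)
The plan is to strip the statement down to a computation of homotopy groups of mapping spaces, after first isolating the purely group-theoretic content of the covering. By the first-stage reduction and the explicit description of the residual gauge action (Theorem~\ref{thm:nonAb-constr-red} together with Proposition~\ref{prop:Gred}), the covering $p_V\colon\uCo\to\XAS$ is the one induced by the inclusion of the constraint gauge group $\Go$ as a normal subgroup of $\G_\rel=G^\Sigma_{0,\rel}$, the deck group being $\G_\rel/\Go$. I would first check that $\Go$ is exactly the identity component of $\G_\rel$: since $\Go\simeq G^\Sigma_{\rel,0}=\langle\exp\fg^\Sigma_\rel\rangle$ (Proposition~\ref{prop:discretequotientgroup}) is connected and contained in $G^\Sigma_\rel$, while $(\G_\rel)_0$ is the largest connected subgroup of $G^\Sigma_\rel$ through the identity, the two coincide, $(\G_\rel)_0=\Go$. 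Hence the fibre is $\mathcal{K}=\G_\rel/\Go=\pi_0(\G_\rel)=\pi_0(G^\Sigma_{0,\rel})$, which establishes the opening sentence of the theorem.

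Next I would relate $\pi_0(G^\Sigma_{0,\rel})$ to the cleaner group $\pi_0(G^\Sigma_\rel)$. Because $G^\Sigma_{0,\rel}=G^\Sigma_\rel\cap G^\Sigma_0$ is the kernel of the natural map $G^\Sigma_\rel\to\pi_0(G^\Sigma)$, and all the groups in sight share the common identity component $G^\Sigma_{\rel,0}$, quotienting by the latter yields an exact sequence
\[
1\longrightarrow \mathcal{K}\longrightarrow \pi_0(G^\Sigma_\rel)\longrightarrow \pi_0(G^\Sigma),
\]
so that $\mathcal{K}=\ker\big(\pi_0(G^\Sigma_\rel)\to\pi_0(G^\Sigma)\big)$. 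This already delivers the first and last assertions of item~(iv): $\mathcal{K}$ is a subgroup of $\pi_0(G^\Sigma_\rel)$, and it exhausts it precisely when $\pi_0(G^\Sigma)$ is trivial.

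The technical core is then the evaluation of the two groups of components via $\pi_0\big(C^\infty(N,G)\big)=[N,G]$ and its relative analogue, using $\Sigma\simeq I\times S^{n-1}$. Since $\Sigma$ is homotopy equivalent to $S^{n-1}$ and $G$ is a connected Lie group (so $\pi_1(G)$ acts trivially on higher homotopy, and free homotopy classes agree with based ones), I get $\pi_0(G^\Sigma)=[S^{n-1},G]=\pi_{n-1}(G)$, which is the value quoted in~(iv). For the relative group, $\pi_0(G^\Sigma_\rel)=[(\Sigma,\pp\Sigma),(G,e)]=[\Sigma/\pp\Sigma,G]_*$, and the key geometric observation is that collapsing the two boundary spheres of the cylinder to a single point produces
\[
\Sigma/\pp\Sigma=\big(I\times S^{n-1}\big)/\big(\{\pm1\}\times S^{n-1}\big)\simeq S^n\vee S^1
\]
for $n\ge2$ — this is the unreduced suspension $S^n$ of $S^{n-1}$ with its two cone points identified, which is homotopy equivalent to $S^n\vee S^1$ — while in the degenerate case $n=1$ (where $S^0=\{\mathrm{pt}\}$) one gets simply $S^1$. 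Therefore $[\Sigma/\pp\Sigma,G]_*\simeq\pi_n(G)\oplus\pi_1(G)$, and invoking $\pi_2(G)=0$ for any Lie group folds the $n=2$ case into $\pi_1(G)$, reproducing exactly the dichotomy of~(iv).

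The special cases (i)--(iii) then follow by substitution. For (i), $G$ simply connected with $n\le2$ forces $\pi_1(G)=\pi_2(G)=0$, hence $\pi_0(G^\Sigma_\rel)=0$, $\mathcal{K}$ trivial, and a covering with trivial fibre is a global symplectomorphism. For (ii), $n=3$ and $\pi_1(G)=0$ give $\pi_0(G^\Sigma_\rel)=\pi_3(G)$, while $\pi_0(G^\Sigma)=\pi_2(G)=0$ forces $\mathcal{K}=\pi_3(G)\simeq\mathbb{Z}^s$ by Bott's computation of $\pi_3$ of compact (semi)simple groups. For (iii), with $G\simeq U(1)^t\times\mathbb{R}^k\simeq T^t$ one has $\pi_1(G)=\mathbb{Z}^t$ and $\pi_j(G)=0$ for $j\ge2$; excluding $n=2$ guarantees $\pi_0(G^\Sigma)=\pi_{n-1}(G)=0$, whence $\mathcal{K}=\pi_0(G^\Sigma_\rel)=\mathbb{Z}^t$. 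The main obstacle I anticipate is not any single homotopy computation but Step~1: justifying, in the Fréchet setting, that $p_V$ is genuinely a covering whose deck group is the \emph{discrete} quotient $\pi_0(\G_\rel)$ — i.e.\ that it is the identity component $\Go$, and not some larger connected subgroup, that is divided out. I would control this via the explicit dressing-field model of the reduction and Proposition~\ref{prop:discretequotientgroup}; the Abelian $n=2$ subtlety (where $\pi_0(G^\Sigma)\ne0$, so $\mathcal{K}$ need not fill out $\pi_0(G^\Sigma_\rel)$) is precisely what forces the exact-sequence bookkeeping of Step~2 rather than a naive identification.
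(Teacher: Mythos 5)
Your proposal is correct and follows the same overall skeleton as the paper (identify $\mathcal{K}$ as the deck group via Proposition \ref{prop:discretequotientgroup}, relate it to $\pi_0(G^\Sigma_\rel)$, compute the homotopy of the mapping groups, then run the case analysis), but two of your steps take genuinely different routes, both of which are improvements in sharpness or directness. First, your exact sequence $1\to\mathcal{K}\to\pi_0(G^\Sigma_\rel)\to\pi_0(G^\Sigma)$, obtained by quotienting the kernel sequence of $G^\Sigma_\rel\to\pi_0(G^\Sigma)$ by the common identity component $G^\Sigma_{\rel,0}$, identifies $\mathcal{K}$ \emph{exactly} as a kernel; the paper only records the containment $\mathcal{K}\subset\mathcal{K}'\doteq G^\Sigma_\rel/G^\Sigma_{\rel,0}$ as a normal subgroup together with the statement that equality holds when $\pi_0(G^\Sigma)=1$, which is all that items (i)--(iv) need but is strictly weaker information. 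Second, your computation of $\pi_0(G^\Sigma_\rel)$ goes through the quotient space $\Sigma/\pp\Sigma\simeq S^n\vee S^1$ (the unreduced suspension with its two cone points identified), giving $[\,S^n\vee S^1, G\,]_*\simeq\pi_n(G)\oplus\pi_1(G)$ in one stroke; the paper instead identifies $G^\Sigma_\rel\simeq C_\star(S^1,G^S)$, so $\pi_0(G^\Sigma_\rel)\simeq\pi_1(G^S)$, splits off $\pi_1(G)$ via the semidirect product $G^S\simeq G^S_\star\rtimes G$, and produces $\pi_n(G)$ from the smash product $S^1\wedge S^{n-1}\simeq S^n$. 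These are equivalent computations, but yours is more geometric and handles both summands simultaneously, while the paper's loop-space bookkeeping makes the group structure (pointwise multiplication) manifest at each step. One point you should make explicit, as the paper does: the passage from smooth to continuous mapping groups is justified by the weak homotopy equivalence/approximation theorem cited at the start of Lemma \ref{lemma:pezzo2}.

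There is also a minor imprecision in your Step 1 that is worth flagging, though it does not break the argument. In the Abelian case $\Go\simeq G\cdot G^\Sigma_{\rel,0}$ is \emph{not} a subgroup of $\G_\rel$ (it contains the constant gauge transformations), and the deck group is $(G\cdot\G_\rel)/\Go$ rather than $\G_\rel/\Go$. Since $G\cap\G_\rel=\{1\}$ and the constants act trivially on $\C$ (Remark \ref{rmk:Abelian-isotropy}, so that $\C/\Go=\C/\G_{\rel,0}$), one still gets $(G\cdot\G_\rel)/\Go\simeq\G_\rel/(\G_\rel)_0=\mathcal{K}$, which is exactly how the paper reconciles the two cases; your proof should include this one-line verification rather than silently assuming $\Go\subset\G_\rel$.
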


\begin{remark}[Spacetime picture]\label{rmk:spacetime}
Recall that, from a spacetime perspective, $\Sigma$ is a codimension-1 hypersurface. Therefore, since $n \doteq \dim \Sigma$, the cases $n=1,2,3$ analysed in Theorem \ref{thm:fibrecharacterisation} correspond respectively to spacetime regions of dimensions $2,3$ and $4$ with spherical boundary surfaces. In particular, the case $n=3$ of the Corollary is the one relevant for the study of classical YM theory at asymptotic null infinity discussed in Section \ref{sec:alternatelabelmemory}.
\end{remark}

\begin{remark}[Winding number]
In the Abelian case, the isomorphism $\mathcal{K}\simeq \mathbb{Z}^t$ holds \emph{as groups}, in virtue of the fact that one can define an additive winding number, $w(g_1g_2)=w(g_1)+w(g_s)$, which provides an irreducible representation of the group of components $\mathcal{K}$, see Remark \ref{rmk:grphomo}. Moreover, an analogous result can be obtain when $n=3$ using the Wess--Zumino winding number computed on the smash product $S^3\simeq S^2 \wedge S^1$ that appears in the proof of point (ii) of the theorem.
\end{remark}

\subsection{Proof of Theorem \ref{thm:nonAb-constr-red}}
This section is devoted to the proof of the local symplectomorphism between $\uCo$ and $\XAS$, claimed to exist in Theorem \ref{thm:nonAb-constr-red}. Loosely, our proof relies on the ``dressing-field method'' associated to the ``gauge condition'' $A_\ell = 0$. The claim of global symplectomorphism between $\uCo$ and $\Xas$ will be proved in Appendix \ref{app:Abelianreduction} using a ``linear" version of the dressing field method which is available in the Abelian case only.

(For an algebraic and geometrical account of the dressing field method see \cite{Francois2012, RielloGomesHopfmuller}, whereas some of its applications to the symplectic structure of gauge theories can be found in \cite{RielloGomesHopfmuller,RielloGomes, CarrozzaHoehn, Francois2021}.)

Before addressing reduction, we must characterise the constraint gauge ideal $\fGo \subset \fG$, and two closely related groups.

\begin{proposition}[Constraint gauge transformations]\label{prop:discretequotientgroup}~

\noindent For $G$ semisimple,
\begin{enumerate}[label=\emph{(1.\roman*)}]
\item\label{1.i} the constraint gauge ideal is
\[
  \fGo \doteq \Ann(\F,\fg^\Sigma) \simeq \fg^\Sigma_\rel;
\]
\item\label{1.ii} the constraint gauge group $\Go \doteq\langle\exp \fGo\rangle$ is the identity component of the relative gauge group $\tGo$ which in turn equals the identity component of the relative mapping group $G^\Sigma_\rel$:
\[
\Go  \simeq (\tGo)_0 \simeq G^\Sigma_{\rel, 0};
\]
\item\label{1.iii} $\mathcal{K}\doteq \tGo/\Go$ is a discrete group, the  \emph{group of components} of $\G_\rel$.
\end{enumerate}
\smallskip

\noindent  For $G$ Abelian, let $\fg\hookrightarrow\fg^\Sigma$ be the Lie ideal of constant gauge transformations, and $G\hookrightarrow G^\Sigma_0 \hookrightarrow G^\Sigma$ the normal subgroup of constant gauge transformations; then
\begin{enumerate}[label=\emph{(2.\roman*)}]
    \item\label{2.i} the constraint gauge ideal is given by gauge transformations over $\Sigma$ whose restriction to $\pp\Sigma$ is constant:\footnote{$\fg + \fg^\Sigma_\rel  = \{  \xi\in \fg^\Sigma \ : \ \exists \chi\in \fg\text{ such that } \xi|_{\pp\Sigma}=\chi\}$.}
    \[
    \fGo \doteq \Ann(\F,\fg^\Sigma) \simeq \fg + \fg^\Sigma_\rel  \qquad \mathrm{\emph{(}Abelian\emph{)}};
    \]
    \item\label{2.ii} the constraint gauge group $\Go \doteq\langle\exp \fGo\rangle$ is the identity component of the group of gauge transformations over $\Sigma$ whose restriction to $\pp\Sigma$ is constant:\footnote{$G\cdot G^\Sigma_\rel = \{  g\in G^\Sigma \ : \ \exists k\in G\text{ such that } g|_{\pp\Sigma}=k\}$.}
    \[
    \Go \simeq (G\cdot \G_\rel)_0 \simeq G\cdot G^\Sigma_{\rel,0} \qquad \mathrm{\emph{(}Abelian\emph{)}};
    \]
    \item\label{2.iii} $\mathcal{K}\doteq (G\cdot \G_\rel)/\Go$ is a discrete group, the  \emph{group of components} of $G\cdot\G_\rel$.
\end{enumerate}
\end{proposition}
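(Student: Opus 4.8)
The plan is to derive item (i) in each case by pure functional analysis from the flux computation of Proposition~\ref{prop:fluxes}, and then to obtain items (ii) and (iii) by combining the locally-exponential structure of the mapping group (Remark~\ref{rmk:exponentialityofmappinggroup}) with elementary topological-group arguments.

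First, for the constraint ideal I would combine Proposition~\ref{prop:fluxes} with the double-annihilator Lemma~\ref{lem:doubleannihilator}. In the semisimple case Proposition~\ref{prop:fluxes} gives $\F\simeq\Ann(\fg^\Sigma_\rel,(\fg^\Sigma)^*_{\mathrm{str}})$, so $\fGo=\Ann(\F,\fg^\Sigma)=\Ann(\Ann(\fg^\Sigma_\rel,(\fg^\Sigma)^*_{\mathrm{str}}),\fg^\Sigma)=\fg^\Sigma_\rel$, the last step being Lemma~\ref{lem:doubleannihilator} applied to the closed subspace $\fg^\Sigma_\rel=\ker\iota^*_{\pp\Sigma}$. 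In the Abelian case Proposition~\ref{prop:fluxes} instead gives $\F\simeq\Ann(\fg+\fg^\Sigma_\rel,(\fg^\Sigma)^*_{\mathrm{str}})$, and the identical argument yields $\fGo=\fg+\fg^\Sigma_\rel$; the only extra point to check is that $\fg+\fg^\Sigma_\rel=(\iota^*_{\pp\Sigma})^{-1}(\fg)$ is closed, being the preimage of the finite-dimensional (hence closed) space of constants $\fg\hookrightarrow\fg^{\pp\Sigma}$ under the continuous boundary-restriction map.

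To pass to the groups I would first invoke Remark~\ref{rmk:exponentialityofmappinggroup}: in the semisimple case $\Go=\langle\exp\fg^\Sigma_\rel\rangle=G^\Sigma_{\rel,0}$, and it remains to reconcile this with $(\tGo)_0=(G^\Sigma_{0,\rel})_0$ — the $\rel$/$0$-ordering subtlety flagged in Remark~\ref{rmk:normalsubgrps}. This is purely topological: $G^\Sigma_{\rel,0}$ is connected, contains the identity, and lies in $G^\Sigma_0\cap G^\Sigma_\rel=\tGo$, so $G^\Sigma_{\rel,0}\subseteq(\tGo)_0$; conversely $(\tGo)_0$ is a connected subset of $G^\Sigma_\rel$ through the identity, hence sits inside $(G^\Sigma_\rel)_0=G^\Sigma_{\rel,0}$, giving equality. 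In the Abelian case the additional ingredient is that $\exp\colon\fg^\Sigma\to G^\Sigma$ is a group homomorphism (as $G^\Sigma$ is abelian), whence $\Go=\langle\exp(\fg+\fg^\Sigma_\rel)\rangle=\langle\exp\fg\rangle\cdot\langle\exp\fg^\Sigma_\rel\rangle=G\cdot G^\Sigma_{\rel,0}$, using $\langle\exp\fg\rangle=G$ ($G$ connected). To identify this with $(G\cdot\G_\rel)_0$ I would show $G\cdot G^\Sigma_{\rel,0}$ is an open (hence closed) connected subgroup: the map $G^\Sigma_\rel\to(G\cdot\G_\rel)/(G\cdot G^\Sigma_{\rel,0})$, $g\mapsto[g]$, is a surjective homomorphism with kernel $G^\Sigma_\rel\cap(G\cdot G^\Sigma_{\rel,0})=G^\Sigma_{\rel,0}$ — any $g\in G^\Sigma_\rel$ written as $kg'$ with $k\in G$, $g'\in G^\Sigma_{\rel,0}$ forces $k=k|_{\pp\Sigma}=1$ — so the quotient is the discrete group $\pi_0(G^\Sigma_\rel)$; a subgroup with discrete quotient is open, and an open connected subgroup is the identity component.

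Item (iii) is then immediate in both cases: $\mathcal{K}=\tGo/\Go$ (resp.\ $(G\cdot\G_\rel)/\Go$) is the quotient of a topological group by its open identity component, hence the discrete group of components $\pi_0(\G_\rel)$ (resp.\ $\pi_0(G\cdot\G_\rel)$). The main obstacle I anticipate is not a single computation but the careful bookkeeping of the three a priori distinct subgroups $G^\Sigma_{\rel,0}$, $G^\Sigma_{0,\rel}$ and $G\cdot G^\Sigma_{\rel,0}$ in the infinite-dimensional Fr\'echet-Lie setting: one must justify that identity components are open (which rests on local connectedness of the locally convex model spaces) and handle with care the global stabiliser $\fg$ of constant gauge transformations, which appears only in the Abelian case and is the reason $\fGo$ acquires the extra finite-dimensional summand there.
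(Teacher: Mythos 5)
Your items (1.i) and (2.i) are correct and essentially coincide with the paper's own argument: the paper obtains (1.i) from Proposition \ref{prop:fluxes} plus Lemma \ref{lem:annihilators} (itself proved via Lemma \ref{lem:doubleannihilator}), and obtains (2.i) by exactly your double-annihilator computation; your explicit check that $\fg^\Sigma_\rel$ and $\fg+\fg^\Sigma_\rel=(\iota^*_{\pp\Sigma})^{-1}(\fg)$ are closed is a point the paper leaves implicit. Your (1.ii)--(1.iii) also match the paper: $\Go=\langle\exp\fg^\Sigma_\rel\rangle\simeq G^\Sigma_{\rel,0}$ by Remark \ref{rmk:exponentialityofmappinggroup}, and your two-inclusion argument identifying $(\tGo)_0$ with $G^\Sigma_{\rel,0}$ is the same ``$\rel$/$0$-ordering'' argument the paper gives for $G^\Sigma_{0,\rel,0}=G^\Sigma_{\rel,0}$.

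The genuine gap is in your Abelian item (ii). First, the map $G^\Sigma_\rel\to(G\cdot\G_\rel)/(G\cdot G^\Sigma_{\rel,0})$, $g\mapsto[g]$, is not well defined: here $\G_\rel\equiv\tGo\simeq G^\Sigma_{0,\rel}$, and $G^\Sigma_\rel\not\subseteq G\cdot\G_\rel$ in general --- indeed your own boundary-restriction computation shows $G^\Sigma_\rel\cap(G\cdot\G_\rel)=G^\Sigma_{0,\rel}$, which is a \emph{proper} subgroup of $G^\Sigma_\rel$ precisely in the topologically nontrivial situations the proposition is designed to handle. Second, and more seriously, the openness step is circular even after repairing the domain (say by working inside $G\cdot G^\Sigma_\rel$): what you establish is an \emph{algebraic} isomorphism of the quotient with a group of components, but ``a subgroup with discrete quotient is open'' requires discreteness of the \emph{quotient topology}, and that is literally equivalent to openness of the subgroup --- it cannot be deduced from an algebraic identification, since a continuous bijection out of a discrete group does not force the target topology to be discrete.

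The fix is short, and worth making explicit. Either (a) use local exponentiality of the subgroup $G\cdot G^\Sigma_\rel$, whose Lie algebra is $\fGo=\fg+\fg^\Sigma_\rel$: then $\Go=\langle\exp\fGo\rangle$ contains $\exp(V)$ for $V$ a $0$-neighbourhood in $\fGo$, i.e.\ an identity neighbourhood, so $\Go$ is an open connected subgroup and hence the identity component; or (b) avoid openness altogether via the continuous boundary-restriction homomorphism: given $x\in(G\cdot\G_\rel)_0$ and a path $\gamma$ from $1$ to $x$ in $G\cdot\G_\rel$, the path $k(t)\doteq\gamma(t)\vert_{\pp\Sigma}$ lies in $G$ and $k(t)^{-1}\gamma(t)$ is a path in $G^\Sigma_\rel$, so $x\in G\cdot G^\Sigma_{\rel,0}$; the reverse inclusion is your connectedness argument. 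Option (b) is in effect what the paper's terse step ``$(G\cdot G^\Sigma_{0,\rel})_0=G\cdot G^\Sigma_{0,\rel,0}$ since $G$ is connected'' amounts to, after which the paper reuses the (1.ii) identity $G^\Sigma_{0,\rel,0}=G^\Sigma_{\rel,0}$. Your observation that in the Abelian case $\exp$ is a group homomorphism, giving $\Go=\langle\exp\fg\rangle\cdot\langle\exp\fg^\Sigma_\rel\rangle=G\cdot G^\Sigma_{\rel,0}$ directly, is a genuinely nice shortcut relative to the paper --- but it must be paired with one of these repairs to conclude $\Go\simeq(G\cdot\G_\rel)_0$, and with it item (2.iii).
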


\begin{proof}
We start from the case of $G$ semisimple.

\ref{1.i} The characterisation of the constraint gauge ideal $\fGo \doteq \Ann(\F,\fg^\Sigma)$ follows from $\F\simeq (\fg^{\pp\Sigma})^*$ (Proposition \ref{prop:fluxes}) and $\Ann((\fg^{\pp\Sigma})^*,\fg^\Sigma)\simeq\fg^\Sigma_\rel$ (Lemma \ref{lem:annihilators}).

\ref{1.ii} Notice that since $\fGo \subset \fG = \fg^\Sigma$, we have $\Go \subset \G = G^\Sigma_0$, and that since $\fGo = \fg^\Sigma_\rel$, we have also $\Go \subset G^\Sigma_\rel$. Therefore $\Go \subset G^\Sigma_0 \cap G^\Sigma_\rel \equiv G^\Sigma_{0,\rel} \doteq \G_\rel$. Since $\mathrm{Lie}(G^\Sigma_{0,\rel}) = \fg^\Sigma_\rel = \fGo$, we see that $\Go$ is the identity component of $G^\Sigma_{0,\rel}$, i.e.
\[
\Go \simeq G^\Sigma_{0,\rel,0} \equiv (\G_\rel)_0.
\]
In virtue of the equation above, the second claimed characterisation, $\Go\simeq G^\Sigma_{\rel,0}$, is equivalent to
\[
G^\Sigma_{0,\rel,0} = G^\Sigma_{\rel,0},
\]
which we now prove.
Recall: $G^\Sigma_{0,\rel} \equiv G^\Sigma_0 \cap G^\Sigma_\rel$. One can verify that\footnote{This is because the homotopies that connect elements to the identity in $G^\Sigma_{0,\rel,0}$ live in $G^\Sigma_0\cap G^\Sigma_\rel$ and thus, in particular, in $G^\Sigma_\rel$.} $G^\Sigma_{0,\rel,0} \subset G^\Sigma_{\rel,0}$. To prove the opposite inclusion, we start by observing that, on the one hand, $G^\Sigma_{\rel,0} \subset G^\Sigma_\rel$ and that, on the other, $G^\Sigma_\rel \subset G^\Sigma$ implies $G^\Sigma_{\rel,0} \subset G^\Sigma_0$. Therefore, comparing with the definition of $G^\Sigma_{0,\rel}$, we obtain: $G^\Sigma_{\rel,0} \subset G^\Sigma_{0,\rel}$. But since $G^\Sigma_{\rel,0} $ is connected to the identity as a group, one has a fortiori that it is contained in the identity component of $G^\Sigma_{0,\rel}$. Therefore, as desired, $G^\Sigma_{\rel,0} \subset G^\Sigma_{0,\rel,0}$ as well.

\ref{1.iii} From \ref{1.ii} $\Go = (\G_\rel)_0$ and the identity component is always an \emph{open} normal subgroup. Thus $\mathcal{K}\doteq \G_\rel/\Go = \G_\rel/(\G_\rel)_0$ is a discrete group.

Next we consider the case $G$ Abelian.

\ref{2.i} Follows from Proposition \ref{prop:fluxes} and Lemma \ref{lem:doubleannihilator}, since
\[
\Ann(\F,\fg^\Sigma)=\Ann(\Ann(\fg + \fg^\sigma_\rel, (\fg^\Sigma)^*_{\mathrm{str}}),\fg^\Sigma) = \fg + \fg^\sigma_\rel.
\]

\ref{2.ii} One can easily adapt the first argument of \ref{1.ii} to find $\Go \subset G_0^\Sigma \cap G\cdot G^\Sigma_{\rel}$, and hence 
\[
\Go \simeq (G\cdot G^\Sigma_{0,\rel})_0 \equiv (G\cdot \G_\rel)_0.
\]
Then, since $G$ is connected, $(G\cdot G^\Sigma_{0,\rel})_0 = G\cdot G^\Sigma_{0,\rel,0}$ and therefore we can use the second argument in \ref{1.ii} to conclude.

\ref{2.iii} The proof is the same as in \ref{1.iii}.
\end{proof}

\begin{remark}
Note that the relative subgroup of the identity component of $G^\Sigma$ does not necessarily coincide with the identity component of its relative subgroup, i.e.\ in general $G^\Sigma_{0,\rel} \neq G^\Sigma_{\rel,0}$. Indeed, in the previous proposition we proved that the latter is the identity component of the former.
\end{remark}

\begin{remark}[Abelian isotropy]\label{rmk:Abelian-isotropy}
In the Abelian case, $\fGo$ differs from the relative algebra $\fg^\Sigma_\rel$ by constant gauge transformations. These act trivially on $\X$---and in particular on $\C$---and therefore constitute constitute the (\emph{configuration-independent}) isotropy algebra $\ker(\rho) = \fg \hookrightarrow \fg^\Sigma$ of the Lie algebra action, which in particular is an ideal.
Sometimes the isotropy algebra is called the ``reducibility'' algebra and its elements ``reducibility parameters".
Therefore, albeit $\Go\subsetneq \G_{\rel,0}$, both $\Go$ and $\G_{\rel,0} =\langle \exp \fg^\Sigma_\rel\rangle$ have the same orbits on $\C$:
\[
\uCo \doteq \C / \Go = \C / \G_{\rel,0}.
\]
The distinction between $\fGo$ and $\fg^\Sigma_\rel$ remains nevertheless crucial for its relation to the integrated Gauss's\footnote{This is why, \emph{in the presence of charged matter over $\Sigma$}, one would find $\fGo \simeq \fg^\Sigma_\rel$ even in the Abelian case.} for the Abelian fluxes---and therefore to the superselection sectors (second-stage reduction, Section \ref{sec:reduction-second}).
\end{remark}

After this characterisation of $\Go$ and $\tGo$, we turn our attention to the dressing field method as a tool for ``gauge fixing'' $A\in\mathcal{A}$ to some $\ASa$ such that $\ASa_\ell =0$, i.e.\ to some $\ASa\in\hcalA$. With this goal in mind, we use Lemma \ref{lemma:Wilsonline} to introduce the following object---the terminology for which will be explained in Remark \ref{rmk:dressing-terms} below:

\begin{definition}[Dressing field]\label{def:dressingfield}
The \emph{dressing field} $\dfU : \X\to C^\infty(I,G^S_0)$ is the unique solution to the boundary value problem:
\[
\begin{cases}
 L_\ell \dfU \dfU^{-1} = - A_\ell\\
\dfU^\f = 1.
\end{cases}
\qedhere
\]
\end{definition}
\begin{remark}[Wilson lines]\label{rmk:WilsonLine-V}
$\dfU(A)$ can be thought of as a collection of holonomies, a.k.a. path-ordered exponentials or Wilson lines,\footnote{Note: a priori this takes values in $G^S$, but it lies in fact in $G^S_0$ because $V[tA]$ is a homotopy to the identity. See also Lemma \ref{lemma:dressing-properties}\ref{lemma:dressing-properties-i}.} along the flows of the null vector field $\ell\in\mathfrak{X}^1(\Sigma)$. It is thus convenient to introduce the notation:
\[
\dfU(A)(u,x) = \overrightarrow{\Pexp} \int_{u}^1 du' \, A_\ell(u',x)  , \quad V(A)\in C^\infty(I,G^S_0) \simeq G^\Sigma_0 \simeq \G.
\]
(Here we use an arrow on top of $\Pexp$ to stress that the path-ordering composes left to right; cf. item (\ref{lemma:dressing-properties-iii}) of Lemma \ref{lemma:dressing-properties}.)
\end{remark}

We can use the dressing field to define:

\begin{definition}[Dressing map]\label{def:dressingmap}
The on-shell\footnote{Off-shell extensions are possible, but not needed and less natural.} \emph{dressing map} is
\[
\cU : \C \to \XAS  \doteq  \hcalA \times T^*G^S_0 ,
\qquad
(A,{E_\i}) \mapsto \begin{pmatrix}
    \ASa\, \\ \Lambda \\ \ASe
\end{pmatrix} = 
\begin{pmatrix}
     A^{\dfU(A)} \\ \dfU(A)^\i \\ \Ad(\dfU(A)^\i)^{-1}\cdot {E_\i}
\end{pmatrix},
\]
where, henceforth, we leave the left-invariant trivialisation $T^*G^S_0 \simeq G^S_0 \times (\fg^S)^*$ implicit.
\end{definition}

Note that we used $(\ASa,\Lambda,\ASe)$ to denote variables in $\XAS$, and we think of the components of the dressing map $\cU(A,E_\i) = (\ASa(A),\Lambda(A),\ASe(A,E_\i))$ as ($\Sigma$-nonlocal) functions on $\C$. The dressing map is thus expressed in the chosen variables as, e.g.\ $\ASa=\ASa(A)$ etc.

The following lemma shows a few important properties of the dressing field and map, and in particular proves that the previous definition is well-posed (items (i-ii)):

\begin{lemma}\label{lemma:dressing-properties} 
For all $(A,{E_\i})\in\C$, $g\in\G$, and $g_\circ\in\tGo$:
\begin{enumerate}[label={(\roman*)}]
    \item\label{lemma:dressing-properties-i} $\dfU(A)^\i \in G^S_0$;
    \item\label{lemma:dressing-properties-ii} $(A^{\dfU(A)})_\ell = 0$;
    \item\label{lemma:dressing-properties-iii} $\dfU[A^g] = g^{-1} \dfU(A) g^\f$;
    \item\label{lemma:dressing-properties-iv} $\dfU[A^{g_\circ}] = g^{-1}_\circ\dfU(A)$.
\end{enumerate}
\end{lemma}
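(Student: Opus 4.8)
The plan is to derive all four properties directly from the boundary value problem defining $\dfU$, whose well-posedness (existence, uniqueness, and smooth dependence on $A$) I take from the referenced Wilson-line lemma. The three statements \ref{lemma:dressing-properties-ii}, \ref{lemma:dressing-properties-iii} and \ref{lemma:dressing-properties-iv} will reduce to algebraic manipulations of the Maurer--Cartan form $\dfU^{-1}L_\ell\dfU$ combined with uniqueness of the solution, whereas \ref{lemma:dressing-properties-i} is of a different flavour and requires a homotopy argument.

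For \ref{lemma:dressing-properties-i} I would exhibit $\dfU(A)^\i$ as the endpoint of a path in $G^S$ issued from the identity. Concretely, consider the rescaled potentials $tA$ for $t\in[0,1]$ and the associated family $V_t\doteq\dfU(tA)$. At $t=0$ the source of the boundary value problem is $-(tA)_\ell=0$ with datum $V_0^\f=1$, so by uniqueness $V_0\equiv 1$; smooth dependence of the solution on $t$ then makes $t\mapsto V_t(u=-1)$ a continuous path in $G^S$ from $1$ to $\dfU(A)^\i$, so that $\dfU(A)^\i\in G^S_0$. The same argument at any fixed $u$ shows $\dfU(A)$ takes values in $C^\infty(I,G^S_0)$, justifying the target space in Definition \ref{def:dressingfield}.

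For \ref{lemma:dressing-properties-ii}, writing $V\doteq\dfU(A)$ and contracting $A^V=V^{-1}AV+V^{-1}dV$ with $\ell$, I would use $i_\ell(V^{-1}dV)=V^{-1}L_\ell V$ (valid since $V$ is a $0$-form) to obtain $(A^V)_\ell=V^{-1}A_\ell V+V^{-1}L_\ell V$; the defining equation $L_\ell V\,V^{-1}=-A_\ell$ gives $V^{-1}L_\ell V=-V^{-1}A_\ell V$, and the two terms cancel. For \ref{lemma:dressing-properties-iii} the strategy is to verify that the candidate $W\doteq g^{-1}\dfU(A)g^\f$ solves the boundary value problem for $A^g$ and then invoke uniqueness. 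The boundary datum is immediate, $W^\f=(g^\f)^{-1}\dfU(A)^\f g^\f=1$, because $\dfU(A)^\f=1$. For the flow equation I would compute $L_\ell W\,W^{-1}$, using that $g^\f$ is $u$-independent (so $L_\ell g^\f=0$) and $L_\ell g^{-1}=-g^{-1}(L_\ell g)g^{-1}$; after cancelling $g^\f(g^\f)^{-1}$ and substituting $L_\ell V\,V^{-1}=-A_\ell$, the expression collapses to $-g^{-1}A_\ell g-g^{-1}L_\ell g=-(A^g)_\ell$, where I use $(A^g)_\ell=g^{-1}A_\ell g+g^{-1}L_\ell g$. Uniqueness then yields $\dfU[A^g]=W$. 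Finally \ref{lemma:dressing-properties-iv} is the special case $g=g_\circ\in\tGo$: since $g_\circ$ restricts to the identity on $\pp\Sigma$, in particular $g_\circ^\f=1$, and \ref{lemma:dressing-properties-iii} reduces to $\dfU[A^{g_\circ}]=g_\circ^{-1}\dfU(A)$.

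I expect the only genuine obstacle to be the connectedness claim \ref{lemma:dressing-properties-i}. The three computational items follow mechanically from the defining flow equation and uniqueness, but \ref{lemma:dressing-properties-i} rests on smooth dependence of the fibrewise parallel transport on $A$ and on the homotopy staying inside $G^S$ pointwise over $x\in S$. These are standard for the finite-dimensional ODE solved along each $\ell$-flow line, yet they are exactly where one must appeal to the regularity packaged in the Wilson-line lemma rather than to pure algebra.
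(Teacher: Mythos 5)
Your proposal is correct and follows essentially the same route as the paper: a homotopy argument for \ref{lemma:dressing-properties-i} (your rescaling $t\mapsto\dfU(tA)^\i$ is exactly one of the two homotopies the paper exhibits, the other being the $u$-reparametrisation $t\mapsto \dfU(A)(1-2t)$, which avoids appealing to parameter dependence altogether), the same one-line cancellation for \ref{lemma:dressing-properties-ii}, and equivariance-by-uniqueness for \ref{lemma:dressing-properties-iii} with \ref{lemma:dressing-properties-iv} as the special case $g_\circ^\f=1$. The only cosmetic difference is that where the paper dismisses \ref{lemma:dressing-properties-iii} as standard by citing the Wilson-line lemma, you spell out the verification that $g^{-1}\dfU(A)g^\f$ solves the boundary value problem for $A^g$ — which is precisely the argument behind that citation.
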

\begin{proof}
\begin{enumerate}[label={(\roman*)}, wide=0pt]
\item That $\Lambda(A)$, or equivalently $\Lambda(A)= \dfU(A)^\i$, lies in $G^S_0 \subset \G_S$ can be seen in two different ways, both insightful. Indeed, both $H_1(t) = \dfU(A)(u)\vert_{u=1-2t}=\dfU(A)(1-2t)$ and $H_2(t) = \dfU(tA)^\i$ define smooth homotopies $[0,1]\to G^S$ with $H_{1,2}(0) = 1$ and $H_{1,2}(1) = \Lambda(A)$.
\item  Straightforward: $(A^{\dfU(A)})_\ell = (\dfU(A)^{-1} A_\ell \dfU(A) + \dfU(A)^{-1} L_\ell \dfU(A)) = 0$. 
\item  Standard, see Lemma \ref{lemma:Wilsonline}.
\item It follows from \ref{lemma:dressing-properties-iii} and the fact that $g_\circ^\i = g_\circ^\f =1$.
\qedhere
\end{enumerate}
\end{proof}

\begin{remark}[{Locality properties of $\Lambda(A)$ and $\ASe(A,E_\i)$}]\label{rmk:bilocal}
Recall that
\[
\Lambda(A) \doteq  \dfU(A)^\i = \overrightarrow{\Pexp} \int_{-1}^1 du' \, A_\ell(u')
\]
is a collection of Wilson lines stretching from the initial to the final boundary surfaces, $S_\i$ and $S_\f$, along the flows of $\ell$ on top of each $x\in S$. Under the action of $g\in\G$, these transform as
\begin{equation}\label{e:Vbilocal}
\Lambda(A) \mapsto \Lambda[A^g] = (g^\i)^{-1} \Lambda(A) g^\f.
\end{equation}

Interpreted as Wilson lines stretching from $S_\i$ to $S_\f$, $\Lambda(A)$ defines a nonlocal object over $\Sigma$ which transforms under the action of $\G$ according to the left/right bi-local action formula above.
But this can also be re-interpreted as follows: the map $\Lambda$ on $\mathcal{A}$ can be seen as valued in $G^S = C^\infty(S,G)$, and as such it carries two, commuting, local actions of $\G_S$, one on the left and one on the right. If we parametrise the (abstract) left and right actions using $g^\i$ and $g^\f$ respectively, we can reproduce the bi-local transformation we encountered above.

The relationship between these two viewpoints is enabled by the invariance of $\Lambda(A)$ under the action of $\tGo \supset \Go$, which guarantees that the action of $\G$ on $\Lambda$ descends to a residual boundary action---indeed, $\G/\tGo \simeq G^S_0 \times G^S_0 \subset G^{\pp\Sigma} $, as we will prove in Proposition \ref{prop:Gred}. 

Similarly, the dressed field $\ASe \equiv \ASe(A,E_\i) = \Ad(\dfU(A)^\i)^{-1}\cdot {E_\i}$ is a nonlocal functional that depends both on the initial value of the electric field, $E^\i = {E_\i}$, and on the entire Wilson line $V(A)^\i$. Therefore, the action of $g\in\G$ on $\C$ is mapped to,
\begin{align*}
\ASe(A,E_\i) \mapsto \ASe(A^g,e^g_\i) & = \Ad((g^\i)^{-1}\dfU(A) g^\f )^{-1}\cdot (\Ad(g^\i)^{-1}\cdot {E_\i})\\ & =  \Ad(g^\f)^{-1}\cdot \ASe(A,E_\i).\qedhere
\end{align*}
\end{remark}

\begin{remark}[Dressing fields]\label{rmk:dressing-terms}
Our dressing field $V$ is closely related to the notion of \emph{dressing field} for the action of $\tGo$ as introduced by \cite{Francois2012,Francois2021}. Their definition requires $V$ to be both $\tGo$-equivariant \emph{and $\tGo$-valued}, whereas in our case it is $\G$-valued. 
As a consequence of this mismatch, the dressing map $\cV$ assigns to each gauge connection $A$ two objects: not only its corresponding, desired, purely spatial, ``gauge-fixed'' version w.r.t.\ the action of $\tGo$, denoted $\ASa(A)$---but also the Wilson line $\Lambda(A)$ (cf.\ Remark \ref{rmk:bilocal}). (See also \cite{GomesPhD} for a discussion on the difference between gauge-fixing and dressing, and \cite{RielloGomesHopfmuller} for the relation to field-space connections.) In Lemma \ref{lemma:constr-trivialis-Ab} we use the original notion of a $\Go$-dressing (there called $U_\circ$), and the extra field (there called $\lambda$) is introduced by other means.
\end{remark}

In the next two statements, we apply the dressing field method to show that the dressing map descends to a covering map $p_V\colon \uCo \to \XAS$, which is also a local symplectomorphism.

\begin{proposition}\label{prop:step1}
The dressing map
\[
\cU:\C\to \XAS \doteq \hcalA\times T^*G^S_0 
\] 
is surjective and has $\G_\rel$ fibre.
\end{proposition}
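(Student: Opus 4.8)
The plan is to prove the two claims---surjectivity and the identification of the fibres with $\tGo$-orbits---directly, leaning entirely on the equivariance properties of the dressing field collected in Lemma \ref{lemma:dressing-properties}. The guiding observation is that a purely spatial connection is its own dressing: if $\ASa_\ell = 0$ then the boundary value problem of Definition \ref{def:dressingfield} forces $\dfU(\ASa)\equiv 1$, so $\cU(\ASa,E_\i) = (\ASa,1,E_\i)$ already realises every point of $\XAS$ with trivial $\Lambda$-component.

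For surjectivity I would fix a target $(\ASa,\Lambda,\ASe)\in\hcalA\times G^S_0\times(\fg^S)^*$ and build a preimage by ``tilting'' the above base point with a gauge transformation. First choose any $g\in\G\simeq G^\Sigma_0$ with $g^\f=1$ and $g^\i=\Lambda$; such a $g$ exists precisely because $\Lambda\in G^S_0$, so a smooth path from $1$ to $\Lambda$ inside $G^S_0$, reparametrised over $I$, furnishes a $g$ that is manifestly in the identity component. Setting $A\doteq\ASa^{g^{-1}}$ and applying Lemma \ref{lemma:dressing-properties}\ref{lemma:dressing-properties-iii} to $\ASa$ and $g^{-1}$ yields $\dfU(A)=g\,\dfU(\ASa)\,(g^\f)^{-1}=g$, whence $A^{\dfU(A)}=\ASa^{g^{-1}g}=\ASa$ and $\dfU(A)^\i=g^\i=\Lambda$. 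Finally I set $E_\i\doteq\Ad(\Lambda)\cdot\ASe$, so that $\ASe(A,E_\i)=\Ad(\Lambda)^{-1}\cdot E_\i=\ASe$; then $(A,E_\i)\in\C$ maps to $(\ASa,\Lambda,\ASe)$.

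For the fibre I would first check $\tGo$-invariance and then that nothing larger collapses. Given $g_\circ\in\tGo$, Lemma \ref{lemma:dressing-properties}\ref{lemma:dressing-properties-iv} gives $\dfU[A^{g_\circ}]=g_\circ^{-1}\dfU(A)$, so $(A^{g_\circ})^{\dfU[A^{g_\circ}]}=A^{\dfU(A)}=\ASa$; moreover $\dfU[A^{g_\circ}]^\i=(g_\circ^\i)^{-1}\Lambda=\Lambda$ and $E_\i^{g_\circ}=\Ad(g_\circ^\i)^{-1}\cdot E_\i=E_\i$ because $g_\circ|_{\pp\Sigma}=1$; hence $\cU$ is constant on $\tGo$-orbits. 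Conversely, suppose $\cU(A,E_\i)=\cU(A',E'_\i)$. Equality of the first component gives $A'=A^{g_\circ}$ with $g_\circ\doteq\dfU(A)\,\dfU(A')^{-1}$, which lies in $\G$ because both Wilson lines lie in $G^\Sigma_0$ (Remark \ref{rmk:WilsonLine-V}); evaluating on $\pp\Sigma$ gives $g_\circ^\f=1$ (since $\dfU^\f=1$) and $g_\circ^\i=\Lambda\,\Lambda'^{-1}=1$ (equal $\Lambda$-components), so $g_\circ\in G^\Sigma_0\cap G^\Sigma_\rel=\tGo$. Equality of the $\ASe$-components together with $\Lambda=\Lambda'$ forces $E_\i=E'_\i$, and since $g_\circ^\i=1$ this means $(A',E'_\i)=(A,E_\i)^{g_\circ}$. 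Thus the fibres of $\cU$ are exactly the $\tGo$-orbits.

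To conclude that the fibre is a copy of $\tGo$, it remains to note that the $\tGo$-action on $\C$ is free: for $G$ semisimple this is the statement that $d_A\xi=0$ has no nonzero solution with $\xi\in\fg^\Sigma_\rel$ (cf.\ Remark \ref{rmk:uCosmoothness}), while for $G$ Abelian a stabilising $g_\circ$ would have to be constant (to fix $A$) and trivial on $\pp\Sigma$, hence equal to $1$, so the constant global stabiliser $\fg$ meets $\tGo$ only in the identity. The main obstacle, and the only place requiring genuine care, is the bookkeeping that guarantees $g_\circ=\dfU(A)\dfU(A')^{-1}$ lands in $\tGo=G^\Sigma_{0,\rel}$ rather than merely in $G^\Sigma_\rel$: this relies on the Wilson-line representation $\dfU\in C^\infty(I,G^S_0)\simeq G^\Sigma_0$ of Remark \ref{rmk:WilsonLine-V}, and on treating the freeness argument separately in the semisimple and Abelian cases.
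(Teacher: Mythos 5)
Your proof is correct and takes essentially the same approach as the paper's: the preimage you construct for surjectivity (acting on $\ASa$ by the inverse of an interpolating path $g$ with $g^\i=\Lambda$, $g^\f=1$, and setting $E_\i=\Ad(\Lambda)\cdot\ASe$) is exactly the paper's $A_H$, and your fibre argument---reading off the boundary values of $g_\circ=\dfU(A)\dfU(A')^{-1}$ and placing it in $G^\Sigma_0$ via connectedness of Wilson lines---is the paper's as well, with your citation of Remark \ref{rmk:WilsonLine-V} standing in for the explicit homotopy $\dfU[tA]\dfU[tA']^{-1}$ that the paper writes out. The only differences are cosmetic: you deduce $\dfU(A)=g$ from Lemma \ref{lemma:dressing-properties}\ref{lemma:dressing-properties-iii} together with the observation $\dfU(\ASa)=1$ for spatial $\ASa$, rather than invoking Lemma \ref{lemma:Wilsonline} directly; you carry the $\ASe$-component explicitly where the paper suppresses it via invertibility of $\Ad_\Lambda$; and your closing freeness argument is a (correct) supplement that the paper defers to Remark \ref{rmk:uCosmoothness} rather than including in this proof.
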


\begin{proof}
Recall that the dressing map has the following components:
\[
\cU : (A,{E_\i}) \mapsto 
\begin{pmatrix}
    \ASa(A)\\
    \Lambda(A)\\
    \ASe(A,E_\i)
\end{pmatrix} = 
\begin{pmatrix} 
A^{\dfU(A)} \\ 
\dfU(A)^\i \\ 
\Ad(\dfU(A)^\i)^{-1}\cdot {E_\i}
\end{pmatrix}.
\]
To prove the proposition, we can suppress suppress the third component of the dressing map $\ASe$ from our notation. This is motivated by the fact that $\ASe = \Ad(\Lambda^{-1})\cdot {E_\i}$, the adjoint map $\Ad_\Lambda : \fg^S \to \fg^S$ is a diffeomorphism, and $\tGo$ acts trivially on ${E_\i}$. In the following we will thus focus on the components $(\ASa,\Lambda) = (\ASa(A),\Lambda(A))$ of the dressing map, i.e.\ we prove that $A\mapsto (\ASa(A),\Lambda(A))$ is surjective with $\G_\rel$-fibre.

First, we show that it is surjective, i.e.\ for any $(\ASa_\star,\Lambda_\star)\in \hcalA \times G^S_0 $ we find an $A\in\mathcal{A}$ such that $(\ASa(A),\Lambda(A))=(\ASa_\star,\Lambda_\star)$. Pick any such $(\ASa_\star,\Lambda_\star)$. Since $\Lambda_\star$ is in the identity component $G^S_0\subset\G_S$, a homotopy\footnote{If two maps are homotopic they are also smoothly homotopic \cite[Prop. 10.22 (1st ed.) or Thm. 6.29 (2nd ed.)]{lee2013introduction}. We thank the n-lab for their help in finding accurate references. \label{fnt:smoothhomotopy}} $H:[0,1]\to G^S_0$ such that $H(0)=1$ and $H(1)=\Lambda_\star$ exists. From this homotopy, define $V_H \in \G \simeq C^\infty(I,\G_S)$ as 
\[
\dfU_H(u) = H(\tfrac12(1-u)) 
\]
so that $\dfU_H^\f  = 1$ and $\dfU_H^\i = \Lambda_\star$.
We then claim that 
\[
A_H \doteq \dfU_H\ASa_\star \dfU_H^{-1} - d \dfU_H \dfU_H^{-1} 
\]
is such that $(\ASa[A_H],\Lambda[A_H]) = (\ASa_\star, \Lambda_\star)$.
Indeed, since $(A_H)_\ell = - L_\ell \dfU_H \dfU_H^{-1}$ and $\dfU_H^\f=1$, one uses Lemma \ref{lemma:Wilsonline} to conclude that $\dfU[A_H] = \dfU_H$. Hence,
\begin{align*}
(\ASa[A_H],\Lambda[A_H]) & =  \big(A_H^{\dfU[A_H]},\ \dfU[A_H]^\i \big) = \big(A_H^{\dfU_H},\dfU_H^\i \big) \\
&= \big(\dfU_H^{-1} (\dfU_H \ASa_\star \dfU_H^{-1}  -d \dfU_H^{-1} )\dfU_H + \dfU_H^{-1}  d \dfU_H, \Lambda_\star\big) \\
&= (\ASa_\star, \Lambda_\star)
\end{align*}
and thus $A_H$ is in the preimage of $(\ASa_\star,\Lambda_\star)$ along the $(\ASa,\Lambda)$-components of $\cV$.

We now prove that $\cV$ has $\G_\rel$ as a fibre. This is equivalent to showing that:
given $A,A'\in\mathcal{A}$, $(\ASa(A),\Lambda(A))=(\ASa[A'],\Lambda[A'])$ iff there exists a $g\in \G$ such that $A' = A^g$ and $g\in\tGo$:

\begin{itemize}
\item[($\Leftarrow$)] Assume $A'=A^g$ for $g\in \G_\rel$, then using Lemma \ref{lemma:dressing-properties}:
\[
(\ASa[A'],\Lambda[A']) = (\ASa[A^g],\Lambda[A^g]) = ( \ASa(A)^{g^\f} , (g^\i)^{-1}\Lambda(A) g^\f)= (\ASa(A),\Lambda(A))
\] 
where the last equality follows from $g\in \tGo \implies g^\f = g^\i =1$.
\item[($\Rightarrow$)] Assume now $(\ASa(A),\Lambda(A)) = (\ASa[A'],\Lambda[A'])$. The equality between the $\ASa$-components, i.e.\ $\ASa(A)=\ASa[A']$, reads 
\[
\dfU^{-1} A \dfU + \dfU^{-1} d\dfU = \dfU'{}^{-1} A' \dfU' + \dfU'{}^{-1} d\dfU' ,
\] 
which is equivalent to $A' = A^g$ for $g \doteq \dfU \dfU'{}^{-1}$, where we denoted $\dfU\equiv\dfU(A)$ and $\dfU'\equiv\dfU[A']$. Since $\dfU^\f = \dfU'{}^\f =1$, it follows that $g^\f =1$. 

\noindent The equality between the $\Lambda$-components, i.e.\ $\Lambda(A) = \Lambda[A']$, reads $\dfU^\i = \dfU'{}^\i$, which implies $g^\i = 1$. Therefore $g\in G^\Sigma_{\rel}$. Since $\tGo \doteq G^\Sigma_0\cap G^\Sigma_\rel$, we are only left to show that $g\in G^\Sigma_0$ lies in the identity component of the mapping group. Making its dependence from $(A,A')$ explicit again, i.e.\ $g = \dfU(A) \dfU[A']^{-1}$, we see that a homotopy $H(t) \in G^\Sigma$ between $g$ and the identity is given by $H(t) = \dfU[tA] \dfU[tA']^{-1}$. 
\end{itemize}
\end{proof}

\begin{calculation}\label{prop:step2}
$\cV^*(\omAS  + \Omega_S) = \iota_\C^*\omega$.
\end{calculation}

\begin{proof}
Recall that in the left-invariant trivialisation $T^*G^S_0 \simeq G^S_0 \times (\fg^S)^*$, and under our usual identification of $\fg^S$ and $(\fg^S)^*$ provided by $\ASe\mapsto \tr(\ASe\cdot)\vol_S$, the canonical symplectic form $\Omega_S$ reads
\begin{align*}
\Omega_S(\Lambda,\ASe)
&= \bd \int_S \btr( \ASe\ \Lambda^{-1} \bd \Lambda  ) \\
&= \int_S \btr(\bd \ASe \wedge \Lambda^{-1} \bd \Lambda  - \tfrac12 \ASe [\Lambda^{-1}\bd \Lambda,\Lambda^{-1}\bd \Lambda ]) 
\end{align*}
By a direct calculation, performed in Appendix \ref{app:compute-bom-dress}, one can show that for any $U \in C^\infty(\X,\G)$,
\begin{align*}
\omega(A^U, E^U) = \omega(A,E) & - \bd \int_\Sigma \btr(\mathsf{G}\ \bd U U^{-1}) \\ 
    & + \Omega_S(U^\f, \Ad(U^\f)^{-1}\cdot E^\f) - \Omega_S(U^\i, \Ad(U^\i)^{-1}\cdot E^\i).
\end{align*}
The claim then follows by imposing the Gauss-constraint, setting $U= \dfU(A)$, and noting that: 
\begin{enumerate}
\item on-shell of the Gauss constraint, $\mathsf{G}|_\C=0$, one has $E=E(A,E_\i)$ with $E^\i=E(A,E_\i)^\i=E_\i$ (Proposition \ref{prop:constr-surf});
\item on-shell, $\ASe(A,E_\i) = \Ad(\dfU(A)^\i)^{-1}\cdot E_\i$ (Definition \ref{def:dressingmap}); 
\item $\dfU(A)^\f \equiv 1$ and $\dfU(A)^\i = \Lambda(A)$ (Definition \ref{def:dressingmap});
\item the quantity $\ASa(A) = A^{\dfU(A)}$ is purely spatial, i.e.\ $(A^{\dfU(A)})_\ell=0$ (Lemma \ref{lemma:dressing-properties}\ref{lemma:dressing-properties-ii}); 
\item the restriction of $\omega$ to the space of purely spatial connections i.e.\ to connections such that $A_\ell = 0$, is given by the AS 2-form $\omAS $ (in particular, the dependence on $E$ drops out). 
\end{enumerate}
Indeed, if $(A,E_\i)\in\C$ one finds:
\begin{align*}
\iota_\C^*\omega(A,E_\i) 
& = \omega(A, E(E_\i;A))\\
&= \omega(A^{\dfU(A)}, E(E_\i;A)^{\dfU(A)}) - \Omega_S(\dfU(A)^\f, \Ad(\dfU(A)^\f)^{-1}\cdot E^\f) \\ & \qquad + \Omega_S(\dfU(A)^\i, \Ad(\dfU(A)^\i)^{-1}\cdot E_\i)\\
& =  \omAS (\ASa(A)) + \Omega_S( \Lambda(A), \ASe(A,E_\i))\\
& = (\omAS +\Omega_S)(\cU(A,{E_\i})) = \cU^*(\omAS  + \Omega_S)(A,{E_\i}).
 \qedhere
\end{align*}
\end{proof}

\begin{remark}
Calculation \ref{prop:step2} allows us to give a solid mathematical interpretation to the dressing map $\cV$, by showing how it effectively allows to present (up to a covering, see below) the constraint-reduced phase space $\uCo$, in terms of the---much easier to handle---\emph{extended} AS phase space. This is one way to correctly interpret the following procedure, used in many other contexts (typically in Chern--Simons/$BF$ theory, or general relativity on null or Riemannian/spacelike $\Sigma$'s): (1) choose a ``naive'' gauge fixing, (2) build the associated dressing, (3) plug it into the on-shell symplectic form and interpret the result in terms of an ``extended phase space''. See e.g.\  \cite{Francois2012,Francois2021, RielloGomesHopfmuller, RielloGomes, CarrozzaHoehn, DonnellyFreidel16,}.
\end{remark}

\begin{proof}[Proof of Theorem \ref{thm:nonAb-constr-red}]
The map $\cV:\C\to\XAS$ is a surjection with fibre $\G_\rel$ (Proposition \ref{prop:step1}). Being $\G_\rel$-invariant, it is in particular $\Go$-invariant. Hence, there exists a (smooth!) map $p_V\colon \uCo\to \XAS$ such that $\cV=\pi_\circ^*p_V$. Now, since $\uCo = \C/\Go$, there is a residual action of the discrete group $\mathcal{K} = \G_\rel/\Go$ on $\uCo$. The map $p_V$ is invariant under this discrete action because $\cV$ was invariant under the whole of $\G_\rel$, meaning that the fibre of $p_V$ is discrete and is thus a covering. To show that $p_V$ is a symplectic covering, 
\[
\uomegao \stackrel{!}{=}p_V^*\omeAS,
\]
we use the injectivity of $\pi^*_\circ$ and the following string of equalities:
\[
\pi_\circ^* (p_V^* \omeAS ) = \cV^*\omeAS = \iota_\C^*\omega = \pi_\circ^*\uomegao,
\]
which hold in virtue of the definition of $p_V$ (first equality), of Calculation \ref{prop:step2} (second equality) and the definition of $\uomegao$ (third equality).

In sum, we showed that $p_V :\uCo \to \XAS$, the dashed arrow of diagram \eqref{e:coveringdiagram}, is a symplectic covering and therefore a local symplectomorphism a fortiori.

The proof of the very last point of Theorem \ref{thm:nonAb-constr-red}, the global symplectomorphism $\uCo\simeq \XAS$, only valid in the Abelian case, is performed in Appendix \ref{app:Abelianreduction}. The logic is similar, but $\dfU(A) \doteq \Pexp \int^1_u A_\ell$, well-defined for any $G$, is replaced by its ``logarithm'', well-defined only if $G$ is Abelian. The discrete fibre of the symplectic covering corresponds to the periodicity of the exponential function in the $U(1)$ factors of the Abelian group $G$. (see Appendix \ref{app:Abelianreduction},  in particular Remark \ref{rmk:localsymp}).
\end{proof}

We summarise the relation between first stage reduction and the extended AS phase space with the following diagram:
\begin{equation}\label{e:coveringdiagram}
\xymatrix{
\C \ar[r]^-{\pi_\circ} \ar@/_1.5pc/[rr]_{\cV} & \uCo \ar@{-->}[r]^-{p_V} & \XAS
}
\end{equation}

\begin{remark}[Dressing vs.\ symplectic reduction]
In the discussion above we have always assumed that $\uCo$ is a smooth symplectic manifold. As discussed in \cite{RielloSchiavina} and references therein, smoothness depends on several factors, including checking that the action of $\G$ on $\C$ is free and proper. In infinite dimensions, to have an honest symplectic manifold one additionally needs to check that the image of the (constraint) gauge algebra along the infinitesimal action map is symplectically closed (see the discussion in \cite[Proposition 5.8]{RielloSchiavina}, and \cite[Proposition 4.1.7 and Lemma 4.2.14]{DiezPhD}). This can often be done \cite{DiezHuebschmann-YMred,DiezPhD,DiezRudolph2020}, but we offer here an alternative perspective. Indeed, the ``dressing field method'' we outlined here can be taken as a concrete construction to explicitly produce a manifestly smooth and symplectic ``realisation'' of $\uCo$, in the present case, the extended AS phase space, built as the image of the dressing map, in the sense that the constraint-reduced phase space $\uCo$ is a symplectic covering of said ``realisation''. For a concrete example see Appendix \ref{app:Abelianreduction} where we explicit identify the constraint reduced space in the Abelian case and explain the relevant covering in terms of the branching of the logarithm on the toric factors of $G$.
\end{remark}

\subsection{Proof of Theorem \ref{thm:fibrecharacterisation}}

In order to provide a characterisation of the fibre of the symplectic covering, i.e.\ $\mathcal{K} = \tGo/\Go$, 
we start by characterising the connected components of the mapping group and of its relative component.

\begin{remark}[Pointed maps]
Let us recall a standard notion: given two pointed spaces $M$ and $N$, $C_\star(M,N)$ is the space of \emph{pointed} (continuous) maps which by definition map the base point of $M$ onto the base point of $N$. We also denote $G^M_\star \subset G^M$ the pointed subgroup of elements that are the identity at a given point over $M$.
\end{remark}

\begin{lemma}[Connected components]\label{lemma:pezzo2}
For $G$ any finite-dimensional connected Lie group, $\Sigma\simeq S \times I$ and $S\simeq S^{n-1}$ a sphere, with $S^0=\{\mathrm{pt}\}$,
\[
\pi_0(G^\Sigma) \simeq \pi_{n-1}(G)
\quad\mathrm{and}\quad
\pi_0(G^\Sigma_\rel) \simeq
\begin{dcases}
\pi_1(G) & \mathrm{if }n=1,2\\
\pi_1(G)  \oplus \pi_n(G) & \mathrm{if } n>2.
\end{dcases}
\]
In particular, 
\begin{enumerate}[label=(\roman*)]
    \item if $G$ is simply connected, then
    \begin{itemize}
        \item for $n=1,2$: $\pi_0(G^\Sigma) = \pi_0(G^\Sigma_\rel) = 1$, 
        \item for $n=3$: $\pi_0(G^\Sigma)=1$ and $\pi_0(G^\Sigma_\rel)\simeq\pi_3(G)\simeq\mathbb{Z}^s$ for $s\in\mathbb{N}$;\footnote{We have that $s>0$ if $G$ is semisimple, and $s=1$ if $G$ is simple.}
    \end{itemize}
    \item if $G\simeq U(1)^t\times \mathbb{R}^k$ (Abelian), then
    \begin{itemize}
        \item for $n=2$: $\pi_0(G^\Sigma)\simeq\pi_0(G^\Sigma_\rel) \simeq \mathbb{Z}^t$, 
        \item for $n\neq2$: $\pi_0(G^\Sigma)=1$ while $\pi_0(G^\Sigma_\rel)\simeq\mathbb{Z}^t$.
    \end{itemize}
\end{enumerate}

\end{lemma}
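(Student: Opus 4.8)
The plan is to translate everything into ordinary homotopy theory. Since $\Sigma$ and $\partial\Sigma$ are compact manifolds and $G$ is a finite-dimensional Lie group, $\pi_0$ of these smooth mapping groups coincides with the corresponding sets of \emph{continuous} homotopy classes (smooth and continuous homotopy classes agree), and these sets are groups under pointwise multiplication in $G$. I would therefore compute $\pi_0(G^\Sigma)=[\Sigma,G]$ and $\pi_0(G^\Sigma_\rel)=[(\Sigma,\partial\Sigma),(G,1)]$ as (abelian) groups of homotopy classes, the abelianness being automatic by Eckmann--Hilton since every space in sight is a topological group.

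First the full mapping group. As $I$ is contractible, the projection $\Sigma=S\times I\to S\simeq S^{n-1}$ is a homotopy equivalence, so $[\Sigma,G]\cong[S^{n-1},G]$. Because $G$ is a connected topological group it is a simple (indeed nilpotent) space, so $\pi_1(G)$ acts trivially on $\pi_{n-1}(G)$ and free homotopy classes coincide with based ones: $[S^{n-1},G]\cong\pi_{n-1}(G)$. (For $n=1$, with the convention $S^0=\{\mathrm{pt}\}$, this reads $[\mathrm{pt},G]=\pi_0(G)=1$.) This gives the first isomorphism $\pi_0(G^\Sigma)\simeq\pi_{n-1}(G)$; combined with $\pi_2(G)=0$ and the vanishing of the higher homotopy of a torus, it yields the $\pi_0(G^\Sigma)$ entries of (i)--(ii).

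For the relative group the key observation is that a map vanishing on all of $\partial\Sigma$ factors through the quotient, so $G^\Sigma_\rel$ is the group of based maps out of $\Sigma/\partial\Sigma$ and $\pi_0(G^\Sigma_\rel)=[\Sigma/\partial\Sigma,\,G]_*$. I would then identify this quotient. For $n\ge 2$, $S^{n-1}$ is a genuine positive-dimensional sphere: collapsing the two ends $S^{n-1}\times\{\pm1\}$ \emph{separately} produces the unreduced suspension $S(S^{n-1})=S^n$ with two cone points, and the further identification of those two points -- which is exactly what collapsing $\partial\Sigma$ to a \emph{single} point does -- yields $S^n$ with two points identified. Since identifying two points of a path-connected CW complex is homotopy equivalent to wedging on a circle, $\Sigma/\partial\Sigma\simeq S^n\vee S^1$, whence
\[
\pi_0(G^\Sigma_\rel)\cong[S^n\vee S^1,\,G]_*\cong\pi_n(G)\times\pi_1(G)\cong\pi_1(G)\oplus\pi_n(G).
\]
The degenerate case $n=1$ must be treated by hand, precisely because of the convention $S^0=\{\mathrm{pt}\}$: then $\Sigma=I$, $\partial\Sigma=\{\pm1\}$, and $\Sigma/\partial\Sigma=I/\partial I=S^1$, so $G^\Sigma_\rel$ consists of loops in $G$ based at $1$ and $\pi_0(G^\Sigma_\rel)=[S^1,G]_*=\pi_1(G)$. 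Finally $\pi_2(G)=0$ for every Lie group merges the $n=2$ instance of the displayed formula into the stated value $\pi_1(G)$, producing the case distinction of the lemma. (Equivalently, one can run the long exact sequence of the restriction fibration $G^\Sigma_\rel\to G^\Sigma\to G^{\partial\Sigma}$ onto its image, where the two boundary restrictions are homotopic to the retraction equivalence, so $\pi_0(G^\Sigma_\rel)$ appears as the cokernel of the diagonal $\pi_1(G^\Sigma)\to\pi_1(G^\Sigma)^2$.)

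The special cases then follow by inserting standard homotopy groups of Lie groups: $\pi_2(G)=0$ always; for $G$ compact simple $\pi_3(G)\cong\mathbb{Z}$ and for $G$ semisimple $\pi_3(G)\cong\mathbb{Z}^s$ with $s$ the number of simple factors (using $G\simeq K$, its maximal compact subgroup); and for $G\simeq U(1)^t\times\mathbb{R}^k$ one has $\pi_1(G)\cong\mathbb{Z}^t$ and $\pi_m(G)=0$ for $m\ge 2$, since the universal cover $\mathbb{R}^{t+k}$ is contractible. I expect the main obstacle to be the point-set topology of the relative step: correctly identifying $\Sigma/\partial\Sigma$, in particular noticing that the \emph{single}-point collapse of $\partial\Sigma$ forces the extra $S^1$-factor, and that the nonstandard convention $S^0=\{\mathrm{pt}\}$ makes $n=1$ genuinely exceptional rather than a formal special case of the $S^n\vee S^1$ formula.
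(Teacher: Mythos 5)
Your proof is correct, and its key step is genuinely different from the paper's. For the relative part, the paper works on the \emph{target} side: it identifies $G^\Sigma_\rel \simeq C_\star(S^1, G^S)$, so that $\pi_0(G^\Sigma_\rel)\simeq \pi_1(G^S)$, then splits $G^S \simeq G^S_\star \rtimes G$ to get $\pi_1(G^S)\simeq \pi_1(G^S_\star)\oplus\pi_1(G)$, and finally uses the adjunction $\pi_1(G^S_\star)\simeq \pi_0\bigl(C_\star(S^1\wedge S^{n-1},G)\bigr)$ together with $S^1\wedge S^{n-1}\simeq S^n$. You instead work on the \emph{source} side: $\pi_0(G^\Sigma_\rel)=[\Sigma/\partial\Sigma,G]_*$, the identification $\Sigma/\partial\Sigma \simeq S^n\vee S^1$ (unreduced suspension plus the standard ``two points identified $\simeq$ wedge on a circle'' equivalence), and the coproduct property of the wedge. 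Your route is more elementary and geometric: the origin of the $\pi_1(G)$ summand (the extra $S^1$ from gluing the two boundary spheres to a single point) and of the $\pi_n(G)$ summand (the suspension) are visible in one CW-level identification, with no iterated mapping spaces or semidirect-product splitting; it also makes transparent why the convention $S^0=\{\mathrm{pt}\}$ forces $n=1$ to be exceptional, which you correctly treat by hand. The paper's target-side argument, on the other hand, is the one that extends more directly to gauge groups of nontrivial bundles (the setting of the weak-homotopy-equivalence results it cites). One small caveat: your parenthetical alternative via the restriction fibration $G^\Sigma_\rel \to G^\Sigma \to G^{\partial\Sigma}$ only delivers $\pi_0(G^\Sigma_\rel)\simeq \mathrm{coker}\bigl(\Delta\colon\pi_1(G^S)\to\pi_1(G^S)^2\bigr)\simeq\pi_1(G^S)$, i.e.\ the paper's intermediate quantity; to finish along that route you would still need the smash-product computation of $\pi_1(G^S)$, so it is not by itself a complete second proof. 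Also note that both you and the paper implicitly use that smooth and continuous mapping groups have the same homotopy type, and that free and based classes into a topological group agree; you justify the latter by the H-space property, which is the precise reason (the paper's appeal to mere connectedness of $G$ is the slightly terser version of the same fact).
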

\begin{proof}
First, we (implicitly) replace all smooth mapping groups with continuous ones; this replacement does not affect their homotopy type in view of the ``approximation theorem'' \cite[Theorem V.2.11]{Neeb-locallyconvexgroups} (see references therein), or ``weak homotopy equivalence'' \cite[Theorem 3.2.13]{WockelPhD} (where the result is generalised to gauge groups associated to nontrivial bundles).

Let us begin by computing the zeroth homotopy of $G^\Sigma$. 
Notice that since $\Sigma \simeq I\times S$, any $g(u,x)\in G^\Sigma$ is homotopic to a constant function over $I$---e.g. via the homotopy $H(t)(u,x) = g(tu,x)$. Therefore if   $S\simeq S^{n-1}$,
\[
\pi_0(G^\Sigma) = \pi_0(G^S)  \equiv \pi_0( C(S,G) ) \simeq \pi_0(C_\star(S,G))  \equiv \pi_{n-1}(G),
\]
where the third step holds in virtue of $G$ being connected.

For the second part of the lemma, we start by noting that, since $g\in G^\Sigma_\rel$ iff $g^\i = g^\f =1 $, one can identify $G^\Sigma_\rel\simeq C_\star (S^1, G^S)$. It follows that 
\[
\pi_0(G^\Sigma_\rel) \simeq \pi_1(G^S).
\]

In homotopy theory, the natural definition of the zeroth sphere is as a pair of points $\{- 1,+1\}$. Since here we instead set $S^0 \doteq \{\mathrm{pt}\}$, we need to distinguish the case $n=1$ from the rest.

If $n=1$, $S=\{\mathrm{pt}\}$ and $G^S \simeq G$, which leads to 
\[
\pi_0(G^\Sigma_\rel) \simeq \pi_1(G^S) \simeq \pi_1(G) \qquad (n=1).
\]

If $n>1$,\footnote{One can adapt the proof for the case $n>1$ to the case $n=1$ by noting that, with \emph{our} definition of a zeroth sphere, while $S^1\wedge S^0 \not\simeq S^1$, one has $C_\star(S^0,G) \simeq \{ \mathrm{pt}\}$ and therefore $\pi_1(G^S_\star) = 1$.} recall that $G^S \simeq C(S,G)$, and thus introduce $G^S_\star \simeq C_\star(S,G)$ after choosing a base point over $S$---say the North pole if $S\simeq S^{n-1}$. Hence, we note that from $G^S \simeq G^S_\star \rtimes G$, it follows that, for any connected Lie group $G$,
\[
\pi_1(G^S) \simeq \pi_1(G^S_\star) \oplus \pi_1(G).
\]
We thus need to compute $\pi_1(G^S_\star)$.
Since $\pi_1(G^S_\star) \simeq \pi_0( C_\star( S^1,  C_\star(S, G)) \simeq \pi_0(  C_\star(S^1 \wedge S, G))$ where $S^1\wedge S$ denotes the smash product between $S^1$ and $S$. Then, if $S\simeq S^{n-1}$, we have that $S^1 \wedge S^{n-1} \simeq S^n$ and therefore\[
S\simeq S^{n-1} \implies \pi_1(G^S_\star) \simeq \pi_n( G). 
\]
Thus, putting everything together, we conclude:
\[
\pi_0(G^\Sigma_\rel)\simeq \pi_1(G^S) \simeq \pi_n(G)\oplus\pi_1(G) \qquad (n>1).
\]
Now, let us consider the following facts (see e.g. \cite{MIMURA}): 
\begin{enumerate}
    \item any connected, finite dimensional, Lie group $G$ is diffeomorphic to $H\times \mathbb{R}^k$ for some $k\in\mathbb{N}$ and $H\subset G$ a compact connected Lie subgroup; 
    \item any compact, connected, Lie group $H$ is isomorphic to a quotient $H\simeq\tilde H/Z$ of a group $\tilde H \simeq K_1\times \cdots \times K_s \times U(1)^t$---where $t\in\mathbb{N}$ and the $K_i$'s are compact, connected, simply connected, simple Lie groups---by a finite central subgroup $Z\subset \tilde H$;
    \item for any finite dimensional Lie group $G$, $\pi_2(G)=1$;
    \item for any connected, compact, simple, Lie group $\pi_3(K) \simeq \mathbb{Z}$.
\end{enumerate}
From (1) it follows that $\pi_m(G) \simeq \pi_m(H)$ for all $m$. Thus, from (2) we deduce:
\[
1 \to \mathbb{Z}^t \to \pi_1(G) \to \pi_0(Z) \to 1.
\]
and
\[
\pi_m(G) \simeq \pi_m(K_1) \oplus \cdots \oplus \pi_m(K_s) \quad \mathrm{for }m>1.
\]
Finally, from (3--4), we deduce the particular cases (i--ii) of the lemma's statement. 
\end{proof}

\begin{remark}[Strategy of proof: $\G_\rel$ vs. $G^\Sigma_\rel$]
We can now address the proof of Theorem \ref{thm:fibrecharacterisation}. In Theorem \ref{thm:nonAb-constr-red} we have seen that $\uCo$ is a symplectic covering space of the extended AS phase space, with fibre $\mathcal{K}$ characterised in Proposition \ref{prop:discretequotientgroup} as the group of components of $\G_\rel \simeq G^\Sigma_{0,\rel}$ (resp. $G\cdot \G_\rel \simeq G\cdot G^\Sigma_{0,\rel}$ in the Abelian case). Since $\mathcal{K}$ is discrete, to see whether it is trivial or not it is enough to observe that its elements are in 1-to-1 correspondence with the connected components of $G^\Sigma_{0,\rel}$ i.e.\ with the elements of $\pi_0(G^\Sigma_{0,\rel})$. Computing the homotopy type of $G^\Sigma_\rel$ is, however, more practical in virtue of the fact that this is a mapping group---but of course the two computations coincide whenever $G^\Sigma$ is connected, since obviously $G^\Sigma = G^\Sigma_0$ implies $G^\Sigma_\rel = G^\Sigma_{0,\rel}$. Lemma \ref{lemma:pezzo2} then characterises the connectedness of $G^\Sigma$ and $G^\Sigma_{\rel}$ in terms of the homotopy groups of the finite dimensional group $G$, and thus allows us to deduce information about $\mathcal{K}$ in many cases of interest. 
\end{remark}

\begin{proof}[Proof of Theorem \ref{thm:fibrecharacterisation}]  
From the proof of Theorem \ref{thm:nonAb-constr-red} it emerges that the fibre of the covering ${p_V}:\uCo \to \hcalA \times T^*G^S_0$ is given by $\mathcal{K} \simeq \G_\rel/\Go$ in the semisiple case, and by $\mathcal{K} \simeq (G\cdot \G_\rel)/\Go$ in the Abelian case.

An immediate corollary of Proposition \ref{prop:discretequotientgroup} is that $\mathcal{K}$ is the (discrete) group of components of $G^\Sigma_{0,\rel}$ and that $G_{0,\rel,0}^\Sigma = G^\Sigma_{\rel,0}$, so that---in both the semisimple and Abelian cases---we can write:
\[
\mathcal{K} \simeq G^\Sigma_{0,\rel}/G^\Sigma_{\rel,0}.
\]

Next, introduce the (discrete) group of components of $G^\Sigma_\rel$, i.e.\ $\mathcal{K}' \doteq G^\Sigma_\rel/G^\Sigma_{\rel,0}$. Since $G^\Sigma_{0,\rel}\subset G^\Sigma_{\rel}$ is a normal subgroup (cf. Remark \ref{rmk:normalsubgrps}), we deduce that $\mathcal{K}\subset \mathcal{K}'$ is a normal subgroup, and thus that
\[
\pi_0(\mathcal{K})\subset \pi_0(\mathcal{K}') \simeq \pi_0(G^\Sigma_\rel).
\]

If $G^\Sigma$ is connected, i.e.\ if $G^\Sigma = G^\Sigma_0$, one has that $\mathcal{K}$ is isomorphic to $\mathcal{K}'$. In other words:
\[
\pi_0(G^\Sigma) = 1 \implies \mathcal{K} \simeq \mathcal{K}'.
\]

Therefore, from the last two equations we see that we can deduce properties on the nature of the covering fibre $\mathcal{K}$ by studying the zeroth homotopy type of $G^\Sigma$ and $G^\Sigma_\rel$. 
Lemma \ref{lemma:pezzo2} characterises these sets in many cases of interests, that yield the case analysis given in the statement of the theorem---in particular, the general statement (iv).

In fact,
\begin{enumerate}[label=(\roman*)]
\item when $G$ is simply connected and $n=1$ or $2$, we have that: $\pi_0(G^\Sigma)\simeq \pi_{n-1}(G) =1$ which implies $\mathcal{K}\simeq\mathcal{K}'\simeq \pi_1(G)=1$. Since the covering fibre $\mathcal{K}$ is trivial, we also conclude that the map ${p_V}\colon \uCo \to \XAS$ is a global symplectomorphism;
\item when $G$ is simply connected and $n=3$, we have again that $\pi_0(G^\Sigma)\simeq \pi_{n-1}(G) = \pi_{2}(G) = 1$, and thus $\mathcal{K}\simeq \mathcal{K}'\simeq \pi_0(G^\Sigma_\rel)\simeq \pi_3(G)\simeq \mathbb{Z}^s$ for some $s$ ($s>0$ if $G$ is semisimple, and $s=1$ if $G$ if simple);
\item finally, when $G\simeq U(1)^t\times \mathbb{R}^k$ (Abelian) and $n\not=2$, we have that, in virtue of Lemma \ref{lemma:pezzo2}, $\pi(G^\Sigma)=\pi_{n-1\not=1}(G) = \pi_{n-1\neq 1} (U(1))^{\oplus t} = 1$, which tells us that $\mathcal{K}\simeq\mathcal{K}'\simeq \pi_1(G) \simeq \mathbb{Z}^t$.
\qedhere
\end{enumerate}

\end{proof}


\section{Symplectic reduction of null YM theory: second stage}\label{sec:reduction-second}

\noindent\fbox{%
    \parbox{\textwidth}{\small\center
        The reader interested in applications to soft symmetries can skip this section at first.
    }%
}

\medskip

From the general theory of reduction by stages we know that $\uCo$ carries a Hamiltonian action of $\Gred$ with momentum map $\uh\colon \uCo \to \fGred^*$.
In the previous section we have built a local\footnote{``Local'' here refers to field-space, and not to $\Sigma$.} model for the constraint-reduced phase space $(\uCo,\uomegao)$ in terms of the extended AS phase space $\XAS\doteq\hat{\mathcal{A}}\times T^*G_0^S$.
In particular $(\XAS,\omeAS)$ is a symplectic cover of $(\uCo,\uomegao)$ with fibre $\mathcal{K}$.
We will use the model Hamiltonian action $G^{\pp\Sigma}_0\circlearrowright(\XAS,\omeAS)$ to infer the structure of the second-stage reduction of $\uCo$ by $\Gred$.

\subsection{Strategy}

Our first goal is to show that the Hamiltonian action of $\Gred$ on $(\uCo,\uomegao)$ is related to the Hamiltonian action of $G^{\pp\Sigma}_0$ on $(\XAS,\omeAS)$ by means of the following diagram (cf. Equation \eqref{e:coveringdiagram}):
\[
\xymatrix{
(\C, \G, \iota_\C^*h) \ar[r]^-{\pi_\circ} \ar@/_1.5pc/[rr]_{\cV} & (\uCo, \Gred, \uh) \ar@{-->}[r]^-{p_V} & (\XAS, G^{\pp\Sigma}_0, \uhAS)
}
\]

\begin{remark}
While $(\XAS,\omeAS)$ admits a natural action by $G^{\pp\Sigma}_0$ for any $G$, when $G$ is Abelian the subgroup $G$ of constant transformations acts trivially, whence the $G^{\pp\Sigma}_0$-action effectively reduces to that of $G^{\pp\Sigma}_0/G$.  
For simplicity of exposition, when working over $\XAS$ we will only refer to the action of $G^{\pp\Sigma}_0$ and simply observe that it appropriately descends to $G^{\pp\Sigma}_0/G$ when $G$ is Abelian. (See items \ref{item6-Gred} and \ref{item6-Abelian} in the list below.)
\end{remark}

To prove that the above diagram commutes, we will proceed by steps, showing that:
\begin{enumerate}
    \item $\mathcal{K}\subset \Gred$ is a (discrete) normal subgroup, and one has (Proposition \ref{prop:Gred})\label{item6-Gred}
    \[
    \Gred/\mathcal{K} \simeq 
    \begin{dcases}
        G^{\pp\Sigma}_0 & \text{if $G$ semisimple},\\
        G^{\pp\Sigma}_0/G & \text{if $G$ Abelian};
    \end{dcases}
    \]
    \item the flux map $\uh\colon \uCo \to \fGred^*$ is $\mathcal{K}$-invariant (Lemma \ref{lem:flux-K-invariance});\label{item6-Kinv}
    \item $G^{\pp\Sigma}_0$ admits a Hamiltonian action on $(\XAS,\omeAS)$ with momentum map $\uhAS$ such that (Proposition \ref{prop:red-flux-map}, see also Equation \eqref{e:fluxdiagram})\label{item6-uhAS}
    \[
    \iota_\C^*h = \cV^*\uhAS;
    \]
    \item in the Abelian case $G\hookrightarrow G^{\pp\Sigma}_0$ acts trivially on $\XAS$ and therefore $\uhAS$ automatically defines a momentum map for $\Gred/\mathcal{K}$ in both the semisimple and Abelian cases (Equations \ref{e:reducedfluxmap} and \ref{eq:huAS-Ab}).\label{item6-Abelian}  
\end{enumerate}
The proof of the facts (1--4) can be found in Section \ref{sec:pf1-4} below. From these, recalling that $p_V:\uCo \to \XAS$ is a smooth $\mathcal{K}$-covering, we see that the flux momentum map $\uh$ on $(\uCo,\uomegao)$ descends along $p_V$  to a momentum map $\uhAS$ on $(\XAS,\omeAS)$.

The following commutative diagram summarises the relation between the various flux momentum maps in the semisimple case:
\begin{equation}\label{e:fluxdiagram}
\xymatrix{
\ar@/_1.5pc/[dd]_{\cV} \C \ar[d]^{\pi_\circ}  \ar[rr]^{\iota^*_\C h}&& \fG^*_{\loc} \\
\uCo  \ar@{-->}[d]^-{{p_V}} \ar[rr]^{\uh}&& \fGred^* \ar[u] \\
\XAS \ar[rr]^-{\uhAS} && (\fg^{\pp\Sigma})^* \ar[u]^{\simeq} \ar@/_1.5pc/[uu]
}    
\end{equation}
(When $G$ is Abelian, replace $(\fg^{\pp\Sigma})^* \leadsto (\fg^{\pp\Sigma}/\fg)^* \simeq \Ann(\fg,(\fg^{\pp\Sigma})^*) \subset (\fg^{\pp\Sigma})^*$. Note that $\Ann(\fg,(\fg^{\pp\Sigma})^*)\simeq \fGred^*$ still holds.)

After having clarified the relationship between the Hamiltonian structures on $(\uCo,\uomegao,\uh)$ and $(\XAS,\omeAS,\uhAS)$, we turn our attention to the study of the  symplectic leaves of the second-stage reduced  Poisson space ${\uuC} \doteq \uCo/\Gred = \bigsqcup_{\mathcal{O}_f\subset \F} \uuS_{[f]}$, 
\[
\uuS_{[f]} \doteq \uh^{-1}(\mathcal{O}_f)/\Gred,
\]
called flux superselection sectors. Once again, our strategy will be to model them on the (symplectic) AS sectors (Definition \ref{def:ASSectors}) labelled by the same flux orbit $\mathcal{O}_f\subset \F=\Im(\uh)\simeq \Im(\uhAS)$ (Proposition \ref{prop:red-flux-map}), so that we can use the extended Ashtekar--Streubel phase space to model second stage reduction:
\[
\uuS^\AS_{[f]} \doteq \uhAS^{-1}(\mathcal{O}_f)/G^{\pp\Sigma}_0, \qquad \uuS_{[f]} \simeq\uuS^\AS_{[f]}, \qquad {\uuC} \simeq \XAS/ G^{\pp\Sigma}_0.
\]

\subsection{Proof of the statements (1--4)}\label{sec:pf1-4}

First, in the next proposition and lemma, we clarify the relationship of $\mathcal{K}$ to $\Gred$ and $\uh$:
\begin{proposition}\label{prop:Gred}
$\Gred$ is a connected central extension by the discrete group $\mathcal{K}$ of $G^{\pp\Sigma}_0$ if $G$ is semisimple, or of $G^{\pp\Sigma}_0/G$ if $G$ is Abelian.
In particular, there exists a short exact sequence of groups:
\[
\begin{dcases}
1 \to\mathcal{K} \to \Gred \to G^{\pp\Sigma}_0\to 1 & G \ \mathrm{semisimple}\\
1 \to\mathcal{K} \to \Gred \to G^{\pp\Sigma}_0/G\to 1 & G \ \mathrm{Abelian}
\end{dcases}
\]
\end{proposition}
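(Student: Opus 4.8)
The plan is to realise both short exact sequences as the descent, along the quotient by $\Go$, of a single exact sequence governed by the boundary-restriction homomorphism. Restriction of maps to the boundary, $r\colon g \mapsto g\vert_{\pp\Sigma}$, is a group homomorphism $G^\Sigma \to G^{\pp\Sigma}$; since $\G \simeq G^\Sigma_0$ and the restriction of a path to the identity in $G^\Sigma$ yields a path to the identity in $G^{\pp\Sigma}$, the image of $\G$ under $r$ lies in the identity component $G^{\pp\Sigma}_0$. First I would establish that $r\colon \G \to G^{\pp\Sigma}_0$ is surjective. This is where the product structure $\Sigma \simeq I\times S$, $\pp\Sigma = S_\i \sqcup S_\f$, enters: given $(g_\i,g_\f)\in G^S_0\times G^S_0 \simeq G^{\pp\Sigma}_0$, one chooses smooth homotopies $h_{\i/\f}\colon[0,1]\to G^S$ from the identity to $g_{\i/\f}$ (which exist precisely because $g_{\i/\f}$ lie in the identity component $G^S_0$), glues them along the null interval $I$ with the identity inserted on a central subinterval to produce an extension $g\in G^\Sigma$ with $g\vert_{S_{\i/\f}} = g_{\i/\f}$, and then scales the two homotopies simultaneously down to the constant identity to exhibit $g\in G^\Sigma_0 \simeq \G$.

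Next I would compute the kernel of $r$. By definition $\ker r = \{g\in\G : g\vert_{\pp\Sigma}=1\} = G^\Sigma_0\cap G^\Sigma_\rel = G^\Sigma_{0,\rel} \simeq \tGo$, so in the semisimple case $r$ fits into the exact sequence $1\to\tGo\to\G\xrightarrow{r} G^{\pp\Sigma}_0\to 1$. By item~\ref{1.ii} of Proposition~\ref{prop:discretequotientgroup} the constraint gauge group $\Go$ is exactly the identity component $(\tGo)_0$, hence a normal subgroup contained in $\ker r$; quotienting the sequence by $\Go$ therefore produces a homomorphism $\Gred=\G/\Go \to G^{\pp\Sigma}_0$ whose kernel is $\tGo/\Go = \mathcal{K}$, yielding the desired sequence $1\to\mathcal{K}\to\Gred\to G^{\pp\Sigma}_0\to 1$.

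For the Abelian case I would run the same argument after composing $r$ with the projection $G^{\pp\Sigma}_0 \to G^{\pp\Sigma}_0/G$, where $G\hookrightarrow G^{\pp\Sigma}_0$ is the diagonal of constant transformations $k\mapsto(k,k)$. The composite $\bar r\colon \G\to G^{\pp\Sigma}_0/G$ is again surjective (being a composite of two surjections), and its kernel consists of those $g\in\G$ whose boundary value is a single constant $k\in G$ on both components; writing $g = k\cdot(k^{-1}g)$ with $(k^{-1}g)\vert_{\pp\Sigma}=1$ and using that constants lie in $G^\Sigma_0$ (as $G=G_0$ is connected) identifies $\ker\bar r = G\cdot\G_\rel$. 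Since item~\ref{2.ii} of Proposition~\ref{prop:discretequotientgroup} gives $\Go=(G\cdot\G_\rel)_0$, the same descent by $\Go$ produces $1\to\mathcal{K}\to\Gred\to G^{\pp\Sigma}_0/G\to 1$ with $\mathcal{K}=(G\cdot\G_\rel)/\Go$.

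Finally I would record the two structural claims. Connectedness of $\Gred$ is immediate, being a quotient of the connected group $\G\simeq G^\Sigma_0$. Centrality of $\mathcal{K}$ follows from the standard fact that a discrete normal subgroup of a connected topological group is central: for fixed $[k]\in\mathcal{K}$ the conjugation map $\Gred\to\mathcal{K}$, $[g]\mapsto[gkg^{-1}]$, is continuous into a discrete group and hence constant on the connected $\Gred$, with value $[k]$ at the identity (in the Abelian case centrality is of course automatic since $\G$ is abelian). The one genuinely delicate point is the surjectivity onto the \emph{identity component} $G^{\pp\Sigma}_0$ together with simultaneous control of the connected component of the bulk extension; everything else is a formal consequence of the already-established identification of $\Go$ with the identity component of the relevant kernel in Proposition~\ref{prop:discretequotientgroup}.
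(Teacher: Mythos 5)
Your proposal is correct, and for the exact sequences it follows essentially the same route as the paper: both arguments rest on the boundary-restriction homomorphism, on its surjectivity onto $G^{\pp\Sigma}_0$ obtained by extending boundary homotopies into the bulk along the $u$-direction, on the identification of its kernel with $\tGo$ (resp.\ $G\cdot\G_\rel$ in the Abelian case), and on Proposition \ref{prop:discretequotientgroup} to recognise $\Go$ as the identity component of that kernel before descending to $\Gred=\G/\Go$. The paper merely phrases this as a chain of isomorphisms $\Gred/\mathcal{K}\simeq\G/\tGo\simeq\iota^*_{\pp\Sigma}\G\simeq G^S_0\times G^S_0\simeq G^{\pp\Sigma}_0$ rather than as a descended surjection; the content is identical.

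Where you genuinely diverge is centrality of $\mathcal{K}$. The paper proves it by hand: for $h\in\G_\rel$ and $g\in\G$ it exhibits the explicit path $f(t)=h\,g(t)\,h^{-1}g(t)^{-1}$, which lies in $\G_\rel$ by normality and connects $1$ to the commutator, so every commutator lands in $(\G_\rel)_0=\Go$. You instead invoke the standard fact that a discrete normal subgroup of a connected topological group is central. That is valid, but it silently requires two things in this Fr\'echet setting: that $\Gred=\G/\Go$ is a topological group with continuous conjugation (true for quotients by normal subgroups), and that $\mathcal{K}$ is discrete in the \emph{subspace} topology of $\Gred$, not merely as the quotient $\G_\rel/\Go$ with its own quotient topology. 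The latter does hold---openness of $\Go$ in $\G_\rel$ lets one build a $\Go$-saturated open neighbourhood of $1$ in $\G$ meeting $\G_\rel$ exactly in $\Go$---but it deserves a line. Your route buys brevity and an appeal to a general principle; the paper's commutator-homotopy argument works entirely upstairs in $\G$ and sidesteps all point-set questions about the quotient.
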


\begin{proof}
Connectedness of $\Gred \doteq \G/\Go$ is a consequence of that of $\G$, while normality of $\mathcal{K}\doteq \G_\rel/\Go$ in $\Gred\doteq \G/\Go$, follows from the normality of $\G_\rel\subset\G$ (Remark \ref{rmk:Grel-normal}).

To prove the centrality of $\mathcal{K}$ in $\Gred$ we need to show that, for every $[h]\in \mathcal{K}$ and every $[g]\in\Gred$,
\[
[h][g][h]^{-1}=[g]. 
\]
Since $\Go$ is normal we have $[h][g_\circ] = [hg_\circ]$, so that the above condition becomes $[h][g][h]^{-1}[g]^{-1} = [h g h^{-1} g^{-1}] = [1]$, that is to say
\[
\mathcal{K}=\G_\rel/\Go\ \text{central in}\ \Gred = \G/\Go \iff
 \forall (h,g)\in \G_\rel\times \G,
\quad h g h^{-1} g^{-1} \in \Go 
\]
Now, since $h\in \G_\rel$ (a normal subgroup of $\G$), we see that $f\doteq h g h^{-1} g^{-1}\in \G_\rel$ as well, so that for it to be in $\Go \simeq(\G_\rel)_0$ we are left to prove that $f$ is connected to the identity (cf. Proposition \ref{prop:discretequotientgroup}). However, $g\in \G\simeq G^\Sigma_0$ is connected to the identity, and thus there exists a homotopy $g(t)$ with $g(0) = 1$ and $g(1)=g$. Then, we can construct the homotopy $f(t) = h g(t) h^{-1} g(t)^{-1}$ and easily check that $f(0)=1$ and $f(1)=f=h g h^{-1} g^{-1}$, proving the statement.

Finally, to prove the exact sequences given in the statement, we will focus on the semisimple case, where we will show that 
\[
\Gred/\mathcal{K}\ \stackrel{(1)}{\simeq}\  \G/\tGo \ \stackrel{(2)}{\simeq} \iota_{\pp\Sigma}^* \G \ \stackrel{(3)}{\simeq}\  G^S_0 \times G^S_0 \ \stackrel{(4)}{\simeq}\  G^{\pp\Sigma}_0.
\]
(In the Abelian case, one can prove the following sequence of isomorphism in a similar manner:
\[
\Gred/\mathcal{K} \ \simeq \ \G/(G\cdot\tGo) \ \simeq (\iota_{\pp\Sigma}^* \G) /G \ \simeq \ G^{\pp\Sigma}_0/G. 
\]
We omit the explicit proof in this case.)

\begin{enumerate}
\item[$(1)$] This follows from Proposition \ref{prop:discretequotientgroup}: $\G/\tGo \simeq (\G/\Go) / (\tGo/\Go) \doteq \Gred/\mathcal{K}$.
\item[$(2)$] This follows from the definition of $\G_\rel = \{ g\in \G \ : \ g\vert_{\pp\Sigma} = 1\} \simeq G^\Sigma_0 \cap G^\Sigma_\rel$. 
The morphism $\iota_{\pp\Sigma}^*\colon \G \to \iota^*_{\pp\Sigma}\G$ is surjective by construction and $\G_\rel=\ker(\iota^*_{\pp\Sigma})$. Indeed, for any two $g,g'\in G^\Sigma_0\simeq \G$ one has that: $\iota_{\pp\Sigma}^*g = \iota_{\pp\Sigma}^*g' \implies  \iota^*_{\pp\Sigma}(g'g^{-1}) = 1$, i.e.\ $g'g^{-1}\in G^\Sigma_0 \cap G^\Sigma_\rel \simeq \tGo$, and $\ker(\iota^*_{\pp\Sigma}) \subset \G_\rel$. The opposite inclusion is obvious, and we conclude that $\iota^*_{\pp\Sigma}$ descends to a bijection $\G/\G_\rel\to \iota^*_{\pp\Sigma}\G$.
\item[$(3)$] This is a consequence of $\G\simeq G^\Sigma_0$ and the fact that that $\iota_{\pp\Sigma}^* G^\Sigma_0$ and  $G^S_0 \times G^S_0$ are included in each other. 
Indeed, let $g\in G^\Sigma_0$ and thus consider a homotopy  $H(t) \in G^\Sigma_0 $ between $g$ and the identity; then it follows that $\iota_{S_\i}^* H(t) \in G^S$ is a homotopy between $\iota_{S_\i}^* g \in G^S$ and the identity, and therefore that $\iota_{\pp\Sigma}^* G^\Sigma_0 \subset G^S_0 \times G^S_0$. 
For the opposite inclusion, consider $(g_\i,g_\f)\in G^S_0 \times G^S_0$ and $H_{\i/\f}(t)$ the respective homotopies with the identity in $G^S_0$. Then, define $\tilde{g}\in C(I,G^S)$ as $\tilde{g}(u) \doteq \Theta(-u)H_\i(-u) + \Theta(u) H_\f(u)$, a continuous function $I\to G^S$ with the property that $\iota_{\pp\Sigma}^*\tilde{g} = (g_\i,g_\f)$.
Using standard arguments in homotopy theory, we can construct a smooth homotopy $g\in C^\infty(I,G^S) \simeq G^\Sigma$ out of the continuous homotopy $\tilde{g}$. (See Footnote \ref{fnt:smoothhomotopy}.)
It is then easy to see that such a $g$ is connected to the identity in $G^\Sigma$ (retract to $S_\i$ and then to the identity) and thus prove $\iota_{\pp\Sigma}^*G^\Sigma_0 \supset G^S_0 \times G^S_0$. 
\item[$(4)$] Obvious. \qedhere
\end{enumerate}
\end{proof}

\begin{lemma}\label{lem:flux-K-invariance}
 The flux map $\uh\colon \uCo \to \fG^*_\loc$ is $\mathcal{K}$-invariant. 
\end{lemma}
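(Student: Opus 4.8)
The plan is to reduce the asserted $\mathcal{K}$-invariance of the reduced flux map $\uh$ to an ordinary invariance of the \emph{unreduced} flux map $h$ under the relative gauge group $\G_\rel$. Recall that $\uh$ is characterised by $\pi_\circ^*\uh=\iota_\C^*h$, and that by Proposition \ref{prop:discretequotientgroup} the discrete group $\mathcal{K}$ acts on $\uCo=\C/\Go$ as the residual action on the quotient induced by the $\G_\rel$-action on $\C$ (in the Abelian case by the $G\cdot\G_\rel$-action, whose constant factor $G$ acts trivially on $\X$ by Remark \ref{rmk:constantgaugetransf}); this descent to the quotient is well posed because $\Go$ is normal in $\G$ (Proposition \ref{prop:Go-normal}). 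Thus, for $[g]\in\mathcal{K}$ with representative $g\in\G_\rel$ and any $[\phi]\in\uCo$ with representative $\phi\in\C$, I would use $\pi_\circ(g\cdot\phi)=[g]\cdot[\phi]$ to write
\[
\uh([g]\cdot[\phi])=\uh\big(\pi_\circ(g\cdot\phi)\big)=(\iota_\C^*h)(g\cdot\phi)=h(g\cdot\phi),
\]
so that the whole statement reduces to proving $h(g\cdot\phi)=h(\phi)$ for every $\phi\in\C$ and every $g\in\G_\rel$.

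The decisive step is that $h$ is a pure boundary quantity. From the explicit expression \eqref{eq:fluxmap}, for $\phi=(A,E)$ one has $\langle h(A,E),\xi\rangle=-\int_S\btr(E^\f\xi^\f-E^\i\xi^\i)$, which sees the configuration only through the boundary values $E^\i,E^\f$. Under the gauge action of Definition \ref{def:gauge-action} the electric field transforms as $E\mapsto g^{-1}Eg$, so that $E^{\i}\mapsto(g^{\i})^{-1}E^{\i}g^{\i}$ and $E^{\f}\mapsto(g^{\f})^{-1}E^{\f}g^{\f}$; but for $g\in\G_\rel$ one has $g|_{\pp\Sigma}=1$, i.e.\ $g^\i=g^\f=1$, whence the boundary data are left untouched and $h(g\cdot(A,E))=h(A,E)$. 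Inserting this into the displayed identity yields $\uh([g]\cdot[\phi])=\uh([\phi])$, which is exactly the $\mathcal{K}$-invariance claimed.

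As an independent check, the same conclusion follows abstractly from the coadjoint equivariance of $\uh$ together with Proposition \ref{prop:Gred}: since $\Gred$ is connected and $\mathcal{K}$ is central in it, conjugation by any $k\in\mathcal{K}$ is trivial on $\Gred$, hence $\Ad_k=\mathrm{id}$ on $\fGred$ and $\Ad^*_k=\mathrm{id}$ on $\fGred^*$, and equivariance then forces $\uh$ to be $\mathcal{K}$-invariant. I do not anticipate a genuine obstacle here: the mathematical content is merely the boundary-locality of $h$ combined with $g|_{\pp\Sigma}=1$, and the only points needing care are bookkeeping, namely verifying that the residual $\mathcal{K}$-action on $\uCo$ is precisely the one lifting to $\G_\rel$ (resp.\ $G\cdot\G_\rel$) on $\C$ and that the constant factor in the Abelian case is harmless.
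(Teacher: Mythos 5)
Your proof is correct and takes essentially the same route as the paper: both arguments reduce the claim to the invariance of the unreduced flux map $h$ under all of $\G_\rel$ (not merely $\Go$), which holds because $h$ only sees the boundary data $E^{\i},E^{\f}$ while elements of $\G_\rel$ restrict to the identity on $\pp\Sigma$, and then descend this invariance to the quotient $\uCo=\C/\Go$ (the paper phrases the descent via injectivity of $\pi_\circ^*$, you phrase it pointwise). Your supplementary check via centrality of $\mathcal{K}$ in $\Gred$ is also sound and mirrors the paper's own equivariance chain $h(\phi\triangleleft\tilde g)=\Ad^*(\tilde g)\cdot h(\phi)=h(\phi)$.
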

\begin{proof}
This follows from the fact that the flux map $\h\colon\X\to \fG^*_\loc$ is manifestly invariant under the whole of $\G_\rel$, and not just its identity component $\Go$. 
Hence, for $k=[\tilde{g}]\in\mathcal{K}=\G_\rel/\Go$ and $\underline{\phi} = [\phi]\in\uCo = \C/\Go$, if we denote $(k\cdot \uh)(\underline{\phi})\doteq \uh(\underline{\phi}\triangleleft k)$ and $(\tilde g\cdot h)(\phi)\doteq h({\phi}\triangleleft \tilde g) = \Ad^*(\tilde{g})\cdot h(\phi) = h(\phi)$, we have
\[
\pi_\circ^*(k\cdot \uh) = \tilde{g}\cdot \h = \h = \pi^*_\circ \uh,
\]
and we conclude in virtue of the injectivity of $\pi_\circ^*$.
\end{proof}

Next, we study the action of $G^{\pp\Sigma}_0$ on $(\XAS,\omeAS)$:

\begin{lemma}\label{lem:residualAction}
The dressing map $\cV\colon \C\to \XAS$ is equivariant with respect to the (right) action of $\G$ on $\X$ and the following (right) action of $G^{\pp\Sigma}_0\simeq G^S_0\times G^S_0$ on $\XAS\doteq\hcalA\times T^*G^S_0$:
\[
\begin{pmatrix}
\ASa\\ \Lambda \\ \ASe
\end{pmatrix} \triangleleft (g_\i,g_\f)
=
\begin{pmatrix}
g_\f^{-1}\ASa g_\f + g_\f^{-1} D g_\f\\
g_\i^{-1} \Lambda g_\f\\
\Ad(g_\f^{-1}) \cdot \ASe
\end{pmatrix}.
\]
Infinitesimally, denoting $(\xi_\i,\xi_\f)\in \fg^S\times \fg^S \simeq\fg^{\pp\Sigma}$, this action becomes
\[
\underline\varrho: \fg^{\pp\Sigma} \to \mathfrak{X}(\hcalA \times T^*G^S_0),
\qquad
\underline\varrho(\xi_\i,\xi_\f) 
\begin{pmatrix}
\ASa\, \\
\Lambda\\
\ASe
\end{pmatrix}
=
\begin{pmatrix}
{\cD}\xi_\f\\
-\xi_\i\Lambda + \Lambda \xi_\f \\
-\ad(\xi_\f)\cdot \ASe 
\end{pmatrix},
\]
where $\cD = D + [\ASa, \cdot]$ and $D$ is the de Rham differential on $S$.

\noindent Moreover, in the Abelian case, the above action reduces to an action of $G_0^{\pp\Sigma}/G$.
\end{lemma}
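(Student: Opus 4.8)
The plan is to read off the finite $G^{\pp\Sigma}_0$-action directly from the way the three components of the dressing map transform under the gauge action $A \mapsto A^g$, $E_\i \mapsto \Ad(g^\i)^{-1}\cdot E_\i$, using the homogeneous transformation laws of the dressing field already established in Lemma \ref{lemma:dressing-properties} and Remark \ref{rmk:bilocal}. The conceptual heart of the statement is that each of these transformations depends on $g\in\G\simeq G^\Sigma_0$ only through its boundary values $(g^\i,g^\f) = \iota_{\pp\Sigma}^*g$, so the induced transformation on $\XAS$ factors through $\iota^*_{\pp\Sigma}\G \simeq G^S_0\times G^S_0 \simeq G^{\pp\Sigma}_0$ (Proposition \ref{prop:Gred}).

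First I would record the three transformation laws. For the connection component, combining $\dfU[A^g] = g^{-1}\dfU(A)g^\f$ (Lemma \ref{lemma:dressing-properties}\ref{lemma:dressing-properties-iii}) with $\ASa(A) = A^{\dfU(A)}$ and the composition rule $(A^g)^h = A^{gh}$ gives
\[
\ASa[A^g] = (A^g)^{\dfU[A^g]} = A^{\dfU(A)\, g^\f} = \ASa(A)^{g^\f} = (g^\f)^{-1}\ASa(A)\,g^\f + (g^\f)^{-1} D g^\f ,
\]
with $D$ the de Rham differential on $S$; the $\Lambda$- and $\ASe$-laws $\Lambda[A^g] = (g^\i)^{-1}\Lambda(A)g^\f$ and $\ASe(A^g,E_\i^g) = \Ad(g^\f)^{-1}\cdot\ASe(A,E_\i)$ are respectively Equation \eqref{e:Vbilocal} and the last computation of Remark \ref{rmk:bilocal}. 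Setting $(g_\i,g_\f) \doteq (g^\i,g^\f)$, these three identities assemble precisely into $\cV(A^g,E_\i^g) = \cV(A,E_\i)\triangleleft(g_\i,g_\f)$, which is the asserted equivariance. A short check that the right-hand side respects composition — using that $\ASa\mapsto\ASa^{g_\f}$ is a right action, that $\Lambda\mapsto g_\i^{-1}\Lambda g_\f$ composes on both sides, and that $\Ad(h_\f^{-1})\Ad(g_\f^{-1}) = \Ad((g_\f h_\f)^{-1})$ — confirms the formula defines a genuine right action of $G^S_0\times G^S_0$.

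The step deserving the most care is the well-definedness of this descent: a priori the gauge action depends on $g$ over all of $\Sigma$, and the content of the lemma is that on $\XAS$ it collapses to a dependence on $\iota_{\pp\Sigma}^*g$ alone. This is guaranteed by the relative-gauge invariance of $\cV$, which has $\G_\rel$ as fibre (Proposition \ref{prop:step1}): any $g_\circ\in\tGo$ satisfies $g_\circ^\i = g_\circ^\f = 1$ and therefore acts trivially through the formulas above, so the transformation only sees the boundary class. Differentiating the finite action along $g_{\i/\f} = \exp(t\xi_{\i/\f})$ and evaluating at $t=0$ then yields the infinitesimal generator: the $\ASa$-component gives $\tfrac{d}{dt}\big\vert_0\big((g_\f^{-1}\ASa g_\f + g_\f^{-1}Dg_\f)\big) = D\xi_\f + [\ASa,\xi_\f] = \cD\xi_\f$, the $\Lambda$-component gives $-\xi_\i\Lambda + \Lambda\xi_\f$, and the $\ASe$-component gives $-\ad(\xi_\f)\cdot\ASe$, reproducing $\underline\varrho(\xi_\i,\xi_\f)$.

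Finally, for the Abelian case I would specialise the finite action: conjugations are trivial, so $\ASa\mapsto \ASa + g_\f^{-1}Dg_\f$, $\Lambda\mapsto \Lambda\, g_\i^{-1}g_\f$ and $\ASe\mapsto\ASe$. A diagonal constant element $(c,c)$ with $c\in G\hookrightarrow G^{\pp\Sigma}_0$ then fixes every component, since $Dc = 0$ and $g_\i^{-1}g_\f = 1$; hence the constant subgroup $G$ lies in the kernel of the action, which therefore descends to $G^{\pp\Sigma}_0/G$, as claimed. The whole argument is essentially formal once the transformation laws of Lemma \ref{lemma:dressing-properties} are in hand; the only genuinely load-bearing point is the boundary-locality of the descent, which is exactly what the $\tGo$-invariance of the dressing map supplies.
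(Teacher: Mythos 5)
Your proposal is correct and follows essentially the same route as the paper's proof: both read off the finite $G^{\pp\Sigma}_0$-action from the transformation laws of the dressed quantities $(\ASa,\Lambda,\ASe)$ supplied by Lemma \ref{lemma:dressing-properties} and Remark \ref{rmk:bilocal}, note that the formula defines a right action, and observe that constant maps act trivially in the Abelian case. Your explicit derivation of $\ASa[A^g]=\ASa(A)^{g^\f}$ and of the infinitesimal generators merely spells out steps the paper leaves as immediate.
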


\begin{proof}
The fact that the map presented in the statement of the Lemma is indeed a right group action is straightforward from its expression. We then need to show that it commutes with the dressing map $\cV$.
The result follows from the definitions of the dressed, and hence $\tGo$-invariant, quantities  $\ASa(A) \doteq A^{\dfU(A)}$, $\Lambda(A) \doteq \dfU(A)^\i$, and $\ASe(A,E_\i) \doteq \Ad(\dfU(A)^\i)^{-1} \cdot {E_\i}$, as well as from the (bi-local) equivariance of $\Lambda(A)$ and $\ASe(A,E_\i)$ under the action of $\G$ (Remark \ref{rmk:bilocal}). Indeed one immediately sees that the action of $\G$ on $\X$ translates into the given action of $G^{\pp\Sigma}_0$ on $\XAS$. 

Finally, it is immediate to see that, when $G$ is Abelian, the subgroup $G\hookrightarrow G^{\pp\Sigma}_0$ of constant maps acts trivially.
\end{proof}

\begin{remark}[Residual gauge of the AS modes] \label{rmk:ASmodesGauge}
If $G$ is Abelian, the action of $(\xi_\i,\xi_\f)\in\fg^{\pp\Sigma}$ on $\hcalA$ can be expressed in terms of the AS modes $\tilde{\ASa}(k) \in \Omega^1(S,\fg)$ of Proposition \ref{prop:modedecomp}. One then finds that only the real part of the AS zero-mode transforms $2\tilde{\ASa}(0) = \ASa^{\int} - \ASa^\av + i \ASa^\dif$ transforms nontrivially:
\[
\underline{\varrho}(\xi_\i,\xi_\f) \mathfrak{Re}(2\tilde{\ASa}(k=0)) = D\xi_\f,
\]
while 
\[
\underline{\varrho}(\xi_\i,\xi_\f) \mathfrak{Im}(\tilde{\ASa}(k=0)) = \underline{\varrho}(\xi_\i,\xi_\f) \tilde{\ASa}(k\geq1)= 0.
\]
In the next proposition we will see that, since the Hamiltonian counterpart of $\mathfrak{Re}(2\tilde{\ASa}(0))=\ASa^{\int}-\ASa^\av$ is $\mathfrak{Im}(2\tilde{\ASa}(0))=\ASa^\dif$, in the extended AS phase space it is indeed $\ASa^\dif$ that ``generates the large gauge transformations'' associated with the final copy of $G^S_0$ (cit. \cite[Page 23]{StromingerLectureNotes}, see also Remark \ref{rmk:Stromingercf}).
\end{remark}

We now see that the action of $G^{\pp\Sigma}_0$ on $\XAS$ is Hamiltonian:

\begin{proposition}[Reduced flux map]\label{prop:red-flux-map}
The action 
\[
G^{\pp\Sigma}_0 \circlearrowright (\XAS,\omeAS) ,
\]
detailed in Lemma \ref{lem:residualAction}, is Hamiltonian,
\[
\bi_{\underline\varrho(\xi_\i,\xi_\f)} \omeAS  = \bd \langle \uhAS, (\xi_\i, \xi_f) \rangle,
\]
with momentum map $\uhAS: \XAS  \to (\fg^{\pp\Sigma})^*$,\footnote{Recall: $Q^{\int} \doteq \int_{-1}^1 du' Q(u')$.} 
\begin{subequations}
\begin{equation}\label{e:reducedfluxmap}
\langle \uhAS, (\xi_\i,\xi_\f) \rangle = \int_S \btr\Big( ({\cD}{}^i L_{\ell}\ASa{}_i)^{\int} \xi_\f  + \ASe\, ( \Ad(\Lambda)^{-1}\cdot\xi_\i  - \xi_\f)\Big),
\end{equation}
such that
\[
\iota_\C^* h = \cV^*\uhAS,
\]
where in this equality we left understood the inclusion $(\fg^{\pp\Sigma})^* \hookrightarrow  (\fg^\Sigma)^*_\mathrm{str}$. 
In particular, for $G$ Abelian, the previous expression reduces to:
\begin{equation}
    \label{eq:huAS-Ab}
\langle \uhAS, (\xi_\i,\xi_\f) \rangle = \int_S \btr\Big( (D^i\ASa_{i}^\dif) \xi_\f + \ASe\, (\xi_\i-\xi_\f) \Big)   
\qquad\mathrm{(Abelian)}.
\end{equation}
\end{subequations}
\end{proposition}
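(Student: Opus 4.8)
The plan is to exploit the direct-sum structure $\omeAS = \omAS + \Omega_S$ of the symplectic form on $\XAS = \hcalA\times T^*G^S_0$ together with the fact that the infinitesimal action $\underline\varrho(\xi_\i,\xi_\f)$ of Lemma \ref{lem:residualAction} splits accordingly: the $\hcalA$-component feels only $\xi_\f$ (through $\cD\xi_\f$), while the $T^*G^S_0$-component feels both $\xi_\i$ (left translation of $\Lambda$) and $\xi_\f$ (right translation of $\Lambda$ together with the coadjoint action on $\ASe$). Since the two factors are symplectically orthogonal, I would verify the Hamiltonian identity $\bi_{\underline\varrho(\xi_\i,\xi_\f)}\omeAS = \bd\langle\uhAS,(\xi_\i,\xi_\f)\rangle$ factor by factor and read off $\uhAS$ as the sum of the two contributions.

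For the $T^*G^S_0$ factor I would invoke the standard theory of momentum maps for cotangent-lifted actions in the left trivialisation $T^*G^S_0\simeq G^S_0\times(\fg^S)^*$, in which $\Omega_S = \bd\int_S\btr(\ASe\,\Lambda^{-1}\bd\Lambda)$. The left and right translations have the familiar momentum maps $\Ad(\Lambda)^{-1}\cdot\ASe$ and $-\ASe$, whose pairing with $(\xi_\i,\xi_\f)$ reproduces the term $\int_S\btr(\ASe(\Ad(\Lambda)^{-1}\xi_\i - \xi_\f))$ in \eqref{e:reducedfluxmap}. For the $\hcalA$ factor the computation is direct: on purely spatial connections $L_\ell\ASa_i = \partial_u\ASa_i$, and since $\xi_\f$ is $u$-independent one has $L_\ell(\cD\xi_\f)_i = [\partial_u\ASa_i,\xi_\f]$, so that contracting $\omAS = \int_\Sigma\btr(L_\ell\bd\ASa_i\wedge\bd\ASa^i)$ yields
\[
\bi_{\cD\xi_\f}\omAS = \int_\Sigma\btr([\partial_u\ASa_i,\xi_\f]\,\bd\ASa^i) - \int_\Sigma\btr(\partial_u\bd\ASa_i\,\cD^i\xi_\f).
\]
Integration by parts on the closed manifold $S$ (boundary-free, hence no spatial boundary terms) together with $\Ad$-invariance of $\tr$ identifies the right-hand side with $\bd\int_S\btr((\cD^iL_\ell\ASa_i)^{\int}\xi_\f)$; the nonlinear $[\ASa,\cdot]$ piece of the momentum density $\cD^iL_\ell\ASa_i$ is precisely what absorbs the first term above. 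Summing the two factors gives the claimed formula.

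The remaining, and most structurally important, step is the compatibility $\iota_\C^* h = \cV^*\uhAS$. Here I would pull the formula for $\uhAS$ back along the dressing map of Definition \ref{def:dressingmap}, writing $V\equiv\dfU(A)$ and using $\Lambda = V^\i$, $\ASe = \Ad(V^\i)^{-1}\cdot E_\i$, and $V^\f = 1$. The key input is the \emph{dressed} Gauss constraint: because the constraint density $\mathsf{G}$ of \eqref{eq:gaussdef} is gauge-covariant and $(\ASa)_\ell = 0$, on-shell one gets $\partial_u(E^{V}) = -\cD^iL_\ell\ASa_i$, whence integrating in $u$ and using the boundary values $(E^{V})^\i = \ASe$ and $(E^{V})^\f = E^\f$ gives the identity $(\cD^iL_\ell\ASa_i)^{\int} - \ASe = -E^\f$. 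Feeding this into the pulled-back $\xi_\f$-terms, and using $\Ad$-invariance of $\btr$ to collapse the $\xi_\i$-term to $\int_S\btr(E^\i\xi_\i)$, reproduces $\langle\iota_\C^*h,(\xi_\i,\xi_\f)\rangle = -\int_S\btr(E^\f\xi_\f - E^\i\xi_\i)$ of \eqref{eq:fluxmap}. The Abelian formula \eqref{eq:huAS-Ab} then follows by setting all commutators to zero, so that $\cD\to D$, $\Ad(\Lambda)^{-1}\to\mathrm{id}$, and $(\cD^iL_\ell\ASa_i)^{\int}\to D^i(\ASa_i^\f - \ASa_i^\i) = D^i\ASa_i^\dif$.

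I expect the main obstacle to be bookkeeping rather than conceptual: matching the sign and $\Ad$-conventions of the left/right cotangent-lift momentum maps to those fixed by Lemma \ref{lem:residualAction}, and carrying the $\Ad$-dressing consistently through the $\xi_\i$-term in the compatibility check. A secondary delicate point is confirming that no spurious boundary contributions in the null direction $u$ arise in the $\hcalA$-computation; this is ensured precisely because the momentum density is the $u$-integral $(\cD^iL_\ell\ASa_i)^{\int}$, so that the direct comparison of $\bi_{\cD\xi_\f}\omAS$ with the differential of the $\hcalA$-momentum requires integration by parts only along $S$ and not in $u$.
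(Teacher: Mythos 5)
Your proof is correct and, at its core, follows the same route as the paper's, so the comparison is mainly organizational. For the Hamiltonianity claim, the paper performs a single direct contraction of $\underline\varrho(\xi_\i,\xi_\f)$ with $\omeAS=\omAS+\Omega_S$, whereas you split the check over the two symplectically orthogonal factors, citing the standard cotangent-lift momentum maps on $T^*G^S_0$ and computing by hand only the $\hcalA$-contribution; both organizations are legitimate (though in this Fr\'echet setting the "standard theory" is best regarded as shorthand for exactly the direct computation the paper carries out on the $\Omega_S$-terms), and your explicit verification that the commutator term $\int_\Sigma\btr([\partial_u\ASa_i,\xi_\f]\,\bd\ASa^i)$ is absorbed by the nonlinear part of $\cD^iL_\ell\ASa_i$ — via $\ad$-invariance of the trace and integration by parts on the closed $S$ — is precisely the step on which the paper's computation hinges. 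For the compatibility $\iota_\C^*h=\cV^*\uhAS$ your argument is identical to the paper's: gauge covariance of the Gauss constraint together with $(\ASa)_\ell=0$ gives $\partial_u\big(E^{\dfU(A)}\big)=-\cD^iL_\ell\ASa_i$, which integrates in $u$ to $(\cD^iL_\ell\ASa_i)^{\int}=\ASe-E^\f$ using $\dfU(A)^\f=1$ and $\dfU(A)^\i=\Lambda$, and $\Ad$-invariance collapses the $\xi_\i$-term to $\int_S\btr(E^\i\xi_\i)$; the Abelian formula is the same specialization in both treatments. One bookkeeping slip of the kind you yourself flagged: with the trace pairing, the $\xi_\i$-component of the momentum map is $\Ad(\Lambda)\cdot\ASe$ rather than $\Ad(\Lambda)^{-1}\cdot\ASe$, since $\tr\big(\ASe\,\Ad(\Lambda)^{-1}\cdot\xi_\i\big)=\tr\big((\Ad(\Lambda)\cdot\ASe)\,\xi_\i\big)$; the pairing you wrote down, and hence everything downstream, is nonetheless the correct one.
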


\begin{proof}
By direct computation we prove that the $G^{\pp\Sigma}_0$ action on the extended AS phase space $\XAS$ of Lemma \ref{lem:residualAction} is Hamiltonian  with momentum map given by the r.h.s. of Equation \eqref{e:reducedfluxmap}:
\begin{align*}
\bi_{\underline{\varrho}(\xi_\i,\xi_\f)} & \omeAS 
= \bi_{\underline{\varrho}(\xi_\i,\xi_\f)} (\omAS  + \Omega_S) \notag\\
& = \int_\Sigma \btr\Big( (L_\ell {\cD}_i\xi_\f ) \bd \ASa{}^i - (L_\ell \bd \ASa{}^i ) {\cD}_i \xi_\f \Big)\notag\\
& \qquad +\int_S \btr\Big( -[\xi_\i, \ASe ] \bd \Lambda \Lambda^{-1} + (\bd \ASe)\, ( \Lambda^{-1} \xi_\i\Lambda -\xi_\f) \Big) \notag\\
&\qquad+  \int_S \btr\Big(  \ASe\, \Big[ \Lambda^{-1} \xi_\i\Lambda  -\xi_f, \Lambda^{-1}\bd\Lambda\Big]\Big)  \notag\\
& = \int_\Sigma \btr\Big([L_{\ell}\ASa_i, \xi_\f] \bd \ASa{}^i - (L_\ell \bd \ASa{}^i) {\cD}_i \xi_\f \Big)\notag\\
&\qquad\int_S \btr\Big(  (\bd \ASe)  \Lambda^{-1}\cdot\xi_\i \Lambda + \ASe\, (\Lambda^{-1} \xi_\i \bd\Lambda -\Lambda^{-1}\bd \Lambda \Lambda^{-1}\xi_\i) - \bd \ASe \xi_\f\Big)  \notag\\
& = \bd \int_\Sigma \btr\Big( - (L_{\ell}\ASa{}^i) {\cD}_i \xi_\f \Big)
+  \bd \int_S \btr\Big(  \ASe\, (\Ad(\Lambda)^{-1}\cdot\xi_\i -  \xi_\f)\Big)  \notag\\
& = \bd \int_S \btr\Big( ({\cD}{}_i L_{\ell}\ASa{}^i)^{\int} \xi_\f  + \ASe\, (\Ad(\Lambda)^{-1}\cdot\xi_\i  - \xi_\f)\Big) , 
\end{align*}
where in the first step we used $\bi_{\underline{\varrho}(\xi_\i,\xi_\f)}\Lambda^{-1}\bd \Lambda = -\Lambda^{-1}\xi_\i \Lambda + \xi_\f$ (Lemma \ref{lem:residualAction}); in the second we used that $L_\ell\xi_\f =0 $ since $\xi_\f$ is $u$-independent; in the third we used the $\ad$-invariance of the trace viz. $\tr([L_\ell\ASa^i,\xi_\f]\bd \ASa_i) = -\tr(L_\ell\ASa^i [\ASa_i,\xi_\f])$; and finally, in the last step, we performed the integration in $u$ by remembering once again that $\xi_\f$ is independent of $u$.

We now compute explicitly $\cU^* \uhAS$ and verify that it equals $\iota_\C^*\h$. First, we note the following identities involving the curvature $F=F(A)$: first, we have
\begin{align*}
F_{\ell i}(A^{\dfU(A)}) = L_\ell \ASa(A)_i - D_i \ASa(A)_\ell + \tfrac12[\ASa(A)_\ell,\ASa(A)_i ] = L_\ell \ASa(A)_i,
\end{align*}
which follows from $(A^{\dfU(A)})_\ell = 0$ (Lemma \ref{lemma:dressing-properties-ii}); second, from the equivariance of the Gauss constraint $\mathsf{G} =  \cL_\ell E + \cD^iF_{\ell i}$, we obtain $E(A^{\dfU(A)},E_\i^{\dfU(A)^\i}) \equiv \Ad(\dfU(A)^{-1})\cdot E(A,E_\i) \equiv E^{\dfU(A)}$ as well as
\[
\cD^i_{\ASa(A)} L_\ell \ASa(A)_i =  \cD^i_{A^{\dfU(A)}} F_{\ell i}(A^{\dfU(A)}) \stackrel{\mathrm{(Gauss)}}{=}  -L_\ell E^{\dfU(A)} - [A^{\dfU(A)}_\ell , E^{\dfU(A)}] = - L_\ell E^{\dfU(A)}
\]
where $\ASa(A)\doteq A^{\dfU(A)}$ and $\cD_{\ASa} = D + [\ASa, \cdot]$; third, and last, using $\dfU(A)^\f =1$ and $\dfU(A)^\i \doteq\Lambda(A)$:
\begin{align*}
(\cD^i L_\ell \ASa(A)_i)^{\int} = -( L_\ell E^{\dfU(A)} )^{\int}
 &= -(E^{\dfU(A)})^\f + (E^{\dfU(A)})^\i \\ 
 &= -E(A,E_\i)^\f + \Ad(\Lambda(A)^{-1})\cdot E_\i.  
\end{align*}

This, together with the expression $(\cU^*\ASe)(A,E_\i) = \Ad(\Lambda(A)^{-1})\cdot E_\i$ allows us to compute:
\begin{align*}
\cU^*&\langle \uhAS, (\xi_\i,\xi_\f)\rangle
= \cU^*\int_S \btr\Big( ({\cD}{}_i L_{\ell}\ASa{}^i)^{\int} \xi_\f  + \ASe \big( \Ad(\Lambda^{-1})\cdot\xi_\i  - \xi_\f\big)\Big) \\
& = \int_S \btr\Big( \big(-E(A,E_\i)^\f + \Ad(\Lambda^{-1})\cdot E_\i\big) \xi_\f   \\ & \qquad \qquad + \big(\Ad(\Lambda(A)^{-1})\cdot E_\i\big) \big( \Ad(\Lambda(A)^{-1})\cdot\xi_\i  - \xi_\f\big)\Big) \\
& = - \int_S \btr\Big( E(A,E_\i)^\f \xi_\f - E_\i \xi_\i \Big)  = \langle \iota_\C h(A,E_\i), (\xi_\i,\xi_\f)\rangle.
\end{align*}
Finally it is immediate to verify that the Abelian formula descends from the general case, since then:
\[
(\cD^i L_\ell \ASa_i)^{\int} = (D^iL_\ell \ASa_i)^{\int} = D^i (L_\ell \ASa_i)^{\int} = D^i\ASa^\dif_i \qquad \mathrm{(Abelian)}
\]
(An alternative, straightforward, derivation of Equation \eqref{eq:huAS-Ab}, can be obtained using the AS modes by combining the results of Proposition \ref{prop:modedecomp} and Remark \ref{rmk:ASmodesGauge}.)
\end{proof}

\begin{remark}[Interpretation]
With reference to Equation \eqref{e:reducedfluxmap}, the fact that the contraction between $\ASe$ and $\xi_\f$ is mediated by the adjoint action of $\Lambda$ has an intuitive explanation: whereas $\ASe$ and $\xi_\i$ are quantities naturally defined on $S_\i$, the initial sphere, $\xi_\f$ is naturally defined on the final sphere $S_\f$ and therefore needs to be parallel-transported back to $S_\i$ by means of the appropriate Wilson line $\Lambda(A) \doteq \dfU(A)^\i = \Pexp \int_{-1}^1 du' \, A_\ell(u')$, so that it can be contracted with $\ASe$. 
\end{remark}

\begin{remark}[On-shell fluxes]\label{rmk:ASfluximage}
We note that a consequence of Proposition \ref{prop:red-flux-map}---in particular of the equation $\iota_\C^*h = \cV^*\uhAS$, together with surjectivity of $\cV$, is that
\[
\Im(\uhAS) \simeq \Im(\iota_\C^*h) \doteq \F \simeq
\begin{dcases}
(\fg^{\pp\Sigma})^* & \text{$G$ semisimple}\\
\Ann(\fg,(\fg^{\pp\Sigma})^*) & \text{$G$ Abelian}
\end{dcases}
\]
where the last isomorphism is proved in Proposition \ref{prop:fluxes}.
Therefore, there is a 1-to-1 correspondence between the values the momentum map $\uhAS$ can take and the on-shell fluxes: if $G$ is semisimple these correspond to all possible elements of $(\fg^{\pp\Sigma})^*$, whereas if $G$ is Abelian only the elements of $(\fg^{\pp\Sigma})^*$ compatible with the integrated Gauss's, viz. $\int_{\pp\Sigma} \boldsymbol{E} = 0$, are allowed.
\end{remark}

\subsection{Superselection sectors}
Now that we have established that there is a model Hamiltonian space $G^{\pp\Sigma}_0\circlearrowright(\XAS,\omeAS)$, reduction in the absence of a preferred value of the momentum map applied to the model requires\footnote{Alternatively, a symplectomorphic space is obtained by reducing at a fix value $f=(f_\i,f_\f)$ of the flux $\uhAS$ by the action of the \emph{stabiliser} $\mathrm{Stab}(f)\subset G^{\pp\Sigma}_0$. See the discussion of point- and orbit-reduction in Section \ref{sec:theoreticalframework}. \label{fnt:pointreduction}} one to first select a $G^{\pp\Sigma}_0$-coadjoint orbit in the image of the flux map $\uhAS$.

Since $G_0^{\pp\Sigma}\simeq G^S_0\times G^S_0$, a coadjoint orbit of $G_0^{\pp\Sigma}$ is determined by pairs of ``initial and final'' elements of $(\fg^S)^*$:
\[
\mathcal{O}_{(f_\i,f_\f)}= \mathcal{O}_{f_\i}\times \mathcal{O}_{f_\f}\subset (\fg^S)^*\times (\fg^S)^*.
\]
Moreover, in virtue of Proposition \ref{prop:red-flux-map}, the preimage along $\uhAS$ of a generic flux  $(f'_\i,f'_\f)$ is given by all the $(\ASa, \Lambda, \ASe) \in \XAS$ such that
\begin{equation}
    \label{eq:fluxes-sss}
\begin{dcases}
\langle f'_\i,\bullet\rangle  = 
\int_S \btr\left( (\Ad(\Lambda)\cdot \ASe) \,\bullet\,\right) \\
\langle f'_\f,\bullet\rangle  =  
\int_S \btr\left(  ( {\cD}^i L_\ell \ASa_i(A))^{\int} - \ASe) \,\bullet\, \right)  .
\end{dcases}
\end{equation}

In the Abelian case, the flux $f=(f_\i,f_\f)$ in the image of $\uhAS$ is actually an element of $\Ann(\fg,(\fg^{\pp\Sigma})^*) \subset (\fg^{\pp\Sigma})^*$, as it is readily verified using Equation \eqref{eq:huAS-Ab}.

\medskip
From the general theory of Hamiltonian reduction, the preimage in $\XAS$ of the orbits $\mathcal{O}_{f_\i}\times \mathcal{O}_{f_\f}= \mathcal{O}_{(f_\i,f_\f)}$, modulo the action of $G^{\pp\Sigma}_0 \simeq G^S_0 \times G^S_0$, is a symplectic manifold (when smooth). In view of this observation we are going to describe the symplectic leaves (superselection sectors) of ${\uuC}$ through the symplectic leaves of the Hamiltonian reduction $\XAS/G^{\pp\Sigma}_0$, which we call ``Ashtekar--Streubel sectors'', by means of a symplectomorphism.
\begin{definition}\label{def:ASSectors}
The \emph{Ashtekar--Streubel sector} associated to $\mathcal{O}_{(f_\i,f_\f)}$ is the sympelctic manifold $\uuS^{\mathsf{eAS}}_{[f_\i,f_\f]}$:
\[
\uuS^{\mathsf{eAS}}_{[f_\i,f_\f]} \doteq \uS^{\mathsf{eAS}}_{[f_\i,f_\f]}/G^{\pp\Sigma}_0, 
\]
where
\begin{align*}
\uS^{\mathsf{eAS}}_{[f_\i,f_\f]} &\doteq \uhAS^{-1}(\mathcal{O}_{(f_\i,f_\f)}) \\
&=  \big\{ (\ASa, \Lambda, \ASe) \ : \ \text{Eq. \eqref{eq:fluxes-sss} holds for some $(f'_\i,f'_\f)\in\mathcal{O}_{(f_\i,f_\f)}$} \big\}.
\qedhere
\end{align*}
\end{definition}

\begin{theorem}\label{thm:modelHamreduction}

The (Poisson) reduction of $(\uCo,\uomegao)$ by the action of $\Gred \doteq \G/\Go$, described in Section \ref{sec:theoreticalframework}, is diffeomorphic to the (Poisson) reduction of $(\XAS, \omeAS)$ by the action of $G^{\pp\Sigma}_0$, i.e.
\[
\uCo/\Gred \simeq \XAS/G^{\pp\Sigma}_0
\]
as Poisson manifolds. 
In particular, the symplectic leaves of $\uCo/\Gred$ are symplectomorphic to the Ashtekar--Streubel sectors, i.e.\ the fully-reduced phase space ${\uuC}\doteq \C/\G\simeq\uCo/\Gred$ decomposes as
\[
{\uuC} \simeq \bigsqcup_{\mathcal{O}_{(f_\i,f_\f)}} 
\uuS^{\mathsf{eAS}}_{[f_\i,f_\f]},
\]
where $\F\simeq (\fg^{\pp\Sigma})^*$ if $G$ is semisimple, and $\F\simeq \Ann(\fg, \fg^{\pp\Sigma})^*)$ if $G$ is Abelian (and in both cases, $\F\simeq \fGred^*$).
\end{theorem}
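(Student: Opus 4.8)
The plan is to leverage the covering-map structure $p_V\colon \uCo \to \XAS$ established in Theorem \ref{thm:nonAb-constr-red}, together with the group-theoretic relation of Proposition \ref{prop:Gred}, to transport the second-stage quotient on $\uCo$ to the far more tractable model quotient on $\XAS$. The key structural input is that, by the proof of Theorem \ref{thm:nonAb-constr-red}, the residual action of the discrete group $\mathcal{K}\simeq\G_\rel/\Go$ on $\uCo$ is precisely the deck-transformation action of the covering, so that $\uCo/\mathcal{K}\simeq\XAS$. Since $\mathcal{K}\subset\Gred$ is a normal (indeed central) subgroup with $\Gred/\mathcal{K}\simeq G^{\pp\Sigma}_0$ (resp.\ $G^{\pp\Sigma}_0/G$ in the Abelian case), taking quotients by stages gives
\[
\uCo/\Gred \simeq (\uCo/\mathcal{K})/(\Gred/\mathcal{K}) \simeq \XAS/G^{\pp\Sigma}_0.
\]
First I would verify that this set-theoretic identity upgrades to a Poisson diffeomorphism: this follows because $p_V$ is $\Gred$-equivariant—with $\Gred$ acting on $\XAS$ through its quotient $G^{\pp\Sigma}_0$, via Lemma \ref{lem:residualAction}—and is a local symplectomorphism, $p_V^*\omeAS=\uomegao$, so the two quotient Poisson structures, each induced from the respective symplectic form, necessarily agree.

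Next I would identify the symplectic leaves. Because $\uh$ is $\mathcal{K}$-invariant (Lemma \ref{lem:flux-K-invariance}) and $p_V$ is the quotient by $\mathcal{K}$, the two momentum maps factor as $\uh=\uhAS\circ p_V$ under the dual-space identifications of Remark \ref{rmk:ASfluximage}; hence $\uh^{-1}(\mathcal{O}_f)=p_V^{-1}\big(\uhAS^{-1}(\mathcal{O}_f)\big)$ for any coadjoint orbit. Since $\mathcal{K}$ is discrete and central, its coadjoint action on $\fGred^*\simeq(\fg^{\pp\Sigma})^*$ is trivial, so the coadjoint orbits of $\Gred$ coincide with those of $G^{\pp\Sigma}_0\simeq G^S_0\times G^S_0$, which factor as $\mathcal{O}_{(f_\i,f_\f)}=\mathcal{O}_{f_\i}\times\mathcal{O}_{f_\f}$. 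Quotienting $\uh^{-1}(\mathcal{O}_f)$ by $\Gred$ in stages—first by the deck action of $\mathcal{K}$ to land on $\uhAS^{-1}(\mathcal{O}_f)$, then by $\Gred/\mathcal{K}\simeq G^{\pp\Sigma}_0$—yields exactly the Ashtekar--Streubel sector $\uuS^{\mathsf{eAS}}_{[f_\i,f_\f]}$ of Definition \ref{def:ASSectors}. The symplectic forms match leaf by leaf because both are defined by the same orbit-reduction prescription of Lemma \ref{lem:fluxSS} (subtracting the pullback of the KKS form) and $p_V$ intertwines $(\uomegao,\uh)$ with $(\omeAS,\uhAS)$.

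Finally, the claimed decomposition
\[
{\uuC}=\C/\G\simeq\uCo/\Gred\simeq\bigsqcup_{\mathcal{O}_{(f_\i,f_\f)}}\uuS^{\mathsf{eAS}}_{[f_\i,f_\f]}
\]
follows by combining the above with Theorem \ref{thm:generalsuperselection}, which already identifies the symplectic leaves of ${\uuC}$ with the flux superselection sectors $\uuS_{[f]}$, and with the flux-range identification $\F\simeq\Im(\uhAS)$ of Remark \ref{rmk:ASfluximage} (which also pins down the index set, $\F\simeq(\fg^{\pp\Sigma})^*$ or $\F\simeq\Ann(\fg,(\fg^{\pp\Sigma})^*)$).

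I expect the main obstacle to be the careful bookkeeping of the reduction by stages through the \emph{discrete} covering: one must check that quotienting by the full group $\Gred$ genuinely commutes with first collapsing the covering fibre $\mathcal{K}$, which hinges on $\mathcal{K}$ being normal in $\Gred$ and acting as deck transformations, and that no smoothness is lost in passing to the (partial) Poisson quotient. A secondary technical point is ensuring that the coadjoint-orbit labels transport correctly between $\Gred$ and $G^{\pp\Sigma}_0$; this I would settle precisely by the observation that a discrete central subgroup acts trivially in the coadjoint representation, so that the orbit foliations of $\fGred^*$ under the two groups coincide.
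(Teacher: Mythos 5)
Your proposal is correct and follows essentially the same route as the paper's own proof: the symplectic covering $p_V$ of Theorem \ref{thm:nonAb-constr-red}, the central extension $1\to\mathcal{K}\to\Gred\to G^{\pp\Sigma}_0\to 1$ of Proposition \ref{prop:Gred}, the $\mathcal{K}$-invariance of $\uh$ (Lemma \ref{lem:flux-K-invariance}), and reduction by stages through the discrete fibre, with the coadjoint orbits of $\Gred$ and $G^{\pp\Sigma}_0$ identified because the central $\mathcal{K}$ acts trivially on $\F$. Your added bookkeeping on equivariance of $p_V$ and the leaf-by-leaf matching of symplectic forms is a slightly more explicit rendering of what the paper compresses into its final sentence, but it is the same argument.
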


\begin{proof}
    Recall that $(\uCo,\uomegao)\stackrel{{p_V}}{\to}\XAS$ is a symplectic covering of the AS extended phase space with fibre $\mathcal{K}$ (Theorem \ref{thm:nonAb-constr-red}), and that $\Gred$ is a connected central extension of $G^{\pp\Sigma}_0$ by the discrete group $\mathcal{K}$ (Proposition \ref{prop:Gred}). This, together with the fact that the flux map $\uh$ is $\mathcal{K}$-invariant (Lemma \ref{lem:flux-K-invariance}), implies that, given $f\in\fGred^*$ and $\mathcal{O}_f$ its coadjoint orbit, $\uh^{-1}(\mathcal{O}_f)={p_V}^{-1}\circ\uh_{AS}^{-1}(\mathcal{O}_f)$, and 
    \[
    \uh^{-1}(\mathcal{O}_f)/\Gred={p_V}^{-1}\circ\uh_{AS}^{-1}(\mathcal{O}_f)/\Gred \simeq ({p_V}^{-1}\circ\uh_{AS}^{-1}(\mathcal{O}_f)/\mathcal{K})/G^{\pp\Sigma}_0\simeq\uh_{AS}^{-1}(\mathcal{O}_f)/G^{\pp\Sigma}_0,
    \]
    where we used the fact that $\fGred^*\simeq(\fg^{\pp\Sigma})^*$ and $\F=\Im(\uh) \simeq \Im(\uhAS)$ (Remark \ref{rmk:ASfluximage}) to view a flux $f\in\F$ simultaneously in the image of $\uh$ and of $\uhAS$. The fact that the coadjoint orbits of an element $f$ along the action of $\Gred$ and $G_0^{\pp\Sigma}$ coincide follows from the fact that $\Gred$ is a central extension of $G_0^{\pp\Sigma}$, i.e.\ $\mathcal{K}$ acts trivially on $\F$.

    Since all orbit-reductions of the space $\uCo$ are symplectomorphic, the reduction ${\uuC}=\uCo/\Gred$ is diffeomorphic (as a Poisson manifold) to $\XAS/G_0^{\pp\Sigma}$, concluding the proof. 
\end{proof}

\section{Asymptotic symmetries and memory as superselection}\label{sec:alternatelabelmemory}

\noindent\fbox{%
    \parbox{\textwidth}{\small\center We invite the reader to follow the summary of notations and concepts introduced up to here, available in Appendix \ref{app:notations}, for quick reference.}}

\bigskip

In this section we discuss asymptotic symmetries of QED \cite{HeMitra2014,KapecPateStrominger,StromingerLectureNotes} and electromagnetic memory \cite{Staruszkiewicz:1981,Herdegen95,BieriGarfinkle,GarfinkleHollandsetal,Pasterskimemory} in the light of our results. Our work is tailored to (and rigorous for) the treatment of finite-distance boundaries.  Since YM theory in 4 spacetime dimension is (classically) conformally invariant, one can take our results as a model for asymptotic case when $\dim\Sigma=3$, up to a discussion of equivalence classes of embeddings $\Sigma\hookrightarrow\scri$.

With this in mind, we henceforth interpret $\Sigma = I \times S$ as a compact subset of (past) null infinity $\scri = \mathbb{R}\times S$,  i.e.
\[
\Sigma \subset \scri.
\]
with $S\simeq S^2$ the ``celestial sphere''.
The reason for choosing \emph{past} asymptotic null infinity will be explained shortly.\footnote{We apologise for the ensuing mislabeling of the advanced time coordinate, here denoted $u$ rather than the conventional $v$.}

Note that in this context $A$ and $\ASa$ become (spatial) connection one-forms over $\Sigma\subset\scri$, while the ``electric field'' $E$ and ``magnetic field'' $F$ stand for the $r^{-2}$-order of the $(ur)$- and $(ij)$-component of the spacetime field strength $\bar F_{\mu\nu}$ in Bondi coordinates, i.e.\ 
\[
E = \lim_{r\to\infty} r^2 \bar F_{ru}
\quad\text{and}\quad
F_{ij} = \lim_{r\to\infty} r^2 \bar F_{ij}.
\]

\begin{remark}[Caveat]
Our treatment of the asymptotic limit is equivalent to working in a set of fixed Bondi coordinates with the embedding $S_{\i/\f}\hookrightarrow\scri$ defined by cuts of constant $u$. 
This is tantamount to taking a naive $r\to \infty$ limit, as done in the standard reference \cite{StromingerLectureNotes} and much of the related literature (see, however, \cite{Ashtekar1984,AshtekarCmapigliaLaddha2018,LaddhaHolographicnullinfinity}).
Therefore, ``asymptotic infinity" is here understood as the boundary of \emph{one} conformal compactification of spacetime. That is, we are not implementing the notion of equivalence classes of different conformal compactifications that plays a central role in the geometric approach to conformal infinity \cite{Penrose1965,penrose1984spinors, Geroch1977}, nor do we study the dependence of our results from the choice of ``cut'' of $\scri$, i.e. of the embedding of $\Sigma \hookrightarrow \scri$. 
We leave the investigation of these important questions to future work.
\end{remark}

\begin{assumption}
In this section we fix $\Sigma$ to be a 3d null cylinder with the (celestial) 2-sphere as a base:
\[
\dim\Sigma \doteq n = 3,\qquad \pp\Sigma = S_\i\times S_\f, \qquad  S_{\i/\f}\simeq S^2.
\]
Hence, $G^{\pp\Sigma}\simeq G^S_\i\times G^S_\f$, where $G^S_{\i/\f} \doteq C^\infty(S_{\i/\f},G)$, and, since $\pi_2(G) = 0$ for all Lie group \cite{MIMURA}, these mapping groups are necessarily connected, i.e.\ $G^S_0 = G^S$. 
\end{assumption}

\subsection{$G$ Abelian}
Consider first Maxwell theory, which is a particular case of the general results of Theorems \ref{thm:nonAb-constr-red}, \ref{thm:fibrecharacterisation}, and \ref{thm:modelHamreduction} (see also Lemma \ref{lem:residualAction}, Proposition \ref{prop:red-flux-map}, and Appendix \ref{app:Abelianreduction})---which we summarise here for ease and are proven in Sections \ref{sec:reduction-first} and \ref{sec:reduction-second} (see also Appendix \ref{app:Abelianreduction}):

\begin{theorem}[Maxwell: Asymptotic constraint reduction]\label{thm:summaryMax}~
Let $G=\mathrm{U}(1)$, $\dim\Sigma\doteq n=3$, and $S\simeq S^2$. Then:
\begin{enumerate}
\item The constraint gauge group $\Go\subset \G$ is given by the identity component of the relative mapping group on $\Sigma$, $\Go \simeq G^\Sigma_{\rel,0}$. The constraint-reduced phase space $(\uCo,\uomegao) \doteq (\X,\omega)//_{0}\Go$ at $\Sigma=\scri$ is symplectomorphic to the linearly extended Ashtekar-Streubel phase space $(\Xas,\omeas)$:
\[
\Xas \doteq \hcalA \times \fg^S \times \fg^S \ni (\ASa , \lambda ,\ASe) 
\]
with 
\[
\omeas(\ASa,\lambda,\ASe) = \int_{\Sigma} \sqrt{\gamma} \ \gamma^{ij} (L_\ell \bd \ASa_i) \wedge \bd \ASa_j  + \int_S \sqrt{\gamma} \ \bd \ASe \wedge \bd \lambda,
\]
where, owing to the Gauss constraint,
the asymptotic (gauge-invariant) electromagnetic field $(E,F)$ at $(u,x)\in\Sigma\subset \scri$ is given by  
\[
    E = \ASe + D^i \ASa_i^\i- D^i\ASa_i(u) 
        \quad\text{and}\quad
     F = D \ASa.
\]
In particular, $E$ and $F$ are independent of $\lambda$, while the initial and final electric fields at $\pp\Sigma$ are given by:
\[
E^\i = \ASe \quad \text{and}\quad E^\f = \ASe - D^i \ASa_i^\dif.
\]
\item \label{thm:summaryMax-2} The residual flux gauge group $\Gred \doteq \G/\Go$ is a connected central extension of the identity component of the boundary mapping group $G^{\pp\Sigma}/G\simeq ({G^S_\i}\times {G^S_\f})/G$ by $\mathbb{Z}$:
\[
1\to \mathbb{Z} \to \Gred \to (G^S_\i \times G^S_\f)/G
\]
The constraint reduced phase space $(\uCo,\uomegao)\simeq (\Xas,\omeas)$ carries the Hamiltonian action of $\fGred\simeq (\fg^S_\i\times \fg^S_\f)/\fg \ni (\xi_\i,\xi_\f)$,
\[
\underline{\varrho}(\xi_\i,\xi_\f)\begin{pmatrix}
    \ASa\\ \lambda\\ \ASe
\end{pmatrix} = \begin{pmatrix}
    D\xi_\f \\ \xi_\f - \xi_\i\\ 0
\end{pmatrix},
\]
with momentum map
\[
\langle \uhAS,(\xi_\i,\xi_\f) \rangle = \int_S \sqrt{\gamma}\ \big( (D^i\ASa_i^\dif) \xi_\f - \ASe (\xi_\f - \xi_\i) \big).
\]
\item \label{thm:summaryMax-3} The fully reduced phase space ${\uuC} \doteq \uCo/\Gred$ is a Poisson space foliated by symplectic leaves $\uuS_{[f_\i,f_\f]}$, called flux superselection sectors, labelled by pair of initial and final fluxes $(f_\i, f_\f) \in \Ann(\fg,(\fg^{\pp\Sigma})^*)$, and defined as
\begin{align*}
\uuS_{[f_\i,f_\f]} & \doteq \uh_\mathsf{AS}^{-1}( f_\i,f_\f)/\Gred\\ 
&\simeq \big\{ (\ASa,\ASe)\in \hcalA \times \fg^S \ \text{that\ satisfy \ equation\ \eqref{eq:MaxSSS} \ below} \big\} \big/ G^S_\f
\end{align*}
viz.
\begin{equation}
    \label{eq:MaxSSS}
\begin{dcases}
    \langle f_\i ,\ \cdot\ \rangle = \int_S \sqrt{\gamma} \ \big(\ASe \ \cdot \ \big)\\
    \langle f_\f ,\ \cdot\ \rangle = \int_S \sqrt{\gamma} \ \big((\ASe - D^i\ASa_i^\dif) \ \cdot \ ).
\end{dcases}
\end{equation}
All superselection sectors are symplectomorphic to each other.\footnote{This is analogous to what happens in Maxwell theory when $\Sigma$ spacelike---cf. Appendix \ref{app:Maxwell-short}.}
\end{enumerate}
\end{theorem}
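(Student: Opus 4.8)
The plan is to read off all three parts of Theorem~\ref{thm:summaryMax} as the $G=\mathrm{U}(1)$, $n=3$, $S\simeq S^2$ specialization of the general reduction results, and then to supply by hand the two pieces that do not follow verbatim: the explicit electromagnetic dictionary and the mutual symplectomorphism of the sectors. For part~(1) I would invoke the Abelian clause of Theorem~\ref{thm:nonAb-constr-red}, which already yields a \emph{global} symplectomorphism $(\uCo,\uomegao)\simeq(\Xas,\omeas)$ with $\Xas=\hcalA\times T^*\fg^S$; identifying $T^*\fg^S\simeq\fg^S\times(\fg^S)^*$ in the left trivialization produces the coordinates $(\ASa,\lambda,\ASe)$, and the coordinate form of $\omeas=\omAS+\omega_S$ is immediate from Definition~\ref{def:hatA-sympl} and the canonical form on $T^*\fg^S$. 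The characterization $\Go\simeq G^\Sigma_{\rel,0}$ I would take from Proposition~\ref{prop:discretequotientgroup}, using Remark~\ref{rmk:Abelian-isotropy} to discard the constant gauge transformations that separate $\fGo$ from $\fg^\Sigma_\rel$: since they lie in $\ker\rho$ they do not affect orbits, so $\uCo=\C/\Go=\C/\G_{\rel,0}$.

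The electromagnetic dictionary is the part that needs genuine (if routine) computation, organized as in Appendix~\ref{app:Abelianreduction}. Starting from the on-shell solution~\eqref{eq:Abeliangauss} of the Gauss constraint, and using that after dressing $\ASa_\ell=0$ so that $F=D\ASa$, I would integrate along $\ell$ to obtain $E=\ASe+D^i\ASa_i^\i-D^i\ASa_i(u)$, and then evaluate at the two ends, using~\eqref{eq:EdiffAbelian} ($E^\dif=-D^iF_{\ell i}^{\int}$), to extract $E^\i=\ASe$ and $E^\f=\ASe-D^i\ASa_i^\dif$. The manifest independence of $(E,F)$ from $\lambda$ simply records that $\lambda=\Lambda$ is the Wilson-line datum of the dressing, which is pure gauge on shell.

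For part~(2), the central-extension sequence with kernel $\mathbb{Z}$ is Proposition~\ref{prop:Gred} together with Theorem~\ref{thm:fibrecharacterisation}(iii), which for a single $\mathrm{U}(1)$ factor ($t=1$) and $n=3\neq2$ gives $\mathcal{K}\simeq\mathbb{Z}$; here I would also note that $\pi_2(G)=0$ makes the mapping groups connected, $G^S_0=G^S$. The infinitesimal action and the momentum map are the Abelian reductions of Lemma~\ref{lem:residualAction} and Equation~\eqref{eq:huAS-Ab}: putting $\cD=D$, $\ad=0$, and passing to the linear coordinate $\lambda$ turns the action into $(\ASa,\lambda,\ASe)\mapsto(D\xi_\f,\ \xi_\f-\xi_\i,\ 0)$ and the momentum map into the stated integral, after rewriting $\btr$ as $\sqrt{\gamma}\,\tr$.

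Part~(3) follows from Theorem~\ref{thm:modelHamreduction}: since $G$ is Abelian every coadjoint orbit is a point, so ${\uuC}\simeq\XAS/G^{\pp\Sigma}_0$ is foliated by the point-reduced sectors $\uuS_{[f_\i,f_\f]}=\uhAS^{-1}(f_\i,f_\f)/\Gred$, with $(f_\i,f_\f)$ ranging over $\Im(\uhAS)\simeq\Ann(\fg,(\fg^{\pp\Sigma})^*)$ by Remark~\ref{rmk:ASfluximage}. To reach the explicit quotient $\big\{(\ASa,\ASe)\ :\ \text{Eq.~\eqref{eq:MaxSSS} holds}\big\}/G^S_\f$, I would first gauge-fix $\lambda=0$ using the combination $\xi_\f-\xi_\i$ (which ranges over all of $\fg^S$ even modulo the diagonal $\fg$), leaving the residual $\xi_\i=\xi_\f$ action $\ASa\mapsto\ASa+D\xi_\f$, i.e.\ the $G^S_\f$-action. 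The genuinely new step, and the one I expect to be the main obstacle, is the final claim that all sectors are mutually symplectomorphic: since in these linear coordinates $\uhAS$ is affine and $\omeas$ has constant coefficients, the level sets are parallel affine subspaces, so a fixed translation $(\ASa,\ASe)\mapsto(\ASa+\ASa_0,\ASe+\ASe_0)$ carries one onto another. I would take $\ASe_0=f'_\i-f_\i$ and solve $D^i(\ASa_0)_i^\dif=(f_\i-f_\f)-(f'_\i-f'_\f)$ for $\ASa_0$ by Hodge theory, which is solvable precisely because both fluxes lie in $\Ann(\fg,(\fg^{\pp\Sigma})^*)$, i.e.\ satisfy the integrated Gauss law; the translation is manifestly symplectic and commutes with the residual $G^S_\f$-action (both are translations of $\hcalA$), so it descends to a symplectomorphism of the reduced sectors, exactly as in the spacelike Maxwell computation of Appendix~\ref{app:Maxwell-short}. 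The remaining difficulty is purely bookkeeping: keeping the $\mathrm{U}(1)$ sign conventions in $\uhAS$ consistent between the labeling $(E^\i,-E^\f)$ induced by the flux map and the physical labeling $(E^\i,E^\f)$ of~\eqref{eq:MaxSSS} while checking the Hodge-solvability condition.
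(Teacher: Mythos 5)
Your proposal is correct and, for the bulk of the theorem, follows the same route as the paper: Theorem \ref{thm:summaryMax} is presented there precisely as the $G=\mathrm{U}(1)$, $n=3$, $S\simeq S^2$ specialisation of the general results, namely the Abelian clause of Theorem \ref{thm:nonAb-constr-red} (via Appendix \ref{app:Abelianreduction}) for part (1), Proposition \ref{prop:Gred} together with Theorem \ref{thm:fibrecharacterisation}(iii), Lemma \ref{lem:residualAction} and Equation \eqref{eq:huAS-Ab} for part (2), and Theorem \ref{thm:modelHamreduction} with Remark \ref{rmk:ASfluximage} for part (3) --- exactly the results you assemble. Your on-shell dictionary obtained by integrating \eqref{eq:Abeliangauss}, and your handling via Remark \ref{rmk:Abelian-isotropy} of the discrepancy between $\Go\simeq G\cdot G^\Sigma_{\rel,0}$ (Proposition \ref{prop:discretequotientgroup}) and the statement's $\Go\simeq G^\Sigma_{\rel,0}$ (they have identical orbits since constants lie in $\ker\rho$), match the paper's own treatment, as does your gauge fixing $\lambda=0$ realising the quotient by $\Gred$ as a quotient of $\{(\ASa,\ASe)\}$ by $G^S_\f$.

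The one place where you genuinely diverge is the final claim that all superselection sectors are symplectomorphic. The paper derives this from the explicit description of the sectors in Theorem \ref{thm:memorySSS}: after the further gauge fixing $\mathfrak{Re}(2\tilde\varphi(0))=0$, every sector is identified with $(\hcalA_\mu,\omAS_\mu)$, whose symplectic form, written in AS modes and the Hodge decomposition, is visibly independent of the labels (cf.\ the footnote pointing to the spacelike analogue of Appendix \ref{app:Maxwell-short}). You instead give a translation argument: in the linear coordinates of $\Xas$ the momentum map is affine and $\omeas$ is translation-invariant, so a fixed shift $(\ASa,\ASe)\mapsto(\ASa+\ASa_0,\ASe+\ASe_0)$, with $\ASa_0^\dif$ produced by Hodge theory from the difference of the fluxes --- solvable precisely because both labels lie in $\Ann(\fg,(\fg^{\pp\Sigma})^*)$ --- maps level set to level set, commutes with the residual translation action, and hence descends to a symplectomorphism of the reduced sectors. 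This is a clean and more elementary argument that bypasses the mode decomposition entirely; its only cost is the sign bookkeeping between the momentum-map labelling $(E^\i,-E^\f)$ and the physical labelling $(E^\i,E^\f)$ used in \eqref{eq:MaxSSS}, which you correctly flag and which is immaterial since the labels merely parametrise the sectors.
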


\begin{remark}
If instead of $G=\mathrm{U}(1)$ one chooses $G=\mathbb{R}$, the only difference in the above theorem would be in point (2), where $\Gred \simeq G^{\pp\Sigma}/G$---with no extension.
\end{remark}

\subsubsection{Memory as superselection}

In the absence of massive, charged, particles, one can restrict the attention to the case in which the electric field at $\scri^-_-$ (and $\scri^+_+$) vanishes. If we wanted to implement such a condition in our formalism we would have to require that $E^\i$ vanishes. (Note that since $\Sigma$ must be compact, $E^\i$ is necessarily computed at a finite value of advanced time albeit possibly one ``very far into the past''.) 
This corresponds to performing a partial flux-superselection at which only the value of $E^\i = \ASe$ is fixed---at the special value of zero. Mathematically, this is yet another reduction in stages: we first perform symplectic reduction at the zero level set of the (component of the) momentum map $\int_S\sqrt{\gamma}\ (\ASe \ \cdot \ )$ for the action of a copy of $G^{S}_\i$. 
As a result of this reduction, not only is $\ASe$ fixed to zero, but $\lambda$ is quotiented out as well.
Then, and only then, one is left with a symplectic manifold given by the (\emph{non}-extended) Ashtekar-Streubel phase space \cite{AshtekarStreubel}:

\begin{proposition}\label{prop:MaxPartialSSS}
The symplectic reduction of $(\uCo,\uomegao,\Gred)\simeq (\Xas, \omeas, G^{\pp\Sigma} /G)$, with respect to the Hamiltonian action of the initial copy of $G^S_\i\subset G^{\pp\Sigma}$ at $f_\i = 0$ (i.e.\ $\ASe=0$), yields the Ashtekar--Streubel symplectic space $(\hcalA,\omAS)$,
\[
(\Xas,\omeas) //_{0} G^S_\i \simeq (\hcalA,\omAS).
\]
This space carries the following residual action of the gauge symmetry group $G^S/G$:
\[
\underline{\varrho}{}_\mathsf{AS}(\xi_\f) \ASa = D\xi_\f.
\]
with momentum map
\[
\langle \uh_{\mathsf{AS}},\xi_\f\rangle = \int_S \sqrt{\gamma} \ \big( (D^i\ASa_i^\dif)  \xi_\f  \big).
\]
The (asymptotic, gauge-invariant) electromagnetic field $(E,F)$ at $(u,x)\in\Sigma\subset \scri$ is then given by   
\[
E =  D^i \ASa_i^\i- D^i\ASa_i(u) 
\quad\text{and}\quad
F = D \ASa 
\qquad (\ASe = 0).
\]
\end{proposition}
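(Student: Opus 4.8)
The plan is to obtain $(\hcalA,\omAS)$ as the Marsden--Weinstein reduction of the Maxwell model $(\Xas,\omeas)$ at the zero level of the momentum map generating the initial copy $G^S_\i\subset G^{\pp\Sigma}/G$, reading all the data off Theorem~\ref{thm:summaryMax}. First I would isolate the $G^S_\i$-action and its momentum map by setting $\xi_\f=0$ in the formulas of Theorem~\ref{thm:summaryMax}(\ref{thm:summaryMax-2}): the action becomes the pure $\lambda$-translation $\underline{\varrho}(\xi_\i,0)(\ASa,\lambda,\ASe)=(0,-\xi_\i,0)$, while the momentum map component collapses to $\langle\uhAS,(\xi_\i,0)\rangle=\int_S\sqrt{\gamma}\,\ASe\,\xi_\i$, so that the $G^S_\i$-momentum map is $J_\i=\ASe\in(\fg^S)^*$. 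Hence the prescription $f_\i=0$ is literally $\ASe=0$.

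Next I would analyse the reduction itself. The level set is $J_\i^{-1}(0)=\{(\ASa,\lambda,0)\}\simeq\hcalA\times\fg^S$, and since $\xi_\i$ ranges over all of $\fg^S$ and acts by translation on the vector space $\fg^S\ni\lambda$, the $G^S_\i$-action there is free and proper; the reduction is therefore smooth and the quotient simply forgets $\lambda$, giving $J_\i^{-1}(0)/G^S_\i\simeq\hcalA$. To compute the reduced form I would pull $\omeas$ back to $J_\i^{-1}(0)$: the cross term $\int_S\sqrt{\gamma}\,\bd\ASe\wedge\bd\lambda$ vanishes because $\ASe$ is constant on the level set, leaving $\int_\Sigma\sqrt{\gamma}\,\gamma^{ij}(L_\ell\bd\ASa_i)\wedge\bd\ASa_j$, which is $\lambda$-independent and hence descends to precisely $\omAS$ of Definition~\ref{def:hatA-sympl}. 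This establishes the symplectomorphism $(\Xas,\omeas)//_{0}G^S_\i\simeq(\hcalA,\omAS)$.

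Then I would read off the residual structure. Reducing $G^{\pp\Sigma}/G\simeq(G^S_\i\times G^S_\f)/G$ by the faithful copy $G^S_\i$ amounts to quotienting by the subgroup generated by $G^S_\i$ and the diagonal constants, leaving $G^S_\f/G\simeq G^S/G$, which acts on $\hcalA$ by the $\xi_\i=0$ part of the original action, namely $\underline{\varrho}{}_\mathsf{AS}(\xi_\f)\ASa=D\xi_\f$ (the $\lambda$-component being immaterial after reduction). Its momentum map is obtained by setting $\xi_\i=0$ and $\ASe=0$ in Theorem~\ref{thm:summaryMax}(\ref{thm:summaryMax-2}), giving $\langle\uh_{\mathsf{AS}},\xi_\f\rangle=\int_S\sqrt{\gamma}\,(D^i\ASa_i^\dif)\xi_\f$; this is well-defined on $G^S/G$ because $\int_S\sqrt{\gamma}\,D^i\ASa_i^\dif=0$ on the closed sphere $S$, so it lands in $\Ann(\fg,(\fg^S)^*)$. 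Finally, the electromagnetic field expressions follow by substituting $\ASe=0$ into $E=\ASe+D^i\ASa_i^\i-D^i\ASa_i(u)$ and $F=D\ASa$ from Theorem~\ref{thm:summaryMax}(1).

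All the computations are elementary once the model is in place; the one point demanding care, and the main potential obstacle, is the bookkeeping of the constant subgroup $G$ inside the various quotients: one must check that the copy of $G^S_\i$ being reduced is faithful, so that the whole $\lambda$-factor is eliminated and the reduced space is exactly $\hcalA$ rather than a further quotient, while simultaneously verifying that the leftover symmetry is $G^S_\f/G$ rather than $G^S_\f$, consistently with the residual momentum map taking values in $\Ann(\fg,(\fg^S)^*)$. A secondary point is the verification of freeness and properness, but this is immediate here since $G^S_\i$ acts by translation on the Fr\'echet vector space $\fg^S$.
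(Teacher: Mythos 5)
Your proposal is correct and follows essentially the same route as the paper: the paper also treats this as a further Marsden--Weinstein reduction in stages, identifying the $G^S_\i$-momentum map component as $\ASe$, taking the zero level set $\{\ASe=0\}$, and observing that the quotient by the $\lambda$-translations eliminates the $T^*\fg^S$ factor, leaving $(\hcalA,\omAS)$ with the residual $G^S_\f/G$ action and momentum map $\int_S\sqrt{\gamma}\,(D^i\ASa_i^\dif)\,\xi_\f$. Your explicit verification that the cross term $\int_S\sqrt{\gamma}\,\bd\ASe\wedge\bd\lambda$ dies on the level set, and that the residual momentum map lands in $\Ann(\fg,(\fg^S)^*)$, just spells out what the paper leaves implicit.
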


\begin{remark}
Observe that this corresponds to ``gauge-fixing'' $A_\ell=0$. The combination of the two stages is tantamount to symplectic reduction w.r.t. $\Go$ plus the (superselection) condition  $\E_\i = 0$. This can be thought of symplectic reduction with respect to the subgroup of gauge transformations that are trivial at $S_\f$ at the on-shell configurations such that $E_\i=0$.
\end{remark}

This \emph{partially superselected phase space} is the one that directly compares to the setup of \cite{HeMitra2014, KapecPateStrominger, StromingerLectureNotes}.\footnote{To avoid clutter, we will only keep citing Strominger's lecture notes.} What in their language is a ``new symmetry of QED'' is here rather found to be a \emph{residual gauge symmetry} of QED that still acts on the AS phase space.\footnote{With reference to the reduction by stages procedure, quotienting out this residual gauge symmetry is interpreted as ``third stage'' reduction.}. We notice that this symmetry is associated to $S_\f\subset\pp\Sigma$, which (morally) corresponds to $\scri^-_+$: which is precisely what one would expect according to the analysis of \cite{StromingerLectureNotes}.

Moreover, this residual action of $G^S_\f$ on $(\hcalA,\omAS)$ is Hamiltonian, and the momentum map is given by the sphere-divergence of the AS zero mode $\mathfrak{Im}(2\tilde \ASa(0)) = \ASa^\dif$ (cf. Proposition \ref{prop:modedecomp}):
\[
\ASa \mapsto \int_S \sqrt{\gamma}\big((D^i\ASa_i^\dif) \ \cdot \ ).
\] 
This is of course in agreement with \cite{StromingerLectureNotes}, where \emph{additionally} $\ASa_i^{\i/\f}$ is assumed to be $D$-exact (absence of magnetic fluxes at $\scri^-_\pm$). We see that no such restriction is in fact necessary.

We also observe that $\ASa^\dif \doteq \ASa^\f - \ASa^\i$ is invariant under the residual gauge symmetry at $S_\f$. In particular, its divergence equals (minus) the difference between the initial and final electric fields (see e.g. Equation \eqref{eq:EdiffAbelian}):
\[
E^\dif \equiv E^\f(\ASa,\ASe) - E^\i(\ASe) = - D^i\ASa_i^\dif.
\]
This is the (so-called ``ordinary'') \emph{electromagnetic memory}, which plays an important role in the \emph{memory effect,} first identified by \cite{Staruszkiewicz:1981} and confirmed e.g.\ by \cite{Herdegen95}, and later revisited by \cite{BieriGarfinkle} in the context of particle scattering off a burst of electromagnetic radiation, as well as \cite{Pasterskimemory} in the context of asymptotic symmetries and Weinberg's soft theorems (see also \cite{GarfinkleHollandsetal}). (Note that recent controversy on such interpretations was brought to light by \cite{herdegen2023velocity}.)\footnote{Note:
$E\vert_\text{here} = \lim_{r\to\infty} r^2 \bar F_{ur}$ is denoted by $
W\vert_\text{there}$ in \cite{BieriGarfinkle} and by $A_u\vert_\text{there}$ in \cite{Pasterskimemory}}

It is often convenient to express the memory in term of the quantity \cite{BieriGarfinkle}\footnote{With reference to \cite{BieriGarfinkle}, $\Phi\vert_{\text{there}}=\mu\vert_{\text{here}}$.} 
\begin{equation}
\label{eq:memoryBG}
\mu \doteq \Delta^{-1} E^\dif = - \Delta^{-1} D^i\ASa_i^\dif \in \Ann(\fg,(\fg^S)^*),
\end{equation}
where $\Delta$ is the Laplacian on $S$. If $\ASa^{\i/\f} = D\varphi^{\i/\f}$, as is assumed in \cite{KapecPateStrominger,StromingerLectureNotes,Pasterskimemory}, then
\[
\mu= -\Delta^{-1} \Delta\varphi^\dif = - \varphi^\dif.
\]
We will come back to this equation shortly. The take away message is that ``memory is a (co-) momentum map'' on the  (partially superselected) AS phase space.

(In both \cite{BieriGarfinkle} and \cite{Pasterskimemory}, an extra contribution is present due to the flux of particles through $\Sigma \hookrightarrow M$, which they call ``null memory'' and ``hard contribution'' respectively. This term does not appear here because we have worked in the absence of matter fields. Including these contributions would be straightforward and would yield results in agreement with theirs.)

As we already observed, the (residual) \emph{gauge} symmetry $G^S_\f$ is generated by the charge/momentum map $D^i\ASa^\dif_i$. Using the nonlocal basis of $\hcalA$ given by the AS modes introduced of Proposition \ref{prop:modedecomp}, it is then immediate to see that only the AS zero mode $\mathfrak{Re}(2\tilde \ASa(0)) = \ASa^{\int}-\ASa^\av$ is affected by this action:
\begin{equation*}
\underline{\varrho}{}_\mathsf{AS}(\xi_\f) \begin{pmatrix}
\mathfrak{Re}(2\tilde \ASa(0))\\
\mathfrak{Im}(2\tilde \ASa(0)) 
\end{pmatrix}
= 
\begin{pmatrix}
 D\xi_\f \\ 0
\end{pmatrix}
\quad\text{and}\quad
\underline{\varrho}{}_\mathsf{AS}(\xi_\f) \tilde \ASa(k\geq 1) = 0.
\end{equation*}
We are going to explicitly describe the fully reduced phase space ${\uuC}$ by means of a suitable gauge-fixing for this action.

Consider the Hodge decomposition of the AS modes $\tilde\ASa(k,x)$:
\begin{subequations}\label{eq:HodgeASa}
\begin{equation}
\tilde\ASa(k) = D \tilde\varphi(k) +  D \times \tilde\beta(k),
\end{equation}
where $D\equiv d_S$ is the differential on the sphere and $D \times \doteq \star D \star \equiv d_S^\star$ denotes the codifferential (a.k.a.\ curl); we also used the fact that no non-trivial harmonic 1-form exists on the two sphere. 

In this decomposition, the $D$-exact (a.k.a.\ electric) zero-mode $\mathfrak{Re}(2 \tilde\varphi(0))$ of $\tilde\ASa(k)$ is ``pure gauge'':\footnote{Mathematically, this means that $\fg^S$ acts transitively on it, so that the quotient is a point, meaning that $\mathfrak{Re}(2\tilde\varphi(0))$ is eliminated by reduction.}
\begin{equation}\label{eq:MaxResidualGauge}
\underline{\varrho}{}_\mathsf{AS}(\xi_\f) \mathfrak{Re}(2\tilde\varphi(0)) = \xi_\f,
\end{equation}
whereas all other modes are gauge invariant, including $\mathfrak{Im}(2\tilde\varphi(0))$. This quantity parametrises, in this language, the momentum map $\uh_\mathsf{AS}$, since
\begin{equation}\label{eq:MaxResidualGauge2}
\mathfrak{Im}( 2\tilde\varphi(0) ) = \varphi^\dif = \Delta^{-1} D^i\ASa^\dif_i = - \mu.
\end{equation}
\end{subequations}

From this we conclude:

\begin{theorem}[Maxwell: memory as superselection]\label{thm:memorySSS}
The flux superselection sector $(\uuS_{[f_\i,f_\f]},\uuomegao_{[f_\i,f_\f]})$ associated with the flux $f=(f_\i,f_\f) \in \Ann(\fg,(\fg^{\pp\Sigma})^*)$ with
\[
\langle f_\i, \cdot \rangle = 0 \quad \text{and}\quad \langle f_\f, \cdot \rangle = \int_S\sqrt{\gamma}( \mu \Delta \cdot ) \in \Ann(\fg,(\fg^S)^*)
\]
is symplectomorphic to the symplectic space $(\hcalA_\mu, \omAS_\mu)$ defined by
\[
\hcalA_\mu \doteq \{ \ASa\in\hcalA \ : \ 2 D^i\tilde\ASa_i(0) = i \Delta \mu \}
\]
and
\[
\omAS_\mu = \int_S \sqrt{\gamma}\Big(  \star \tilde\beta(0)^* \wedge \Delta \bd \tilde\beta(0) + \sum_{k\geq 1} \bd \tilde\ASa(k)^*_i \wedge \bd \tilde \ASa(k)^i \Big)
\]
where $\Delta \equiv d_S d_S^\star + d_S^\star d_S$.
The asymptotic (gauge-invariant) electromagnetic field $(E,F)$ is at $(u,x)\in\Sigma\subset \scri$ is then given by   
\[
E =  \Delta(\phi_{k\geq1}(u)-\phi_{k\geq 1}^\i)
\quad\text{and}\quad
F = \Delta \big( \mathfrak{Re}(2 \beta) + \frac{u}{2} \mathfrak{Im}(2\beta)\big), 
\]
where $\varphi_{k\geq1} \doteq \varphi - \mathfrak{Re}(2 \tilde\varphi(0)) - \frac{u}{2} \mathfrak{Im}(2\tilde\varphi(0))$, so that $\varphi_{k\geq1}^\dif = \varphi_{k\geq1}^{\int} - \varphi_{k\geq1}^\av = 0$.
\end{theorem}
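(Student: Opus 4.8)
The plan is to assemble the statement from the partial-reduction result of Proposition \ref{prop:MaxPartialSSS} together with the explicit mode and Hodge decompositions already established, so that no genuinely new reduction argument is needed. First I would fix the initial flux to $f_\i = 0$: by Proposition \ref{prop:MaxPartialSSS} the symplectic reduction of $(\Xas,\omeas)$ by the initial copy $G^S_\i$ at $\ASe = 0$ returns the (non-extended) Ashtekar--Streubel space $(\hcalA,\omAS)$, carrying the residual Hamiltonian action $\underline{\varrho}_{\mathsf{AS}}(\xi_\f)\ASa = D\xi_\f$ of $G^S_\f$ with momentum map $\langle \uh_{\mathsf{AS}},\xi_\f\rangle = \int_S\sqrt{\gamma}\,(D^i\ASa_i^\dif)\xi_\f$. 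By Theorem \ref{thm:modelHamreduction}, the superselection sector $\uuS_{[f_\i,f_\f]}$ with $f_\i = 0$ is then the further symplectic quotient $\uh_{\mathsf{AS}}^{-1}(f_\f)/G^S_\f$, so the whole task reduces to making this last orbit reduction explicit.

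Second, I would pass to the Darboux basis of Proposition \ref{prop:modedecomp} and the Hodge split \eqref{eq:HodgeASa}, $\tilde\ASa(k) = D\tilde\varphi(k) + D\times\tilde\beta(k)$. The decisive facts, recorded in \eqref{eq:MaxResidualGauge}--\eqref{eq:MaxResidualGauge2}, are that $G^S_\f$ acts only on the real exact zero-mode, $\underline{\varrho}_{\mathsf{AS}}(\xi_\f)\mathfrak{Re}(2\tilde\varphi(0)) = \xi_\f$, and does so freely and transitively, while every other mode is gauge-invariant; and that the momentum map is carried solely by $\mathfrak{Im}(2\tilde\varphi(0))$. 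Imposing $\uh_{\mathsf{AS}} = f_\f$ with $\langle f_\f,\cdot\rangle = \int_S\sqrt{\gamma}\,(\mu\Delta\cdot)$ fixes, by self-adjointness of $\Delta$, the divergence $D^i\ASa_i^\dif = \Delta\mu$, i.e.\ the imaginary exact zero-mode; gauge-fixing $\mathfrak{Re}(2\tilde\varphi(0)) = 0$ removes the real exact zero-mode. Together these two conditions are exactly the defining relation $2D^i\tilde\ASa_i(0) = i\Delta\mu$ of $\hcalA_\mu$, and the surviving coordinates are the curl potential $\tilde\beta(0)$ of the zero-mode together with all higher modes $\tilde\ASa(k)$, $k\geq 1$. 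Freeness and properness of the translation action on $\mathfrak{Re}(2\tilde\varphi(0))$ guarantee that the quotient is a smooth symplectic manifold.

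Third, I would compute the reduced form by restricting the mode-expanded $\omAS$ of Proposition \ref{prop:modedecomp} to this slice. The $k\geq 1$ terms are untouched and reproduce $\sum_{k\geq 1}\bd\tilde\ASa(k)_i^*\wedge\bd\tilde\ASa(k)^i$. For the zero-mode, $\bd\tilde\varphi(0)$ vanishes on the slice (its real part is gauge-fixed to zero, its imaginary part is the fixed label $\mu$), so only $\bd\tilde\beta(0)$ survives; substituting $\tilde\ASa(0) = D\times\tilde\beta(0)$ and integrating by parts on $S^2$, using $\langle D\times\beta, D\times\beta'\rangle = \langle \beta,\Delta\beta'\rangle$ for scalars (valid since $S^2$ carries no harmonic $1$-forms), produces the weighted term $\int_S\sqrt{\gamma}\,\star\tilde\beta(0)^*\wedge\Delta\bd\tilde\beta(0)$, giving $\omAS_\mu$. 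Finally, the electromagnetic field formulas follow by specialising the expressions $E = D^i\ASa_i^\i - D^i\ASa_i(u)$ and $F = D\ASa$ of Theorem \ref{thm:summaryMax} (with $\ASe = 0$) to the slice: writing $\ASa = \sum_k(\tilde\ASa^*(k)\psi_k + \mathrm{c.c.})$ and Hodge-decomposing, the electric divergence picks out only the $k\geq 1$ electric potentials $\varphi_{k\geq 1}$, while $F = D\ASa$ isolates the curl potential $\beta$, yielding the stated $\Delta$-weighted expressions.

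The main obstacle will be the bookkeeping in the third step: tracking the factors and the self-adjointness of $\Delta$ through the complexified Gram--Schmidt basis $\{\psi_k\}$ so that the zero-mode contribution to $\omAS$ collapses to precisely $\star\tilde\beta(0)^*\wedge\Delta\bd\tilde\beta(0)$ with no residual cross-terms between the eliminated exact part and the surviving curl part. This amounts to checking that the Hodge split is $\omAS$-orthogonal on the zero-mode, which again rests on the absence of harmonic $1$-forms on $S^2$ and the self-adjointness of $\Delta$; once that orthogonality is secured, the remaining identifications are routine.
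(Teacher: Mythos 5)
Your proposal is correct and follows essentially the same route as the paper: partial reduction at $f_\i=0$ via Proposition \ref{prop:MaxPartialSSS}, then the mode and Hodge decompositions of Proposition \ref{prop:modedecomp} and Equations \eqref{eq:HodgeASa}--\eqref{eq:MaxResidualGauge2}, identifying $\mathfrak{Re}(2\tilde\varphi(0))$ as the pure-gauge mode and $\mathfrak{Im}(2\tilde\varphi(0))$ as the superselected memory, and finally restricting $\omAS$ to the resulting slice. The paper compresses all of this into the discussion immediately preceding the theorem (``From this we conclude''); your write-up simply makes explicit the orbit-reduction bookkeeping (via Theorem \ref{thm:modelHamreduction}) and the integration by parts on $S^2$ that the paper leaves implicit.
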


It is not hard to gather that all superselection sectors are in fact symplectomorphic to each other, i.e.\ to $(\hcalA_\mu, \omAS_\mu)$, even when the restriction $0 = \langle f_\i, \cdot \rangle = \int_S \sqrt{\gamma}(\ASe \cdot)$ is lifted.

The electromagnetic memory $\mu$ is a viable superselection label because $-D^i\ASa_i^\dif$ is the momentum map for the diagonal subgroup $G^S \hookrightarrow G^S_\i \times G^S_\f$, $g \mapsto(g_\i,g_\f) = (g,g)$. (See item (\ref{thm:summaryMax-2}) of Theorem \ref{thm:summaryMax}.) However, this fact is a consequence of the \emph{Abelian} nature of Maxwell theory, which fails to have a non-Abelian analogue. Moreover, the diagonal subgroup is not normal. If one superselects the memory \emph{before} superselecting the initial electric field, one would obtain a symplectic space without either a residual \emph{group} action or a momentum map.

\begin{remark}[Soft Symmetries]
Another way to see this is by noticing that we have the group homomorphism 
\[
G^{\pp\Sigma} \to G^S_\mathrm{diag} \ltimes_\mathrm{AD} G^S_\dif, \quad 
(g_\f,g_\i)\mapsto(g_\mathrm{diag},g_\dif)\doteq (g_\f,g_\f g_\i^{-1}),
\]
where the multiplication in $G^S_\dif$ is from \emph{the right}:
\[
(g_\mathrm{diag},g_\dif)\cdot(h_\mathrm{diag}, h_\dif)  = \Big(g_\mathrm{diag} h_\mathrm{diag} , g_\mathrm{diag} h_\dif g_\mathrm{diag}^{-1} g_\dif \Big).
\]
At the infinitesimal level, we have that the pairing between $\fg^{\pp\Sigma}$ and its dual can be rewritten as
\begin{align*}
\langle f,\xi\rangle &= \langle f_\f, \xi_\f\rangle - \langle f_\i, \xi_\i\rangle\\ 
& = \langle f_\f-f_\i, \xi_\f \rangle + \langle f_\i, \xi_\f-\xi_\i\rangle \\
& \equiv \langle f_\dif, \xi_\mathrm{diag} \rangle + \langle f_\i, \xi_\dif \rangle.
\end{align*}
where $f_\dif \doteq f_\f - f_\i \in (\fg^S)^*$ and $(\xi_\mathrm{diag}, \xi_\dif) \doteq (\xi_\f, \xi_\f - \xi_\i)\in\fg_\mathrm{diag}^S \ltimes_\ad \fg_\dif^S$.

\emph{In the Abelian case}, The momentum map $\uhAS$ of Theorem \ref{thm:summaryMax}(\ref{thm:summaryMax-3}) splits into its $(\fg^S)_\i^*$ and $(\fg^S)_\dif^*$ components as follows:
\begin{equation}
\label{eq:SSSPhi}
\begin{dcases}
    \langle f_\i ,\ \cdot\ \rangle = \int_S \sqrt{\gamma} \ \big(\ASe \ \cdot \ \big),\\
    \langle f_\dif ,\ \cdot\ \rangle = \int_S \sqrt{\gamma} (\mu \Delta\ \cdot \ )  = \int_S \sqrt{\gamma} \ \big(( - D^i\ASa_i^\dif) \ \cdot \ ).
\end{dcases}
\end{equation}
Note that this is \emph{not} the case for $G$ non-Abelian, owing to the (field-dependent) ``parallel transport'' $\Ad(\Lambda^{-1})\cdot\xi_\i - \xi_\f$ (cf. Eq. \eqref{e:reducedfluxmap}).

The subgroup $G^S_\dif\simeq \{1\}\times G^S_\dif \into G^{\pp\Sigma}$ is normal. The quotient group 
\[
G^S_\so \doteq (G^{\pp\Sigma}/G^S_\dif)/G\simeq G^S_{\mathrm{diag}}/G,
\]
is the gauge group of \emph{soft symmetries}---this can be naturally identified with the diagonal subgroup of $G^{\pp\Sigma}$ (modulo constant gauge transformations). After reduction in stages by the action of $G^S_\dif$, we are left with a Hamiltonian action of $G^S_\so$ on $\AS$, with momentum map given by the electromagnetic memory.
\end{remark}

\subsubsection{The physical meaning of superselection}\label{sec:physicalinterpretation}

We take a passage of the standard reference \cite{StromingerLectureNotes} as a starting point for a series of remarks on superselection.
In \cite[Section 2.11]{StromingerLectureNotes} a parallel is drawn between $\varphi$, that is the electric part of $\ASa_i$, and a Goldstone boson arising in the presence of a broken (large) symmetry (italisation and text within square brackets is our own):

\begin{quotation}\small
We have a \emph{charge} $Q^+_\varepsilon$[$:= \int_S(\sqrt{\gamma}\ (-D^i\ASa_i^\dif)\varepsilon)$] that generates a symmetry of the Lagrangian of any Abelian gauge theory. However, this charge does not annihilate the vacuum [$\ASa_i = 0$]. Instead, it creates an extra \emph{soft photon mode} $\phi$ [our $\mathfrak{Re}(2\tilde\varphi(0))$], which, according to (2.7.3), transforms inhomogeneously under the broken symmetry [see our Equation \eqref{eq:MaxResidualGauge}]. Hence the soft photons are the Goldstone bosons of spontaneously broken \emph{large gauge symmetry} [our $G^S_\f$]. There is an infinite vacuum degeneracy, since we can add any number of soft photons to any vacuum state and obtain another vacuum state with the same zero energy. Classically, the infinite-dimensional space of vacua can be labeled by flat Abelian connections $\pp_z\varepsilon(z,\bar z)$ on the sphere [$D\varepsilon$ in our notation].

There is a crucial difference between the usual Mexican hat story of spontaneous global symmetry breaking and the spontaneous breaking of the large gauge symmetries. In the usual story, the different vacua form superselection sectors (i.e., no physical finite energy operator exists that can move us from one vacuum to another). 

[...] Such superselection sectors clearly do not arise for the large gauge symmetry. The vacuum state is changed by soft photon creation, which occurs in nearly all scattering processes. The S-matrix elements do not factorise into superselection sectors.
\end{quotation}

The dictionary between the terminology used in \cite{StromingerLectureNotes} and our own is the following (to simplify the discussion and be consistent with their choice of phase space, i.e.\ $(\hcalA,\omAS)$, here we assume that we are working in the partial superselection sector: $f_\i = \int_S \sqrt{\gamma} (\mathsf{e}\, \cdot\,)=0$):
\begin{enumerate}[label=\roman*.]
\item the ``charge'' $Q^+_\varepsilon$ is the Hamiltonian generator (momentum map) for $G^S_\f$, i.e.\ $Q^+_\varepsilon := \int_S(\sqrt{\gamma}\ (-D^i\ASa_i^\dif)\varepsilon) = \int_S \sqrt{\gamma} \ E^\f \varepsilon$;
\item the expression ``large [or, asymptotic] gauge symmetries'' (LGS) is defined in \cite[Section 2.10]{StromingerLectureNotes} as the ``allowed gauge symmetries'' (AGS) modulo the ``trivial gauge symmetries" (TGS). The AGS are ``any [gauge symmetries] that respect the boundary [and fall-off] conditions'', while the TGS are ``the ones that act trivially on the physical data of the theory''. In \cite{StromingerLectureNotes}, the state space of the theory is taken to be $(\hcalA,\omAS)$, which we recognise in our language as the constraint-reduced, partially-superselected, space of Proposition \ref{prop:MaxPartialSSS}. On this space there is a residual action of $G^S_\f$. Therefore we take LGT to correspond in our language to the residual gauge symmetry $G^S_\f$ that survives both constraint reduction and the superselection of the initial flux $f_\i=0$.\footnote{More precisely the corrected residual symmetry group is $G^S_\f/G$, but since $G\hookrightarrow G^S_\f$ acts trivially on $\hcalA$, the quotient can be dropped without altering the result.}
This identification of LGS and $G^S_\f$ is consistent with the identification of the associated charge;\footnote{We do not attempt to formalise, in our language, the idea of the quotient LGS = AGS / TGS. But loosely speaking the TGS include spacetime gauge transformations that are trivial at $\scri$, gauge transformations on $\Sigma\subset \scri$ that are trivial at $\pp\Sigma$ (our $\Go$), and gauge transformations at $\pp\Sigma$ that are trivial at $S_\f\subset \pp\Sigma$ (our $G^S_\i$).}
it    \item In \cite{StromingerLectureNotes} both the following quantities are called ``soft photon modes"
\[
\begin{dcases}
N|_\text{there} = \mathfrak{Im}(2\tilde\varphi(0)) = \varphi^\dif = - \mu,\\
\phi|_\text{there} = \mathfrak{Re}(2\tilde\varphi(0)) = \varphi^{\int}-\varphi^\av.
\end{dcases}
\] 
There, this is justified because they are seen as conjugate fields on $(\hcalA,\omAS)$. Here, however, we see that their physical status is different: while $\phi$ is ``pure gauge'' (Eq. \eqref{eq:MaxResidualGauge2}), $N$ is gauge invariant and encodes memory. (As noted in Remark \ref{rmk:Stromingercf} there is a slight discrepancy in the way we and \cite{StromingerLectureNotes} extract the ``zero (or, soft) mode'' from a given asymptotic field.)
\end{enumerate}

\begin{remark}
We can now compare and contrast our results and the picture summarised in the above excerpt. In this comparison, it is important, however, to keep in mind Strominger's warning \cite[Section 2.11]{StromingerLectureNotes}, the italic is ours: 
\begin{quotation}
\small {Large gauge symmetries are unlike any previously discussed symmetry both in their asymptotic angle dependence and in the fact that the action is described at null, not spatial, infinity.} 

Phrases like `spontaneous symmetry breaking,' `Goldstone boson,' `superselection sector,' and even `conservation law' are used with slightly different meanings in different physical contexts. In importing those words to the present context, I have necessarily adapted and refined their meanings. \emph{I have done so in the way I thought most natural, but other adaptations might be possible.}
\end{quotation}
We clearly do not agree with the first assertion in the previous quotation. As for the second: The same cautionary measures apply to our analysis.
\end{remark}

\begin{enumerate}[label=\arabic*.,leftmargin=*]
\item Our reduction by stages procedure is manifestly gauge invariant.
Although the choice of describing the superselection sector $\uuS_{[f_\i,f_\f]} \simeq \hcalA_\mu$ in terms of AS fields with $\mathfrak{Re}(2\tilde\varphi(0))=0$ appears to be breaking the residual gauge symmetry, this is nothing but a choice of gauge-fixing, akin to setting $A_\ell$ to a constant in $u$ to gauge-fix $\Go\circlearrowright \X$. 

Now, the ``gauge-fixed" space $\hcalA_\mu$ carries no residual action of $G^S_\f$. Conversely, before gauge fixing,  all AS fields $\ASa\in\hcalA$ that are related by a gauge transformation in $G^S_\f$ represent the same physical fields (Proposition \ref{prop:MaxPartialSSS}). In fact $\hcalA_\mu = \{ \ASa \ : \ 2\tilde\varphi(0) =  i \mu \}$ intersects all the gauge orbits of those configurations with fixed memory $\mu$ once and only once.\footnote{This follows from the global nature of the Hodge decomposition in $\Omega^1(S)$, Assumption \ref{ass:trivialP}.} Another way of saying this is that the condition $\mathfrak{Re}(2\tilde\varphi(0)) = 0$ specifies a (global!) section of the principal bundle $\hcalA \to \hcalA/G^S_\f$, which is itself foliated into symplectic spaces $\{\hcalA_\mu\}_{\mu\in \Ann(\fg,(\fg^S)^*)}$ of fixed memory.
\item In light of the previous remark, we don't see how to make the ``Goldstone boson'' interpretation stringent. The (admittedly) imported terminology of ``symmetry breaking'' seems to be more appropriately replaced by the more standard notions of gauge fixing and symplectic reduction we just described.
\item We now turn to the notion of ``vacuum degeneracy'' or---more generally---``state degeneracy''. As we described at point (1), all configurations related by the action of $G^S_\f$ are gauge-related and therefore physically equivalent. With reference the quoted passage of \cite{StromingerLectureNotes}, the observation that one can add an infinite number of modes $\phi\vert_{\mathrm{there}}(x)\equiv \mathfrak{Re}(2\tilde\varphi(0))\vert_{\mathrm{here}}(x)$ to any vacuum state is directly explained by the fact that these modes are pure gauge (Eq. \ref{eq:HodgeASa}), and disappear after reduction.
\item All superselection sectors are symplectomorphic to each other and to the space of ``radiative modes'' $\hcalA_\rad$:\footnote{The space $\hcalA_\rad$ corresponds to the space of AS fields with no electric zero-modes, i.e. with $\mathfrak{Re}(2\tilde\varphi(0))=0$. It is readily identified with $\hcalA_{\mu = 0}$ but we restrain from doing so because, as we will see shortly, $\hcalA_\rad \simeq \hcalA_\mu$ for \emph{any} value of $\mu$.}
\[
    \uuS_{[f_\i,f_\f]} \simeq \hcalA_\mathrm{rad} = \{ \ASa^\mathrm{rad} = D \varphi_{k>1} + D\times \beta \}.
\]
Therefore, a given $\ASa^\mathrm{rad}\in \hcalA_\mathrm{rad}$ can correspond to many physically different configurations, each corresponding to a different superselection sector, i.e.
\[
{\uuC} \simeq \bigsqcup_{(f_\i,f_\f) \in \F} \uuS_{[f_\i,f_\f]} \simeq \F \times \hcalA_\mathrm{rad}, \qquad \F \simeq \fGred^*\simeq \Ann(\fg, (\fg^{\pp\Sigma})^*)
\]
From the viewpoint of ${\uuC}$ as a Poisson space foliated by symplectic leaves (the superselection sectors), using $\ASe = \E^\i $ and $\mu =\Delta^{-1} E^\dif $ to label the sectors as per Equation \eqref{eq:SSSPhi}, and $\ASa^\rad$ to label the points in $\hcalA_\rad$, is tantamount to treating $\ASa^\rad$ as a ``coordinate'' on the leaves of ${\uuC}$, and $(\ASe,\mu)$ as ``coordinates'' transverse to said leaves. (In other words, $\hcalA_\rad$ is the generic fibre of the symplectic fibration\footnote{In \cite{RielloSchiavina} we called the space of symplectic leaves the ``space of superselections'' $\mathcal{B}$. In this simple application, $\mathcal{B} \simeq \F$.} ${\uuC} \to \F$.) In particular, the ``coordinates'' $(\ASe,\mu)$ then Poisson commute with each other and with $\ASa^\rad$, which is why we say they are ``superselected''.\footnote{Here, we have reintroduced the possibility that $\ASe \neq 0$ for completeness.}

In this ``coordinatisation'' of ${\uuC}$, the physical magnetic and electric fields at $(\ASe,\mu,\ASa^\rad)\in{\uuC}$ are
\begin{subequations}
    \label{eq:statedegeneracy}
\begin{equation}
    F = D\ASa^\mathrm{rad} = \Delta \beta
\end{equation}
and\footnote{Note that, although $u$-dependent, the second term of $E$ in this formula is bounded by the fact that $u\in [-1,1]$. In the asymptotic limit the factor of $2$ has to be replaced by the length of the (compact) $u$ interval.} 
\begin{align}
    E & = \ASe + \frac{u}{2} \Delta \mu + (D^i \ASa^\mathrm{rad}_i)^\i - D^i\ASa^\mathrm{rad}_i\notag\\& = \ASe + \frac{u}{2}\Delta\mu + \Delta \varphi_{k >1}^\i - \Delta \varphi_{k>1}.
\end{align}
\end{subequations}%
Note that the electric field's zero-modes (intended as in Lemma \ref{lemma:ASmodes}) are the ones that depend on $(\ASe,\mu)$.

Thus we see that a sort of ``state degeneracy'' exists in our formalism, and is given by the notion of \emph{superselection sectors}---thus reaching a conclusion rather different from that of the quote above. However, the ``states'' or, more appropriately, the field configurations of Equation \eqref{eq:statedegeneracy} are not physically degenerate, since they correspond to distinct asymptotic electromagnetic fields.  In particular $\ASa^\rad =0$ could be considered a ``dynamical'' vacuum configuration, but is nevertheless associated to different electric fields in different superselection sectors. (It can be seen as a zero-section of ${\uuC}\to \F$.) The ``true vacuum'' $E=F=0$ belongs to just one superselection sector: the one with $\ASe = \mu = 0$.

The ``state degeneracy'' we just described, in the language of \cite{StromingerLectureNotes} (where $\ASe = 0$, see point (ii) of the above dictionary), corresponds to different radiative states associated to different values of the ``soft photon $N\vert_\text{there} = -\mu$" and not to different values of its pure-gauge conjugate mode $\phi\vert_\text{there} = \mathfrak{Re}(2\tilde\varphi(0))$.
\item The origin and nature of the superselection sectors in our picture is quite different from the one alluded to in the quote, which instead refers to the breaking of a global symmetry by a choice of vacuum state in a Mexican-hat potential. First, not all (our) sectors carry the same energy. Moreover, no (gauge) symmetry is ever broken (but rather fixed), and strictly \emph{no Hamiltonian flow} exists over ${\uuC}$ that connects two different superselection sectors (since the superselection sectors are the symplectic leaves of ${\uuC}$, all Hamiltonian vector fields are tangent to them).\footnote{Adapting the language used in the quote, we could rephrase: over ${\uuC}$, no physical operator exists that can (continuously) move us from one vacuum to the other, neither of finite or infinite energy, for such an operator would have to be gauge-breaking.} Indeed, the whole point here is that the fully-reduced (i.e.\ the fully gauge-invariant) phase space is not symplectic but rather a disjoint collection of symplectic spaces---the superselection sectors. 
\item In light of Appendix \ref{app:Maxwell-short}, the discussion at the previous point holds pretty much unaltered even if $\Sigma$ is a spacelike hypersurface. This in principle allows one to ``glue'' the phase spaces associated to a spacelike and a null hypersurface with a common boundary $S$. This gluing has to happen a superselection sector at the time, i.e.\ preserving the electric field through $S$---which labels the superselection sectors in both phase spaces. Mathematically this might take the form of a fibre product. (See \cite[Sect. 6]{RielloGomes} for a theorem about gluing of spacelike regions; to ensure smoothness of the glued field the introduction of a ``buffer'' zone between the two regions might be required, cf. e.g. \cite{AretakisEtAl2021}; see also \cite{CattaneoMnev} for a different option.)
\item\label{item:quantisation} Upon quantisation, the fact that ${\uuC} = \C/\G$ is foliated by symplectic superselection sectors will presumably translate into the following statements: (1) the algebra of gauge invariant observables possesses a center, corresponding to (electric) flux and memory operators;  (2) the Hilbert space associated to the algebra of \emph{gauge-invariant} observables splits into a direct sum of Hilbert spaces (``soft sectors'') labelled by the values of the electric fluxes (and/or memory).

An analogous conclusion about the split of the Hilbert space into soft sectors was reached, by a variety of means, starting from the 1960s---e.g. by studying IR divergences and asymptotic states in QED scattering amplitudes \cite{Chung1965, Kibble1968-II, Kibble1968-III, Kibble1968-IV, FaddeevKulish1970} (see also \cite{KapecPerry2017, ChoiAkhoury2018, GomezLetschkaZell2018}, as well as \cite{PrabhuWald2022} for a critical analysis), or by quantising the asymptotic AS phase space (``asymptotic quantisation'') \cite{Ashtekar1984} (see also \cite{AshtekarCmapigliaLaddha2018,LaddhaHolographicnullinfinity}), or by studying the consequences of the Gauss constraint in algebraic quantum field theory \cite{FrohlichEtAl1979AnnPhys, FrohlichEtAl1979PhysLett, Buchholz86} (see also \cite{MundEtAl2022} and references therein), or by studying the symmetry structure of asymptotic fields \cite{GervaisZwanzinger1980, Giulini1995-superselection}.

We plan to investigate the quantisation of our formalism, and its relation to these matters, in future work (cf. Section \ref{sec:Introquantisation}).

\item One important point we have not yet touched upon, that is nonetheless central to the analysis of asymptotic symmetries and their relation to the soft theorems, is that of ``soft charge conservation'' \cite{KapecPateStrominger,StromingerLectureNotes}. The reason we have neglected this point so far is because such a conservation is a matter of dynamics, not kinematics, and therefore cannot be fully probed by simply analysing the action of gauge symmetries over phase space. 

However, if our interpretation in terms of superselection sectors is not only kinematically but also dynamically correct, then what emerges here is the prospect that the ``conservation of soft charges'' is nothing else than the statement that dynamics ``happens within a given superselection sector''. This interpretation is in fact compatible with the analysis of \cite{RejznerSchiavina} (see also \cite{Herdegen96,Herdegen05,HerdegenRevisited}, and the approaches of \cite{ChoiAkhoury2018} and \cite{CampigliaSoft}). We plan to come back to this point in the future.
\end{enumerate}

\subsection{$G$ semisimple}\label{sec:semisimplememory}

To conclude this discussion, we briefly comment on the generalisation of memory as a superselection label in the non-Abelian case. The morale of the story is that ``color memory'' of \cite{PateRaclariuStrominger} \emph{fails} in general to be a well-defined, gauge-invariant, notion. We propose below an alternative definition of non-Abelian memory. This discussion will reinforce some of the points made above in relation to the ideas of ``symmetry breaking'' and ``vacua''.

The simplest scenario to consider is once again the one in which a partial reduction with respect to $G^S_\i$ at $f_\i = 0$ is performed.

From Theorem \ref{thm:nonAb-constr-red} and Equation \eqref{e:reducedfluxmap}, one obtains:

\begin{theorem}\label{thm:YMPartialSSS}
The symplectic reduction of $(\uCo,\uomegao,\Gred)\simeq (\XAS, \omeAS, G^{\pp\Sigma})$, with respect to the Hamiltonian action of the initial copy of $G^S_\i\subset G^{\pp\Sigma}$ at $f_\i = 0$ (i.e.\ $\ASe=0$) yields the Ashtekar--Streubel symplectic space $(\hcalA,\omAS)$,
\[
(\XAS, \omeAS) //_{0} G^S_\i \simeq (\hcalA,\omAS).
\]
This space carries the following residual action of the gauge symmetry group $G^S$:
\[
\underline{\varrho}{}_\mathsf{AS}(\xi_\f) \ASa = \cD\xi_\f \doteq D\xi_\f + [\ASa ,\xi_\f].
\]
with momentum map given by 
\[
\langle \uh_{\mathsf{AS}},\xi_\f\rangle 
= \int_S \sqrt{\gamma}\ \tr((\cD^iL_\ell\ASa_i)^{\int} \xi_\f ).
\]
The asymptotic (gauge-covariant) electromagnetic field $(E,F)$ at $(u,x)\in\Sigma\subset \scri$ is then given by  
\[
E =  \int_{-1}^u du' \ (\cD^i L_\ell \ASa_i)(u')
\quad\text{and}\quad
F = D \ASa.
\]
\end{theorem}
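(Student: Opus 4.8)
The plan is to treat the statement as a single Marsden--Weinstein reduction of the model Hamiltonian space $G^{\pp\Sigma}_0 \simeq G^S_\i \times G^S_\f \circlearrowright (\XAS,\omeAS)$ furnished by Lemma \ref{lem:residualAction} and Proposition \ref{prop:red-flux-map}, performed with respect to the \emph{first} factor $G^S_\i$ alone and at the zero level of the associated component of $\uhAS$. The key structural fact, which makes the reduction unusually transparent, is that $G^S_\i$ acts on $\XAS = \hcalA \times T^*G^S_0$ only through the group coordinate $\Lambda$, by left translation, fixing both $\ASa$ and $\ASe$.

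First I would isolate the $G^S_\i$-data by setting $\xi_\f = 0$. Lemma \ref{lem:residualAction} then gives $\underline\varrho(\xi_\i,0)(\ASa,\Lambda,\ASe) = (0,-\xi_\i\Lambda,0)$, while Proposition \ref{prop:red-flux-map}, after using $\Ad$-invariance of $\tr$, yields the momentum-map component $\langle \uhAS,(\xi_\i,0)\rangle = \int_S \btr\big((\Ad(\Lambda)\cdot\ASe)\,\xi_\i\big)$. Since $\Ad(\Lambda)$ is a fibrewise linear isomorphism, its zero level set is exactly $\{\ASe = 0\} = \hcalA \times G^S_0 \times \{0\}$.

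Next I would perform the quotient and identify the reduced form. On the level set the $G^S_\i = G^S_0$-action is left translation on the $\Lambda$-factor, hence free and proper, with orbits the $G^S_0$-fibres over $(\ASa,0)\in\hcalA\times\{0\}$; the quotient is therefore the smooth Fr\'echet manifold $\hcalA$, with projection forgetting $\Lambda$. For the symplectic form I would note that $\iota^*\Omega_S = 0$ on $\{\ASe=0\}$ — every term of $\Omega_S$ carries a factor $\ASe$ or $\bd\ASe$, and the latter pulls back to zero along the constraint — so that $\iota^*\omeAS = \iota^*\omAS$; as $\omAS$ depends only on $\ASa$ it is manifestly horizontal and $G^S_\i$-invariant, hence basic, and descends to $\omAS$ on $\hcalA$. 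This gives $(\XAS,\omeAS)//_0 G^S_\i \simeq (\hcalA,\omAS)$.

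The residual structure then follows by restriction. Because $G^{\pp\Sigma}_0 \simeq G^S_\i \times G^S_\f$ is a direct product, the factor $G^S_\f$ commutes with $G^S_\i$ and preserves $\{\ASe=0\}$ (it sends $\ASe \mapsto \Ad(g_\f^{-1})\cdot\ASe$), so it descends to $\hcalA$; setting $\xi_\i = 0$ in Lemma \ref{lem:residualAction} yields the claimed $\underline{\varrho}_\mathsf{AS}(\xi_\f)\ASa = \cD\xi_\f$. Restricting the $\xi_\f$-component of $\uhAS$ to $\ASe = 0$ kills the $\ASe\,\xi_\f$ term and leaves $\langle \uhAS,\xi_\f\rangle = \int_S \sqrt{\gamma}\,\tr\big((\cD^i L_\ell\ASa_i)^{\int}\xi_\f\big)$, the stated reduced momentum map. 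The explicit fields finally come from substituting the gauge-fixed datum $A_\ell = 0$, $E^\i = \ASe = 0$ into the solution of the Gauss constraint of Proposition \ref{prop:constr-surf} (where $\cL_\ell = \pp_u$ and $F_{\ell i} = L_\ell\ASa_i$) and into the spatial curvature. The main — indeed only non-formal — obstacle is verifying that the residual $G^S_\f$-action and its momentum map descend consistently, i.e.\ are $G^S_\i$-invariant on the level set; this is ensured by the direct-product structure, and the potentially troublesome $\Ad(\Lambda)$-coupling between $\ASe$ and $\xi_\i$ in \eqref{e:reducedfluxmap} is harmless precisely because it vanishes on $\{\ASe=0\}$.
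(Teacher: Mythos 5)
Your proposal is correct and follows essentially the same route as the paper: the paper asserts Theorem \ref{thm:YMPartialSSS} as an immediate consequence of Theorem \ref{thm:nonAb-constr-red}, Lemma \ref{lem:residualAction} and Proposition \ref{prop:red-flux-map} (Equation \eqref{e:reducedfluxmap}) without writing out the Marsden--Weinstein reduction, and your argument supplies exactly those implicit details (the zero level set $\{\ASe=0\}$ of the $G^S_\i$-component of $\uhAS$, freeness and properness of left translation on the $\Lambda$-factor, $\iota^*\Omega_S=0$ on the level set, and descent of the commuting $G^S_\f$-action together with the $\xi_\f$-component of $\uhAS$). One caveat: carrying out your final step, the Gauss constraint $\cL_\ell E+\cD^iF_{\ell i}=0$ with $A_\ell=0$ and $E^\i=\ASe=0$ yields $E=-\int_{-1}^u du'\,(\cD^iL_\ell\ASa_i)(u')$, which matches the Abelian Proposition \ref{prop:MaxPartialSSS} but differs by a sign from the formula printed in the theorem, so the stated sign appears to be a typo in the paper rather than a defect of your argument.
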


\begin{corollary}\label{cor:non-ab-modes}
The real part of the AS zero-mode $\mathfrak{Re}(2\wt{\ASa}(0))$ is a connection for a principal $G$-bundle over $S$, all other AS modes are equivariant w.r.t. the adjoint representation: 
\begin{align*}
\underline{\varrho}{}_\mathsf{AS}(\xi_\f)\mathfrak{Re}(2\wt{\ASa}(0)) &= D \xi_\f + \big[\, \mathfrak{Re}(2\wt{\ASa}(0)), \xi_\f \,  \big], \\
\underline{\varrho}{}_\mathsf{AS}(\xi_\f)\mathfrak{Im}(2\wt{\ASa}(0)) &= \big[\,\mathfrak{Im}(2\wt{\ASa}(0)),\xi_\f\big],\\
\underline{\varrho}{}_\mathsf{AS}(\xi_\f)\wt{\ASa}(k) &= \big[\,\wt{\ASa}(k),\xi_\f\big] \qquad\qquad k\geq1.
\end{align*}
Thus define the AS covariant derivative,
\begin{equation}
\label{eq:AS-cov-der}   
\cD_0 \doteq D + \big[\,\mathfrak{Re}(2\wt{\ASa}(0)), \cdot\, \big].
\end{equation}
The AS symplectic structure, as well as the momentum map $\uhAS$, diagonalise in the AS modes decomposition according to:
\[
\omAS = \int_S \sqrt{\gamma} \sum_{k\geq 0} \tr\big( \bd \mathfrak{Im}(2\wt{\ASa}(k)^i)\wedge \bd \mathfrak{Re}(2\wt{\ASa}(k)_i) \big)
\]
and
\begin{align*}
(\cD^iL_\ell\ASa_i)^{\int} 
&= \cD_0^i \mathfrak{Im}(2\wt{\ASa}_i(0)) + \sum_{k \geq 1 }\big[\mathfrak{Re}(2\wt{\ASa}(k)^i),\mathfrak{Im}(2\wt{\ASa}(k)_i)\big].
\end{align*}
\end{corollary}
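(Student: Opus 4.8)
The statement is a corollary of Theorem~\ref{thm:YMPartialSSS}, which furnishes the residual action $\underline{\varrho}{}_\mathsf{AS}(\xi_\f)\ASa = \cD\xi_\f = D\xi_\f + [\ASa,\xi_\f]$ and the momentum map $\langle\uhAS,\xi_\f\rangle = \int_S\btr((\cD^iL_\ell\ASa_i)^{\int}\xi_\f)$, combined with the mode decomposition of Proposition~\ref{prop:modedecomp}, which I first extend to semisimple $\fg$ by tensoring the Hermitian structure $\mathbb{G}$ of Lemma~\ref{lemma:ASmodes} with the trace form, as indicated in the remark following that proposition. The one elementary input I will use repeatedly is that for any $u$-independent $c$ one has $\mathbb{G}(\psi_k,c) = \tfrac12 c\,\delta_{k0}$: this follows from $\dot\psi_0 = \tfrac{i}{2}$, and from $\psi_k(1)=\psi_k(-1)=2(-1)^k$ for $k\geq1$, which forces $\int_{-1}^1\dot\psi_k\,du = 0$.

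To obtain the three equivariance formulas I project the action onto modes. Since $\xi_\f$ is $u$-independent and $\mathbb{G}(\psi_k,\cdot)$ is a fixed $u$-linear functional that commutes with the pointwise bracket against a $u$-constant, I find $\underline{\varrho}{}_\mathsf{AS}(\xi_\f)\wt{\ASa}_i(k) = \mathbb{G}(\psi_k,\,D_i\xi_\f + [\ASa_i,\xi_\f]) = \tfrac12 D_i\xi_\f\,\delta_{k0} + [\wt{\ASa}_i(k),\xi_\f]$. Multiplying by $2$ and separating real and imaginary parts — which $[\,\cdot\,,\xi_\f]$ preserves, as $\xi_\f$ is real — yields exactly the displayed laws, with the inhomogeneous term $D\xi_\f$ appearing \emph{only} in $\mathfrak{Re}(2\wt{\ASa}(0))$. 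This identifies $\mathfrak{Re}(2\wt{\ASa}(0))$ as a connection and every other mode as an adjoint tensor, and justifies the definition $\cD_0 \doteq D + [\mathfrak{Re}(2\wt{\ASa}(0)),\cdot\,]$.

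For the symplectic form I start from the complex-diagonal expression $\omAS = 2i\int_S\sum_{k\geq0}\btr(\bd\wt{\ASa}^*_i(k)\wedge\bd\wt{\ASa}^i(k))$ of Proposition~\ref{prop:modedecomp} and substitute $2\wt{\ASa}(k) = \mathfrak{Re}(2\wt{\ASa}(k)) + i\,\mathfrak{Im}(2\wt{\ASa}(k))$. Upon expanding, the two ``diagonal'' contributions $\bd\mathfrak{Re}\wedge\bd\mathfrak{Re}$ and $\bd\mathfrak{Im}\wedge\bd\mathfrak{Im}$ vanish, because the symmetric contraction (the spatial metric together with the symmetric trace form) annihilates the antisymmetric field-space wedge; only the mixed term survives, reproducing $\sum_{k\geq0}\btr(\bd\,\mathfrak{Im}(2\wt{\ASa}(k)^i)\wedge\bd\,\mathfrak{Re}(2\wt{\ASa}(k)_i))$.

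Finally, for the momentum map I write $(\cD^iL_\ell\ASa_i)^{\int} = \int_{-1}^1\big(D^i\partial_u\ASa_i + [\ASa^i,\partial_u\ASa_i]\big)\,du$. The linear term integrates to $D^i\ASa^\dif_i = D^i\mathfrak{Im}(2\wt{\ASa}_i(0))$. For the quadratic term I insert the (uniformly convergent, by Lemma~\ref{lemma:ASmodes}) expansions of $\ASa$ and $\partial_u\ASa$ and reduce the $u$-integrals to the four families $\int\psi_k\dot\psi_l$, $\int\psi_k\dot\psi_l^*$, $\int\psi_k^*\dot\psi_l$, $\int\psi_k^*\dot\psi_l^*$. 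The two ``holomorphic'' families are symmetric in $k\leftrightarrow l$ by $\mathbb{G}(\psi_k,\psi_l^*)=0$ and drop out against the antisymmetric bracket contracted with the symmetric metric, while the two mixed families collapse onto the diagonal through the orthonormality relation $\int_{-1}^1(\dot\psi_k\psi_l^* - \psi_k\dot\psi_l^*)\,du = 2i\,\delta_{kl}$ of Lemma~\ref{lemma:ASmodes}. Re-expressing $\wt{\ASa}(k)$ via its real and imaginary parts then gives $\int_{-1}^1[\ASa^i,\partial_u\ASa_i]\,du = \sum_{k\geq0}[\mathfrak{Re}(2\wt{\ASa}(k)^i),\mathfrak{Im}(2\wt{\ASa}(k)_i)]$; splitting off the $k=0$ commutator and merging it with $D^i\mathfrak{Im}(2\wt{\ASa}_i(0))$ into $\cD_0^i\mathfrak{Im}(2\wt{\ASa}_i(0))$ yields the claim. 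The main obstacle is precisely the bookkeeping of this last step: juggling the antisymmetry of the Lie bracket against the symmetry of the metric and trace form to eliminate the unwanted families, keeping track of the boundary-corrected values of the $u$-integrals, and justifying the term-by-term integration through the uniform convergence of Lemma~\ref{lemma:ASmodes}.
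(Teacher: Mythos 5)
Your proposal is correct, and it reverses the logical order of the paper's own proof in an interesting way. The paper proves the equivariance formulas \emph{last}: it first diagonalises the momentum map, writing $\int_{-1}^1 du\,[\ASa^i,L_\ell\ASa_i] = -2i\,\mathbb{G}(\ASa^{i\alpha},\ASa^{\beta}_i)\,\tfrac12 f_{\alpha\beta}^{\gamma}\tau_\gamma$ in one stroke and then applying the Parseval-type expansion $\mathbb{G}(\ASa^\alpha,\ASa^\beta)=\sum_{k\geq0}\wt{\ASa}^\alpha(k)^*\wt{\ASa}^\beta(k)-c.c.$ in the orthonormal basis of Lemma \ref{lemma:ASmodes}; the action on the modes is then read off from the momentum map at the end, ``since the real and imaginary parts of AS modes are canonically conjugate to one another'' (i.e.\ by inverting the Hamiltonian flow equation against the diagonalised $\omAS$). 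You instead derive the three equivariance laws \emph{directly}, by projecting $\underline{\varrho}{}_\mathsf{AS}(\xi_\f)\ASa = \cD\xi_\f$ onto modes via the identity $\mathbb{G}(\psi_k,c)=\tfrac12 c\,\delta_{k0}$ for $u$-constant $c$ (which I checked: $\int_{-1}^1\dot\psi_0\,du=i$ and $\psi_k(\pm1)=2(-1)^k$ for $k\geq1$ give exactly this), making that part of the argument independent of the symplectic structure --- arguably cleaner and more self-contained. For the momentum map you do by hand what the paper's $\mathbb{G}$-identity packages compactly: your four families of $u$-integrals, with the holomorphic ones killed by the symmetry $\int\psi_k\dot\psi_l=\int\dot\psi_k\psi_l$ (equivalent to $\mathbb{G}(\psi_k,\psi_l^*)=0$) against the bracket antisymmetry, and the two mixed families combining --- again via bracket antisymmetry --- into the antisymmetric combination that orthonormality collapses onto $2i\delta_{kl}$, reproduce exactly the paper's Parseval step; note that the individual mixed integrals are \emph{not} diagonal (their symmetric part is a nonvanishing boundary term $[\psi_k\psi_l^*]_{-1}^{1}$), so the pairing of the two families before invoking orthonormality, which your computation does implicitly, is essential. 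The treatment of $\omAS$ (substituting real and imaginary parts into the complex diagonal form of Proposition \ref{prop:modedecomp} and discarding the symmetric contractions) coincides with what the paper leaves implicit. Net assessment: same key inputs (Lemma \ref{lemma:ASmodes}, Proposition \ref{prop:modedecomp}, Theorem \ref{thm:YMPartialSSS}), but your derivation of the transformation laws is more elementary, while the paper's buys the action formulas for free from the momentum map and emphasises the canonical-pair structure.
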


\begin{proof}
We prove the first statement last.
The mode decomposition of $\omAS$ was computed in  Proposition \ref{prop:modedecomp}. The mode decomposition of $\uhAS$ can be computed analogously (see Lemma \ref{lemma:ASmodes}):
\begin{align*}
    \int_{-1}^1 du \ \cD^iL_\ell \ASa_i 
    &= \int_{-1}^1du\  \left( D^i L_\ell \ASa_i + [\ASa^i, L_\ell \ASa_i] \right) \\
    &= \int_{-1}^1 du \ L_\ell D^i \ASa_i -  \int_{-1}^1du\  \left(\dot \ASa_{i}^\alpha \ASa^{i\beta} - \dot \ASa_{i}^\beta \ASa^{i\alpha}\right)\tfrac12f_{\alpha\beta}^\gamma \tau_\gamma\\
    &= (D^i \ASa_i)^\dif -  2i\mathbb{G}(\ASa^\alpha, \ASa^\beta) \tfrac{1}{2}f_{\alpha\beta}^\gamma \tau_\gamma
\end{align*}
where we decomposed $\ASa$ in a basis of the Lie algebra $\{\tau_\alpha\}$. Hence, we get the result by observing that
\[
\mathbb{G}(\ASa^\alpha, \ASa^\beta) = \sum_{k\geq0} \widetilde{\ASa}^\alpha(k)^*\, \widetilde{\ASa}^\beta(k) - c.c.= 2i \sum_{k\geq0} \mathfrak{Re}(\widetilde{\ASa}^\alpha(k))\, \mathfrak{Im}(\widetilde{\ASa}^\beta(k)) - (\alpha\leftrightarrow\beta),
\]
and recalling that $\ASa^\dif$ is the imaginary part of $2\wt{\ASa}(k=0)$.

Plugging this expression in the formula for the momentum map for the $G^S$ action on $\XAS$ we immediately obtain the claimed expression for the AS mode decomposition of the infinitesimal action, since the real and imaginary parts of AS modes are canonically conjugate to one another.
\end{proof}

Theorem \ref{thm:memorySSS} and its corollary suggest a non-Abelian generalisation of electromagnetic memory. This definition is \emph{not} equivalent to ``color memory'' as proposed in \cite{PateRaclariuStrominger}. See Remark \ref{rmk:non-Abelianmemory}.

\begin{definition}[Non-Abelian memory]\label{def:non-Abelianmemory}
In the superselection sector where $f_\i = 0$, the non-Abelian memory $\mu$ is the solution to the elliptic equation 
\[
-\cD_0^2 \mu \doteq (\cD^iL_\ell\ASa_i)^{\int} = \cD_0^i \mathfrak{Im}(2\wt{\ASa}(0)_i) + \sum_{k \geq 1 }\big[\mathfrak{Re}(2\wt{\ASa}(k)^i),\mathfrak{Im}(2\wt{\ASa}(k)_i)\big].
\]
where $\cD_0^2$ is the AS Laplacian\footnote{If $\mathfrak{Re}(2\wt{\ASa}(0))$ is irreducibile then $\cD^2_0$ is invertible; otherwise, $\mu$ is only determined up to elements of the stabiliser of $\mathfrak{Re}(2\wt{\ASa}(0))$.} associated to $\cD_0$ (Equation \eqref{eq:AS-cov-der}).
\end{definition}

\begin{remark}\label{rmk:non-Abelianmemory}
Definition \ref{def:non-Abelianmemory} is a viable generalisation of Equation \eqref{eq:memoryBG}.

Although $\mu$ is not gauge invariant, it is gauge-equivariant with respect to the residual gauge action by  $G^S_\f$, and superselection sectors (with $f_i = 0$) are indeed labelled by the coadjoint orbit $\mathcal{O}_\mu$ of $\mu \in (\fg^S)^*$.

Note, however, that the non-Abelian memory $\mu$ (or its coadjoint orbit) fails to be a viable superselection label as soon as $f_\i\neq0$. This is due to the non-linearity of the coadjoint orbits. (See also Remark \ref{rmk:YMpartialSSS}.)

In the scenario where $f_{\i}=0$ (i.e.\ $E^\i = 0$) and non-Abelian memory as superselection \emph{does} make sense, we see however that it fails to split into quantities defined at $S_\i$ and $S_\f$, as opposed to the Abelian case:
\[
(\cD^i L_\ell \ASa_i)^{\int}=D^i\ASa_i^\dif + \sum_{k \geq 0 }\big[\mathfrak{Re}(2\wt{\ASa}(k)^i),\mathfrak{Im}(2\wt{\ASa}(k)_i)\big] \neq D^i\ASa_i^\dif.
\]
owing to $\mathfrak{Im}(2\wt{\ASa}_i(0)) = \ASa_i^\dif$.  

There is therefore no obvious sense in which memory is about a net change in the $\ASa_i$: it is instead about a net change in the value of the electric field, up to an appropriate transport by $\Lambda$ (cf. the proof of Proposition \ref{prop:red-flux-map}):
\begin{equation}
\label{eq:memory-nonab}
E^\f = \underbrace{\Ad(\Lambda^{-1})\cdot E^\i}_{=0} + D^i\ASa_i^\dif + \sum_{k \geq 0 }\big[\mathfrak{Re}(2\wt{\ASa}(k)^i),\mathfrak{Im}(2\wt{\ASa}(k)_i)\big] \neq D^i\ASa_i^\dif.
\end{equation}

In other words, even setting $F^{\i}=F^\f = 0$, as is assumed in \cite{PateRaclariuStrominger,StromingerLectureNotes}, and thus being able to set $\ASa^{\i} = U^{-1}_{\i} D U_{\i}$ and $\ASa_\f=U_\f^{-1} DU_\f$, a description of non-Abelian memory as a superselection label in terms of the ``vacuum transition" (or ``color memory'' \cite{PateRaclariuStrominger}) $U=U_\i^{-1}U_\f$ is not possible: the entire history of $\ASa_i$ along $\scri$ is necessary to compute $\mu$, rather than just its initial and final values. Comparing to \cite{PateRaclariuStrominger}, we see that the Equation \eqref{eq:memory-nonab} is the same as their Equation (19), upon identification our the last term with their integral of the ``color flux $J_u$".
\end{remark}

\begin{remark}\label{rmk:YMpartialSSS}
For $G$ semisimple, reduction at $f_\i = 0$ is \emph{qualitatively} different than reduction at $f_\i\neq 0$, and would \emph{not} yield the AS phase space as a result.
This is because only $\mathcal{O}_{f_\i =0 }$ is point-like. Here, we refrain from providing a more general statement.
\end{remark}

\appendix

~

\section{Example: Maxwell theory on a spacelike slice}\label{app:Maxwell-short}
We now give a succinct summary of Abelian YM theory on a 3-dimensional Riemannian manifold $(\Sigma,\gamma)$---thought of as a spacelike codimension-$1$ submanifold of $M$ Lorentzian---as an exemplification of the theory outlined in Section \ref{sec:theoreticalframework}. Details can be found in \cite{RielloSchiavina,RielloSciPost}.

For $N$ a manifold, we denote $\mathbb{R}^N \doteq C^\infty(N,\mathbb{R})$.

Let $\X = T^*\mathcal{A}\ni (A,E)$, with $A\in\mathcal{A} \simeq \Omega^1(\Sigma)$, $E\in T^*_A\mathcal{A} \simeq \Omega^2(\Sigma)$, and $\bom = \bd E \wedge \bd A$, be the symplectic space of ``magnetic potentials and electric fields'' over $\Sigma$. The (Abelian) gauge algebra is $\fG = \mathbb{R}^\Sigma$; it acts on $\X$ as $\rho(\xi)(A,E) = (d\xi,0)$.  
Assume that the first de Rham cohomology of $\Sigma$ is trivial, and denote $n$ the normal to the boundary of $\pp\Sigma$ and $\ast$ the Hodge dual on $(\Sigma,\gamma)$. , Then, using the Hodge--Helmholtz decomposition, it is not hard to show that $\mathcal{A}_\rad \doteq \{A \in \mathcal{A} \ : \ d\ast A =0,\ i_n A |_{\pp\Sigma} = 0\}$ is (the image of) a global section of $\mathcal{A} \to \mathcal{A}/\G$ and hence $\mathcal{A}_\rad \simeq \mathcal{A}/\G$. 

The momentum form $\bH$ decomposes into the sum of a (Gauss) constraint form $\langle\bHo,\xi\rangle = -\xi dE$, and a flux form $\langle d\bh,\xi\rangle = d(E\xi)$ which, once integrated on $\Sigma$, gives the smeared electric flux through $\pp\Sigma$. The constraint gauge group is $\fGo = \{\xi\in\fG \ : \ \exists \chi \in \mathbb{R} \text{ such that } \xi|_{\pp\Sigma}=\chi\}$ (constant gauge transformations play a role because of Gauss's law); the flux gauge group is $\fGred \simeq \mathbb{R}^{\pp\Sigma}/\mathbb{R}$, where $\mathbb{R}\into \mathbb{R}^{\pp\Sigma}$ as constant functions; while flux space is found to be $\F\simeq \fGred^* \simeq \Ann(\mathbb{R},\mathbb{R}^{\pp\Sigma})$. 

The first- and second-stage reduced phase spaces are isomorphic (as symplectic or Poisson, resp.) manifolds to the following spaces:
\[
\uCo \simeq T^*\mathcal{A}_\rad \times T^*\fGred
\quad\text{and}\quad
{\uuC} \simeq T^*\mathcal{A}_\rad \times \fGred^* 
\]
Manifestly, $\uCo$ is a symplectic manifold, whereas ${\uuC}$ is only Poisson. Since the flux space is $\F\simeq \fGred^*$, the last formula states that all the superselection sectors are all isomorphic: 
\[
\uuS_{[f]} \simeq T^*\mathcal{A}_\rad.
\]
Physically $T^*\mathcal{A}_\rad$ encodes the radiative degrees of freedom (the ``photons") over $\Sigma$, while $\F\simeq \fGred^*$ encode the electric ``Coulombic'' degrees of freedom (i.e.\ the co-exact part of $E$, $d\star\varphi$) as parametrised by $\iota_{\pp\Sigma}^*E|_{\pp\Sigma}$, the electric flux through $\pp\Sigma$. This is possible because of the Gauss constraint $\bHo = dE$, which in the given decomposition is given by the equations $\Delta \varphi = 0$ and $i_nd\varphi|_{\pp\Sigma} = \star_{\pp\Sigma}\iota_{\pp\Sigma}^*E$.

Note that, whereas in $\uCo$ the electric fluxes are conjugate to the gauge variant elements of $\fGred$---sometimes called ``edge modes'' \cite{DonnellyFreidel16}---they have no symplectic partner in ${\uuC}$. Indeed, in ${\uuC}$, they are precisely the central coordinates in ${\uuC}$ whose value labels the superselection sectors.

In the non-Abelian case, analogous results hold where $T^*\fGred$ is replaced by $T^*\Gred$, but the symplectomorphisms are only local \cite[Section 6.5]{RielloSchiavina}.

\section{Notes on Definition \ref{def:LHGT} and locality}\label{rmk:locality}

The central notion used in the definition of a \emph{locally Hamiltonian gauge theory} (Definition \ref{def:LHGT}) is that of locality.
Here, we briefly clarify this notion and refer to \cite{RielloSchiavina} for a detailed discussion (for a more general viewpoint see e.g.\ \cite{BlohmannLFT}).
\begin{enumerate}[label=\arabic*.,leftmargin=*]
\item Let $Y_{i}\to \Sigma$, $i=1,2$, be two fibre bundles, and $\mathcal{Y}_{i}=\Gamma(\Sigma,Y_{i})$ the corresponding spaces of sections, denoted $\varphi_{i}$; then, a map $f:\mathcal{Y}_1 \to \mathcal{Y}_2$ is said \emph{local} iff $\varphi_2(x) \doteq f(\varphi_1)(x)$ can be expressed as a function of $x$, $\varphi_1(x)$, and a finite number of its derivatives (called the \emph{order} of the map) also evaluated at $x$. A map of order $0$ is said \emph{ultralocal}.
\item There exists a notion of ``local forms'' $\balpha \in \oloc^{p,q}(\Sigma\times \X) \subset \Omega^{p,q}(\Sigma\times \X)$ that generalises that of local maps and function(al)s discussed above \cite{BlohmannLFT}.
We denote by $(d,i,L)$ and $(\bd,\bi,\mathbb{L})$ the symbols of Cartan calculus on local forms $\oloc^{\bullet,\bullet}(\Sigma\times\X)$.
We will see that whereas Hamiltonian Yang--Mills theory on a spacelike $\Sigma$ features an ultralocal symplectic density, Hamiltonian Yang--Mills theory on a null $\Sigma$ does \emph{not}.
\item A (real) Lie algebra $\fG$ is said local if its elements $\xi$ are sections of a (real) vector bundle $\Xi\to\Sigma$ \emph{and} its Lie bracket $[\cdot,\cdot] : \fG \wedge \fG \to \fG$ is a local map.
\item The action $\rho:\X\times \fG \to T\X$ is said local iff it is a local map. 
\item Local $\mathbb{R}$-linear maps $\fG\to\Omega^{\mathrm{top},0}(\Sigma\times \X)$ can be identified with local forms $\oloc^{\mathrm{top,0}}(\Sigma\times \X,\fG^*_\loc)$ valued in the local dual (defined in Remark \ref{rmk:dualspaces}), or with local forms in the triple complex $\oloc^{\mathrm{top,0,0}}(\Sigma\times\X\times\fG)$ which are linear in $\fG$ \cite[Defintion 2.12]{RielloSchiavina}. 
Throughout, we will deliberately merge these notions. With reference to the definition of a locally Hamiltonian gauge theory this means merging the \emph{momentum} and \emph{co-momentum} map viewpoints, and only talking about momentum forms and maps when referring to $\bH$ and derived objects.
\item The quantity $d\boldsymbol{k}(\xi,\eta)$ is a Chevalley--Eilenberg cocycle of $\fG$. It does not depend on the fields ($\bd d\boldsymbol{k}(\xi,\eta)=0$) \cite[Sect. 3.5]{RielloSchiavina}. Its $d$-exactness is here imposed to match the Lagrangian origin of the locally Hamiltonian gauge theory, where equivariance up to a boundary cocycle is a consequence of Noether's theorem and encodes the first-class nature of the gauge constraints \cite[Appendix D]{RielloSchiavina}.
\qedhere
\end{enumerate}

\section{Wilson lines and path-ordered exponentials} \label{app:usefulformulas}

In this appendix, $G$ denotes a Lie group which is either (\emph{i}) Abelian, or (\emph{ii}) compact and semisimple.

We begin with a statement about a 1-dimensional parallel transport problem for each of the $k$-components of the map $Y:I\to \Omega^k(S,\fg)$, the lemma readily follows from the theory of ODEs in one variable:

\begin{lemma}[Parallel transport along $\ell$]\label{lemma:paralleltr}
Let $Y,Z:I \to \Omega^k(S,\fg)$ and $Z_0 = \Omega^k(S,\fg)$. Then, for all smooth $A\in\mathcal{A}$, $Z$ and $Z_0$, the boundary value problem\footnote{Recall: $\cL_\ell Y \doteq L_\ell Y + [A_\ell, Y]$.}
\[
\begin{cases}
\cL_\ell Y = Z\\
Y|_{u=-1} = Z_0
\end{cases}
\]
admits a unique smooth solution $Y(u)=(Y(A,Z,Z_0))(u)$.
\end{lemma}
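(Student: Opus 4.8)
The plan is to recognise the boundary value problem as a linear, inhomogeneous, first-order ordinary differential equation \emph{in the single variable $u$}, with $x\in S$ and the form indices entering only as passive parameters. Indeed, viewing $Y$ as a $u$-dependent family of spatial $k$-forms (Remark \ref{rmk:u-functions}), the Lie derivative $L_\ell$ reduces to $\partial_u$ acting coefficient-wise, so that $\cL_\ell Y = \partial_u Y + [A_\ell, Y]$ and the system becomes
\[
\partial_u Y + \ad(A_\ell)\, Y = Z, \qquad Y\big|_{u=-1} = Z_0 .
\]
Since the coefficient $\ad(A_\ell)$ and the source $Z$ are smooth and the interval $I=[-1,1]$ is compact, linearity rules out finite-time blow-up, and the classical existence–uniqueness theorem for linear ODEs (Picard–Lindelöf, global version) yields, for each fixed $x$, a unique solution on all of $I$.

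Next I would record the solution explicitly via the adjoint Wilson line, which both makes uniqueness transparent and feeds directly into the regularity argument. Let $P = P(A_\ell)$ be the path-ordered exponential solving $\partial_u P = -A_\ell P$ with $P(-1)=1$, as provided by the Wilson-line machinery of Lemma \ref{lemma:Wilsonline}. A short computation shows $\partial_u\big(\Ad(P)\,W\big) + \ad(A_\ell)\,\Ad(P)\,W = \Ad(P)\,\partial_u W$, so the ansatz $Y = \Ad(P)\,W$ reduces the equation to $\partial_u W = \Ad(P^{-1})\,Z$, which is solved by quadrature. This gives the closed form
\[
Y(u) = \Ad\big(P(u)\big)\Big[\, Z_0 + \int_{-1}^{u} \Ad\big(P(u')^{-1}\big)\, Z(u')\, du' \,\Big].
\]
Uniqueness follows at once: the difference of two solutions solves the homogeneous problem with vanishing initial datum, hence is identically zero by the formula above (with $Z=0$, $Z_0=0$).

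Finally I would upgrade this pointwise-in-$x$ solution to a genuine smooth section, i.e.\ an element of $C^\infty(I,\Omega^k(S,\fg))$, which is where the \emph{main subtlety} lies: one must check \emph{joint} smoothness in $(u,x)$ (and, implicitly, smooth dependence on the data $(A,Z,Z_0)$) rather than mere continuity of the $x$-family of solutions. The explicit formula handles this cleanly, since $P$ depends smoothly on $(u,x)$ by smooth dependence of ODE solutions on parameters and initial conditions, $Z_0$ and $Z$ are smooth, and $u'$-integration preserves smoothness; composing with the smooth maps $\Ad$ and $(\cdot)^{-1}$ then yields a smooth $Y$. The hard part is therefore not the ODE theory itself but phrasing the smooth/Fréchet dependence correctly, which I expect to dispatch by appealing to the standard smooth-dependence theorems applied to the Wilson-line representation, exactly as the statement anticipates.
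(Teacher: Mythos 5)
Your proof is correct and follows essentially the same route as the paper, which simply observes that the lemma ``readily follows from the theory of ODEs in one variable'': you reduce to a linear first-order ODE in $u$ with $x$ and the form indices as parameters, and invoke global existence, uniqueness, and smooth dependence on parameters. Your explicit variation-of-constants formula $Y(u) = \Ad(P(u))\bigl[Z_0 + \int_{-1}^{u}\Ad(P(u')^{-1})Z(u')\,du'\bigr]$ is a useful elaboration the paper leaves implicit (note only that your $P$, normalised at $u=-1$, is obtained from the Wilson line of Lemma~\ref{lemma:Wilsonline} by $P(u)=\dfU(A)(u)\,\dfU(A)(-1)^{-1}$, and its existence is standard ODE theory independent of the present lemma, so no circularity arises).
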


Next, we recall a standard result on the definition of Wilson lines (sometimes referred to as ``holonomies'' or ``parallel transports") of a gauge connection along the integral curves of $\ell=\pp_u$:\footnote{Here we phrase the lemma in terms of a ``final condition'' rather than an ``initial one''. Of course this makes little difference, and is in fact the form of the lemma that will be most useful for this article.}

\begin{notation}
Let $x\in S$. By definition $(L_\ell UU^{-1})(x)\doteq i_\ell U_x^*\vartheta$, where  $U_x \doteq \mathrm{ev}_x\circ U: I \to G$ and $\vartheta \in \Omega^{1}(G,\fg)$ is the right-invariant Maurer-Cartan form on $G$. This quantity is valued in $C^\infty(S,\fg) = \fg^S$.
\end{notation}

\begin{lemma}[Wilson lines along $\ell$]\label{lemma:Wilsonline}
Let $U\in C^\infty(I,G^S)$. Then, for all smooth $A \in\mathcal{A}$, the boundary value problem
\[
\begin{cases}
L_\ell \dfU \dfU^{-1} = - A_\ell\\
\dfU(u=+1) = 1
\end{cases}
\]
admits a unique smooth solution $\dfU(u)=(\dfU(A))(u)$, which we call the \emph{Wilson lines (of $A$ along $\ell$),} and denote
\[
(\dfU(A))(u) \equiv \overrightarrow{\Pexp} \int_u^{1} du' \ A_\ell(u').
\]
In the Abelian case one has the identity
\[
(\dfU(A))(u) = \exp \int_u^1 du' \, A_\ell(u') \qquad \mathrm{(Abelian)}
\]
\end{lemma}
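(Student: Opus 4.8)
The plan is to reduce the boundary-value problem to a pointwise-in-$x$ ordinary differential equation on the finite-dimensional Lie group $G$ and to solve it by the standard theory of the path-ordered exponential. Fixing $x\in S$ and writing $\dfU_x \doteq \mathrm{ev}_x\circ \dfU\colon I\to G$, the condition $L_\ell \dfU\dfU^{-1}=-A_\ell$ becomes, in the notation introduced just above, the flow equation $\partial_u \dfU_x(u)=-A_\ell(u,x)\,\dfU_x(u)$ for the time-dependent right-invariant vector field $g\mapsto -A_\ell(u,x)\cdot g$ on $G$, subject to the final condition $\dfU_x(1)=1$. First I would establish existence and uniqueness of a solution defined on the whole compact interval $I$ and valued in $G$; this is the same one-variable ODE input that underlies Lemma~\ref{lemma:paralleltr}, now applied to a group-valued curve.

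To do so I would pass to a faithful finite-dimensional representation $G\hookrightarrow \mathrm{GL}(V)$, which exists because $G$ is either abelian of the form assumed in this appendix or compact semisimple. In $\mathrm{End}(V)$ the equation $\partial_u \dfU_x=-A_\ell \dfU_x$ is linear with coefficients smooth on the compact interval $I$, so the elementary theory of linear systems yields a unique solution on all of $I$, with no finite-time blow-up. That this ambient solution is in fact $G$-valued follows from a tangency argument: at each $u$ the velocity $-A_\ell(u,x)\dfU_x(u)$ is the value at $\dfU_x(u)$ of the right-invariant vector field generated by $-A_\ell(u,x)\in\fg$, hence tangent to $G$; since the endpoint $\dfU_x(1)=1$ lies in $G$, the entire curve remains in $G$. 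Equivalently, one invokes completeness of time-dependent right-invariant vector fields on any Lie group.

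Next I would recover the joint smoothness and the stated notation. Since $A_\ell$ is smooth in $(u,x)$ jointly, smooth dependence of ODE solutions on parameters and endpoints makes $\dfU(u,x)$ jointly smooth, i.e.\ a smooth map $I\to G^S$, as claimed. Iterating the integral form $\dfU_x(u)=1+\int_u^1 A_\ell(v)\dfU_x(v)\,\mathrm{d}v$ produces the Volterra (Dyson) series whose $n$-th term is $\int_{u\le v_1\le\cdots\le v_n\le 1}A_\ell(v_1)\cdots A_\ell(v_n)\,\mathrm{d}v_1\cdots\mathrm{d}v_n$, ordered with the smallest argument on the left; this is precisely the meaning of $\overrightarrow{\Pexp}\int_u^1\mathrm{d}u'\,A_\ell(u')$, which fixes both the notation and the left-to-right ordering convention.

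Finally, for the abelian identity I would use that when $\fg$ is abelian the elements $A_\ell(v)$ commute for all $v$, so each ordered simplex integrates to $\tfrac{1}{n!}\big(\int_u^1 A_\ell\big)^n$ and the series collapses to $\exp\int_u^1 A_\ell$; alternatively one checks directly that $V(u)\doteq\exp\int_u^1 A_\ell(u')\,\mathrm{d}u'$ satisfies $\partial_u V=-A_\ell V$ and $V(1)=1$, so uniqueness forces $\dfU(A)=V$. The only steps requiring genuine care — everything else being standard ODE theory — are guaranteeing global existence on all of $I$ together with $G$-valuedness, settled by the tangency/right-invariance argument and the compactness of $I$, and upgrading the family of pointwise solutions to a bona fide smooth map into the infinite-dimensional mapping group $G^S$ via smooth parametric dependence.
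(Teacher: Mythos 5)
The paper itself offers no proof of this lemma: it is explicitly ``recalled'' as a standard result, in the same spirit as Lemma~\ref{lemma:paralleltr}, which is said to follow from one-variable ODE theory. Your argument is correct and is essentially the standard proof the authors have in mind: pointwise reduction over $x\in S$ to a final-value problem on $G$, global existence and uniqueness for the resulting linear equation, identification of the Picard iteration with the path-ordered exponential (your ordering, smallest argument leftmost, matches the paper's left-to-right convention in Remark~\ref{rmk:WilsonLine-V} and is consistent with the corollary $\dfU(g^{-1}dg)(u)=g(u)^{-1}g^\f$), and collapse of the Dyson series to an ordinary exponential in the Abelian case. Two remarks on economy and on the facts you invoke tacitly. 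First, the detour through a faithful finite-dimensional representation is legitimate under the appendix's standing assumption ($G$ Abelian of the form $\mathrm{U}(1)^t\times\mathbb{R}^k$ or compact semisimple), but it is cleaner to invoke the completeness of time-dependent right-invariant vector fields on an \emph{arbitrary} Lie group, which you mention only as an aside: that route needs neither the existence of a faithful representation (which fails for general Lie groups, e.g.\ the universal cover of $\mathrm{SL}(2,\mathbb{R})$) nor closedness of the image of $G$ in $\mathrm{GL}(V)$, which your tangency argument implicitly requires. Second, two steps should be named for what they are: the ``tangency'' step is really a uniqueness argument (restrict the right-invariant field to $G$, solve there, and compare with the ambient solution), and the passage from joint smoothness of $(u,x)\mapsto \dfU(u,x)$ to smoothness of the curve $I\to G^S$ is the exponential law $C^\infty(I\times S,G)\simeq C^\infty(I,C^\infty(S,G))$ of convenient calculus (cf.\ \cite{kriegl1997convenient}), which the paper uses implicitly throughout, e.g.\ in Remark~\ref{rmk:u-functions}. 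Neither point is a gap; both are standard, and your proof is sound.
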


Thanks to the smoothness and uniqueness of the Wilson lines $\dfU(A)$, one deduces:
\begin{corollary}[of Lemma \ref{lemma:Wilsonline}]
The Wilson lines $\dfU(A) = \overrightarrow{\Pexp}\int A_\ell : I \to G^S$ can be interpreted as maps: $\mathcal{A}\to G^\Sigma \simeq C^\infty(I,G^S)$ which are \emph{bi-}locally equivariant under the action of gauge transformations $g \in G^\Sigma$,
\[
(\dfU(A^g))(u) = (g(u))^{-1} (\dfU(A))(u) g^\f.
\]
In particular if $A=g^{-1} dg$, 
\[
(\dfU( g^{-1}dg))(u) \equiv \overrightarrow{\Pexp} \int_u^{1} du' \ g^{-1}L_\ell g(u') = (g(u))^{-1} g^\f.
\]
\end{corollary}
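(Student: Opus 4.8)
The plan is to verify directly that the right-hand side of the claimed bi-local formula solves the defining boundary value problem of Lemma \ref{lemma:Wilsonline} for $\dfU(A^g)$, and then invoke the \emph{uniqueness} clause of that lemma to conclude. Working with the defining first-order flow equation (rather than expanding the path-ordered exponential as a series) is what keeps the non-Abelian ordering under control automatically.

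First I would record how the $\ell$-component transforms under $g\in G^\Sigma$. Since $A^g = g^{-1}Ag + g^{-1}dg$, contracting with $\ell$ gives $A_\ell^g = g^{-1}A_\ell g + g^{-1}L_\ell g$, where $L_\ell g = i_\ell dg$. Next I would set $W(u)\doteq g(u)^{-1}\dfU(A)(u)\,g^\f$ as the candidate for $\dfU(A^g)$ and check the two defining conditions. The final condition is immediate: $W(1)=(g^\f)^{-1}\dfU(A)^\f\,g^\f = 1$, because $\dfU(A)^\f=1$ and $g(1)=g^\f$. For the flow equation I would differentiate $W$ along $\ell$, using the defining ODE $L_\ell \dfU(A) = -A_\ell\,\dfU(A)$ together with $L_\ell(g^{-1}) = -g^{-1}(L_\ell g)\,g^{-1}$. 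A short computation—in which the constant-in-$u$ factor $g^\f$ cancels against its inverse in $W^{-1}=(g^\f)^{-1}\dfU(A)^{-1}g$, as does the pair $\dfU(A)\dfU(A)^{-1}$—yields $L_\ell W\,W^{-1} = -\big(g^{-1}(L_\ell g) + g^{-1}A_\ell g\big) = -A_\ell^g$. Thus $W$ satisfies the same boundary value problem as $\dfU(A^g)$, and uniqueness forces $W=\dfU(A^g)$, establishing the bi-local equivariance. The placement of $g^\f$ to the \emph{right} is essential precisely so that it cancels in $W^{-1}$, reflecting the right-invariant final condition.

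For the special case $A=g^{-1}dg$ I would simply apply the equivariance formula to the trivial connection $A=0$: its Wilson line is $\dfU(0)\equiv 1$, the unique solution of $L_\ell\dfU\,\dfU^{-1}=0$ with $\dfU^\f=1$, while $0^g = g^{-1}dg$. Hence $\dfU(g^{-1}dg)(u) = g(u)^{-1}\dfU(0)(u)\,g^\f = g(u)^{-1}g^\f$, and identifying $A_\ell = g^{-1}L_\ell g$ for this connection matches the $\overrightarrow{\Pexp}$ notation in the statement. I do not anticipate a genuine obstacle here—the result is standard—so the only real content is the bookkeeping just described; the apparent difficulty of non-Abelian path-ordering dissolves once one argues at the level of the defining ODE and appeals to uniqueness.
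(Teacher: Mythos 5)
Your proof is correct and is essentially the paper's own argument: the corollary is stated as a direct consequence of the smoothness and uniqueness clause of Lemma \ref{lemma:Wilsonline}, which is exactly what you invoke after verifying that $W(u)=g(u)^{-1}\dfU(A)(u)\,g^\f$ solves the transformed boundary value problem. Your treatment of the special case $A=g^{-1}dg$ via the trivial connection is likewise the intended reading.
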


\section{Proofs of some lemmas}

\subsection{Proof of Lemma \ref{lem:annihilators}} \label{app:proof-annihilators}

Let $\mathcal{X}\subset \mathcal{W}$ and $\mathcal{Y}=\Ann(\mathcal{X},\mathcal{W}^*_{\mathrm{str}})$. From
\[
\Ann(\mathcal{X},\mathcal{W}^*_{\mathrm{str}}) \simeq \left(\mathcal{W}/\mathcal{X}\right)^*_{\mathrm{str}}, \qquad \Ann(\mathcal{Y},\mathcal{W}) \simeq \left(\mathcal{W}^*_{\mathrm{str}}/\mathcal{Y}\right)^*_{\mathrm{str}}
\]
we obtain 
\[
\Ann(\Ann(\mathcal{X},\mathcal{W}^*_{\mathrm{str}}),\mathcal{W}) \simeq \left( \mathcal{W}^*_{\mathrm{str}} / \left(\mathcal{W}/\mathcal{X}\right)^*_{\mathrm{str}}\right)^*_{\mathrm{str}}
\]
Dualising the short exact sequence 
\[
\mathcal{X} \to \mathcal{W} \to \mathcal{W}/\mathcal{X} 
\]
we conclude that $\mathcal{X}^*_{\mathrm{str}} \simeq \mathcal{W}^*_{\mathrm{str}} / \left(\mathcal{W}/\mathcal{X}\right)^*_{\mathrm{str}}$. 
Hence, since both nuclear Fre\'echet vector spaces and their strong duals, as well as their closed subspaces,\footnote{Any closed subspace of a nuclear Fr\'echet space is both nuclear and Fr\'echet. A quotient by a closed subspace also retains the nuclear Fr\'echet property.} are reflexive (\cite[Rmk. 6.5]{kriegl1997convenient}), one finds
\[
\Ann(\Ann(\mathcal{X},\mathcal{W}^*_{\mathrm{str}}),\mathcal{W}) \simeq \left(\mathcal{X}^*_{\mathrm{str}}\right)^*_{\mathrm{str}} \simeq \mathcal{X}^{**}_{\mathrm{str}} = \mathcal{X}.
\]

\subsection{The two-form of Definition \ref{def:geom-ph-sp} is symplectic}\label{app:proof-nondeg}
According to Definition \ref{def:LHGT}(\emph{ii}), $\bom_\nYM$ of Definition \ref{def:geom-ph-sp} is a symplectic density iff $\bom_\nYM$ is $\bd$-closed and $\omega_\nYM = \int_\Sigma \bom_\nYM$ is (weakly) symplectic. The form $\bom_\nYM$ is obviously $\bd$-closed, and therefore so is $\omega_\nYM$. Therefore, the crux is proving the (weak) non-degeneracy of $\omega_\nYM$, i.e.\  $\mathbb{i}_{\mathbb{X}}\omega_\nYM = 0$ iff $\mathbb{X}=0$.

For ease of notation, we henceforth drop the subscript $\cdot_\nYM$, i.e.\ $\bom_\nYM \leadsto \bom$ etc.

Using the identity of Equation \eqref{eq:deltaF_ell} for $\bd F_\ell$, the form $\bom$ can be rewritten as
\[
  \bom   =  \Big( \tr(\bd \E \wedge \bd A_\ell ) 
+ \tr( - \cD^i \bd A_\ell \wedge \bd \hat A_i)
+ \tr( \cL_\ell \bd \hat A^i \wedge \bd \hat A_i)
\Big)\vol_\Sigma.
\]
Denoting
\[
\mathbb{X} = \int X_E \frac{\delta}{\delta \E} + \hat{X}^i \frac{\delta}{\delta \hat A^i} +  X_\ell \frac{\delta}{\delta A_\ell},
\]
one finds:
\begin{align*}
\bi_\mathbb{X}\bom 
& =  \Big( \tr(X_E  \bd A_\ell ) -  \tr(\bd \E  X_\ell ) + \\
&\qquad\qquad + \tr( (\cL_\ell \hat X^i - \cD^i X_\ell)   \bd \hat A_i) -\tr( (\cL_\ell \bd \hat A^i - \cD^i\bd A_\ell)   \hat X_i) \Big) \vol_\Sigma.
\end{align*}
Factoring out a total divergence (i.e.\ ``integrating by parts''), one finds:
\begin{subequations}
\begin{align}
\bi_\mathbb{X}\bom 
& =   \Big( -  \tr(  X_\ell \bd \E) + \tr(X_E  \bd A_\ell )  +  \tr( (\cL_\ell \hat X^i  )  \bd \hat A_i) + \notag \\
&\hspace{5cm} +  \tr( (- \cL_\ell \bd \hat A^i + \cD^i\bd A_\ell)   \hat X_i) \Big)\vol_\Sigma\\ \notag
& =  \Big( -  \tr(  X_\ell \bd \E) + \tr( (X_E - \cD^i \hat X_i) \bd A_\ell )  +  2\tr( (\cL_\ell  \hat X^i )  \bd \hat A_i) + \\ \label{e:kernelthirdrow}
& \hspace{5cm} + D^i\tr(\hat X_i \bd A_\ell  ) - \pp_u \tr( \hat X^i \bd  \hat A_i  ) \Big)\vol_\Sigma \\
& \doteq (\bi_\mathbb{X}\bom)_\mathrm{src} + (\bi_\mathbb{X}\bom)_\mathrm{bdry}, 
\end{align}
\end{subequations}
where the ``source'' $(\mathrm{top},1)$-form $(\bi_\mathbb{X}\bom)_\mathrm{src}$ is defined by the first line of Equation \eqref{e:kernelthirdrow}, while the ``boundary'' $(\mathrm{top},1)$-form $(\bi_\mathbb{X}\bom)_\mathrm{bdry}$ is defined by the second line of the same equation.\footnote{Recall: a $(\mathrm{top},1)$-form is said ``source'' iff \emph{no} derivative acts on the field-space 1-form terms $\bd\phi$; it is said ``boundary'' iff it is $d$-exact. A theorem by Takens \cite{Takens77,Takens,Zuckerman} then states that the space of $(\mathrm{top},1)$-forms is the direct sum of the space of source and boundary $(\mathrm{top},1)$-forms. See \cite[Theorem 2.18]{RielloSchiavina} or \cite{BlohmannLFT} for more details.\label{fnt:app-Takens}}

We now use these formulas to prove that $\omega$ is (weakly) nondegenerate, i.e.\ that if $0 = \bi_\mathbb{X}\omega = \int_\Sigma \bi_{\mathbb{X}}\bom$ then $\mathbb{X}=0$.

If $0=\bi_\mathbb{X}\omega$, then  $\bi_{\mathbb{Y}}\bi_\mathbb{X}\omega =0$ for all $\mathbb{Y}$ with $(Y_E, Y_\ell, \hat Y_i)$ of compact support in $\mathring\Sigma$, the open interior of $\Sigma$. Using Takens' theorem, we notice that for such $\mathbb{Y}$'s $0=\bi_\mathbb{Y}\bi_\mathbb{X}\omega = \int_\Sigma \bi_\mathbb{Y}(\bi_\mathbb{X}\bom)_\mathrm{src}$. From the fundamental lemma of the calculus of variations, it follows that $(\bi_\mathbb{X}\bom)_\mathrm{src} = 0$, that is:
\[
\bi_\mathbb{X}\omega = 0  \implies (\bi_\mathbb{X}\bom)_\mathrm{src} = 0 \iff X_\ell = X_E - \cD^i \hat X_i = \cL_\ell \hat X^i = 0,
\]
where Equation \eqref{e:kernelthirdrow} was used in the last step.
Now recall that $\bi_\mathbb{X}\omega = \int_\Sigma (\bi_\mathbb{X}\bom)_\mathrm{src} + (\bi_\mathbb{X}\bom)_\mathrm{bdry}$. Leveraging the previous conditions, and using similar arguments, we also find that
\[
\bi_\mathbb{X}\omega = 0 \implies 0 = \int_\Sigma \bi_\mathbb{X}\bom_{\mathrm{bdry}} = \int_{\pp\Sigma} \btr(\hat X^i \bd A_i) \iff \hat X^i(u=\pm1) = 0.
\]
Combining this with the previous conditions on $X_\ell$, $X_E$ and $\cL_\ell\hat X^i$ as well as Lemma \ref{lemma:paralleltr}, we conclude that $\bi_\mathbb{X}\omega=0$ implies $X_\ell = X_E = \hat X_i = 0$ i.e.\ $\mathbb{X}=0$. 

\begin{remark}
Staring from Equation \eqref{e:kernelthirdrow}, it is similarly possible to use Takens's theorem (Footnote \ref{fnt:app-Takens}) and Lemma \ref{lemma:paralleltr} to show that $\bom$ is (weakly) nondegenerate, i.e.\ that $\bom^\flat$ is injective. However, in general, $\bom^\flat$ injective does not imply $\omega^\flat$ injective (in fact from the previous argument it is clear that the opposite is true, since $\ker(\bom^\flat) \subset \ker(\omega^\flat) \subset \ker(\bom^\flat_\mathrm{src})$). 
\end{remark}

\subsection{Proof of Lemma \ref{lemma:ASmodes}}\label{proof:ASmodes}
Let $f\in C^\infty(I,\mathbb{R})$ and define
\[
\bar f(u) \doteq f(u) - ( f^0 - f^\av) - \frac{u}{2}f^\dif.
\]
Observe that
\[
\bar f^0 = f^0 - 2 (f^0 - f^\av)  = f^\av - (f^0 - f^\av) = \bar f^\av,
\]
whence, $\bar f^\dif = 0= \bar f^0 - \bar f^\av$.
In particular, $\bar f$ is $C^\infty$ and periodic over $I$.
Therefore, the partial Fourier series $F_N[\bar f]$,
\[
F_N[\bar f](u) \doteq \frac12 {\bar a}(0) + \sum_{k=1}^N {\bar a}(k) \cos(\pi k u) + {\bar b}(k) \sin(\pi k u), \quad
{\bar a}(k) + i {\bar b}(k) \doteq \int_{-1}^1 \bar f(u) e^{i\pi k u},
\]
converges to $\bar f$ uniformly.
In particular, 
\begin{equation}
    \label{eq:av-converg}
\frac12 {\bar a}(0) + \sum_{k=1}^N (-1)^k {\bar a}(k) = F_N[\bar f]^\av \to \bar f^\av = \bar f^0 = {\bar a}(0).
\end{equation}

Moreover, note that the Fourier coefficients of $f$ are related to those of $\bar f$ via
\[
a(k) + ib(k) \doteq \int_{-1}^1 f(u) e^{i\pi k u} =  \begin{dcases} 
{\bar a}(0) +2(f^0 - f^\av)  & k=0,\\
{\bar a}(k)  + i \left({\bar b}(k)-  \frac{(-1)^k}{\pi k} f^\dif \right) & k\geq 1.
\end{dcases}
\]

Consider now the partial series
\[
S_N[f](u) \doteq  \sum_{k=0}^N \tilde f(k)^*\psi_k(u) + c.c. ,
\]
where
\begin{equation}
    \label{eq:mode-exp}
2\tilde f(k) \doteq 2\mathbb{G}(\psi_k , f) = 
\begin{dcases}
(f^0 - f^\av) + i f^\dif & k=0\\
a(k) + i \left(2 \pi k b(k) + 2(-1)^k f^\dif \right) & k\geq 1
\end{dcases}
\end{equation}
Then, decomposing $\psi_k=\psi^R_k + i \psi^I_k$, and $\tilde f = \tilde f^R + i \tilde f^I$, into their real and imaginary parts, 
\begin{align*}
S_N[f](u) 
&= 2 \sum_{k=0}^N\, \tilde{f}^R(k) \psi_k^R(u) + \,\tilde{f}^I(k) \psi_k^I(u) \\
& = (f^0 - f^\av) + \frac{u}{2} f^\dif + \sum_{k=1}^N {\bar a}(k) \psi_k^R(u) + 2\pi k {\bar b}(k) \psi^I_k(u)\\
& = (f^0 - f^\av) + \frac{u}{2} f^\dif + \sum_{k=1}^N {\bar a}(k) ( (-1)^k + \cos(\pi k u) ) + {\bar b}(k) \sin(\pi k u)\\
& = (f^0 - f^\av)  + \frac{u}{2} f^\dif + F_N[\bar f](u) - \frac12 {\bar a}(0) + \sum_{k=1}^N (-1)^k {\bar a}(k)
\end{align*}
where we used that for $k\geq 1$, $\bar{a}(k)=a(k)$, as well as the expression for ${b}(k)$ in terms of $\bar b(k)$.
Finally, from the uniform convergence of the Fourier series $F_N[\bar f](u)\to \bar f(u)$ and Equation \eqref{eq:av-converg}, we find
\[
S_N[f](u) \xrightarrow[N\to\infty]{\mathrm{unif.}} (f^0 - f^\av) + \frac{u}{2} f^\dif + \bar f(u) = f(u).
\qedhere
\]

\subsection{Proof of Proposition \ref{prop:modedecomp}}\label{proof:ASmodes-sympl}
Generalising the result of Lemma \ref{lemma:ASmodes} (Appendix \ref{proof:ASmodes}) to the case where $f\in C^\infty(I)$ is replaced by $\ASa \in \hcalA \simeq C^\infty(I,\Omega^1(S))$, one finds
\[
S_N[\ASa_i(x)](u) \doteq \sum_{k=0}^N \left(\, \tilde{\ASa}_i^* (k,x) \psi_k(u) + c.c. \right)  \xrightarrow[N\to\infty]{\mathrm{unif.}} \ASa_i(u,x),
\]
and thus 
\begin{align*}
\omAS  
&\doteq \int_S \int_{-1}^1 (L_\ell\bd \ASa_i(u,x)) \wedge \bd \ASa^i(u,x) du \ \vol_S \\
&= 2i \int_S \mathbb{G}( \bd \ASa_i(x), \bd \ASa^i(x)) \vol_S\\
&= 2i \int_S  \lim_{N,M\to\infty} \mathbb{G}( S_N[\bd \ASa_i(x)] , S_M[\bd \ASa^i(x)] ) \vol_S\\
&= 2i \int_S \sum_{k=0}^\infty \bd\, \tilde{\ASa}_i^*(k,x) \wedge \bd \, \tilde{\ASa}^i(k,x) \vol_S,
\end{align*}
where in the third equality we used the fact that the convergence is uniform to pull out the limit from the integral over $u\in I$ that is implicit in the definition of $\mathbb{G}$. The final expression of $\omAS$ where the $k=0$ term is singled out is straightforward.

\subsection{Details of Calculation \ref{prop:step2}}\label{app:compute-bom-dress}
We compute $\omega(A^U,E^U)$. Recall Definition \ref{def:geom-ph-sp}:
\[
\bom(A,E) \doteq \tr\big( \bd F_\ell^i \wedge \bd A_i + \bd E \wedge \bd A_\ell )\vol_\Sigma.
\]
Thus, we compute:
\begin{align}
\bom(A^U, E^U) & =\tr\Big( \bd(U^{-1} F_{\ell}^i U) \wedge \bd (U^{-1} A_i U + U^{-1}\pp_i U)  \nonumber\\ 
& \qquad\qquad + \bd(U^{-1} E U) \wedge \bd (U^{-1} A_\ell U + U^{-1}\pp_u\bd U\Big)\vol_\Sigma \nonumber\\
& = \bom(A,E) \label{eq:contostep1}\\& \quad + \tr\Big(
 [ F_{\ell}^i,\bd U U^{-1} ] \wedge \cD_i(\bd U U^{-1}) + [E, \bd U U^{-1}]\wedge \cL_\ell (\bd U U^{-1})\Big)\vol_\Sigma \nonumber\\ 
 & \quad + \tr\Big(\bd E \wedge \cL_\ell\bd U U^{-1} - \bd F_{\ell}^i \wedge \cD_i\bd U U^{-1} \nonumber\\ 
 &\qquad\qquad\qquad\qquad\quad + [F_{\ell}^i , \bd U U^{-1}] \wedge \bd A_i + [E,\bd U U^{-1}]\wedge \bd A_\ell 
\Big)\vol_\Sigma \nonumber
\end{align}
We focus first on the second line of \eqref{eq:contostep1}. Recalling that the Gauss constraint reads $\mathsf{G} = \cL_\ell E + \cD_i F_{\ell}^i$, and using the identity
\[
[\bd U U^{-1}, \cL_\ell \bd U U^{-1}] = \tfrac12 \cL_\ell [\bd U U^{-1},\bd U U^{-1}] = \cL_\ell \bd (\bd U U^{-1}),
\]
we find:
\begin{align*}
[2^\mathrm{nd}] & = \tr\Big(
F_{\ell}^i [\bd U U^{-1}, \cD_i \bd U U^{-1} ] + E [\bd U U^{-1}, \cL_\ell \bd U U^{-1}]
\Big)\vol_\Sigma\\
& = - \tr\big(  \mathsf{G} \bd (\bd U U^{-1} ) \big)\vol_\Sigma \\ &  \quad+ L_\ell \tr\big( E  \bd(\bd U U^{-1})\big)\vol_\Sigma  + D_i \tr\big( F_\ell^i  \bd(\bd U U^{-1}) \big)\vol_\Sigma
\end{align*}
Next, we focus on the third and fourth lines of  \eqref{eq:contostep1}. Using the following identity for the variation of the Gauss constraint, 
\begin{align*}
\tr\big( (\bd\mathsf{G})\xi\big) & = \tr\big((\cL_\ell\bd E + \cD_i \bd F_{\ell}^i + [\bd A_\ell, E] + [\bd A_i, F_{\ell}^i])\xi \big) \\
& = \tr\big( - \bd E \ \cL_\ell \xi - \bd F_{\ell}^i \ \cD_i\xi + [\bd A_\ell, E] \xi + [\bd A_i, F_{\ell}^i] \xi \big) \\&\qquad\qquad +  L_\ell\tr\big( (\bd E) \xi\big) +  D_i\tr\big( (\bd F_{\ell}^i)  \xi\big),
\end{align*}
we find:
\begin{align*}
[3^\mathrm{rd}+4^\mathrm{th}] & = - \tr(\bd \mathsf{G} \wedge \bd U U^{-1})\vol_\Sigma 
\\ & \quad+ L_\ell\tr\big( \bd E \wedge \bd U U^{-1}\big)\vol_\Sigma 
+  D_i\tr\big( \bd F_{\ell}^i \wedge\bd U U^{-1})\vol_\Sigma\\
\end{align*}
Summing the two contributions, we obtain:
\begin{align*}
\bom(A^U,E^U)
& = \bom(A,E)  + \bd \tr\big( 
\mathsf{G} \bd U U^{-1} )\vol_\Sigma \\&\qquad + L_\ell \bd \tr( E  \bd U U^{-1} )\vol_\Sigma + D_i \bd \tr( F_\ell^i \bd U U^{-1})\vol_\Sigma,\\
\end{align*}
and thus, integrating over $\Sigma \simeq S \times I$, we conclude:
\begin{align*}
\omega(A^U,E^U) & = \omega(A,E)  + \bd \int_\Sigma \btr\big( 
\mathsf{G} \bd U U^{-1} )+ \\ &\qquad +  \bd\int_{S_\f} \btr\big( E^\f  \bd U^\f (U^\f)^{-1} \big)  - \bd \int_{S_\i}  \btr\big( E^\i  \bd U^\i (U^\i)^{-1} \big).
\end{align*}

\section{Proof of Equation \eqref{e:AbelianASreduction} on Abelian YM in Theorem \ref{thm:nonAb-constr-red}}\label{app:Abelianreduction}

In this appendix we prove that
\[
(\uCo,\uomegao) \simeq (\Xas,\omeas)\simeq_\loc (\XAS,\omeAS) \qquad \mathrm{\emph{(}Abelian\emph{)}},
\]
where
\[
(\Xas,\omeas)\doteq(\hcalA \times T^*\fg^S , \omAS + \omega_S).
\]
In other words, we prove that there exists a \emph{global} symplectomorphism between $(\uCo,\uomegao) \simeq (\Xas,\omeas)$.

The fact that $(\Xas,\omeas)$ is \emph{locally} symplectomorphic to $(\XAS,\omeAS)$ is straightforward and relies on the multi-valuedness of the $\log : G_0^S \to \fg^S$. Moreover, we note that the local symplectomorphism $(\uCo,\uomegao) \simeq_\loc (\XAS,\omeAS)$ was proved on general grounds in the main Section \ref{sec:reduction-first}.

\subsection{Preliminaries}

\begin{notation}\label{not:AbLiegp}
If (the connected real Lie group) $G$ is Abelian then it is the direct product of a torus and the real line. We write
\[
G\simeq \mathrm{U}(1)^t \times \mathbb{R}^{k} \qquad \mathrm{and}\qquad \fg\simeq (i\mathbb{R})^{t}\times \mathbb{R}^{k} .
\]
\end{notation}

\begin{notation}
Recall that $G\hookrightarrow G^\Sigma$ as the space of constant mapping functions. Then, 
\[
G \cdot G^\Sigma_\rel \equiv \{ g\in\G \ : \ \exists k\in G\text{ such that } g|_{\pp\Sigma}=k\}.
\]
\end{notation}

\begin{lemma}[Abelian winding number]\label{lemma:winding}
Let $g\in G \cdot G^\Sigma_\rel$, then the following $\fg$-valued is valued in an integer lattice of $\fg\hookrightarrow \fg^S$,
\[
w(g)\doteq \frac{1}{2\pi} \int_{-1}^1 du' \ g^{-1} L_\ell g(u')  \in (i\mathbb{Z})^t \times \{0\}.
\]
Moreover, $w(g_1g_2) = w(g_1) + w(g_2)$.
We call $w(g)$ the \emph{winding number} of $g^S\in G \cdot G^\Sigma_\rel$.
\end{lemma}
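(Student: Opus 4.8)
The plan is to exploit the product structure $G \simeq \mathrm{U}(1)^t \times \mathbb{R}^k$ of Notation \ref{not:AbLiegp}, under which the Maurer--Cartan form of $G$ (bi-invariant, since $G$ is Abelian) splits as a direct sum over the factors. Correspondingly, the globally defined smooth $\fg$-valued function $g^{-1}L_\ell g$ on $\Sigma$, and hence its $u$-integral, split into $t$ torus components and $k$ real components, which I would analyse separately. The key geometric input is that $g\in G\cdot G^\Sigma_\rel$ satisfies $g|_{\pp\Sigma}=k$ for a \emph{single} constant $k\in G$; thus for each fixed $x\in S$ the curve $u\mapsto g(u,x)$ is a smooth loop in $G$ based at $k$, since it returns to $k$ at both $u=\pm1$.

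For the $\mathbb{R}^k$ factors I would use that the Lie-group exponential $\mathbb{R}^k\to\mathbb{R}^k$ is the identity: writing $g(u,x)=\exp v(u,x)$ with $v$ smooth and $\mathbb{R}^k$-valued, one has $g^{-1}L_\ell g=\pp_u v$, so $\int_{-1}^1 \pp_u v\,du = v(1,x)-v(-1,x)=0$, the last equality because $g(\pm1,x)=k$ and $\exp$ is injective. This accounts for the $\{0\}$ in $(i\mathbb{Z})^t\times\{0\}$. For a single torus factor $\mathrm{U}(1)$, I would lift the loop $u\mapsto g(u,x)$ over the \emph{contractible} interval $I$ to $\theta(u,x)\in\mathbb{R}$ with $g=e^{i\theta}$, so that $g^{-1}L_\ell g=i\,\pp_u\theta$ and $\tfrac{1}{2\pi}\int_{-1}^1 g^{-1}L_\ell g\,du=\tfrac{i}{2\pi}\big(\theta(1,x)-\theta(-1,x)\big)$; since $g(1,x)=g(-1,x)$, the bracket lies in $2\pi\mathbb{Z}$, giving a value $i\,n(x)$ with $n(x)\in\mathbb{Z}$. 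Crucially, $n(x)$ can be read off from the \emph{intrinsic}, lift-free integrand as $n(x)=\tfrac{1}{2\pi i}\int_{-1}^1 g^{-1}L_\ell g\,du$, which is manifestly smooth in $x$.

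The final step for the integrality claim is a connectedness argument: $n$ is a smooth (in particular continuous) integer-valued function on the connected manifold $S$, hence constant. This establishes simultaneously that $w(g)$ is independent of $x$ --- i.e.\ lies in the constant subalgebra $\fg\hookrightarrow\fg^S$ --- and that its torus components are integers, so that $w(g)\in(i\mathbb{Z})^t\times\{0\}$. For additivity I would first note that $G\cdot G^\Sigma_\rel$ is a subgroup of $\G$ (a product of subgroups of the Abelian group $G^\Sigma$, with $G$ central), so $g_1g_2\in G\cdot G^\Sigma_\rel$ and $w(g_1g_2)$ is defined; then, by Abelianness, $(g_1g_2)^{-1}L_\ell(g_1g_2)=g_1^{-1}L_\ell g_1+g_2^{-1}L_\ell g_2$, whence the integrands add and $w(g_1g_2)=w(g_1)+w(g_2)$.

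I expect the only genuinely non-routine point to be the passage from ``integer for each fixed $x$'' to ``constant in $x$'': everything else is the classical integral formula for the degree of a loop in the circle. The mild subtlety is that the pointwise integer is defined through a lift that exists only along $I\times\{x\}$ (the interval being contractible), and \emph{not} over all of $\Sigma$, which need not be simply connected; smoothness in $x$ must therefore be argued from the intrinsic integral $\tfrac{1}{2\pi i}\int_{-1}^1 g^{-1}L_\ell g\,du$ rather than from a global choice of $\theta$.
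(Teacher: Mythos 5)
Your proof is correct, and it reaches the result by a more elementary route than the paper's. The paper's proof is a three-line appeal to the (Abelian) Wilson-line identity $\exp\int_a^b du'\, g^{-1}\pp_{u'}g = g^{-1}(a)\,g(b)$ (cf.\ Lemma \ref{lemma:Wilsonline}): applying it at $(a,b)=(-1,1)$ and using $g|_{\pp\Sigma}=k$ gives $\exp 2\pi w(g)=1$, so that $2\pi w(g)$ lies pointwise in the kernel lattice of $\exp\colon\fg\to G$, which for $G\simeq \mathrm{U}(1)^t\times\mathbb{R}^k$ is $(2\pi i\mathbb{Z})^t\times\{0\}$. You instead re-derive this kernel statement from scratch, factor by factor: injectivity of $\exp$ on the $\mathbb{R}^k$ part, and the classical lift-and-wind argument on each $\mathrm{U}(1)$ factor. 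The two arguments rest on the same underlying fact; yours unpacks the holonomy identity rather than quoting it. Your write-up also adds two points worth noting. First, you explicitly prove constancy of $w(g)$ over $S$ (a smooth integer-valued function on a connected manifold is constant); the paper's proof of the lemma only establishes the pointwise lattice membership and defers exactly this continuity-plus-discreteness argument to the proof of Proposition \ref{prop:constr-g-grp-Ab}, so your version is, if anything, more self-contained relative to what the lemma's statement literally asserts. Second, you verify that $G\cdot G^\Sigma_\rel$ is a subgroup, so that $w(g_1g_2)$ is defined before additivity is invoked, a point the paper leaves implicit. Your caution that the angular lift exists only along each fibre $I\times\{x\}$ (and not globally over $\Sigma$), with smoothness in $x$ read off from the intrinsic integral instead, is well placed and correctly resolved.
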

\begin{proof}
Consider $a\leq b$ in $I$, and recall the identity (for $g\in \G$):
\[
W(g; a,b) \doteq \exp \int_{a}^b du'\, g^{-1}\pp_{u'}g = g^{-1}(a) g(b).
\]
This formula can be read as the gauge-equivariance of the (Abelian) path-ordered exponential---i.e.\ at each $x\in S$ this formula is the holonomy (a.k.a. Wilson line, see Appendix \ref{app:usefulformulas}), drawn at constant $x$, between $(x,u=a)$ and $(x,u=b)$ (see Appendix \ref{app:usefulformulas}).
Then, if $g\in\tilde G \cdot G^\Sigma_\rel$, we have $\exp 2\pi w(g) = W(g; -1,1) = 1$, and therefore $w(g) \in i \mathbb{Z}^{t} \times \{0\}$.

We conclude by observing that the additivity $w(g_1g_2) = w(g_1) + w(g_2)$ follows immediately from the definition since
\[
(g_1g_2)^{-1} L_\ell (g_1g_2) = g_2^{-1}g_1^{-1} (L_\ell g_1 g_2 + g_1 L_\ell g_2) = g_1^{-1} L_\ell g_1 + g_2^{-1} L_\ell g_2.\qedhere
\]
\end{proof}

Recall that the space of \emph{on-shell} fluxes is $\F \doteq \Im(\iota_\C^*h)$, the constraint gauge algebra is $\fGo \doteq \Ann(\F,\fg^\Sigma)$, and the constraint gauge group is $\Go \doteq \langle\exp\fGo\rangle$.
Now, if $G$ is Abelian, Proposition \ref{prop:discretequotientgroup}(2) tells us that
\[
\fGo \simeq \fg^\Sigma_\rel + \fg,
\]
and
\[
\Go \doteq \langle \exp \fGo \rangle \simeq (G\cdot G^\Sigma_\rel)_0 = G\cdot G^\Sigma_{\rel,0}.
\]
The next proposition provides a more explicit characterisation of $\Go$:

\begin{proposition}[Abelian constraint gauge group $\Go$]\label{prop:constr-g-grp-Ab}
The Abelian constraint gauge group $\Go\doteq \langle\mathrm{exp}\fGo \rangle$ is isomorphic to 
\[
\Go \simeq G\cdot G^\Sigma_{\rel,0} = \left\{g\in G \cdot G^\Sigma_\rel \ : \  w(g) \doteq \tfrac{1}{2\pi} (g^{-1}\pp_u g)^{\int} = 0 \right\}.
\]
\end{proposition}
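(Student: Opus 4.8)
The plan is to leverage the two facts already in hand. By Proposition~\ref{prop:discretequotientgroup}(2.ii) we already know $\Go \simeq G\cdot G^\Sigma_{\rel,0}$, and by Lemma~\ref{lemma:winding} the winding $w$ is a group homomorphism from $G\cdot G^\Sigma_\rel$ into the \emph{discrete} lattice $(i\mathbb{Z})^t\times\{0\}$. Hence the only thing left to establish is the set-theoretic equality $G\cdot G^\Sigma_{\rel,0} = \ker\big(w|_{G\cdot G^\Sigma_\rel}\big)$, which I would prove by a double inclusion.

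First I would show $G\cdot G^\Sigma_{\rel,0}\subseteq\ker w$. Since $G^\Sigma_{\rel,0}=\langle\exp\fg^\Sigma_\rel\rangle$ (Remark~\ref{rmk:exponentialityofmappinggroup}), by multiplicativity of $w$ (Lemma~\ref{lemma:winding}) it suffices to check that $w$ vanishes on the generators. On a constant $k\in G$ one has $\pp_u k=0$, so $w(k)=0$; on $\exp\eta$ with $\eta\in\fg^\Sigma_\rel$ the Abelian identity $(\exp\eta)^{-1}\pp_u\exp\eta=\pp_u\eta$ gives $w(\exp\eta)=\tfrac1{2\pi}\eta^\dif=0$, because $\eta^\i=\eta^\f=0$. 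Multiplicativity then propagates $w=0$ to the whole subgroup.

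The substantive inclusion is $\ker w\subseteq G\cdot G^\Sigma_{\rel,0}$, where I expect the main work to lie. Given $g\in G\cdot G^\Sigma_\rel$ with $w(g)=0$, write $k\doteq g|_{\pp\Sigma}\in G$ and set $g_1\doteq k^{-1}g\in G^\Sigma_\rel$; connecting $g$ to $g_1$ through the path $s\mapsto\gamma(s)g_1$ for some path $\gamma$ in the connected group $G$ from $1$ to $k$ reduces the claim to contracting $g_1$ inside $G^\Sigma_{\rel,0}$, and additivity yields $w(g_1)=w(g)=0$. Now comes the key step: lift $g_1$ to the Lie algebra along the null direction by setting
\[
\xi(u,x)\doteq\int_{-1}^u (g_1^{-1}\pp_{u'}g_1)(u',x)\,du'\in\fg .
\]
Since $G$ is Abelian and connected, both $\exp\xi$ and $g_1$ solve the same linear ODE $\pp_u Y=(\pp_u\xi)Y$ with $Y(u=-1)=1$, so $\exp\xi=g_1$ by uniqueness (Lemma~\ref{lemma:paralleltr}). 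The hypothesis on $w$ controls precisely the boundary values of this lift: $\xi^\i=0$ by construction, while $\xi^\f=(g_1^{-1}\pp_u g_1)^{\int}=2\pi\,w(g_1)=0$. Hence $\xi\in\fg^\Sigma_\rel$, the homotopy $s\mapsto\exp(s\xi)$ lies in $\langle\exp\fg^\Sigma_\rel\rangle=G^\Sigma_{\rel,0}$ and connects $1$ to $g_1$, so $g_1\in G^\Sigma_{\rel,0}$ and $g=kg_1\in G\cdot G^\Sigma_{\rel,0}$.

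The delicate point — and where I would be most careful — is verifying that $\xi$ is a genuine \emph{global}, smooth lift with $\exp\xi=g_1$ on all of $\Sigma$, and that $w(g_1)=0$ is exactly the statement $\xi^\f=0$. It is worth stressing that \emph{no} hypothesis on $S$ or on its fundamental group enters: the potential topological obstruction lives in the contractible interval factor $I$ and is resolved by integrating from the endpoint $u=-1$ where $g_1\equiv1$, while the winding condition removes the residual obstruction at $u=+1$. The two routine technical inputs I would cite are the ODE-uniqueness step above and the standard smooth-homotopy replacement for the path $\gamma$ in $G$.
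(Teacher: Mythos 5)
Your proof is correct and takes essentially the same route as the paper's: the substantive inclusion $\{w=0\}\subseteq G\cdot G^\Sigma_{\rel,0}$ rests on exactly the paper's construction, namely the lift $\xi(u)=\int_{-1}^u (g^{-1}\pp_{u'}g)\,du'$ with $\xi^\i=0$ and $\xi^\f=2\pi w(g)=0$, the paper simply writing the resulting contraction as $g_t = k\exp(t\xi)$ inside $G\cdot G^\Sigma_\rel$ rather than factoring out $k$ first. The only cosmetic deviations are in the easy inclusion (you check $w$ on generators and use additivity, whereas the paper notes the discrete-valued, continuous $w$ is locally constant and hence vanishes on the identity component) and in your citation of Lemma \ref{lemma:paralleltr} for the group-valued identity $\exp\xi=g_1$, where Lemma \ref{lemma:Wilsonline} (or the identity $\exp\int_a^b g^{-1}\pp_{u'}g\,du' = g(a)^{-1}g(b)$ invoked in the proof of Lemma \ref{lemma:winding}) is the more apt reference.
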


\begin{proof}
We prove the statement by showing that the two inclusions. 

Lemma \ref{lemma:winding} states that  $w(g)$ is valued in an integer lattice $(i\mathbb{Z})^{p}\times \{ 0\}\subset \fg\hookrightarrow\fg^S$. But then, since $w(g)\in\fg^S$ is continuous over $S$,  $w(g)$ is constant on $S$. Additionally, $w\colon G \cdot G^\Sigma_\rel\to \fg$ is a constant function over the connected components of $G \cdot G^\Sigma_\rel$---and in particular it vanishes on the identity component. Therefore:
\[
G\cdot G^\Sigma_{\rel,0} \subset \{g \in G \cdot G^\Sigma_\rel \ : \ w(g) = 0 \}.
\]

Next, let $g\in G\cdot G^\Sigma_\rel$---so that $g^\i=g^\f\equiv k\in G$---and assume $w(g)=0$. For all $t\in[0,1]$, define
\[
g_t \doteq k \exp t \int_{-1}^u du' \, g^{-1}L_\ell g(u').
\]
Then, $g_{t=0}= g^\i$ and $g_{t=1} = k (g^{-1})^\i g = g$. Moreover, since $w(g) = 0$, we have that $g_t^\f = k \exp(2\pi t w(g) ) = k$. Therefore for all $t\in[0,1]$, $g_t^\i = g_t^\f = k\in G$ and hence $g_t \in G\cdot G^\Sigma_\rel$. We thus conclude that $g_t$ is an homotopy in $G\cdot G^\Sigma_\rel$ between $g$ and a constant $k\in G$. Since $G$ is by hypothesis connected, $g$ is not only homotopic within $G\cdot G^\Sigma_\rel$ to $k$ but also to the identity $1\in G\cdot G^\Sigma_\rel$. Thus,
\[
G\cdot G^\Sigma_{\rel,0} \supseteq \{g \in G \cdot G^\Sigma_\rel \ : \ w(g) = 0 \}.\qedhere
\]
\end{proof}

\begin{remark}[Group homomorphism]\label{rmk:grphomo}
Recall Notation \ref{not:AbLiegp}. If $t=0$ then it is easy to see that any $g\in G\cdot G^\Sigma_\rel$ is homotopic to the identity, and thus $\Go =G\cdot G^\Sigma_\rel$. On the other hand, if $G=\mathrm{U}(1)$ and $n\in\mathbb{Z}$, then the only element $g^{(n)}(u,x) = -\exp( \pi n i u)\in G\cdot G^\Sigma_\rel$ that is also in the identity component $\Go = G\cdot G^\Sigma_{\rel,0}= \exp\fGo $ is the identity itself, i.e.\ $g^{(0)}$.\footnote{Indeed, $\xi \in \fGo$ iff $\xi^\i = \xi^\f \in \fg$, and $ -\pi i n = \pi i n$ iff $n=0$.}
The quantity $w(g)$ is a topological invariant classifying the connected components of $G\cdot G^\Sigma_\rel$; in particular $w(g^{(n)}) = i n$. Moreover, $w(g_1g_2)=w(g_1)+w(g_2)$. Indeed, if $S =S^{n-1}$ and $n\neq 2$, the winding number $w(g)$ provides a group homomorphism between $G\cdot G^\Sigma_\rel$ and the group of components of $G\cdot G^\Sigma_\rel \simeq G^{S^1\times S} \simeq \mathbb{Z}^t$.
\end{remark}

\subsection{Gauge fixing}

We now turn our attention to the action of the constraint gauge group on the constraint surface $\C\subset \X$. 
Since $\fGo\simeq \fg^\Sigma_\rel + \fg$ ``differs'' from $\fg^\Sigma_\rel$ only by the ($\C$-global) isotropy $\ker(\rho) = \fg \hookrightarrow \fG$, one has
\[
\uCo \doteq \C / \Go = \C / G^\Sigma_{\rel,0}.
\]

Recall Proposition \ref{prop:constr-surf} which states that 
\[
\C \simeq \mathcal{A}\times(\fg^S)^*.
\]

\begin{lemma}\label{lemma:Grelaction}
The action of $G^\Sigma_{\rel,0}$ on $\C$ induces the following action of $G^\Sigma_{\rel,0}$ on $\mathcal{A}\times{(\fg^S)^*}$:
\[
G^\Sigma_{\rel,0} \circlearrowright \mathcal{A}\times{(\fg^S)^*}, 
\qquad 
(g_\circ, A, {E_\i}) \mapsto (A^{g_\circ}, {E_\i}).
\]
This action defines the principal $G^\Sigma_{\rel,0}$-bundle
\[
\underline{\pi}_\circ :\mathcal{A}\times{(\fg^S)^*} \to (\mathcal{A}\times{(\fg^S)^*})/G^\Sigma_{\rel,0},
\]
with $(\mathcal{A}\times{(\fg^S)^*})/G^\Sigma_{\rel,0} = (\mathcal{A}\times{(\fg^S)^*})/\Go \simeq \uCo$. Diagrammatically:
\[
\xymatrix{
\mathcal{A} \times (\fg^S)^* \ar[r]_-{s_\i}^-{\simeq} \ar@<-4pt>[d]_{\underline\pi_\circ}
&
\C\; \ar[d]^{\pi_\circ}\\
(\mathcal{A} \times (\fg^S)^*)/ G^\Sigma_{\rel,0} \ar[r]^-{\ \simeq \ ~}
& 
\uCo 
}
\]
\end{lemma}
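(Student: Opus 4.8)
The plan is to verify the three assertions of Lemma \ref{lemma:Grelaction} in sequence, each of which follows fairly directly from earlier results once the right identifications are made. First I would establish that the stated formula indeed defines the induced action. Recall the diffeomorphism $s_\i:\mathcal{A}\times(\fg^S)^*\xrightarrow{\simeq}\C$ of Proposition \ref{prop:constr-surf}, given by $(A,E_\i)\mapsto(A,E(A,E_\i))$. The gauge action of $g_\circ\in G^\Sigma_{\rel,0}\subset\G$ on $\C\subset\X$ is the restriction of the action of Definition \ref{def:gauge-action}, namely $(A,E)\mapsto(A^{g_\circ},\Ad(g_\circ^{-1})\cdot E)$. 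In the Abelian case $\Ad$ is trivial, so $E\mapsto E$; moreover the boundary value $E_\i=E^\i$ is the conserved ``initial condition'' of the Gauss parallel-transport equation. Since $g_\circ\in G^\Sigma_\rel$ satisfies $g_\circ^\i=1$, the transformed initial datum is unchanged: pulling back the action along $s_\i$ one finds $(g_\circ,A,E_\i)\mapsto(A^{g_\circ},E_\i)$, which is the claimed formula. This is the step that requires the slightly careful observation that $E_\i$ is genuinely invariant, which uses both the Abelian hypothesis (trivial adjoint) and $g_\circ|_{\pp\Sigma}=1$.

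Next I would argue that this action makes $\underline\pi_\circ$ a principal $G^\Sigma_{\rel,0}$-bundle. Here I invoke Remark \ref{rmk:Abelian-isotropy} together with Proposition \ref{prop:discretequotientgroup}\ref{2.ii}: although $\Go\simeq G\cdot G^\Sigma_{\rel,0}$ strictly contains $G^\Sigma_{\rel,0}$, the extra constant gauge transformations $\fg\hookrightarrow\fg^\Sigma$ lie in $\ker(\rho)$ and so act trivially on $\C$. Consequently $\Go$ and $G^\Sigma_{\rel,0}$ have identical orbits on $\C$, giving $\C/\Go=\C/G^\Sigma_{\rel,0}$, and hence $(\mathcal{A}\times(\fg^S)^*)/G^\Sigma_{\rel,0}\simeq\uCo$ via $s_\i$. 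The principality (freeness and properness) of the $G^\Sigma_{\rel,0}$-action on $\C$ is exactly the content invoked in Remark \ref{rmk:uCosmoothness}: for $G$ Abelian the action is free modulo the global stabiliser $\fg$, which has been factored out by passing to $G^\Sigma_{\rel,0}$ rather than $\Go$, and properness holds since $G$ is compact (or reduces to the $\mathbb{R}^k$-factor, which acts by translations on an affine space). On the factor $\mathcal{A}$ the action is the standard affine gauge action, and the invariance of $E_\i$ means the $(\fg^S)^*$-factor is simply a spectator, so the quotient is manifestly the bundle claimed.

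Finally, the commutativity of the diagram is immediate: the top horizontal map is $s_\i$, the two vertical maps are the respective quotient projections, and the bottom map is the induced diffeomorphism on quotients. The fact that $s_\i$ is equivariant with respect to the two $G^\Sigma_{\rel,0}$-actions---the one on $\mathcal{A}\times(\fg^S)^*$ defined above and the gauge action on $\C$---is precisely what the first step established, and this equivariance descends to the well-defined bottom isomorphism. I expect the main obstacle to be the second step: one must be scrupulous about the distinction between $\Go$, $G^\Sigma_{\rel,0}$, and $G\cdot G^\Sigma_{\rel,0}$ (cf. Remark \ref{rmk:normalsubgrps} and the remark following Proposition \ref{prop:discretequotientgroup}), and about why replacing $\Go$ by the genuinely freely-acting subgroup $G^\Sigma_{\rel,0}$ does not change the quotient. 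Everything else is a transcription of the general gauge action into the coordinates $(A,E_\i)$ afforded by the diffeomorphism $s_\i$, with the Abelian hypothesis trivialising the adjoint action on the electric field.
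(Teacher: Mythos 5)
Your proposal is correct and follows essentially the same route as the paper: the paper's own proof likewise treats the equivariance of $s_\i$ and the identification $\C/\Go = \C/G^\Sigma_{\rel,0}$ as immediate, and isolates \emph{freeness} of the $G^\Sigma_{\rel,0}$-action on $\mathcal{A}$ (hence on $\mathcal{A}\times(\fg^S)^*$) as the only delicate point, deducing it from $\ker(\rho)=\fg\hookrightarrow\fg^\Sigma$ together with $\fg\cap\fg^\Sigma_\rel=\{0\}$ --- precisely your observation that the global stabiliser of constant gauge transformations has been factored out by passing to $G^\Sigma_{\rel,0}$. Your extra remarks on properness and on the $(\fg^S)^*$-factor being a spectator are consistent with, though more explicit than, the paper's terse proof.
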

\begin{proof}
The only step of the proof that is not obvious and requires some care is proving that the action of $G^\Sigma_{\rel,0}$ is free on $\mathcal{A}$ and therefore on $\mathcal{A}\times{(\fg^S)^*}$. This follows from $\ker(\rho) = \fg\hookrightarrow \fg^\Sigma$ and $\fg\cap\fg^\Sigma_\rel = \{0\}$. 
\end{proof}

Now, since $\C\simeq \mathcal{A}\times(\fg^S)^*$ is a principal $G^\Sigma_{\rel,0}$-bundle and $\uCo \doteq \C/\Go \simeq \C/G^\Sigma_{\rel ,0}$, the question of reduction can be directly addressed if we can find a (global) gauge fixing of the action of $G^\Sigma_{\rel,0}$ on $\C$, i.e.\ a (global) section of the principal $G^\Sigma_{\rel,0}$-bundle $(\mathcal{A}\times(\fg^S)^*, \underline\pi_\circ)$. We proceed in two steps: first we introduce a trivialisation of this principal bundle, and then leverage it to construct a global section.

\begin{lemma}\label{lemma:constr-trivialis-Ab}
Viewing $ G^\Sigma_{\rel,0} \subset G^\Sigma \simeq C^\infty(I,G^S)$, let
\[
\dfuo:\mathcal{A}\to \fg^\Sigma \simeq C^\infty(I,\fg^S), 
\quad 
A \mapsto (\dfuo(A))(u)\doteq - \int_{-1}^u du' \,A_\ell(u') + \tfrac12 (u+1) (A_\ell)^{\int}.
\]
Then, the following map is a principal $G^\Sigma_{\rel,0}$-bundle trivialisation: 
\[
\begin{array}{rl}
\underline{\tau}_\circ: &\mathcal{A}\times(\fg^S)^* \to \Xas \times G^\Sigma_{\rel,0} \equiv \hcalA \times T^*\fg^S \times G^\Sigma_{\rel,0}\\\\
& \,\;
\begin{pmatrix}
A\\ {E_\i}
\end{pmatrix}
\;\mapsto 
\begin{pmatrix}
\ASa(A)\\ {\lambda(A)} \\ \ASe(E_\i) \\ \dfUo(A)
\end{pmatrix}
\doteq 
\begin{pmatrix}
\hat{A} + D\dfuo(A) - \tfrac12 (u-1) D(A_\ell)^{\int}\\ (A_\ell)^{\int}\\
{E_\i}\\
\exp\dfuo(A)
\end{pmatrix}
\end{array}
\]
where ${\ASe}\in T_{\lambda}^*\fg^S\simeq (\fg^S)^*$ (cf. Remark \ref{rmk:dual}). Diagrammatically:
\[
\xymatrix{
\Xas \times G^\Sigma_{\rel,0} \ar[dr]_{\pi_{12}} &
\mathcal{A} \times (\fg^S)^* \ar@<-4pt>[d]^{\underline\pi_\circ}
\ar[l]_-{\underline{\tau}_\circ}^-{\simeq}\\
&\Xas
}
\]
where $\pi_{12}$ denotes the projection on the first two factors, and we have $\underline{\pi}_\circ = \pi_{12} \circ \underline{\tau}_\circ$.
\end{lemma}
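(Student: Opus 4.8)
The plan is to verify that $\underline{\tau}_\circ$ is a $G^\Sigma_{\rel,0}$-equivariant diffeomorphism onto $\Xas\times G^\Sigma_{\rel,0}$ whose composition with the projection $\pi_{12}$ to the first two ($=\Xas$) factors is $G^\Sigma_{\rel,0}$-invariant; by the usual characterisation of a principal bundle, this is exactly the assertion that $\underline{\tau}_\circ$ is a trivialisation with $\underline{\pi}_\circ=\pi_{12}\circ\underline{\tau}_\circ$. Concretely there are three points to settle: (a) the fourth component $\dfUo(A)=\exp\dfuo(A)$ genuinely lands in the structure group $G^\Sigma_{\rel,0}$ (and $\ASa(A)$, being a sum of spatial forms, lies in $\hcalA$); (b) under the gauge action $A\mapsto A^{g_\circ}$ of $g_\circ=\exp\zeta$ with $\zeta\in\fg^\Sigma_\rel$, the components $(\ASa,\lambda,\ASe)$ are invariant while $\dfUo$ transforms by a right translation; and (c) $\underline{\tau}_\circ$ is a bijection with smooth inverse.

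For (a) and (b) I would first record the two boundary identities $\dfuo(A)^\i=\dfuo(A)^\f=0$, which follow at once from the explicit formula for $\dfuo$: the integral vanishes at $u=-1$, while at $u=+1$ the two terms cancel against $(A_\ell)^{\int}$. Hence $\dfuo(A)\in\fg^\Sigma_\rel$ and, since $G^\Sigma_{\rel,0}=\langle\exp\fg^\Sigma_\rel\rangle$, indeed $\dfUo(A)\in G^\Sigma_{\rel,0}$. For equivariance I would use that in the Abelian case $A^{g_\circ}=A+d\zeta$, so $A_\ell^{g_\circ}=A_\ell+\pp_u\zeta$ and $\hat A^{g_\circ}=\hat A+D\zeta$. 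A short computation using $\zeta^\i=\zeta^\f=0$ then gives $(A_\ell^{g_\circ})^{\int}=(A_\ell)^{\int}$, hence $\lambda(A^{g_\circ})=\lambda(A)$, together with $\dfuo(A^{g_\circ})=\dfuo(A)-\zeta$, so that $\dfUo(A^{g_\circ})=\dfUo(A)\,g_\circ^{-1}$; substituting these into the definition of $\ASa$ the $D\zeta$-terms cancel, giving $\ASa(A^{g_\circ})=\ASa(A)$, while $\ASe(E_\i)=E_\i$ is invariant since the action leaves $E_\i$ untouched. As $G^\Sigma_{\rel,0}$ is Abelian, right translation by $g_\circ^{-1}$ is a free right action, and $\underline{\tau}_\circ$ intertwines it with the gauge action.

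For (c) I would construct the inverse explicitly: given $(\ASa,\lambda,\ASe,\dfUo)$, set $\dfuo:=\log\dfUo$ and recover $A_\ell=-\pp_u\dfuo+\tfrac12\lambda$, $\hat A=\ASa-D\dfuo+\tfrac12(u-1)D\lambda$, and $E_\i=\ASe$; a direct check (using $\dfuo^\i=\dfuo^\f=0$, which forces $(A_\ell)^{\int}=\lambda$ and $\dfuo(A)=\dfuo$) shows this is a two-sided inverse. The one nontrivial ingredient — and the main obstacle — is the well-definedness and smoothness of the logarithm $\log:=\exp^{-1}\colon G^\Sigma_{\rel,0}\to\fg^\Sigma_\rel$, i.e.\ that $\exp$ is a global diffeomorphism $\fg^\Sigma_\rel\xrightarrow{\sim}G^\Sigma_{\rel,0}$. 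This is exactly where both the Abelian hypothesis and the relative boundary condition are essential. Injectivity holds because two preimages of the same element differ (in the Abelian case) by a function valued pointwise in the kernel $2\pi(i\mathbb{Z})^t\times\{0\}\subset\fg$; being a continuous map into a discrete integer lattice on the connected manifold $\Sigma$, it is constant, and it vanishes on $\pp\Sigma$, hence is zero. Surjectivity, together with the subgroup property, follows from $\exp(\zeta_1)\exp(\zeta_2)=\exp(\zeta_1+\zeta_2)$, so that $\exp(\fg^\Sigma_\rel)$ is already a subgroup and therefore equals $\langle\exp\fg^\Sigma_\rel\rangle=G^\Sigma_{\rel,0}$. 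This is precisely the point that distinguishes the present situation from $T^*G^S_0$ (cf.\ Remark \ref{rmk:localsymp}), where the constant $\mathrm{U}(1)$-windings on the boundary spheres obstruct a global logarithm and only a local symplectomorphism survives. With $\log$ in hand, smoothness of $\underline{\tau}_\circ$ and of its inverse is immediate from that of $\exp$, $\log$, $\pp_u$, $D$ and $u$-integration, completing the verification that $\underline{\tau}_\circ$ is a principal $G^\Sigma_{\rel,0}$-bundle trivialisation with $\underline{\pi}_\circ=\pi_{12}\circ\underline{\tau}_\circ$.
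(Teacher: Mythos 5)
Your proposal is correct, and its skeleton (well-definedness of the components, equivariance, explicit two-sided inverse) matches the paper's proof, including the same inverse formula: your $\hat A=\ASa-D\dfuo+\tfrac12(u-1)D\lambda$, $A_\ell=-\pp_u\dfuo+\tfrac12\lambda$ is exactly the paper's $A'=\ASa+\tfrac12(u-1)D\lambda+\tfrac12\lambda\,du-d\dfUo\dfUo^{-1}$ written in terms of $\dfuo=\log\dfUo$. The one place where you genuinely diverge is the treatment of the group-valued component in the right-inverse check. The paper never inverts the exponential: given an arbitrary $\dfUo\in G^\Sigma_{\rel,0}$ it computes $\dfuo(A')$ directly from the candidate connection $A'$ and shows $\exp\dfuo(A')=\dfUo$ by the Wilson-line identity together with the vanishing of the winding number $w(\dfUo)=\tfrac{1}{2\pi}(\dfUo^{-1}\pp_u\dfUo)^{\int}$ on the identity component (Proposition \ref{prop:constr-g-grp-Ab}, Lemma \ref{lemma:winding}). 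You instead prove that $\exp\colon\fg^\Sigma_\rel\to G^\Sigma_{\rel,0}$ is a global diffeomorphism --- injectivity from discreteness of the kernel lattice, connectedness of $\Sigma$ and the relative boundary condition; surjectivity from the Abelian subgroup trick $\exp(\fg^\Sigma_\rel)=\langle\exp\fg^\Sigma_\rel\rangle$ --- and then define the inverse via $\log$. The two arguments encode the same topological fact, and yours is a valid, self-contained substitute for the paper's appeal to the winding number; it also makes more transparent where the Abelian hypothesis and the relative boundary condition enter, and why the \emph{linear} extension $T^*\fg^S$ (rather than $T^*G^S_0$) is the right target, in line with Remark \ref{rmk:localsymp}. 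What the paper's formulation buys in exchange is that its inverse formula, phrased with $-d\dfUo\dfUo^{-1}$ rather than a logarithm, is the version that survives contact with the non-Abelian dressing construction of Section \ref{sec:reduction-first}. Two small points to tighten: your equivariance step writes a general $g_\circ\in G^\Sigma_{\rel,0}$ as $\exp\zeta$, which is only licensed by the surjectivity you establish later, so the logical order should be inverted (or cite Remark \ref{rmk:exponentialityofmappinggroup} plus the subgroup trick up front); and smoothness of $\log$ deserves one explicit sentence --- it follows because $\exp$ is a bijective local diffeomorphism (local exponentiality of mapping groups plus translation invariance in the Abelian case).
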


\begin{proof}
Proving the lemma requires showing that (1) $\underline{\tau}_\circ$ is invertible, and (2) it maps right-translations of the $G^\Sigma_{\rel,0}$-factor in $\hcalA\times T^*\fg^S\times {G^\Sigma_{\rel,0}}$ onto the action of ${G^\Sigma_{\rel,0}}$ on $\mathcal{A}\times {(\fg^S)^*}$. 
Throughout the proof, it is useful to keep in mind that $\lambda(A)$ is constant in $u$.

To prove (1), it is enough to show that the following map is indeed both a left and a right inverse of $\underline\tau_\circ$:
\begin{align}\label{eq:tau-1}
\underline\tau_\circ^{-1} :\hcalA \times T^*\fg^S \times {G^\Sigma_{\rel,0}} &\to \mathcal{A}\times{(\fg^S)^*}\\
(\ASa,\lambda,\ASe,\dfUo) &\mapsto 
\begin{pmatrix}
    A \\ E_\i
\end{pmatrix} = 
\begin{pmatrix}
\ASa + \tfrac12(u-1)D\lambda + \tfrac12\lambda du - d\dfUo \dfUo^{-1}\\ \ASe
\end{pmatrix}. \notag
\end{align}

Proving the left-inverse property is straightforward once one notices that the first component ${\ASa(A)}$ of $\underline{\tau}_\circ(A,{E_\i})$ satisfies
\[
 {\ASa(A)}  + \tfrac12(u-1)D\lambda(A) + \tfrac12\lambda(A) du = A  + \dfUo(A)^{-1} d\dfUo(A) .
\]
We leave the details to the reader. 

Proving the right-inverse property is instead subtler. For this we need some preliminary results. Denote the first component of $\underline\tau_\circ^{-1}(\ASa,\lambda,\ASe,U_\circ)$ by 
\[
A'=A'(\ASa,\lambda,\ASe,\dfUo) \doteq \ASa + \tfrac12(u-1)D\lambda + \tfrac12\lambda du - d\dfUo \dfUo^{-1}.
\]
Then,
\[
A'_\ell = \tfrac12 \lambda - L_\ell \dfUo \dfUo^{-1}. 
\]
From these, recalling that $\lambda$ is constant in $u$, we compute:
\[
\lambda(A') \doteq ( A'_\ell)^{\int} = \lambda  
- (L_\ell \dfUo \dfUo^{-1})^{\int}
= \lambda
\]
where we used the fact that $U_\circ\in G^\Sigma_{\rel,0}$ and therefore $ w(U_\circ)= \frac{1}{2\pi}(U_\circ^{-1}L_\ell U_\circ)^{\int}=0$ (Proposition \ref{prop:constr-g-grp-Ab}).
Similarly, we find
\[
(\dfuo(A'))(u)\doteq  - \int_{-1}^u du'\, A'_\ell + \tfrac12(u+1) (A'_\ell)^{\int}  = - \int_{-1}^u du'\, U_\circ^{-1}L_\ell U_\circ(u'),
\]
and thus (see also Appendix \ref{app:usefulformulas})
\[
\dfUo(A') \doteq \exp\dfuo(A') = \dfUo (\dfUo^\i)^{-1}= \dfUo.
\]
Now, with these two formulas, it is easy to see that $\underline\tau_\circ^{-1}$ is indeed a right inverse of $\underline{\tau}_\circ$ as well. This concludes the proof of (1). 

Finally, to prove (2) we appeal to (1) which tells us that it is enough to prove that $U_\circ\mapsto g_\circ U_\circ $ maps $A' \mapsto A' + g_\circ^{-1}d g_\circ$, a fact that is obvious when looking at the explicit form of $\underline\tau_\circ^{-1}$.
\end{proof}

Now that we have an explicit, global, trivialisation $\underline{\tau}_\circ$ of the principal $G^\Sigma_{\rel,0}$-bundle $\underline{\pi}_\circ:\mathcal{A}\times (\fg^S)^*\to \hcalA \times T^*\fg^S$, it is immediate to produce one of its (global) sections, $\underline{\sigma}_\circ$:

\begin{corollary}\label{cor:Gobundles}
The map $\underline{\sigma}_\circ: \Xas \to \mathcal{A}\times(\fg^S)^*$ defined as
\[
\underline{\sigma}_\circ \doteq \underline{\tau}^{-1}_\circ( \cdot,\cdot,\cdot,1),
\]
where $1 \in G^\Sigma_{\rel,0}$  (cf. Equation \eqref{eq:tau-1}), is a section of $\mathcal{A}\times (\fg^S)^* \xrightarrow{\underline{\pi}_\circ} \Xas$.
\end{corollary}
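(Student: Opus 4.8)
The plan is to read off the section directly from the trivialisation constructed in Lemma \ref{lemma:constr-trivialis-Ab}, so that no new analytic work is required. Recall that a section of the projection $\underline{\pi}_\circ\colon \mathcal{A}\times(\fg^S)^*\to\Xas$ is, by definition, a smooth map $\underline{\sigma}_\circ\colon \Xas \to \mathcal{A}\times(\fg^S)^*$ with $\underline{\pi}_\circ\circ\underline{\sigma}_\circ = \mathrm{id}_{\Xas}$. The lemma already provides the diffeomorphism $\underline{\tau}_\circ$ together with the factorisation $\underline{\pi}_\circ = \pi_{12}\circ\underline{\tau}_\circ$, where $\pi_{12}$ discards the $G^\Sigma_{\rel,0}$-factor; hence the whole task reduces to choosing a section of the trivial projection $\pi_{12}\colon \Xas \times G^\Sigma_{\rel,0}\to\Xas$ and transporting it through $\underline{\tau}_\circ^{-1}$. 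The canonical such section is the inclusion at the group identity, $(\ASa,\lambda,\ASe)\mapsto(\ASa,\lambda,\ASe,1)$, which is precisely the choice encoded in the definition $\underline{\sigma}_\circ\doteq\underline{\tau}_\circ^{-1}(\cdot,\cdot,\cdot,1)$.

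First I would note that $\underline{\sigma}_\circ$ is well defined and smooth, being the composition of the smooth inclusion $\Xas\hookrightarrow\Xas\times G^\Sigma_{\rel,0}$ at $1$ with the smooth inverse $\underline{\tau}_\circ^{-1}$ of Equation \eqref{eq:tau-1}. The section property then follows from the chain of identities
\[
\underline{\pi}_\circ\circ\underline{\sigma}_\circ = \pi_{12}\circ\underline{\tau}_\circ\circ\underline{\tau}_\circ^{-1}(\cdot,\cdot,\cdot,1) = \pi_{12}(\cdot,\cdot,\cdot,1) = \mathrm{id}_{\Xas},
\]
where the first equality is the factorisation $\underline{\pi}_\circ=\pi_{12}\circ\underline{\tau}_\circ$ established in Lemma \ref{lemma:constr-trivialis-Ab}, the second uses that $\underline{\tau}_\circ^{-1}$ is a genuine inverse of $\underline{\tau}_\circ$ (also part of the lemma), and the last uses that $\pi_{12}$ forgets the $G^\Sigma_{\rel,0}$-factor.

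Since the hard topological and analytic content---that $\underline{\tau}_\circ$ is a bona fide principal $G^\Sigma_{\rel,0}$-bundle trivialisation, and in particular invertible with inverse \eqref{eq:tau-1}---has already been discharged in Lemma \ref{lemma:constr-trivialis-Ab}, there is essentially no obstacle left: the corollary is the elementary fact that any bundle trivialisation yields global sections. If a fully explicit formula is desired, one may simply set $\dfUo=1$ (so that $d\dfUo\,\dfUo^{-1}=0$) in \eqref{eq:tau-1}, obtaining
\[
\underline{\sigma}_\circ(\ASa,\lambda,\ASe) = \Big(\ASa + \tfrac12(u-1)D\lambda + \tfrac12\lambda\,du,\ \ASe\Big),
\]
and verify by hand that applying $\underline{\pi}_\circ$ returns $(\ASa,\lambda,\ASe)$; this explicit check is redundant given the abstract argument above, but it makes the section concrete for later use.
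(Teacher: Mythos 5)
Your proof is correct and matches the paper's intent exactly: the paper states this corollary without proof, treating it as immediate from Lemma \ref{lemma:constr-trivialis-Ab}, and your argument---composing the identity-inclusion section of $\pi_{12}$ with $\underline{\tau}_\circ^{-1}$ and using the factorisation $\underline{\pi}_\circ = \pi_{12}\circ\underline{\tau}_\circ$---is precisely the "immediate" argument being invoked. Your explicit formula obtained by setting $\dfUo=1$ in Equation \eqref{eq:tau-1} is also correct and consistent with how $\underline{\sigma}_\circ$ is used later (e.g.\ in Lemma \ref{lemma:sigmacircX}).
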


\subsection{Abelian constraint reduction} 

Recall now that $\Xas$ is equipped with the symplectic structure  $\omeas\doteq\omAS  + \omega_S$, and that $\mathcal{A}\times (\fg^S)^*\simeq \C \subset \X$:

\begin{lemma}\label{lemma:sigmacircX}
Define the embedding map $\sigma_\circ^\X \doteq \iota_\C \circ s_\i \circ  \underline\sigma_\circ$ as per the commutative diagram
\[
\xymatrix{ \mathcal{A} \times (\fg^S)^* \ar[r]^{\quad s_\i}_{\quad \simeq}
&
\C\; \ar@{^(->}[r]^{\iota_\C}
& 
\X\\
\hcalA\times T^*\fg^S\; \ar@{^(->}[u]^{\quad\underline{\sigma}_\circ} \ar@/_1pc/[urr]_{\sigma^\X_\circ}
}
\]
Then, $\sigma^\X_\circ$ is a symplectic embedding, i.e.\ $(\sigma^\X_\circ)^*\omega = \omeas $ or, more explicitly,
\begin{align*}
(\sigma_\circ^\X)^*\omega(\ASa,\lambda,\ASe) = \omeas(\ASa,\lambda,\ASe)
&\doteq \omAS (\ASa) + \omega_S(\lambda,\ASe) \\
&= \int_\Sigma \btr( L_\ell \bd \ASa^i \wedge \bd \ASa_i) + \int_S \btr( \bd \ASe \wedge \bd \lambda) .
\end{align*}
\end{lemma}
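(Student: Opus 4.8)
The plan is to recognise $\sigma^\X_\circ$ as a \emph{gauge transform} of a purely spatial connection, and then invoke the master identity already established in Appendix \ref{app:compute-bom-dress} (and used in the proof of Calculation \ref{prop:step2}), rather than expanding $\bom_\nYM$ from scratch. First I would write $\sigma^\X_\circ$ explicitly: setting $\dfUo = 1$ in \eqref{eq:tau-1}, the section $\underline{\sigma}_\circ$ sends $(\ASa,\lambda,\ASe)$ to $(A,E_\i)$ with
\[
A = \ASa + \tfrac12(u-1)D\lambda + \tfrac12\lambda\, du, \qquad E_\i = \ASe,
\]
after which $s_\i$ appends the on-shell field $E = E(A,\ASe)$ of Proposition \ref{prop:constr-surf}. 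The key observation is that, since $\lambda$ is $u$-independent, $\tfrac12(u-1)D\lambda + \tfrac12\lambda\,du = d\big(\tfrac12(u-1)\lambda\big)$, so that $A = \ASa^{U}$ with $U \doteq \exp\!\big(\tfrac12(u-1)\lambda\big)\in\G$ the (Abelian) gauge transform of the purely spatial connection $\ASa\in\hcalA$. One has $U^\f = 1$ and $U^\i = \exp(-\lambda)$, matching the boundary convention of the dressing field, and since the Abelian field strength is gauge invariant, $F_\ell(A) = F_\ell(\ASa) = L_\ell\ASa$, whence $\mathsf{G}(\ASa, E) = \mathsf{G}(A,E)$ and $(\ASa, E)\in\C$ whenever $(A,E)\in\C$.

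Next I would apply the master identity from Appendix \ref{app:compute-bom-dress},
\[
\omega(A^U, E^U) = \omega(A,E) - \bd\!\int_\Sigma\btr(\mathsf{G}\,\bd U U^{-1}) + \Omega_S(U^\f, E^\f) - \Omega_S(U^\i, E^\i),
\]
(with $\Ad$ trivial in the Abelian case), with base point $\ASa$ and electric field $E = E(A,\ASe)$. Three simplifications occur after pullback to $\C$: (i) the $\mathsf{G}$-term is the de Rham differential of a field-space $1$-form proportional to $\mathsf{G}$, whose pullback to $\C$ vanishes since $\mathsf{G}|_\C = 0$ identically; (ii) $\omega(\ASa, E) = \omAS(\ASa)$ because $\ASa$ is purely spatial and the electric field drops out (cf. point 5 in the proof of Calculation \ref{prop:step2}); and (iii) $E^U = E$ in the Abelian case, with $E^\i = \ASe$.

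Finally I would evaluate the two boundary terms in the left-invariant trivialisation $\Omega_S(\Lambda,\ASe) = \bd\int_S\btr(\ASe\,\Lambda^{-1}\bd\Lambda)$. Since $U^\f\equiv 1$ is field-independent, $\bd U^\f = 0$ and $\Omega_S(U^\f, E^\f) = 0$; since $U^\i = \exp(-\lambda)$ gives $(U^\i)^{-1}\bd U^\i = -\bd\lambda$, one finds $\Omega_S(U^\i, \ASe) = -\bd\int_S\btr(\ASe\,\bd\lambda) = -\int_S\btr(\bd\ASe\wedge\bd\lambda) = -\omega_S$. Combining, $(\sigma^\X_\circ)^*\omega = \omAS + 0 - (-\omega_S) = \omAS + \omega_S = \omeas$, as claimed.

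The main obstacle I anticipate is the bookkeeping of the boundary terms: correctly orienting $S_{\i/\f}$ (the minus sign attached to $U^\i$), confirming that $\Omega_S$ is exactly the canonical form in the chosen trivialisation, and checking that the $\mathsf{G}$-term genuinely drops after pullback (not merely pointwise on the constraint locus, but as a field-space $2$-form). As an independent cross-check, and as an alternative that sidesteps the master identity, I would be prepared to substitute $\bd A_\ell = \tfrac12\bd\lambda$, $\bd F_\ell = L_\ell\bd\ASa$, $\bd\hat A = \bd\ASa + \tfrac12(u-1)D\bd\lambda$ and $\bd E = \bd\ASe - D^i\bd\ASa_i(u) + D^i\bd\ASa_i^\i$ directly into $\bom_\nYM$ of Definition \ref{def:geom-ph-sp}; after integrating over $\Sigma\simeq S\times I$, the $\bd\ASa\wedge\bd\ASa$ part reproduces $\omAS$, the $\bd\ASe\wedge\bd\lambda$ part integrates (over $u$, giving a factor $2$ against the $\tfrac12$) to $\omega_S$, and the mixed $\bd\ASa\wedge\bd\lambda$ terms cancel upon one integration by parts in $u$ (using $(u-1)|_{u=1}=0$, which produces the boundary value at $u=-1$) together with an integration by parts on the closed manifold $S$.
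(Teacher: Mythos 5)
Your proposal is correct, and your primary route is genuinely different from the paper's printed proof---though, interestingly, it is exactly the ``more conceptual proof along the lines of Calculation \ref{prop:step2}'' that the paper says was given in v1 on arXiv and then omitted. The paper's printed proof is the direct substitution you relegate to a cross-check: insert $A = \ASa + \tfrac12(u-1)D\lambda + \tfrac12\lambda\,du$, $E_\i = \ASe$ into $\omega$ and use that $E(A,E_\i)$ solves the Gauss constraint; your sketch of this computation is accurate, including the two nontrivial points (the factor of $2$ from the $u$-integration against $\bd A_\ell = \tfrac12\bd\lambda$, and the cancellation of the mixed $\bd\ASa\wedge\bd\lambda$ terms via one integration by parts in $u$ and one on the closed $S$). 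Your main argument instead observes that $\underline\sigma_\circ$ is the gauge transform of the spatial connection $\ASa$ by $U=\exp(\tfrac12(u-1)\lambda)\in\G$ (note $U^{-1}dU = \tfrac12\lambda\,du+\tfrac12(u-1)D\lambda$, $U^\f=1$, $U^\i=\exp(-\lambda)$), and then recycles the master identity of Appendix \ref{app:compute-bom-dress}: the $\mathsf{G}$-term drops because it is $\bd$ of a field-space $1$-form whose pullback along $\sigma^\X_\circ$ vanishes identically (the image lies in $\C$, and in the Abelian case $\mathsf{G}$ is gauge invariant, so $\mathsf{G}(\ASa,E)=\mathsf{G}(A,E)=0$); the spatial base point gives $\omAS$; the final boundary term dies since $\bd U^\f = 0$; and $-\Omega_S(U^\i,\ASe) = +\int_S\btr(\bd\ASe\wedge\bd\lambda) = \omega_S$. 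All signs check out against the conventions of Calculation \ref{prop:step2}. What your route buys is conceptual economy and transparency---no new expansion of $\bom_\nYM$, and it makes manifest that $\omega_S$ is exactly the boundary/edge-mode term generated by the gauge transformation at $S_\i$, i.e.\ the Abelian linearisation of the $\Omega_S(\Lambda,\ASe)$ term in the non-Abelian Calculation \ref{prop:step2}; what the paper's direct computation buys is independence from the master identity (useful since Appendix \ref{app:compute-bom-dress} is itself a long calculation). One small point in your favour: your coefficient $\tfrac12\lambda\,du$ is the one consistent with Equation \eqref{eq:tau-1} and with the section property $\lambda(A)=(A_\ell)^{\int}$; the paper's proof text writes $\lambda\,du$, which is a typo (it would give $2\omega_S$ and break $\underline\pi_\circ\circ\underline\sigma_\circ = \mathrm{id}$).
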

\begin{proof}
This can be proved by inserting the following formula for $(A,E)$ into $\omega(A,E)$,
\[
(A,E) = \sigma^\X_\circ(\ASa,\lambda, \ASe)
= s_\i(\ASa +\tfrac12(u-1)D\lambda(A) + \lambda du , \ASe)
\in \C \subset \X,
\]
and recalling that $E(A,E_\i)$ is by construction a solution of the Gauss constraint $\mathsf{G}=0$: in our notation, $s_\i(A,{E_\i}) \equiv (A, E(A,E_\i))$ (Proposition \ref{prop:constr-surf}).

A more conceptual proof along the lines of that provided for Calculation \ref{prop:step2} (see Appendix \ref{app:compute-bom-dress}) can be found in v1 of this article on the arXiv repository. The advantage of this alternative proof is that it shines a light on the relationship between symplectic reduction, the dressing field method, and its close relation to ``edge modes''.
\end{proof}

Combining these results, we can finally prove:

\begin{theorem}[Eq. \eqref{e:AbelianASreduction} of Theorem \ref{thm:nonAb-constr-red}]\label{thm:nullconstraintred}
For $G$ Abelian, the constraint-reduced phase space $(\uCo,\uomegao)$ of Yang--Mills theory at a null boundary $\Sigma$, which is defined by
\[
\uCo \doteq \C /\Go \quad\mathrm{and}\quad \pi_\circ^*\uomegao \doteq \iota_\C^*\omega,
\]
is symplectomorphic to the linearly-extended Ashtekar--Streubel phase space,
\[
(\uCo, \uomegao) \simeq (\Xas,\omeas) \doteq (\hcalA\times T^*\fg^S, \omAS  +\omega_S),
\]
by means of the map $\underline{s}_\i \doteq \pi_\circ\circ s_\i \circ \underline\sigma_\circ: \hcalA \times T^*\fg^S \to \uCo$.

Moreover, 
\[
(\uCo, \uomegao) \simeq_\loc (\XAS,\omAS) \doteq (\hcalA \times T^*G^S_0, \omAS  + \Omega_S),
\]
by means of the map\footnote{If $\exp$ is not a global diffeomorphism, the logarithm is defined only locally. This happens if $G$ has at least a $\mathrm{U}(1)$ factor.} $\underline{s}^\mathrm{log}_\i \doteq \underline{s}_\i\circ(\mathrm{id}_{\hcalA}, \log_{G^S_0}, \mathrm{id}_{(\fg^S)^*}): \hcalA\times T^*G^S_0\to \uCo$.
Explicitly,
\begin{align*}
(\underline{s}_\i^\mathrm{log})^*\uomegao(\ASa,\Lambda,\ASe) = \omeAS(\ASa,\Lambda,\ASe)
& \doteq \omAS (\ASa) + \Omega_S(\Lambda, \ASe) \\
& = 
\int_\Sigma \btr(L_\ell \bd \ASa{}^i\wedge\bd \ASa{}_i)+ \int_S \btr( \bd \ASe\wedge \bd \Lambda \Lambda^{-1}) .
\end{align*}
\end{theorem}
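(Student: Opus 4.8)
The plan is to assemble the preparatory results of this appendix into the single map $\underline s_\i \doteq \pi_\circ\circ s_\i\circ\underline\sigma_\circ\colon \Xas\to\uCo$ and verify separately that it is a diffeomorphism and that it intertwines the symplectic forms. The ingredients are all in place: the diffeomorphism $s_\i$ identifying $\C\simeq\mathcal A\times(\fg^S)^*$ (Proposition \ref{prop:constr-surf}), the identity $\uCo = \C/\Go = \C/G^\Sigma_{\rel,0}$ (Remark \ref{rmk:Abelian-isotropy} together with Lemma \ref{lemma:Grelaction}), the global trivialisation $\underline\tau_\circ$ of the principal $G^\Sigma_{\rel,0}$-bundle with its section $\underline\sigma_\circ$ (Lemma \ref{lemma:constr-trivialis-Ab} and Corollary \ref{cor:Gobundles}), and the symplectic-embedding statement $(\sigma_\circ^\X)^*\omega=\omeas$ (Lemma \ref{lemma:sigmacircX}). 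The whole theorem then reduces to bookkeeping.

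First I would show $\underline s_\i$ is a diffeomorphism. Under $s_\i$ the $\Go$-action on $\C$ becomes the $G^\Sigma_{\rel,0}$-action of Lemma \ref{lemma:Grelaction}, so $\uCo\simeq(\mathcal A\times(\fg^S)^*)/G^\Sigma_{\rel,0}$. By Lemma \ref{lemma:constr-trivialis-Ab} this is a \emph{trivial} principal bundle with base $\Xas$ and projection $\underline\pi_\circ=\pi_{12}\circ\underline\tau_\circ$, and $\underline\sigma_\circ$ is a global section, i.e.\ $\underline\pi_\circ\circ\underline\sigma_\circ=\mathrm{id}_{\Xas}$. Since the $G^\Sigma_{\rel,0}$-invariant map $\pi_\circ\circ s_\i$ factors through $\underline\pi_\circ$ and the bundle is trivial, $\underline s_\i$ is precisely the induced identification $\Xas\xrightarrow{\ \simeq\ }\uCo$, with smooth inverse supplied by $\underline\tau_\circ$.

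Next I would compute the pullback of $\uomegao$. Using the defining relation $\pi_\circ^*\uomegao=\iota_\C^*\omega$ and the factorisation $\sigma_\circ^\X=\iota_\C\circ s_\i\circ\underline\sigma_\circ$ of Lemma \ref{lemma:sigmacircX},
\[
\underline s_\i^*\uomegao = \underline\sigma_\circ^*\,s_\i^*\,\pi_\circ^*\uomegao = \underline\sigma_\circ^*\,s_\i^*\,\iota_\C^*\omega = (\sigma_\circ^\X)^*\omega = \omeas,
\]
the last equality being exactly Lemma \ref{lemma:sigmacircX}. This establishes the global symplectomorphism $(\uCo,\uomegao)\simeq(\Xas,\omeas)$.

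Finally, for the ``moreover'' I would post-compose with the (locally defined) map $\Phi_{\log}\doteq(\mathrm{id}_{\hcalA},\log_{G^S_0},\mathrm{id}_{(\fg^S)^*})\colon\XAS\to\Xas$, so that $\underline s_\i^{\log}=\underline s_\i\circ\Phi_{\log}$ and $(\underline s_\i^{\log})^*\uomegao=\Phi_{\log}^*\omeas$. Since $\Phi_{\log}$ is the identity on $\hcalA$, it suffices to check $\Phi_{\log}^*\omega_S=\Omega_S$ on the cotangent factor: in the left trivialisations $\omega_S=\bd\int_S\btr(\ASe\,\bd\lambda)$ and $\Omega_S=\bd\int_S\btr(\ASe\,\Lambda^{-1}\bd\Lambda)$ (cf.\ Calculation \ref{prop:step2}), and for $G$ Abelian $\Lambda=\exp\lambda$ gives $\Lambda^{-1}\bd\Lambda=\bd\lambda$, whence the two forms agree. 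The only genuinely delicate point is precisely this last paragraph: $\exp\colon\fg^S\to G^S_0$ is merely a local diffeomorphism whenever $G$ carries $\mathrm{U}(1)$ factors, so $\log$ is many-valued and the symplectomorphism with $(\XAS,\omeAS)$ can only be asserted locally, the discrete ambiguity being exactly the covering fibre discussed in Remark \ref{rmk:localsymp}.
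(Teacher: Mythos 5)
Your proposal is correct and follows essentially the same route as the paper's own proof: it assembles the same preparatory results (Proposition \ref{prop:constr-surf}, Lemmas \ref{lemma:Grelaction}, \ref{lemma:constr-trivialis-Ab}, \ref{lemma:sigmacircX} and Corollary \ref{cor:Gobundles}) and runs the same chain of pullback identities, merely written in the opposite direction ($\underline s_\i^*\uomegao=\cdots=\omeas$ rather than $\omeas=\cdots=\underline s_\i^*\uomegao$). If anything, you spell out two points the paper dismisses as ``obvious''---the diffeomorphism property of $\underline s_\i$ via triviality of the $G^\Sigma_{\rel,0}$-bundle, and the verification $\Phi_{\log}^*\omega_S=\Omega_S$ behind the local symplectomorphism with $(\XAS,\omeAS)$---which is a welcome addition, not a deviation.
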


\begin{proof}
The theorem is a consequence of the previous lemmas and the following commutative diagram:
\[
\xymatrix{
\mathcal{A} \times (\fg^S)^* \ar[r]^{\quad s_\i}_{\quad \simeq}\ar@<-4pt>[d]_{\underline\pi_\circ}
&
\C\; \ar@{^(->}[r]^{\iota_\C}\ar[d]^{\pi_\circ}
& 
\X \ar@/^1pc/@{~>}[dl]^{~~\tbox{1.2cm}{\vspace{-2em}~\\constraint\\ reduction}}\\
\Xas\; \ar[r]^{\;\quad\underline{s}_\i}_{\;\quad\simeq}  \ar@<-6pt>@{^(->}[u]_{\underline{\sigma}_\circ}
& 
\uCo 
}
\]
Indeed, Corollary \ref{cor:Gobundles} tells that the first and second columns of the diagram are diffeomorphic principal $G^\Sigma_{\rel,0}$-bundles.

Then, using Lemma \ref{lemma:sigmacircX}, we compute: 
\begin{align*}
    \omeas 
     = (\sigma_\circ^\X)^*\omega 
    & \equiv (\iota_\C\circ s_\i\circ\underline\sigma_\circ)^*\omega\\
    & = (s_\i \circ\underline\sigma_\circ)^*\iota_\C^*\omega = (s_\i \circ\underline\sigma_\circ)^*\pi_\circ^*\uomegao\\
    & = (s_\i \circ\underline\sigma_\circ)^*(\underline{s}_\i\circ\underline{\pi}_\circ \circ s_\i^{-1})^*\uomegao\\
    & = ( \underline{s}_\i \circ \underline{\pi}_\circ \circ \underline{\sigma}_\circ)^*\uomegao 
    = \underline{s}_\i^* \uomegao
\end{align*}
from which we obtain the sought identity:
\[
 \omeas = \underline{s}_\i^*\uomegao .
\]
The remainder of the proof is obvious.
\end{proof}

\section{Summary of notations}\label{app:notations}

\subsection*{Geometry and differential calculus on spacetime}
~

\begin{longtable}{{p{2cm} p{10cm}   }}
   $M \hookleftarrow \Sigma$  & $M$ is the smooth, orientable, spacetime (i.e. Lorentzian) manifold, and $\Sigma$ a codimension-1 manifold inside it. Possibly, $\pp\Sigma\neq \emptyset$. 
   Section \ref{sec:nullmfd}.\\
   $n \geq 1$ & The (finite!) dimension of $\Sigma$. Section \ref{sec:nullmfd}.\\
   $L, i, d$ & Lie, interior, and exterior derivatives of differential forms over $\Sigma$.\\
   $\gamma, \ell$ & $\gamma$ is the metric over $\Sigma$. $\Sigma$ is said \emph{null} iff there exists a vector field $\ell$ such that $\ker(\gamma: T\Sigma \to T^*\Sigma) = \mathrm{Span}(\ell)$. We assume that the non-null eigenvalues of $\gamma$ are positive, and (for simplicity) that $L_\ell \gamma =0$. Section \ref{sec:nullmfd}. (C.f. the entry about $(u,x^i)$ for more information.) \\
   $\Sigma \simeq I \times S$ & When $\Sigma$ is null, we take it to be a cylinder: $I = [-1,1]$ and $\pp S=\emptyset$; typically, $S$ is an $(n-1)$-sphere. Section \ref{sec:nullmfd}.\\
   $\bullet_\i, \bullet_\f$ & These subscripts highlight that a map or field is defined on $S_{\i/\f} = S \times \{u=\mp1\} \hookrightarrow \Sigma$, which stand for the ``initial and final'' spheres.\\
   $\bullet^\i, \bullet^\f$ & These superscripts denote the evaluations of a quantity $\bullet$ defined on $\Sigma$, at $u=\mp1$ respectively. In most cases $\bullet_{\i/\f}$ and $\bullet^{\i/\f}$ can be conflated; however, note that objects defined on $S_{\i/\f}$ may not extend to $\Sigma$. Notation \ref{not:subsupscripts}.\\
   $\bullet^{\int}, \bullet^{\av}, \bullet^{\dif}$ & Different types of ``zero modes'' for functions of $u\in I$: namely $Q^{\int} \doteq \int_{-1}^1 du'\, Q(u')$, $Q^\av  \doteq  \tfrac12 (Q^\i + Q^\f)$, and $Q^\dif \doteq  Q^\f - Q^\i$. Notation \ref{not:subsupscripts}.\\
   $x^a=(u,x^i)$ & Coordinates on $\Sigma=I\times S$. We assume that $\ell=\pp_u$. Since $\gamma(\ell)=0$ and $L_\ell\gamma=0$, $\gamma$ can be identified with a spacelike metric over $S$. Section \ref{sec:nullmfd}.\\
   $\vol_S, \ \vol_\Sigma$ & The volume forms over $S$ and $\Sigma$ respectively. Note that $\vol_S$ is a metric volume form, while $\vol_{\Sigma}=du \wedge \vol_S$ is not. Notation \ref{notation:vol}.\\
   $\bullet_\ell$ & Contraction of a differential form over $\Sigma$ with the vector $\ell$, i.e. $\alpha_\ell = i_\ell \alpha$. Section \ref{sec:nullmfd}.\\
   $\hat{\bullet}$ & Projector on the space of \emph{spatial} tensors, i.e.\ forms $\alpha$ and vectors $v$ on $\Sigma$ such that $i_\ell\alpha = 0$ and $i_v du = 0$. Definition \ref{def:spatial-f+v}.\\
   $D$ & Exterior derivative on the space of spatial forms. It can be identified with the de Rham differential over $S$. Definition \ref{def:spatial-f+v}.\\
   $G, \fg$ & $G$ is a real, connected, Lie group assumed to be either Abelian or semisimple. We denote $\fg = \mathrm{Lie}(G)$ its Lie algebra. Section \ref{sec:geomphsp}.\\
   $\mathrm{tr}(\cdot \cdot)$ & Non-degenerate, $\Ad$-invariant, bilinear form on $\fg$. Section \ref{sec:geomphsp}.\\
   $\int_N\mathbf{tr}( \cdot\cdot )$ & Short-hand notation for $\int_N \mathrm{tr}(\cdot\cdot) \vol_N$. Notation \ref{notation:tr}\\
   $P \to \Sigma$ & Principal $G$-bundle over $\Sigma$, assumed to be trivial, $P\simeq \Sigma \times G$. Section \ref{sec:geomphsp}, Assumption \ref{ass:trivialP}.\\
   $\mathcal{A} \ni A$ & The space of connections on the (trivial) principal $G$-bundle $P$. We identify $\mathcal{A} \simeq \Omega^1(\Sigma,\fg)$. This is the space of ``gauge fields'' on $\Sigma$. The spatial/null decomposition of $\mathcal{A}$ is denoted $\mathcal{A} \simeq \hcalA \times \mathcal{A}_\ell$, with $A = \hat A + A_\ell du$. Section \ref{sec:geomphsp}, Assumption \ref{ass:trivialP}, Remark \ref{rmk:u-functions}.\\
   $\mathcal{L}_\ell, \mathcal{D}$ & Gauge-covariant analogues of $L_\ell$ and $D$, i.e. $\mathcal{L}_\ell = L_\ell + A_\ell$ and $\mathcal{D} = D + \hat A$. Section \ref{sec:geomphsp}.\\
   $F$ & The curvature (field strength) of $A$, $F = dA + A\wedge A$. Section \ref{sec:geomphsp}.
\end{longtable}

~\\

\subsection*{Geometry and differential calculus on field space.}
~
\begin{longtable}{{p{2cm} p{10cm}   }}
$\mathcal{W}^*_\text{str} , \mathcal{W}^*_\loc{}, \mathcal{W}^*$ & Let $\mathcal{W}=\Gamma(W\to\Sigma)$ be the (nuclear Fréchet) space of sections of a real vector bundle over a compact manifold. $\mathcal{W}^*_\text{str}$ is the topological dual endowed with the strong topology. $\mathcal{W}^*_\loc\subset \mathcal{W}^*_\text{str}$ is the local dual while $\mathcal{W}^*\subset \mathcal{W}^*_\loc$ denotes the densitised dual. Remark \ref{rmk:dualspaces}, Appendix \ref{rmk:locality}.\\
$\Ann(\mathcal{X},\mathcal{Y})$ & If $\mathcal{X}\subset \mathcal{W}$ and $\mathcal{Y}\subset \mathcal{W}^*_\text{str}$, $\Ann(\mathcal{X},\mathcal{Y}) = \{ y \in \mathcal{Y} : \langle y, x\rangle = 0 \ \forall x \in \mathcal{X} \} \subset \mathcal{W}^*_\text{str}$. Definition \ref{def:annihilators}, Lemma \ref{lem:doubleannihilator}.\\
$G^N, G^N_\text{rel}$ & The mapping group $G^N=C^\infty(N,G)$ equipped with pointwise multiplication and the relative mapping group: the subgroup $G^N_\text{rel}\subset G^N$ of functions that are equal to the identity (in $G$) at the boundary of $N$, $g|_{\pp N} = 1$. Definitions \ref{def:mappinggroups} and \ref{def:id+rel-comp}.\\
$\fg^N, \fg^N_\text{rel}$ & As above, but for Lie algebras. Definition \ref{def:mappinggroups}.\\
$\bullet_0$ & Denotes the identity component of a group, e.g. $G_0$ is the identity component of $G$ and $G^N_{\text{rel},0}$ is the identity component of the relative mapping group. Since $G$ is connected by assumption, we write $G^N_0 \equiv (G^N)_0$. Definition \ref{def:id+rel-comp}.\\
$\L, \bi, \bd$ & Lie, interior, and exterior derivatives of differential forms over an infinite dimensional manifold. (C.f. the discussion in \cite[Remark A.5]{RielloSchiavina}.) \\ 
$\mathcal{E}\ni \bE, \ E$ & $\mathcal{E}=\Omega^{n-1}_\text{spatial}(\Sigma, \Ad^*P)\simeq\Omega^{n-1}_\text{spatial}(\Sigma,\fg^*)$ is the space of electric fields. We identify $\bE \in \mathcal{E}$ with a $E\in C^\infty(\Sigma,\fg)$ via $\mathbf{E} = \tr(E \cdot) \vol_S$. Section \ref{sec:geomphsp}, Assumption \ref{ass:trivialP}, Remark \ref{rmk:u-functions}.\\ 
$(\X, \bom_\nYM)$ & $\X = \mathcal{A} \times \mathcal{E}$, and $\bom_\nYM$ is a (weak) symplectic density on it, i.e. $\bom_\nYM \in \Omega^{\text{top},2}_\loc(\Sigma\times\X)$. This means that $\omega_\nYM = \int_\Sigma \bom_\nYM$ is a (weak) symplectic two-form on $\X$. The symplectic space $(\X, \omega_\nYM)$ is the off-shell, ``geometric'', phase space of null-YM theory. Definition \ref{def:geom-ph-sp}.\\
$\bom, \omega$ & Shorthand notation for $\bom_\nYM$ and $\omega_\nYM$.\\
$(\hcalA, \omAS)$ & The Ashtekar--Streubel phase space \cite{AshtekarStreubel}. See the entry for $\mathcal{A}$ and Definition \ref{def:hatA-sympl}.\\
$\ASa$ & Configuration in $\hcalA$. Definition \ref{def:hatA-sympl}.\\
$\tilde\varphi(k),\ \tilde\ASa_i(k,x)$ & Ashtekar--Streubel $k$-mode, $k\in\mathbb{N}\cup\{0\}$, of a function $\varphi\in C^\infty(I)$ and of the Ashtekar--Streubel field $\ASa(u,x)$, respectively. Lemma \ref{lemma:ASmodes} and Proposition \ref{prop:modedecomp}.\\
$(\XAS, \omeAS)$ & The extended Ashtekar--Streubel phase space, $\XAS = \hcalA \times T^*G^S_0$ and $\omeAS = \omAS + \Omega_S$ where $\Omega_S$ is the canonical symplectic form on the densitised cotangent bundle $T^*G^S_0$  (cf. Remark \ref{rmk:dualspaces}). Definition \ref{def:extAS}, Theorem \ref{thm:nonAb-constr-red}. \\
$(\Xas, \omeas)$ & The linearly extended Ashtekar--Streubel phase space, $\Xas = \hcalA \times T^*\fg^S$ and $\omeAS = \omAS + \omega_S$ where $\omega_S$ is the canonical symplectic form on the densitised cotangent bundle $T^*\fg^S$ (cf. Remark \ref{rmk:dualspaces}). Definition \ref{def:extAS}, Theorem \ref{thm:nonAb-constr-red}.\\
$(\ASa, \Lambda, \ASe) $ & Configurations in $\XAS\simeq \hcalA \times G^S_0 \times (\fg^S)^*$.\\
$(\ASa, \lambda, \ASe) $ & Configurations in $\Xas\simeq \hcalA \times \fg^S \times (\fg^S)^*$.\\
$\underline\varrho$ & The action of $\fg^{\pp\Sigma}$ on $\XAS$, defined in Lemma \ref{lem:residualAction}.\\
$\mathcal{G}, \fG, \rho, \bH$ & $\G$ is the (connected) gauge group acting (in a locally Hamiltonian way) on $\X$, and $\fG$ its Lie algebra. In null-YM theory we take $\fG = \fg^\Sigma$ and $\G = G^\Sigma_0$. The action $\rho: \fG \times \X \to T\X$ is a locally Hamiltonian action on $(\X,\bom_\nYM)$ given by $(\xi,(A,E)) \mapsto (d_A \xi, [E,\xi])$. Its momentum form is the linear, equivariant, local map $\bH : \fG \to \Omega^{\text{top},0}(\Sigma\times\X)$. Definitions \ref{def:fluxes+constr-red}, \ref{def:gaugetransformations} and \ref{def:gauge-action}; Proposition \ref{prop:nYMisLocHam}.\\
$\mathcal{K}$ & The (discrete) group of components  $\G_\rel/(\G_\rel)_0$. Theorem \ref{thm:fibrecharacterisation}.\\
$\bHo,d\bh, \mathsf{G}$ & According to \cite[Prop. 2.2 and Def 4.1]{RielloSchiavina} the momentum form $\bH$ uniquely decomposes into the sum of a \emph{constraint form} $\bHo$ (such that $\langle \bH, \xi\rangle$ does not involve any derivative of $\xi$) and of a (boundary) \emph{flux form} $d \bh$, i.e. $\bH = \bHo + d \bh$. Finally, $\mathsf{G}(A,E) = \mathcal{L}_\ell E + \mathcal{D}^i F_{\ell i}$ is the Gauss constraint, $\bHo = \tr(\mathsf{G} \cdot) \vol_\Sigma$. Definitions \ref{def:LHGT} and \ref{prop:constr-surf}, Proposition \ref{prop:nYMisLocHam}.\\
$H,\Ho, h $ & These are the integrals over $\Sigma$ of $\bH$, $\bHo$, and $d\bh$ respectively, e.g. $\langle h,\xi\rangle = \int_\Sigma \langle d\bh , \xi\rangle$. One can think of these maps as being valued in $\fG^*_\text{str}$. $\Ho$ is equivariant, and in null-YM so is $h$. Proposition \ref{prop:nYMisLocHam}.\\
$\F\ni f$ & The (on-shell) flux space $\F = \Im(\iota_\C^* h) \subset \fG^*_\loc$. Its elements $f$ are called fluxes. Definition \ref{def:fluxes+constr-red}, Propositions \ref{prop:resHamaction} and \ref{prop:fluxes}.\\
$(\mathcal{O}_f, \Omega_{[f]})$ & In null-YM, where $h$ is equivariant, $\mathcal{O}_f$ denotes the coadjoint orbit of $f \in \F \subset \fG^*_\text{str}$. One has $\mathcal{O}_f \subset \F$. Furthermore, $\Omega_{[f]}$ is the Kirillov--Konstant--Souriau symplectic structure on $\mathcal{O}_f$. See \cite{RielloSchiavina} for more on the definition of $\Omega_{[f]}$ in infinite dimensions.\\
$\fGo, \fGred, \Go, \Gred$ & $\fGo = \Ann(\F, \fG) = \{\xi \in \fG \ : \ \langle h(A,E), \xi\rangle = 0 \ \forall (A,E)\in \C\subset \X\}$ is the maximal constraint Lie ideal (see \cite[Thm 4.33]{RielloSchiavina}). The flux gauge algebra is the quotient $\fGred = \fG/\fGo$. The constraint group $\Go= \langle \exp \fGo\rangle$, is the normal Lie subgroup generated by $\fGo$, while the flux gauge group $\Gred = \G/\Go$ is a Lie group with Lie algebra $\fGred$. See Definition \ref{def:fluxes+constr-red}, including Footnote \ref{fnt:fGo}, as well as Propositions \ref{prop:Go-normal} and \ref{prop:resHamaction}. For the local nature of $\fGred$, see Remark \ref{rmk:Gredlocality}. For specifics in the case of null YM theory: Proposition \ref{prop:discretequotientgroup}\\
$\C\hookrightarrow\X$ & The constraint set $\C = \bHo^{-1}(0) = \{ (A,E)\in \X : \mathsf{G}(A,E)=0\}$. Proposition \ref{prop:constr-surf}\\
$V$ & The dressing field, $V\colon\X \to C^\infty(I,G^S_0)$. It can be interpreted as a (family of) Wilson lines of $A$ along the flow of $\ell$ starting from the initial sphere $S^\i \hookrightarrow \Sigma$. Cf. Definition \ref{def:dressingfield}.\\
$\check{V}$ & The on-shell dressing map, $\check{V}:\C\to \XAS$. Definition \ref{def:dressingmap}. \\
$p_V$ & A map $\uCo \to \XAS$ such that $\check{V} = \pi_\circ^*p_V$ ($\pi_\circ$ is defined below). Theorem \ref{thm:fibrecharacterisation}\\
$\underline\bullet, \underline{\underline{\bullet}}$ & As a rule of thumb, one underscore refers to (first stage) constraint reduction, i.e. to quantities defined modulo to the action of $\Go$. Similarly, two underscores refer to (second stage) flux superselection, i.e. to quantities defined modulo the action of $\G$ or, heuristically, modulo the subsequent actions of $\Go$ and $\Gred$.\\
$\C^\omega$ & The characteristic foliation of $\C$. \\
$(\uCo, \uomegao), \pi_\circ$ & The constraint-reduced phase space, $\uCo = \C/\C^\omega \simeq \C/\Go$. The projection corresponding to this quotient is denoted $\pi_\circ : \C \to \uCo$. And $\uomegao$ is the (Marsden--Weinstein) reduced symplectic structure on $\uCo$, defined by $\iota_\C^*\omega = \pi_\circ^*\uomegao$.\\
$\uh, \underline{\rho}$ & The reduced flux map, $\uh : \fGred \to C^\infty(\uCo)$. Intuitively, it can be thought of as the ``projection'' to $\uCo$ of the flux map $\h:\fG \to C^\infty(\C)$. It is a momentum map for the (residual) action of $\fGred$ on $(\uCo, \uomegao)$, denoted $\underline{\rho} : \fGred \times \uCo \to T\uCo$. Propositions \ref{prop:resHamaction}, \ref{prop:red-flux-map}.\\
$\underline{\iota}_{[f]}, \ \underline{\pi}_{[f]}$ & Respectively, the embedding $\uh^{-1}(\mathcal{O}_f) \hookrightarrow \uCo$ and the projection $\uh^{-1}(\mathcal{O}_f) \to \uh^{-1}(\mathcal{O}_f)/\Gred = \uuS_{[f]}$.\\
$(\uuC, \uuPi), \underline{\pi}$ & The fully reduced phase space, $\uuC = \C/\G \simeq \uCo/\Gred$, with its (partial) Poisson bivector $\uuPi$ \cite{RielloSchiavina}. The associated (second stage) quotient map is denoted $\underline{\pi} : \uCo \to \uuC$. Definition \ref{eq:fully-red-space}, Theorems \ref{thm:generalsuperselection}, \ref{thm:superselections}, and \ref{thm:modelHamreduction}.\\
$(\uuS_{[f]}, \uuomegao_{[f]})$ & The flux superselection sector associated to $\mathcal{O}_f$, is the symplectic manifold $\uuS_{[f]} = \uh^{-1}(\mathcal{O}_f)/\Gred \subset \uuC$ equipped with the unique symplectic structure such that $\underline{\pi}^*_{[f]} \uuomegao_{[f]} = \underline\iota_{[f]}^*\big( \uomegao - \uh^*\Omega_{[f]}\big)$.
The superselection sectors provide a symplectic foliation of $(\uuC,\uuPi)$, $\uuC \simeq \bigsqcup_{\mathcal{O}_f\subset \F}\uuS_{[f]}$. This turns $\uuC$ into a (partial) Poisson manifold \cite{RielloSchiavina}. Lemma \ref{lem:fluxSS}, Theorems \ref{thm:generalsuperselection}, \ref{thm:superselections}, and \ref{thm:modelHamreduction}.\\
\end{longtable}

The following diagram summarises the two-stage reduction procedure:
\begin{equation*}
\xymatrix@C=.75cm@R=1cm{
(\X,\omega)
    \ar@{~>}[rr]^-{\tbox{2.2cm}{constraint reduction \\(w.r.t. $\Go$ at $0$)}}
&&(\uCo,\uomegao)
    \ar@{~>}[rr]^-{\tbox{2.2cm}{flux superselection (w.r.t. $\Gred$ at $\mathcal{O}_f$)}}
&&(\uuS_{[f]},\uuomegao_{[f]})\\
&{
    \;\qquad\C\qquad\;
    \ar@{_(->}[ul]^-{\iota_\C}
    \ar@{->>}[ur]_-{\pi_\circ}
    }
&&{
    \;\;\uh^{-1}(\mathcal{O}_{f})
    \ar@{_(->}[ul]^-{\underline{\iota}_{[f]}}
    \ar@{->>}[ur]_-{\underline{\pi}_{[f]}}
    }\\
&&{
    \;\;\h^{-1}(\mathcal{O}_f)\cap\C
    \ar@{_(->}[ul]^-{{\iota}^\C_{[f]}}
    \ar@{->>}[ur]_-{\pi_{\circ,[f]}}
    \ar@{_(-->}@/^2.7pc/[uull]^{\iota_{[f]}}
    \ar@{-->>}@/_2.7pc/[uurr]_{\pi_{[f]}}
    }\\
}
\end{equation*}

\section*{Declarations}
\subsubsection*{Conflict of interest statement} The authors have no competing interests to declare that are relevant to the content of this article.
\subsubsection*{Data availability statement} Data sharing not applicable to this article as no datasets were generated or analysed during the current study.
\medskip 

\begingroup
\sloppy
\printbibliography
\endgroup

\end{document}